\newif\ifappx
\newcommand{\appx}[2]{\ifappx{#1}\else{#2}\fi}
\newcommand{\revision}[1]{#1}
\newcommand{\ywtodo}[1]{{\color{orange}{#1}}}
\newcommand{\irule}[2]%
   {\mkern-2mu\displaystyle\frac{#1}{\vphantom{,}#2}\mkern-2mu}
\newcommand{\irulelabel}[3]
{
\mkern-2mu
\begin{array}{ll}
\displaystyle\frac{#1}{\vphantom{,}#2} & \!\!\!\!\!\! #3
\end{array}
\mkern-2mu
}
\newcommand{\set}[1]{\{ #1 \}}
\newcommand*\circled[1]{\tikz[baseline=(char.base)]{
            \node[shape=circle,draw,inner sep=0.8pt] (char) {\scriptsize{#1}};}}
\newcommand{\proj}{\Pi}
\newcommand{\filter}{\sigma}
\newcommand{\rename}{\rho}
\newcommand{\ijoin}{\bowtie}
\newcommand{\ljoin}{\leftouterjoin}
\newcommand{\rjoin}{\rightouterjoin}
\newcommand{\fjoin}{\fullouterjoin}
\newcommand{\cplus}{\mathrel{\reflectbox{\rotatebox[origin=c]{180}{$\uplus$}}}}
\def\ojoin{\setbox0=\hbox{$\bowtie$}%
  \rule[.17ex]{.25em}{.6pt}\llap{\rule[.88ex]{.25em}{.6pt}}}
\def\leftouterjoin{\mathbin{\ojoin\mkern-5.8mu\bowtie}}
\def\rightouterjoin{\mathbin{\bowtie\mkern-5.8mu\ojoin}}
\def\fullouterjoin{\mathbin{\ojoin\mkern-5.8mu\bowtie\mkern-5.8mu\ojoin}}
\newcommand{\lljoin}{\scalebox{0.5}[1]{$\sqsupset$}\mkern-3.2mu\lower0em\hbox{$\Join$}}
\newcommand{\rosette}{\textsc{Rosette}\xspace}
\newcommand{\udp}{\textsc{UDP}\xspace}
\newcommand{\tool}{\textsc{VeriEQL}\xspace}
\newcommand{\mediator}{\textsc{Mediator}\xspace}
\newcommand{\datafiller}{\textsc{DataFiller}\xspace}
\newcommand{\xdata}{\textsc{XData}\xspace}
\newcommand{\spes}{\textsc{SPES}\xspace}
\newcommand{\qex}{\textsc{Qex}\xspace}
\newcommand{\hottsql}{\textsc{HoTTSQL}\xspace}
\newcommand{\cosette}{\textsc{Cosette}\xspace}
\newcommand{\calcite}{\texttt{Calcite}\xspace}
\newcommand{\leetcode}{\texttt{LeetCode}\xspace}
\newcommand{\literature}{\texttt{Literature}\xspace}
\newcommand{\schema}{\mathcal{S}}
\newcommand{\constraint}{\mathcal{C}}
\newcommand{\db}{\mathcal{D}}
\newcommand{\query}{Q}
\newcommand{\formula}{\Phi}
\newcommand{\bound}{N}
\newcommand{\context}{\Gamma}
\newcommand{\attributes}{\mathcal{A}}
\newcommand{\tuples}{\mathcal{T}}
\newcommand{\del}{\text{Del}\xspace}
\newcommand{\dom}{\textsf{Dom}\xspace}
\newcommand{\vars}{\textsf{Vars}\xspace}
\newcommand{\fresh}{\sigma}
\newcommand{\nullv}{\text{Null}\xspace}
\newcommand{\ite}{\text{ite}\xspace}
\newcommand{\alljoin}{\otimes}
\newcommand{\allcoll}{\oplus}
\newcommand{\allarith}{\diamond}
\newcommand{\alllogic}{\odot}
\newcommand{\paired}{\texttt{Paired}\xspace}
\newcommand{\eqtext}{\texttt{Eq}\xspace}
\newcommand{\aggr}{\mathcal{G}}
\newcommand{\interpretation}{\mathcal{I}}
\newcommand{\extends}{\sqsupseteq}
\newcommand{\conforms}{::}
\newcommand{\doubleplus}{+\!+~}
\newcommand{\doublebrackets}[1]{\llbracket #1 \rrbracket}
\newcommand{\denot}[1]{\doublebrackets{#1}}
\newcommand{\defeq}{\overset{\mathrm{def}}{=\joinrel=}}
\newcommand{\indicator}[1]{\mathbb{I}[#1]}
\newcommand{\sqlwhere}{\texttt{WHERE}\xspace}
\newcommand{\sqlgroupby}{\texttt{GROUP}\:\texttt{BY}\xspace}
\newcommand{\sqlhaving}{\texttt{HAVING}\xspace}
\newcommand{\sqlwith}{\texttt{WITH}\xspace}
\newcommand{\sqlin}{\texttt{IN}\xspace}
\newcommand{\sqlnotin}{\texttt{NOT}\:\texttt{IN}\xspace}
\newcommand{\sqlorderby}{\texttt{ORDER}\:\texttt{BY}\xspace}
\newcommand{\sqllimit}{\texttt{LIMIT}\xspace}
\newcommand{\sqlintersect}{\texttt{INTERSECT}\xspace}
\newcommand{\sqlexcept}{\texttt{EXCEPT}\xspace}
\newcommand{\sqland}{\texttt{AND}\xspace}
\newcommand{\sqlor}{\texttt{OR}\xspace}
\newcommand{\sqlexceptall}{\texttt{EXCEPT}\: \texttt{ALL}\xspace}
\newcommand{\sqlleftjoin}{\texttt{LEFT}\: \texttt{JOIN}\xspace}
\newcommand{\sqlinnerjoin}{\texttt{INNER}\: \texttt{JOIN}\xspace}
\newcommand{\sqljoin}{\texttt{JOIN}\xspace}
\newcommand{\sqlon}{\texttt{ON}\xspace}
\newcommand{\sqlnull}{\texttt{NULL}\xspace}
\newcommand{\sqlnotnull}{\texttt{NOT}\:\texttt{NULL}\xspace}
\newcommand{\sqlcount}{\texttt{COUNT}\xspace}
\newcommand{\sqlmax}{\texttt{MAX}\xspace}
\newcommand{\sqlsum}{\texttt{SUM}\xspace}
\newcommand{\sqlif}{\texttt{IF}\xspace}
\newcommand{\sqlcasewhen}{\texttt{CASE}\:\texttt{WHEN}\xspace}
\newcommand{\sqlspace}{\;}
\newcommand{\sqldistinctcolor}{{\color{black}{{\textbf{\texttt{DISTINCT}}\xspace\xspace}}}}
\newcommand{\sqlselectcolor}{{\color{black}{{\textbf{\texttt{SELECT}}\xspace\xspace}}}}
\newcommand{\sqlfromcolor}{{\color{black}{{\textbf{\texttt{FROM}}\xspace\xspace}}}}
\newcommand{\sqlwherecolor}{{\color{black}{{\textbf{\texttt{WHERE}}\xspace\xspace}}}}
\newcommand{\sqlgroupbycolor}{{\color{black}{{\textbf{\texttt{GROUP}}\sqlspace\textbf{\texttt{BY}}}\xspace}}}
\newcommand{\sqlhavingcolor}{{\color{black}{{\textbf{\texttt{HAVING}}\xspace\xspace}}}}
\newcommand{\sqljoincolor}{{\color{black}{{\textbf{\texttt{JOIN}}\xspace\xspace}}}}
\newcommand{\sqlleftjoincolor}{{\color{black}{{\textbf{\texttt{LEFT}}\sqlspace\textbf{\texttt{JOIN}}\xspace\xspace}}}}
\newcommand{\sqloncolor}{{\color{black}{{\textbf{\texttt{ON}}\xspace\xspace}}}}
\newcommand{\sqlascolor}{{\color{black}{{\textbf{\texttt{AS}}\xspace\xspace}}}}
\newcommand{\sqlandcolor}{{\color{black}{{\textbf{\texttt{AND}}\xspace\xspace}}}}
\newcommand{\sqlcountcolor}{{\color{black}{{\textbf{\texttt{COUNT}}\xspace\xspace}}}}
\newcommand{\sqlsumcolor}{{\color{black}{{\textbf{\texttt{SUM}}\xspace\xspace}}}}
\newcommand{\sqlavgcolor}{{\color{black}{{\textbf{\texttt{AVG}}\xspace\xspace}}}}
\newcommand{\sqlorcolor}{{\color{black}{{\textbf{\texttt{OR}}\xspace\xspace}}}}
\newcommand{\sqlnotcolor}{{\color{black}{{\textbf{\texttt{NOT}}\xspace\xspace}}}}
\newcommand{\sqlincolor}{{\color{black}{{\textbf{\texttt{IN}}\xspace\xspace}}}}
\newcommand{\sqlcasecolor}{{\color{black}{{\textbf{\texttt{CASE}}\xspace\xspace}}}}
\newcommand{\sqlwhencolor}{{\color{black}{{\textbf{\texttt{WHEN}}\xspace\xspace}}}}
\newcommand{\sqlthencolor}{{\color{black}{{\textbf{\texttt{THEN}}\xspace\xspace}}}}
\newcommand{\sqlelsecolor}{{\color{black}{{\textbf{\texttt{ELSE}}\xspace\xspace}}}}
\newcommand{\sqlendcolor}{{\color{black}{{\textbf{\texttt{END}}\xspace\xspace}}}}
\newcommand{\newpara}[1]{{{\vspace{1mm} \noindent \emph{\textbf{#1}}}}}
\newcommand{\evalfinding}[1]{
\vspace{1pt}
\begin{center}
\fcolorbox{black}{white}{\parbox{.98\linewidth}{
{#1}
}}
\vspace{-5pt}
\end{center}
}
\newcommand{\sfoldl}{\textsf{foldl}\xspace}
\newcommand{\sfoldr}{\textsf{foldr}\xspace}
\newcommand{\smap}{\textsf{map}\xspace}
\newcommand{\sfilter}{\textsf{filter}\xspace}
\newcommand{\sappend}{\textsf{append}\xspace}
\newcommand{\scons}{\textsf{cons}\xspace}
\newcommand{\smerge}{\textsf{merge}\xspace}
\newcommand{\srename}{\textsf{rename}\xspace}
\newcommand{\site}{\textsf{ite}\xspace}
\newcommand{\shead}{\textsf{head}\xspace}
\newcommand{\ptr}{\textsf{ptr}\xspace}
\begin{document}






\title{VeriEQL: Bounded Equivalence Verification for Complex SQL Queries with Integrity Constraints}

\author{Yang He}
\authornote{Both authors contributed equally to the paper.}
\orcid{0009-0007-7755-3112}
\affiliation{%
  \institution{Simon Fraser University}
  \city{Burnaby}
  \country{Canada}
}
\email{yha244@sfu.ca}

\author{Pinhan Zhao}
\authornotemark[1]
\orcid{0009-0002-1149-0706}
\affiliation{%
  \institution{University of Michigan}
  \city{Ann Arbor}
  \country{USA}
}
\email{pinhan@umich.edu}

\author{Xinyu Wang}
\orcid{0000-0002-1836-0202}
\affiliation{%
  \institution{University of Michigan}
  \city{Ann Arbor}
  \country{USA}
}
\email{xwangsd@umich.edu}

\author{Yuepeng Wang}
\orcid{0000-0003-3370-2431}
\affiliation{%
  \institution{Simon Fraser University}
  \city{Burnaby}
  \country{Canada}
}
\email{yuepeng@sfu.ca}

\renewcommand{\shortauthors}{Yang He, Pinhan Zhao, Xinyu Wang, and Yuepeng Wang}

\begin{abstract}
The task of SQL query equivalence checking is important in various real-world applications (including query rewriting and automated grading) that involve \emph{complex queries with integrity constraints}; yet, state-of-the-art techniques are very limited in their capability of reasoning about complex features (e.g., those that involve sorting, case statement, rich integrity constraints, etc.) in real-life queries. 
To the best of our knowledge, we propose the first SMT-based approach and its implementation, $\tool$, capable of proving and disproving bounded equivalence of \emph{complex} SQL queries. 
$\tool$ is based on a new logical encoding that models query semantics over symbolic tuples using the theory of integers with uninterpreted functions. 
It is \emph{simple yet highly practical} --- our comprehensive evaluation on over 20,000 benchmarks shows that $\tool$ outperforms \emph{all} state-of-the-art techniques by \emph{more than one order of magnitude} in terms of the number of benchmarks that can be proved or disproved. 
$\tool$ can also generate counterexamples that facilitate many downstream tasks (such as finding serious bugs in systems like MySQL and Apache Calcite).
\end{abstract}

\begin{CCSXML}
<ccs2012>
   <concept>
       <concept_id>10003752.10010124.10010138.10010142</concept_id>
       <concept_desc>Theory of computation~Program verification</concept_desc>
       <concept_significance>500</concept_significance>
       </concept>
   <concept>
       <concept_id>10011007.10010940.10010992.10010998.10010999</concept_id>
       <concept_desc>Software and its engineering~Software verification</concept_desc>
       <concept_significance>500</concept_significance>
       </concept>
   <concept>
       <concept_id>10011007.10011074.10011099.10011692</concept_id>
       <concept_desc>Software and its engineering~Formal software verification</concept_desc>
       <concept_significance>500</concept_significance>
       </concept>
 </ccs2012>
\end{CCSXML}

\ccsdesc[500]{Theory of computation~Program verification}
\ccsdesc[500]{Software and its engineering~Software verification}
\ccsdesc[500]{Software and its engineering~Formal software verification}


\keywords{Program Verification, Equivalence Checking, Relational Databases.}

\maketitle

\section{Introduction} \label{sec:intro}


\noindent
Equivalence checking of SQL queries is an important problem with various real-world applications, including validating source-level query rewriting~\cite{graefe1995cascades,chu2017cosette} and automated grading of SQL queries~\cite{chandra2019automated}. 
A useful SQL equivalence checker should be able to (i) provide formal guarantee on query equivalence (either fully or in a bounded manner), (ii) generate counterexamples to witness query non-equivalence, and (iii) support an expressive query language. 
For example, in the context of query rewriting where a slow query $\query_1$ is rewritten to a faster query $\query_2$ using rewrite rules, one may want to ensure the rewrite is correct by showing $\query_1$ and $\query_2$ are semantically equivalent with a certain level of formal guarantee, and in case of non-equivalence, obtain a counterexample input database to help fix the incorrect rule. 
Equivalence checking is also useful for automated query grading. In particular, it can provide feedback to users by checking their submitted queries against a ground-truth query, where the feedback could be a counterexample (i.e., a concrete database) that illustrates why the user query is wrong. 
Moreover, these counterexamples can also serve as additional test cases to augment an existing test suite (such as the one maintained by LeetCode~\cite{leetcode}, the world's most popular online programming platform).

While prior work~\cite{chu2017cosette,chu2017demonstration,chu2017hottsql,chu2018axiomatic,wang2018speeding,qex} has made some advances in both proving and disproving query equivalence, there remain a number of challenges that significantly limit the practical usage of existing techniques in real-world applications. 
The gap, as we will also show in our evaluation section later, is in fact \emph{extremely large}: for example, existing work supports less than {2\%} of the SQL queries from LeetCode. The reasons are threefold.
\begin{itemize}[leftmargin=*]
\setlength\itemsep{3pt}
\item 
First, real-world queries are complex. In addition to the simplest select-project-join queries using common aggregate functions such as $\sqlsum, \sqlcount, \sqlmax$ --- that existing techniques support fairly well --- queries in the wild frequently use advanced SQL features with much more complex logic (such as sorting, case statement, common table expressions, $\sqlin$ and $\sqlnotin$ operators, etc.) which, to our best knowledge, existing SQL equivalence checkers rarely support. 
For example, among more than 20,000 real-life queries we studied, 
more than {20\%} of them involve $\sqlorderby$, over {30\%} have $\sqlcasewhen$, over {15\%} require $\sqlin$ or $\sqlnotin$, and {15\%} use $\sqlwith$, among others. This brings up new challenges in how to precisely model the semantics of these advanced SQL operators. 
\item 
Furthermore, the interleaving between these advanced operators and other SQL features (such as three-valued logic, aggregate functions, grouping, etc.) makes the problem even more challenging. For example, the three-valued logic uses a special value $\sqlnull$ that many existing techniques (such as $\hottsql$~\cite{chu2017hottsql}) do not consider.
{We must take into account all these additional SQL features to properly model the semantics of SQL operators in order to fully support complex real-world queries and reason about query equivalence.}
\item 
Finally, most real-world queries involve integrity constraints, but prior work barely supports them. 
For instance, over 95\% of our benchmarks require integrity constraints, yet all existing techniques \emph{cumulatively} support under 2\%, to our best knowledge.
These constraints are rich: in addition to the primary key and foreign key constraints that stipulate uniqueness and value references, there are many others including $\sqlnotnull$ (used by \emph{all} queries in our $\calcite$ benchmark suite) and various constraints that restrict attribute values (>70\% across all our benchmarks)
such as requiring an attribute to be only positive integers. 
This richness poses significant challenges: for example, to witness the non-equivalence of two queries, a valid counterexample must not only yield different outputs \emph{but also} meet the integrity constraints.


\end{itemize}



In this paper, we propose a \emph{simple yet practical} approach to SQL query equivalence checking --- we can prove equivalence (in a bounded fashion) and non-equivalence (by generating counterexamples) for \emph{a complex query language with rich integrity constraints.}
\revision{
Our key contribution is a \emph{new SMT encoding} tailored towards bounded equivalence verification, based on a \emph{new semantics formalization} for a practical fragment of SQL (which is \emph{significantly larger} than those considered in prior work). 
First, we formalize our SQL semantics using list and higher-order functions, different from the K-relations in \hottsql~\cite{chu2017hottsql} or U-semiring in \udp~\cite{chu2018axiomatic}. Our formalization is inspired by \mediator~\cite{mediator}. 
However, \mediator considers unbounded equivalence verification for a small set of SQL queries, 
while our formalization considers a much larger language. 
Second, building upon the standard approach of using SMT formulas to relate the program inputs and outputs, we develop a \emph{new SMT encoding}, for our formalized query semantics, in the theory of integers and uninterpreted functions that is based on symbolic tuples. 
This new SMT encoding allows us to handle complex SQL features (e.g., \sqlorderby, \sqlnull, etc) and rich integrity constraints without unnecessarily heavy theories such as theory of lists in \qex~\cite{qex}. It also does not require an indirect encoding by translating the SQL queries to an intermediate representation in solver-aided programming language (e.g., \rosette~\cite{rosette-pldi14}), unlike the previous \cosette line of work~\cite{chu2017cosette,chu2017demonstration,wang2018speeding}.
Last but not least, we provide detailed correctness proofs of our SMT encoding with respect to our formal semantics.
}

\newpara{$\tool$.}
\revision{
We have implemented our approach in \tool, which is described schematically in Figure~\ref{fig:workflow}.
}
At a high-level, $\tool$ takes as input a pair of queries $(\query_1, \query_2)$, the database schema $\schema$ and its integrity constraint $\constraint$, and a bound $\bound$ defining the input space. 
It constructs an SMT formula $\formula$ such that: 
(i) if $\formula$ is unsatisfiable, then $\query_1$ and $\query_2$ are guaranteed to be equivalent for all database relations with at most $\bound$ tuples, and 
(ii) if $\formula$ is satisfiable, then $\query_1$ and $\query_2$ are provably non-equivalent, and we generate a counterexample (i.e., a database that meets $\constraint$ but leads to different query outputs) from $\formula$'s satisfying assignments. 
Internally, {\small \Circled{1}} $\tool$ first creates a symbolic representation $\context$ of all input databases with up to $\bound$ tuples in their relations, and encodes the integrity constraint $\constraint$ over $\context$ into an SMT formula $\formula_{\constraint}$. 
Then, we process both queries and encode their equivalence into an SMT formula $\formula_{R}$: 
{\small \Circled{2}} we traverse $\query_i$ in a forward fashion, encode how each operator in $\query_i$ transforms its input to output, obtain the final output $R_i$ of $\query_i$ for all input databases under consideration, and {\small \Circled{3}} generate an SMT formula $\formula_{R}$ that asserts $R_1 \neq R_2$ (we support bag and list semantics). 
While this overall approach is standard, our encoding scheme of each operator's semantics is new and has some important advantages. 
First, our encoding is based on the theory of integers with uninterpreted functions: it is simple yet sufficient to precisely encode all SQL features in our language (such as complex aggregate functions with grouping and previously unsupported operators like $\sqlorderby$). Crucially, our approach can support these advanced SQL features without needing additional axioms, which prior work like $\qex$ would otherwise require. 
Second, our encoding follows the three-valued logic and supports $\sqlnull$ for all of our operators, whereas prior work supports $\sqlnull$ for significantly fewer cases. 
In the final step {\small \Circled{4}}, we construct $\formula = \formula_{\constraint} \land \formula_{R}$ and use an off-the-shelf SMT solver to solve $\formula$. Notably, $\formula$ takes into account both query semantics and integrity constraints in a much simpler and more unified manner than some prior work that has separate schemes to handle queries and integrity constraints. For example, SPES~\cite{zhou2022spes} performs symbolic encoding for queries but uses rewrite rules to deal with integrity constraints --- it is fundamentally hard for such approaches to support complex integrity constraints such as those that restrict the value range.

\begin{figure}[!t]
\centering
\includegraphics[scale=0.16]{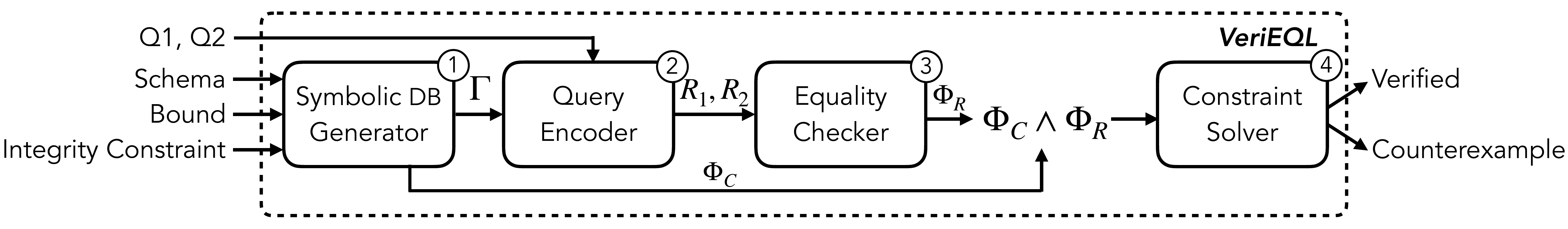}
\vspace{-5pt}
\caption{Schematic workflow of $\tool$.}
\label{fig:workflow}
\vspace{-10pt}
\end{figure}

We have evaluated $\tool$ on an extremely large number of benchmarks, consisting of 24,455 query pairs collected from three different workloads (including all benchmarks from the literature, standard benchmarks from Calcite, and over 20,000 new benchmarks from LeetCode).
Our evaluation results show that $\tool$ can prove the bounded equivalence for significantly more benchmarks than all state-of-the-art techniques, disprove and find counterexamples for two orders of magnitude more benchmarks, and uncover serious bugs in real-world codebases (including MySQL and Calcite).

\vspace{5pt}
\newpara{Contributions.}
This paper makes the following contributions. 
\begin{itemize}[leftmargin=*]
\item 
We formulate the problem of bounded SQL equivalence verification \emph{modulo integrity constraints}. 
\item 
\revision{
We formalize the semantics of a practical fragment of SQL queries through list and higher-order functions. Our SQL fragment is significantly larger than those in prior work.
}
\item
\revision{
We propose a novel SMT encoding tailored towards bounded equivalence verification of SQL queries, including previously unsupported SQL operators (e.g., \sqlorderby, \sqlcasewhen) and rich integrity constraints. To our best knowledge, this is the first approach that supports complex SQL queries with rich integrity constraints.
}
\item
\revision{
We prove our SMT encoding of SQL queries is correct with respect to the formal semantics.
}
\item
\revision{
We implement our approach in \tool.
Our comprehensive evaluation on a total of 24,455 benchmarks --- including a new benchmark suite with over 20,000 real-life queries --- shows that, $\tool$ can solve (i.e., prove or disprove equivalence) 77\% of these benchmarks, while state-of-the-art bounded verifier can solve <2\% and testing tools can disprove <1\%.
}
\end{itemize}

\section{Overview} \label{sec:overview}




\begin{figure}[!h]
\centering
\begin{subfigure}{\linewidth}
\includegraphics[width=\linewidth]{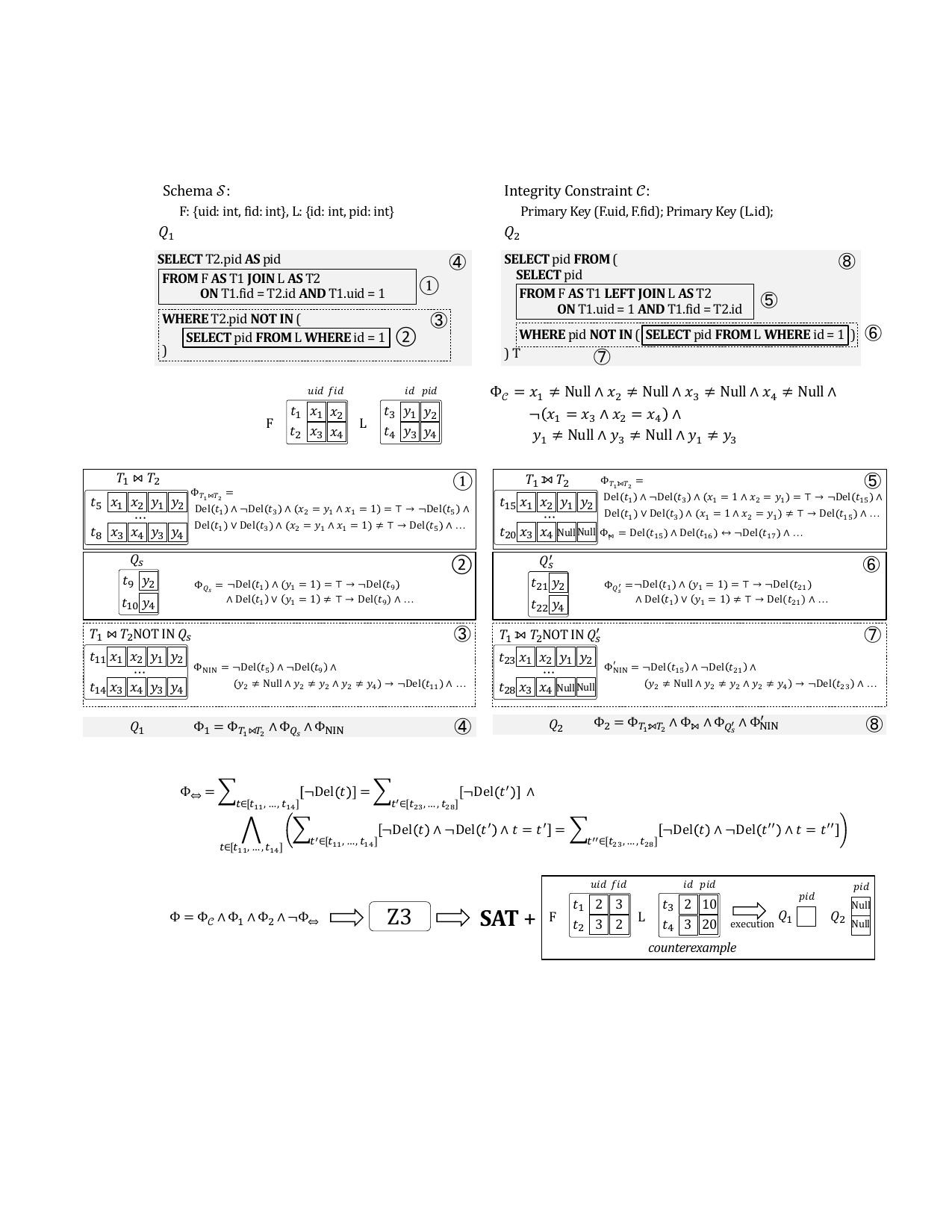}
\caption{Schema, integrity constraint, and queries.}
\label{fig:overview-query}
\vspace{3pt}
\end{subfigure}
\hfill
\begin{subfigure}{.4\linewidth}
\centering
\includegraphics[width=0.6\linewidth]{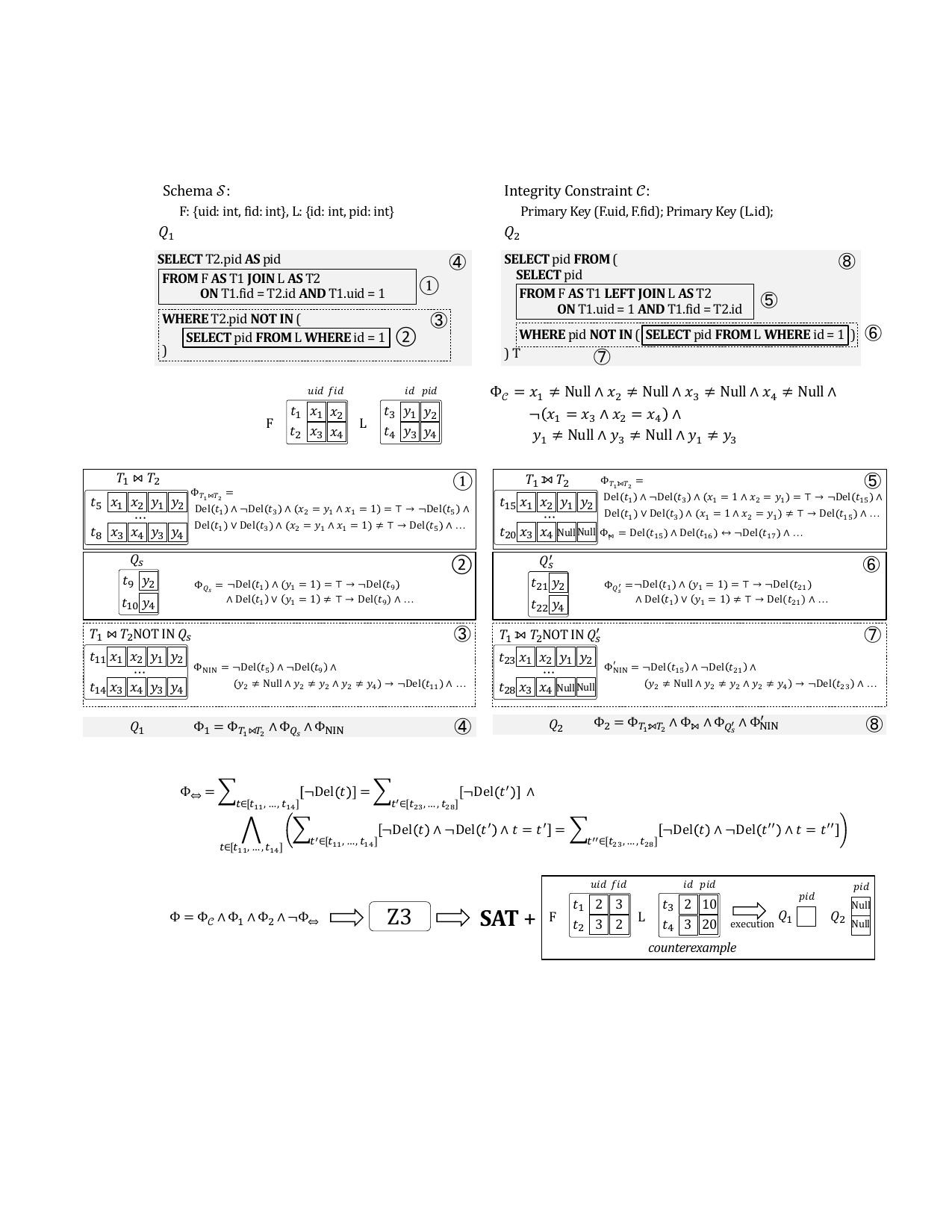}
\caption{Generating symbolic database.}
\label{fig:overview-db}
\end{subfigure}
\hfill
\begin{subfigure}{.59\linewidth}
\centering
\includegraphics[width=0.7\linewidth]{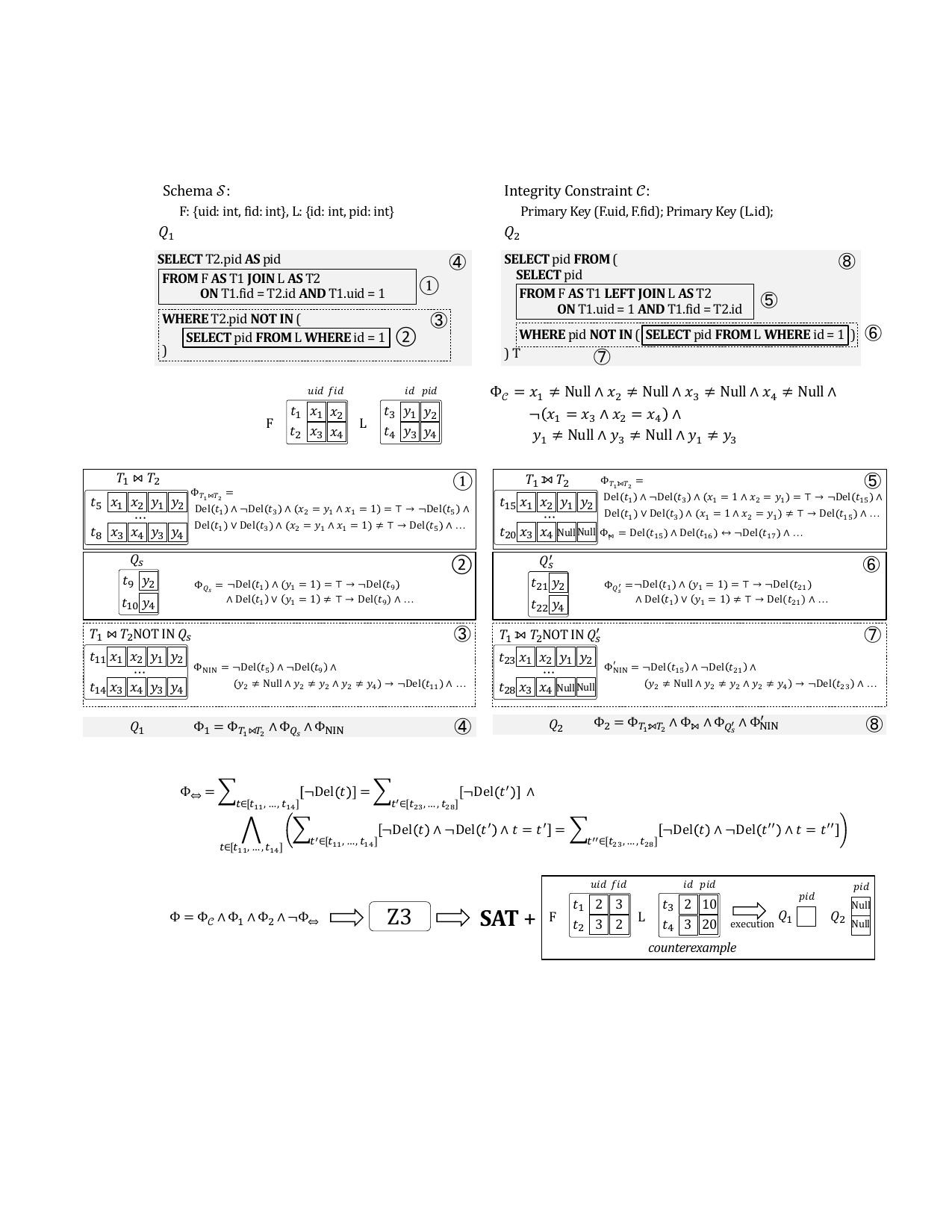}
\caption{Encoding integrity constraint.}
\label{fig:overview-ic}
\end{subfigure}
\hfill
\begin{subfigure}{\linewidth}
\vspace{3pt}
\includegraphics[width=\linewidth]{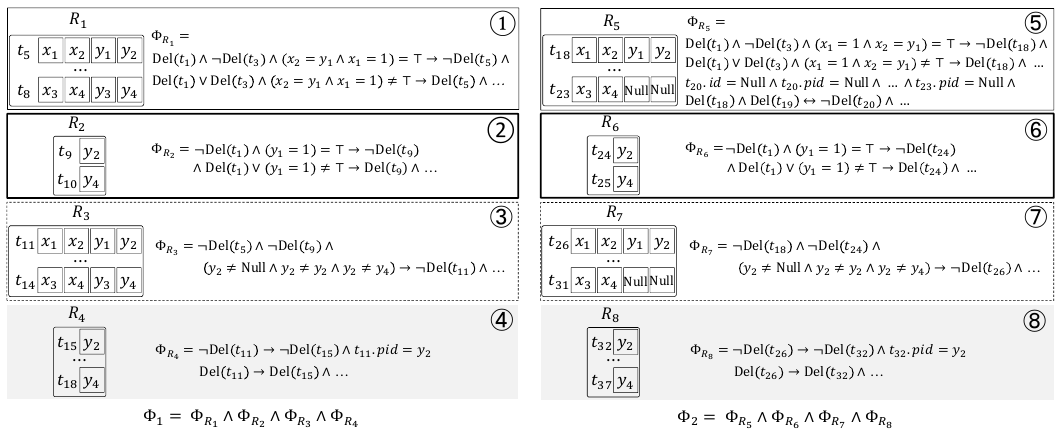}
\caption{Encoding query semantics.}
\label{fig:overview-query-semantics}
\vspace{5pt}
\end{subfigure}
\hfill
\begin{subfigure}{0.85\linewidth}
\includegraphics[width=\linewidth]{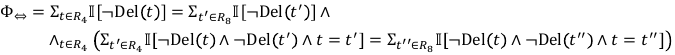}
\caption{Checking equality.}
\label{fig:overview-checking-equality}
\end{subfigure}
\hfill
\begin{subfigure}{\linewidth}
\includegraphics[width=\linewidth]{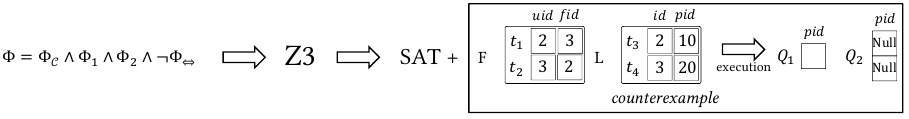}
\caption{Constraint solving.}
\label{fig:overview-results}
\end{subfigure}
\vspace{-20pt}
\caption{Illustration of how $\tool$ works on a simple LeetCode task.}
\label{fig:overview-example}
\vspace{-15pt}
\end{figure}

In this section, we further illustrate $\tool$'s workflow using a simple example from LeetCode\footnote{https://leetcode.com/problems/page-recommendations}. 
Note that for illustration purposes, we significantly simplify the database schema and queries in this task, but $\tool$ can handle the original queries\footnote{The original queries are available in \appx{Appendix~\ref{sec:full-semantics}}{the Appendix of the extended version~\cite{extended}}.}
 that are much more complex. 

Specifically, this task involves a database with two relations, \texttt{Friendship} (or $F$) and \texttt{Likes} (or $L$), where $F$ records users' friends and $L$ stores users' preferred pages. The task is to write a query that, given a user, returns the recommended pages which their friends prefer but are not preferred by the given user.
In what follows, we explain this task in more detail. 

\newpara{Schema and integrity constraint.}
Figure~\ref{fig:overview-query} shows the database schema $\schema$ with the two relations: (i) \texttt{Friendship} relation $F$ has two attributes $uid$ and $fid$, denoting the ID of each user and their friends, and (ii) \texttt{Likes} relation $L$ with two attributes $id$ and $pid$ which denote the user ID and the preferred page ID.
The integrity constraint $\constraint$ for this task 
specifies that the pair $(uid, fid)$ is the primary key of $F$, and $id$ is the primary key of $L$.



\newpara{Queries.}
Consider the user with $id = 1$. 
Figure~\ref{fig:overview-query} shows a query $\query_1$ that can solve the task for this given user. 
In particular, $\query_1$ uses an {\circled{1}} $\sqlinnerjoin$  to find the given user's friends, and uses a {\circled{2}} nested query followed by a {\circled{3}} $\sqlnotin$ filter to rule out the given user's preferred pages. 
Figure~\ref{fig:overview-query} also shows another query $\query_2$ which is similar to $\query_1$ but uses a {\circled{5}} $\sqlleftjoin$ instead.
Note that $\query_2$ is \emph{not} equivalent to $\query_1$ --- consider the case where the given user does not have any friends (i.e., $F$ does not have a tuple with $uid=1$). 
In this case, the $\sqlinnerjoin$ in $\query_1$ returns an empty result as the join condition $F.uid=1$ never holds, whereas the $\sqlleftjoin$ from $\query_2$ gives a product tuple $t'$ for each tuple $t$ in $L$ and $t'.pid=\sqlnull$. The \sqlnull values on $pid$ will eventually be projected out by $\query_2$, leading to (incorrect) \sqlnull tuples in the result.

\vspace{5pt}
In what follows, let us explain how $\tool$ proves the non-equivalence of $(\query_1, \query_2)$. 
Specifically, we will explain the key concepts in Figure~\ref{fig:workflow}: 
(1) what does a symbolic database look like, 
(2) how to encode the integrity constraint, 
(3) how to encode the query semantics, 
(4) how to check two queries always produce identical outputs, and
(5) how to obtain verification results.

\newpara{Generating symbolic database.}
Figure~\ref{fig:overview-db} shows the symbolic DB generated by $\tool$ that has \emph{up to} 2 rows in each relation (i.e., the input bound is 2). Each row is called a \emph{symbolic tuple}. 
For instance, we denote the tuples in $F$ by $t_1, t_2$ and denote the attribute values by $x_1 = t_1.uid$ and $x_2 = t_1.fid$ and etc.
Similarly, the tuples in $L$ are $t_3, t_4$ and the values for $id$ and $pid$ are $y_1, y_2$ and $y_3, y_4$, respectively.
In general, we use an \emph{uninterpreted} predicate $\del$ over tuples to indicate whether a tuple is deleted after an operation. Since the value of $\del(t_i)$ is unspecified, whether $t_i$ is deleted or not is non-deterministic. Thus, the initial symbolic database encodes all databases where each relation has \emph{at most} two tuples.


\newpara{Encoding integrity constraint.}
Figure~\ref{fig:overview-ic} shows $\tool$'s SMT encoding $\formula_\constraint$ of the integrity constraint $\constraint$. 
It has two parts. 
The first part 
\setlength{\abovedisplayskip}{3pt}
\setlength{\belowdisplayskip}{3pt}
\[
x_1 \neq \nullv \land x_2 \neq \nullv \land x_3 \neq \nullv \land x_4 \neq \nullv \land \neg (x_1 = x_3 \land x_2 = x_4)
\]
specifies that tuples $t_1, t_2$ are unique and all attributes are not null, since $(uid, fid)$ is a primary key. 
The second part $y_1 \neq \nullv \land y_3 \neq \nullv \land  y_1 \neq y_3$
encodes that $id$ is a primary key of relation $L$.

\vspace{-3pt}
\newpara{Encoding query semantics.}
To encode the semantics of a query, $\tool$ encodes the semantics of each operator which are then composed to form the encoding of the entire query. 
For example, the formula $\formula_1$ to encode $\query_1$ is the conjunction of $\formula_{R_1}$, $\formula_{R_2}$, $\formula_{R_3}$ and $\formula_{R_4}$.
Specifically, to encode the inner join $T_1 \ijoin T_2$ in \circled{1}, \tool considers the Cartesian product of tuples in $T_1$ and $T_2$, followed by a filter to set all resulted tuples that do not satisfy $T_1.fid = T_2.id$ and $T_1.id = 1$ as deleted. 
For the \sqlnotin in $\query_1$, it analyzes the nested query and generates the formula $\formula_{R_2}$ (as shown in \circled{2}), and checks the membership of $T_1.pid$ with the obtained $pid$, which is encoded as $\formula_{R_3}$ in \circled{3}.
Finally, the $pid$ is projected out by \circled{4}, resulting in the output $R_4$ and a formula $\formula_{R_4}$.
The obtained formula $\formula_1 = \formula_{R_1} \land \formula_{R_2} \land \formula_{R_3} \land \formula_{R_4}$ precisely encodes the query semantics of $\query_1$.
Similarly, $\query_2$'s output is $R_8$ and the corresponding formula $\formula_2$ consists of $\formula_{R_5}$, $\formula_{R_6}$, $\formula_{R_7}$ and $\formula_{R_8}$ from \circled{5} \circled{6} \circled{7} \circled{8}, respectively.

\noindent \textbf{\emph{Checking equality.}}
After encoding the semantics of queries $\query_1$ and $\query_2$, \tool needs to compare their outputs $R_4$ and $R_8$. Although $R_4$ and $R_8$ do not have the same number of symbolic tuples, they can still be equal because some tuples may have been deleted. To check the equality, \tool generates a formula $\formula_{\Leftrightarrow}$ (in Figure~\ref{fig:overview-checking-equality}) asserting $R_4$ is equal to $R_8$.
$\tool$ supports both list and bag semantics. In this example, it uses bag semantics because neither of the queries involves sorting operations.
$\formula_{\Leftrightarrow}$ encodes two properties: (1) $R_4$ and $R_8$ have the same number of non-deleted tuples and (2) for each non-deleted tuple in $R_4$, its multiplicity in $R_4$ is the same as that in $R_8$.
\revision{
Since properties (1) and (2) imply that (3) for each non-deleted tuple in $R_8$, its multiplicity in $R_8$ is the same as that in $R_4$, we do not need to include a formula for property (3) in $\formula_{\Leftrightarrow}$.
}

\newpara{Verification result.}
Finally, \tool builds a formula $\formula = \formula_{\constraint} \land \formula_1 \land \formula_2 \land \neg \formula_{\Leftrightarrow}$ encoding the existence of a database $\db$ such that (1) $\db$ satisfies the integrity constraint $\constraint$ and (2) $\query_1$ and $\query_2$ have different outputs on $\db$.
$\tool$ invokes Z3 and finds $\formula$ is satisfiable, so it concludes that $\query_1$ and $\query_2$ are not equivalent and generates a counterexample database (shown in Figure~\ref{fig:overview-results}) where $\query_1$ and $\query_2$ indeed yield different results.

\section{Problem Formulation} \label{sec:prelim}

\revision{
We first describe the preliminaries of relational schema and database, and then present our query language and integrity constraints, followed by a formal problem statement. 
}

\subsection{Relational Schema and Database}

\begin{figure}
\small
\centering

\begin{subfigure}[b]{0.45\textwidth}
\centering
\[
\begin{array}{rcl}
\text{Schema} & :: & \text{RelName} \to \text{RelSchema} \\
\text{RelSchema} & :: & [(\text{AttrName}, \text{Type})] \\
\text{Type} & \in & \set{\text{Int}, \text{Bool}} \\
\end{array}
\]
\vspace{-10pt}
\caption{Schema}
\label{fig:schema}
\end{subfigure}
    \hfill
\begin{subfigure}[b]{0.45\textwidth}
\centering
\[
\begin{array}{rcl}
\text{Database} & :: & \text{RelName} \to \text{Relation} \\
\text{Relation} & :: & [\text{Tuple}] \\
\text{Tuple} & :: & [(\text{AttrName}, \text{Value})] \\
\text{Value} & \in & \textbf{Int} \cup \textbf{Bool} \cup \set{\text{Null}} \\
\end{array}
\]
\vspace{-10pt}
\caption{Database}
\label{fig:db}
\end{subfigure}

\vspace{-10pt}
\caption{Relational schema and database.}
\label{fig:schema-db}
\vspace{-15pt}
\end{figure}

\vspace{-3pt}
\newpara{Schema.}
\revision{
As shown in Figure~\ref{fig:schema}, a relational database schema is a mapping from relation names to their relation schemas, where each relation schema is a list of attributes. Each attribute is represented by a pair of attribute name and type. All attribute names are assumed to be globally unique, which can be easily enforced by using fully qualified attribute names that include relation names, e.g., \texttt{EMP.age}. We only consider two primitive types, namely Int and Bool, in this paper.
Other types (e.g., strings and dates) can be treated as Int and functions involving those types can be treated as uninterpreted functions over Int.
}

\newpara{Database.}
\revision{
Similarly, as shown in Figure~\ref{fig:db}, a relational database is a mapping from relation names to their corresponding relations, where each relation consists of a list of tuples.\footnote{\revision{
If a relation is created by a non-\sqlorderby query, then we interpret this list as a bag. }} 
Each tuple contains a list of values with corresponding attribute names, and a value can be an integer, bool, or \sqlnull. 
}

\subsection{Syntax of SQL Queries}

\begin{figure}[!t]
\footnotesize 
\centering
\[
\hspace{-15pt}
\begin{array}{rcl}
\text{Query}~Q_r & ::= & \revision{Q ~|~ \text{OrderBy}(Q, \vec{E}, b)} \\
\text{Subquery}~Q & ::= & R ~|~ \proj_L(Q) ~|~ \filter_\phi(Q) ~|~ \revision{\rename_R(Q)} ~|~ Q \allcoll Q ~|~ \text{Distinct}(Q) ~|~ Q \alljoin Q
                 ~|~ \text{GroupBy}(Q, \vec{E}, L, \phi)  \\
                & | & \revision{\text{With}(\vec{Q}, \vec{R}, Q)}  \\
\text{Attr List}~L & ::= & \revision{id(A)} ~|~ \revision{\rename_a(A)} ~|~ L, L \\
\text{Attr}~A & ::= & {\text{Cast}(\phi)} ~|~ E ~|~ \mathcal{G}(E) ~|~ {A \allarith A} \\
\text{Pred}~\phi & ::= & b ~|~ \nullv ~|~ {A \alllogic A} ~|~ \text{IsNull}(E) ~|~ \vec{E} \in \vec{v} ~|~ \vec{E} \in Q
                 ~|~  \phi \land \phi ~|~ \phi \lor \phi ~|~ \neg \phi \\
\text{Expr}~E & ::= & a ~|~ v ~|~ E \allarith E ~|~ \text{ITE}(\phi, E, E) ~|~ \text{Case}(\vec{\phi}, \vec{E}, E) \\ 
\text{Join Op}~\alljoin & ::= & \times ~|~ \ijoin_\phi ~|~ \ljoin_\phi ~|~ \rjoin_\phi ~|~ \fjoin_\phi \\
\text{Collection Op}~\allcoll & ::= & \cup ~|~ \cap ~|~ \setminus ~|~ \uplus ~|~ \cplus ~|~ - \\
\text{Arith Op}~\allarith & ::= & + ~|~ - ~|~ \times ~|~ / ~|~ \% \\
\text{Logic Op}~\alllogic & ::= & \leq ~|~ < ~|~ = ~|~ \neq ~|~ > ~|~ \geq \\
\end{array}
\]
\[
\begin{array}{c}
\revision{R \in \textbf{Relation Names} \quad a \in \textbf{Attribute Names} \quad v \in \textbf{Values}} \quad b \in \textbf{Bools} \quad 
\revision{\mathcal{G} \in \{\text{Count}, \text{Min}, \text{Max}, \text{Sum}, \text{Avg}\}}
\end{array}
\]
\vspace{-10pt}
\caption{Syntax of SQL Queries. \revision{\textbf{Values} include integers, bools, and \nullv. $id(A)$ is a construct denoting attribute $A$ itself occurs in the attribute list. If the context is clear, we may also omit the $id$ constructor for brevity. {The $\text{Cast}$ function takes as input a predicate $\phi$ and returns \nullv if $\phi$ is $\nullv$, 1 if $\phi$ is $\top$, and 0 otherwise.}} }
\label{fig:syntax-sql}
\vspace{-10pt}
\end{figure}

The syntax of our query language is shown in Figure~\ref{fig:syntax-sql}, which covers various practical SQL operators,
\revision{
including projection $\proj$, selection $\filter$, renaming $\rename$, set union $\cup$, intersection $\cap$, minus $\setminus$, bag union $\uplus$, intersection $\cplus$, minus $-$, Distinct, Cartesian product $\times$, inner join $\ijoin_\phi$, left outer join $\ljoin_\phi$, right outer join $\rjoin_\phi$, full outer join $\fjoin_\phi$, GroupBy with Having clauses, With clauses, and OrderBy.
In addition, the query language also supports various attribute expressions such as arithmetic expressions $E_1 \allarith E_2$, aggregate functions $\mathcal{G}(E)$, if-then-else expressions $\text{ITE}(\phi, E_1, E_2)$, and case expressions $\text{Case}(\vec{\phi}, \vec{E}, E')$, as well as predicates such as logical comparison $A_1 \alllogic A_2$, null checks $\text{IsNull}(E)$, and membership check $\vec{E} \in Q$.
Many of these constructs are not supported by prior work, such as OrderBy, With, Case, and so on.
In what follows, we discuss the syntax of aggregate functions, GroupBy, and OrderBy in more detail.
}


\newpara{Aggregate functions and GroupBy.}
\revision{
Our language includes five common aggregate functions $\aggr$, and as is standard, it does not permit nested aggregate functions such as \texttt{SUM(SUM(a))}.
The language also has a construct $\text{GroupBy}(Q, \vec{E}, L, \phi)$ for grouping tuples. Intuitively, $\text{GroupBy}(Q, \vec{E}, L, \phi)$ groups the tuples of subquery $Q$ based on attribute expressions $\vec{E}$ and for each group satisfying condition $\phi$, it computes an aggregate tuple according to the attribute list $L$.
}

\begin{example}
The GroupBy operation is closely related to the \sqlgroupby and \sqlhaving clauses in standard SQL. For instance, consider a relation \texttt{EMP(id, gender, age, sal)} and a SQL query:
\begin{small}
\[
\sqlselectcolor \sqlspace \sqlavgcolor(\texttt{age}), \sqlspace \sqlavgcolor(\texttt{sal}) \sqlspace \sqlfromcolor \sqlspace \texttt{EMP} \sqlspace \sqlwherecolor \sqlspace \texttt{age} > 20 \sqlspace \sqlgroupbycolor \sqlspace \texttt{gender} \sqlspace \sqlhavingcolor \sqlspace \sqlavgcolor(\texttt{sal}) > 30000
\]
\end{small}
It can be represented by the following query in our language
\[
\text{GroupBy}(\filter_{\texttt{age}>20}(\texttt{EMP}), [\texttt{gender}], [\text{Avg}(\texttt{age}), \text{Avg}(\texttt{sal})], \text{Avg}(\texttt{sal})>30000)
\]
\end{example}

\revision{
It is worthwhile to point out that the grammar in Figure~\ref{fig:syntax-sql} does not precisely capture all syntactic requirements on valid queries. In particular, a query with certain syntax errors may also be accepted by the language, because those syntactic requirements are difficult to describe at the grammar level. We instead perform static analysis to check if a query is well-formed and throw errors if the query contains invalid expressions. For example, consider the following SQL query
}
\begin{small}
\[
\sqlselectcolor \sqlspace \sqlsumcolor(\texttt{sal}) \sqlspace \sqlfromcolor \sqlspace \texttt{EMP} \sqlspace \sqlgroupbycolor \sqlspace \texttt{age} \sqlspace \sqlhavingcolor \sqlspace \texttt{sal} > 10000
\]
\end{small}
where the \sqlhaving clause uses a non-aggregated attribute \texttt{sal} that is not in the \sqlgroupby list. Such a query is not permitted in standard SQL, because it may contain ambiguity where a group of tuples with the same \texttt{age} may have different \texttt{sal} values. Our query language also disallows such GroupBy queries. Although a simple syntactic check can reveal this error, describing the check in grammar is non-trivial, so we perform static analysis to reject such queries.

\newpara{Sorting and OrderBy.}
\revision{
Our language supports sorting the query result. Specifically, $\text{OrderBy}(Q, \vec{E}, b)$ can sort the result of subquery $Q$ according to a list of attribute expressions $\vec{E}$. The result is in ascending order if $b = \top$ and in descending order otherwise.
Note that as shown in Figure~\ref{fig:syntax-sql}, OrderBy is only allowed to be used at the topmost level of a query.
}

\subsection{Semantics of SQL Queries}

\begin{figure}[!t]
\scriptsize

\begin{mdframed}
\[
\denot{Q} :: \text{Database } \db \to \text{Relation}
\]
\end{mdframed}

\[
\begin{array}{lcl}

\denot{R}_\db & = & \db(R) \\
\denot{\proj_L(Q)}_\db & = & \site(\textsf{hasAgg}(L), [\denot{L}_{\db, \denot{Q}_\db}], \smap(\denot{Q}_\db, \lambda x. \denot{L}_{\db, x})) \\
\denot{\filter_\phi(Q)}_\db & = & \sfilter(\denot{Q}_\db, \lambda x. \denot{\phi}_{\db, [x]} = \top) \\
\denot{\rename_R{(Q)}}_\db & = & \smap(\denot{Q}_\db, \lambda x. \smap(x, \lambda (n, v). (\srename(R, n), v))) \\
\denot{Q_1 \cap Q_2}_\db & = & \sfilter(\denot{\text{Distinct}(Q_1)}_\db, \lambda x. x \in \denot{Q_2}_\db) \\
\denot{Q_1 \cup  Q_2}_\db & = & \denot{\text{Distinct}(Q_1 \uplus Q_2)}_\db \\
\denot{Q_1 \setminus Q_2}_\db & = & \sfilter(\denot{\text{Distinct}(Q_1)}_\db, \lambda x. x \not \in \denot{Q_2}_\db) \\
\denot{Q_1 \cplus Q_2}_\db & = & \denot{Q_1 - (Q_1 - Q_2)}_\db \\
\denot{Q_1 \uplus Q_2}_\db & = & \sappend(\denot{Q_1}_{\db}, \denot{Q_2}_{\db}) \\
\denot{Q_1 - Q_2}_\db & = & \sfoldl(\lambda xs. \lambda x. \site(x \in xs, xs - x, xs), \denot{Q_1}_\db, \denot{Q_2}_\db) \\
\denot{\text{Distinct}(Q)}_\db & = & \sfoldr(\lambda x. \lambda xs. \scons(x, \sfilter(xs, \lambda y. x \neq y)), [], \denot{Q}_\db) \\

\denot{Q_1 \times Q_2}_\db & = & \sfoldl(\lambda xs. \lambda x. \sappend(xs, \smap(\denot{Q_2}_\db, \lambda y. \smerge(x, y))), [], \denot{Q_1}_\db) \\
\denot{Q_1 \ijoin_\phi Q_2}_\db & = & \denot{\filter_\phi(Q_1 \times Q_2)}_\db \\
\denot{Q_1 \ljoin_\phi Q_2}_\db & = & \sfoldl(\lambda xs. \lambda x. \sappend(xs, \site(|v_1(x)| = 0, v_2(x), v_1(x)), [],\denot{Q_1}_\db) \\
& & \quad \text{where}~ v_1(x) = \denot{[x] \ijoin_\phi Q_2}_\db ~\text{and}~ v_2(x) = [\smerge(x, T_{\nullv})] \\
\denot{Q_1 \rjoin_\phi Q_2}_\db & = & \sfoldl(\lambda xs. \lambda x. \sappend(xs, \site(|v_1(x)| = 0, v_2(x), v_1(x)), [],\denot{Q_2}_\db) \\
& & \quad \text{where}~ v_1(x) = \denot{Q_1 \ijoin_\phi [x]}_\db ~\text{and}~ v_2(x) = [\smerge(T_{\nullv}, x)] \\
\denot{Q_1 \fjoin_\phi Q_2}_\db & = & \sappend(\denot{Q_1 \ljoin_\phi Q_2}_\db, \smap(xs, \lambda x. \smerge(T_{\nullv}, x)) ) \\
& & \quad \text{where}~ xs = \sfilter(\denot{Q_2}_\db, \lambda y. |\denot{Q_1 \ijoin_\phi [y]}_\db| = 0 ) \\

\denot{\text{GroupBy}(Q, \vec{E}, L, \phi)}_\db \!\!\!\! & = & 
\smap(
\sfilter(Gs, \lambda xs. \denot{\phi}_{\db, xs} = \top),
\lambda xs. \denot{L}_{\db, xs}
)
~\text{where} \\
& & \quad Gs = \smap(\text{Dedup}(Q, \vec{E}), \lambda y. \sfilter(\denot{Q}_\db, \lambda z. \text{Eval}(\vec{E}, [z]) = y)), \\
& & \quad \text{Dedup}(Q, \vec{E}) = \sfoldr(\lambda x. \lambda xs. \scons(x, \sfilter(xs, \lambda y. x \neq y)), [],
\smap(\denot{Q}_\db, \lambda z. \text{Eval}(\vec{E}, [z]))),\\
& & \quad \text{Eval}(\vec{E}, xs) = \smap(\vec{E}, \lambda e. \denot{e}_{\db, xs}) \\
\denot{\text{With}(\vec{Q}, \vec{R}, Q)}_\db & = & 
\denot{Q}_{\db'} ~\text{where}~ \db' = \db[R_i \mapsto \denot{Q_i}_{\db} ~|~ R_i \in \vec{R}] \\
\denot{\text{OrderBy}(Q, \vec{E}, b)}_\db & = & \sfoldl(\lambda xs. \lambda \_. (\sappend(xs, [\text{MinTuple}(\vec{E}, b, \denot{Q}_\db - xs)])), [], \denot{Q}_\db) ~\text{where} \\
& & \quad \text{MinTuple}(\vec{E}, b, xs) = \sfoldl(\lambda x. \lambda y. \site(\text{Cmp}(\vec{E}, b, x, y), y, x), \shead(xs), xs), \\
& & \quad \text{Cmp}(\vec{E}, b, x_1, x_2) = b \neq \sfoldr(\lambda E_i. \lambda y. \site(v(x_1, E_i) < v(x_2, E_i), \top, \\
& & \quad \quad \quad \quad \quad \quad \quad \quad \quad \quad \quad \quad \quad \quad \quad \quad \quad \quad \quad \site(v(x_1, E_i) > v(x_2, E_i), \bot, y)), \top, \vec{E}),\\
& & \quad v(x, E) = \site(\denot{E}_{\db, [x]} = \nullv, -\infty, \denot{E}_{\db, [x]}) \\
\end{array}
\]
\vspace{-10pt}
\caption{Formal semantics of SQL queries. \site is the standard if-then-else function. $\textsf{hasAgg}(L)$ checks if the attribute list $L$ has an attribute computing an aggregation. $\smerge(x, y)$ merges two tuples $x$ and $y$ into one tuple. $\srename(R, n)$ replaces all occurrences of the original table name $R$ with the new globally unique name $n$. $T_{\nullv}$ represents a tuple of \nullv's, whose length is determined as appropriate by the context for brevity.
{Note that $xs - x$ on the right-hand side of $\denot{Q_1 - Q_2}_\db$ denotes deleting one tuple $x$ in $xs$ iff $xs \in x$.}
Full formal semantics of all constructs in our query language, including predicates $\denot{\phi}_{\db, xs}$, attribute lists $\denot{L}_{\db, xs}$, and expressions $\denot{e}_{\db, xs}$ is available in \appx{Appendix~\ref{sec:full-semantics}}{the Appendix of the extended version~\cite{extended}} .}
\label{fig:semantics-query}
\vspace{-10pt}
\end{figure}

\revision{
The denotational semantics of our SQL queries is presented in Figure~\ref{fig:semantics-query}, where $\denot{Q}$ takes as an input a database $\db$ and produces as output a relation.
}

\newpara{Bag and list semantics.}
\revision{
As is standard in SQL, we conceptually view a relation as a bag (multiset) of tuples and use the bag semantics for queries. However, our denotational semantics in Figure~\ref{fig:semantics-query} uses lists to implement bags and define query operators through higher-order combinators such as \smap, \sfilter, and \sfoldl. In this way, we can easily define complex query operators and switch to list semantics when the query needs to sort the result using \sqlorderby.
}

\newpara{Three-valued semantics.}
\revision{
Similar to standard SQL, our semantics also uses three-valued logic.
}
More specifically, a relation may use \sqlnull values to represent unknown information, and all query operators should behave correctly with respect to the \sqlnull value.
\revision{
For example, our semantics considers a predicate can be evaluated to three possible values, namely $\top$ (\textsf{true}), $\bot$ (\textsf{false}), and Null. A notation like $\denot{\phi}_{\db, xs} = \top$ means the predicate $\phi$ evaluates to \textsf{true} (not \textsf{false} nor Null) given database $\db$ and tuples $xs$.
\footnote{Detailed definition of predicate semantics is available in \appx{Appendix~\ref{sec:full-semantics}}{the Appendix of the extended version~\cite{extended}}.}
}

\newpara{Basics.}
\revision{
At a high level, our denotational semantics can be viewed as a function or a functional program $\denot{Q}_{\db}$ that pattern matches different constructors of query $Q$ and evaluates them to relations given a concrete database $\db$. If the query is a relation name $R$, then $\denot{R}_{\db}$ simply looks up the name $R$ in $\db$. If the query is a projection, $\denot{\proj_L(Q)}_{\db}$ first checks if the attribute list $L$ includes aggregation functions. If so, it evaluates query $Q$ and invokes $\denot{L}_{\db, \denot{Q}_{\db}}$ to compute the aggregate values. Otherwise, it projects each tuple in the result of $\denot{Q}_{\db}$ according to $L$ through a standard \smap~ combinator. If the query is renaming, $\denot{\rename_R(Q)}_{\db}$ first evaluates $\denot{Q}_{\db}$. Then for each tuple in the result, it updates the attribute name $n$ to be a new name $\textsf{rename}(R, n)$ that is related to relation name $R$. Following similar ideas of functional programming, we can define semantics for filtering $\filter$, set operations $\cap, \cup, \setminus$, bag operations $\cplus, \uplus, -$, and Distinct for removing duplicated tuples.
}

\newpara{Cartesian product and joins.}
Apart from the standard Cartesian product $Q_1 \times Q_2$, there are four different joins in our language.
\revision{
An \emph{inner join} $Q_1 \ijoin_\phi Q_2$ can be viewed as a syntactic sugar of $\filter_{\phi}(Q_1 \times Q_2)$.
}
A \emph{left outer join} $Q_1 \ljoin_\phi Q_2$ is more involved.
\revision{
$\denot{Q_1 \ljoin Q_2}_{\db}$ iterates tuples in the result of $\denot{Q_1}_{\db}$ and for each tuple $x$, it computes the inner join of $[x]$ and $Q_2$ and denotes the result by $v_1(x)$. If $v_1(x)$ is not empty, it is appended to the final result of the left outer join. Otherwise, $\denot{Q_1 \ljoin Q_2}_{\db}$ computes a \emph{null extension} $v_2(x)$ of $x$ by taking the Cartesian product between $x$ and $T_{\textsf{Null}}$ (a tuple of Null values) and appends $v_2(x)$ to the result.
Similarly, a \emph{right outer join} $Q_1 \rjoin_\phi Q_2$ can be defined in the same way.
For \emph{full outer join}, $\denot{Q_1 \fjoin_\phi Q_2}_{\db}$ starts with the left outer join $Q_1 \ljoin_\phi Q_2$. Then for each tuple $y$ in $\denot{Q_2}_{\db}$, if the inner join between $Q_1$ and $[y]$ returns empty, $\denot{Q_1 \fjoin_\phi Q_2}_{\db}$ adds a null extension $\smerge(T_{\text{Null}}, y)$ to the result.
}

\begin{example}
Consider the following two relations \texttt{EMP} and \texttt{DEPT}. What follows illustrates the difference in the results of various joins and their Cartesian product.
\begin{figure}[H]
\tiny
\begin{subfigure}[b]{0.2\textwidth}
\centering
\begin{tabular}{|c|c|c|}
\hline
eid & ename & did \\
\hline
1 & A & 11 \\
\hline
2 & B & 12 \\
\hline
\end{tabular}
\vspace{-5pt}
\caption*{\texttt{EMP}}
\end{subfigure}
~
\begin{subfigure}[b]{0.15\textwidth}
\centering
\begin{tabular}{|c|c|}
\hline
id & dname \\
\hline
10 & C \\
\hline
11 & D \\
\hline
\end{tabular}
\vspace{-5pt}
\caption*{\texttt{DEPT}}
\end{subfigure}
~
\begin{subfigure}[b]{0.3\textwidth}
\centering
\begin{tabular}{|c|c|c|c|c|}
\hline
eid & ename & did & id & dname \\
\hline
1 & A & 11 & 11 & D \\
\hline
\end{tabular}
\vspace{-5pt}
\caption*{$\texttt{EMP} \ijoin_{\texttt{did=id}} \texttt{DEPT}$}
\end{subfigure}
~
\begin{subfigure}[b]{0.3\textwidth}
\centering
\begin{tabular}{|c|c|c|c|c|}
\hline
eid & ename & did & id & dname \\
\hline
1 & A & 11 & 11 & D \\
\hline
2 & B & 12 & \sqlnull & \sqlnull \\
\hline
\end{tabular}
\vspace{-5pt}
\caption*{$\texttt{EMP} \ljoin_{\texttt{did=id}} \texttt{DEPT}$}
\end{subfigure}

\vspace{5pt}

\begin{subfigure}[b]{0.3\textwidth}
\centering
\begin{tabular}{|c|c|c|c|c|}
\hline
eid & ename & did & id & dname \\
\hline
\sqlnull & \sqlnull & \sqlnull & 10 & C \\
\hline
1 & A & 11 & 11 & D \\
\hline
\end{tabular}
\vspace{-5pt}
\caption*{$\texttt{EMP} \rjoin_{\texttt{did=id}} \texttt{DEPT}$}
\end{subfigure}
~
\begin{subfigure}[b]{0.35\textwidth}
\centering
\begin{tabular}{|c|c|c|c|c|}
\hline
eid & ename & did & id & dname \\
\hline
1 & A & 11 & 11 & D \\
\hline
2 & B & 12 & \sqlnull & \sqlnull \\
\hline
\sqlnull & \sqlnull & \sqlnull & 10 & C \\
\hline
\end{tabular}
\vspace{-5pt}
\caption*{$\texttt{EMP} \fjoin_{\texttt{did=id}} \texttt{DEPT}$}
\end{subfigure}
~
\begin{subfigure}[b]{0.35\textwidth}
\centering
\begin{tabular}{|c|c|c|c|c|}
\hline
eid & ename & did & id & dname \\
\hline
1 & A & 11 & 10 & C \\
\hline
1 & A & 11 & 11 & D \\
\hline
2 & B & 12 & 10 & C \\
\hline
2 & B & 12 & 11 & D \\
\hline
\end{tabular}
\vspace{-5pt}
\caption*{$\texttt{EMP} \times \texttt{DEPT}$}
\end{subfigure}

\end{figure}

\end{example}

\newpara{GroupBy.}
\revision{
To define the semantics of $\text{GroupBy}(Q, \vec{E}, L, \phi)$, we use several auxiliary functions. Specifically, $\text{Eval}(\vec{E}, xs)$ computes the values of expressions $\vec{E}$ to be grouped over tuple list $xs$, and $\text{Dedup}(Q, \vec{E})$ invokes $\text{Eval}$ to retain only unique values of $\vec{E}$. Then GroupBy maps each unique $\vec{E}$ value to a group of tuples sharing the same value over $\vec{E}$ and obtains all groups $Gs$. It evaluates the \sqlhaving condition $\phi$ and only retains those groups satisfying $\phi$ by the \sfilter combinator. Finally, for each retained group $xs$, GroupBy computes an aggregate value $\denot{L}_{\db, xs}$ and adds it to the result. 
}

\newpara{With clause.}
\revision{
To evaluate $\text{With}(\vec{Q}, \vec{R}, Q)$, our semantics first evaluates all subqueries $Q_i \in \vec{Q}$, then creates a new database $\db'$ by adding mappings $R_i \mapsto \denot{Q_i}_{\db}$, and finally evaluates $Q$ under $\db'$. Intuitively, the With clause creates local bindings for subqueries $Q_i$ which can be used in query $Q$.
}

\newpara{OrderBy.}
\revision{
We define a deterministic semantics for the OrderBy construct in our language, i.e., two runs of OrderBy with the same arguments always return the same result. Specifically, $\text{OrderBy}(Q, \vec{E}, \\$$b)$ performs a selection sort on the results of $\denot{Q}_{\db}$ by expressions $\vec{E}$. The sorted results are in ascending order if $b = \top$ and in descending order if $b = \bot$. At a high level, OrderBy maintains a list of sorted tuples in ascending (resp. descending) order and repeatedly selects the minimum (resp. maximum) tuple from the unsorted tuples using the MinTuple function. In particular, MinTuple invokes the Cmp function for pair-wise comparison of two tuples $x_1$ and $x_2$ over expression list $\vec{E}$. Null is viewed as a special value $-\infty$ that is smaller than any other values, so Null's can also be ordered correctly by our semantics.
}

\newpara{Our language vs. prior work's.}
\revision{
Our formal semantics is inspired by prior work such as \mediator~\cite{mediator}. While \mediator defines semantics for simple update and query operators, our semantics defines more complex query operators, including outer joins, \sqlgroupby, \sqlwith, and \sqlorderby. Furthermore, to our best knowledge, our semantics considers many operators that are important in practice but not considered in any prior work such as $\sqlorderby$, $\sqlwith$, $\sqlif$, $\sqlintersect$, $\sqlexcept$, etc.
}
In addition, our work supports complex attribute expressions and predicates, such as \texttt{Avg(age\:+\:10)}, whereas prior work has very limited support for such features. 

\subsection{Integrity Constraints}

Integrity constraints are fundamental to data integrity and therefore must be followed by a database. In this paper, we support the constraints shown in Figure~\ref{fig:syntax-integrity-constraints}. In particular, an integrity constraint consists of a set of primitive constraints, detailed as follows.

\newpara{Primary keys.}
$\text{PK}(R, \vec{a})$ says that a list of attributes $\vec{a}$ is the primary key of relation $R$. In particular, it requires that all values of any attribute in $\vec{a}$ are not \sqlnull. Furthermore, given two different tuples $t_1, t_2 \in R$, $t_1$ and $t_2$ must have different values on at least one attribute in $\vec{a}$.

\newpara{Foreign keys.}
$\text{FK}(R_1, a_1, R_2, a_2)$ means that the attribute $a_1$ of relation $R_1$ is a foreign key referencing the attribute $a_2$ of relation $R_2$. Specifically, it requires that, for each tuple $t_1 \in R_1$, there exists a tuple $t_2 \in R_2$ such that $t_1.a_1 = t_2.a_2$.

\begin{figure}[!t]
\centering
\[
\begin{array}{rcl}
\text{Constraint}~\constraint & ::= & \text{PK}(R, \vec{a}) ~|~ \text{FK}(R, a, R, a) ~|~ \text{NotNull}(R, a) ~|~ \text{Check}(R, \psi) ~|~ \text{Inc}(R, a, v) ~|~ \constraint \land \constraint \\
\text{Pred}~\psi & ::= & a \odot v ~|~ a \odot a ~|~ a \in \vec{v} ~|~ \psi \land \psi ~|~ \psi \lor \psi ~|~ \neg \psi \\
\text{Logic Op}~\odot & ::= & \leq ~|~ < ~|~ = ~|~ \neq ~|~ > ~|~ \geq
\end{array}
\]
\[
R \in \textbf{Relations} \qquad a \in \textbf{Attributes} \qquad v \in \textbf{Values}
\]
\vspace{-15pt}
\caption{Syntax of Integrity Constraints.}
\label{fig:syntax-integrity-constraints}
\vspace{-10pt}
\end{figure}

\newpara{Not-null constraints.}
NotNull(R, a) imposes a \sqlnotnull constraint on the attribute $a$ of relation $R$. It stipulates that for each tuple $t \in R$, $t.a$ is not \sqlnull.

\newpara{Check constraints.}
$\text{Check}(R, \psi)$ requires that every tuple $t \in R$ must satisfy the predicate $\psi$, where $\psi$ can be a boolean combination of atomic predicates. Each atomic predicate can be logical comparisons between two attributes of $R$, between an attribute and a constant value. or of the form $a \in \vec{v}$ meaning the value of attribute $a$ is in a list of provided constants $\vec{v}$.

\newpara{Auto increment.}
$\text{Inc}(R, a, v)$ means that the value of attribute $a$ in relation $R$ starts with $v$ and strictly increases by one for each new tuple added to $R$.
\revision{
We require the value to strictly increase by one to obtain a deterministic semantics of the auto increment constraint. However, this is not a fundamental limitation. We can easily generalize to the scenarios where values are not continuous.
}

\subsection{Problem Statement}



\revision{
To describe our problem statement, we start with a notion of conformance between a relational database and its schema.
Specifically, an attribute value $(a, v)$ conforms to an attribute schema $(b, \tau)$, denoted $(a, v) \conforms_A (b, \tau)$, if $a = b$ and $v$ is of type $\tau$.
\footnote{\revision{Null is viewed as a polymorphic value of both type Int and Bool.}}
A tuple $t$ conforms to a relation schema $RS$, denoted $t \conforms_T RS$, if (1) they have the same size, i.e. $|t| = |RS|$ and (2) each attribute value $(a_i, v_i)$ in $T$ is conforming to the corresponding attribute schema $(b_i, \tau_i)$ in $RS$.
}

\begin{definition}[Conformance between Database and Schema]
\label{def:conformance}
\revision{
A database $\db$ conforms to a schema $\schema$, denoted $\db \conforms \schema$, if (1) $\db$ and $\schema$ have the same domain and (2) for each relation name $R \in \dom(\db)$, all tuples in $\db(R)$ conform to their corresponding relation schema $\schema(R)$, i.e.,
\[
\db \conforms \schema \defeq \dom(\db) = \dom(\schema) \land (\forall R \in \dom(\db). \forall t \in R.~ t \conforms_T \schema(R))
\]
}
\end{definition}

\revision{
Recall from Figure~\ref{fig:schema-db} that we only consider Int and Bool types in this paper. Other types of values (e.g., strings and dates) can be treated as integers and functions involving those types can be treated as uninterpreted functions over integers.
}

\begin{example}\label{ex:conformance}
\revision{
Let schema $\schema = [\texttt{EMP} \mapsto [(\texttt{eid}, Int), (\texttt{cid}, Int)], \texttt{CAR} \mapsto [(\texttt{id}, Int), (\texttt{used}, Bool)]]$ and database
\[
\begin{array}{rl}
\db = [ & \texttt{EMP} \mapsto [[(\texttt{eid}, 1), (\texttt{cid}, 10)], [(\texttt{eid}, 2), (\texttt{cid}, \sqlnull)]], \\
        & \texttt{CAR} \mapsto [(\texttt{id}, 10), (\texttt{used}, \top)] \quad] \\
\end{array}
\]
Here, $\db \conforms \schema$ because $[(\texttt{eid}, 1), (\texttt{cid}, 10)] \conforms_T [(\texttt{eid}, Int), (\texttt{cid}, Int)]$ and $[(\texttt{eid}, 2), (\texttt{cid}, \sqlnull)] \conforms_T [(\texttt{eid}, Int), (\texttt{cid}, Int)]$ hold for relation name \texttt{EMP} and $[(\texttt{id}, 10), (\texttt{used}, \top)] \conforms_T [(\texttt{id}, Int), (\texttt{used}, \\$$Bool)]$ holds for relation name $\texttt{CAR}$.
}
\end{example}

\begin{definition}[Bounded Equivalence modulo Integrity Constraint]
\label{def:BEMIC}
\revision{
Given two queries $(Q_1, Q_2)$ under schema $\schema$ and a positive integer bound $\bound$, $Q_1$ and $Q_2$
}
are said to be \emph{bounded equivalent modulo integrity constraint} $\constraint$, denoted by \revision{$\query_1 \simeq_{\schema, \constraint, \bound} \query_2$}, if for any database $\db$ that conforms to schema $\schema$ satisfies $\constraint$ and each relation has at most $\bound$ tuples, the execution result of $\query_1$ is the same as the execution result of $\query_2$ on $\db$, i.e.,
\revision{
\[
\query_1 \simeq_{\schema, \constraint, \bound} \query_2 \defeq \forall \db.~ \db \conforms \schema \land (\forall R \in \db.~ |R| \leq \bound) \land \constraint(\db) \Rightarrow \denot{Q_1}_{\db} = \denot{Q_2}_{\db}
\]
}
\end{definition}

\section{Bounded Equivalence Verification Modulo Integrity Constraints} \label{sec:checking}

This section presents our algorithm for bounded equivalence verification of two queries modulo integrity constraints.

\subsection{Algorithm Overview}

The top-level algorithm of our verification technique is shown in Algorithm~\ref{algo:verify}. The \textsc{Verify} procedure takes as input two queries $\query_1$ and $\query_2$ over schema $\schema$, an integrity constraint $\constraint$, and a bound $\bound$ on the size of all relations in the database. It returns $\top$ indicating $\query_1$ and $\query_2$ are bounded equivalent modulo integrity constraint $\constraint$, i.e., \revision{$\query_1 \simeq_{\schema, \constraint, \bound} \query_2$}. Otherwise, it returns a counterexample database satisfying $\constraint$ where $\query_1$ and $\query_2$ yield different results.

\begin{figure}[!t]
\small 
\vspace{-5pt}
\begin{algorithm}[H]
\caption{Equivalence Verification}
\label{algo:verify}
\begin{algorithmic}[1]
\Procedure{\textsc{Verify}}{$\query_1, \query_2, \schema, \constraint, \bound$}
\vspace{2pt}
\Statex \textbf{Input:} Queries $\query_1, \query_2$, schema $\schema$, integrity constraint $\constraint$, and bound size $\bound$
\Statex \textbf{Output:} $\top$ for equivalence, otherwise a counterexample 
\vspace{2pt}

\State $\context \gets \textsf{BuildSymbolicDB}(\schema, \bound)$;
\State $\formula_{\constraint} \gets \textsc{EncodeConstraint}(\context, \constraint)$;
\State $\formula_{R_1}, R_1 \gets \textsc{EncodeQuery}(\schema, \context, \query_1)$;
\State $\formula_{R_2}, R_2 \gets \textsc{EncodeQuery}(\schema, \context, \query_2)$;
\State $\formula \gets \formula_{\constraint} \land \formula_{R_1} \land \formula_{R_2} \land \neg \textsc{Equal}(R_1, R_2)$;
\If{\textsf{UNSAT}($\formula$)} ~ \Return ~ $\top$;
\Else ~ \Return \textsf{BuildExample}($\schema$, \textsf{Model}($\formula$));
\EndIf

\EndProcedure
\end{algorithmic}
\end{algorithm}
\vspace{-20pt}
\end{figure}

At a high-level, our technique reduces the verification problem into a constraint-solving problem and generates an SMT formula through the encoding of integrity constraints and SQL operations.
Specifically, we first create a symbolic representation of the database $\context$, where each relation has at most $\bound$ tuples (Line 2).
Then we encode the integrity constraint $\constraint$ over $\context$ and obtain an SMT formula $\formula_{\constraint}$ (Line 3).
Next, we analyze queries $\query_1, \query_2$ to obtain their outputs $R_1, R_2$ given input $\context$ and two SMT formulas $\formula_{R_1}, \formula_{R_2}$ encoding how $R_1, R_2$ are computed from $\context$ (Lines 4 -- 5).
Finally, we build a formula $\formula$ asserting the existence of a database satisfying the integrity constraint $\constraint$ such that $R_1$ is different from $R_2$ (Line 6). If no such database exists (i.e., formula $\Phi$ is unsatisfiable), then $\query_1$ and $\query_2$ are bounded equivalent modulo integrity constraint $\constraint$, i.e., \revision{$\query_1 \simeq_{\schema, \constraint, \bound} \query_2$} (Line 7). Otherwise, if such a database exists, $\query_1$ and $\query_2$ are not equivalent, so we build a counterexample database from a model of $\formula$ to disprove the equivalence (Line 8).

\subsection{Schema and Symbolic Database}

Since our verification technique is centered around a symbolic representation of the database, we first describe how to build the symbolic database.



\newpara{Symbolic database.}
Given a schema $\schema$ and a bound $\bound$ for the size of relations, we build a symbolic database containing all relations in $\dom(\schema)$ and each relation has $\bound$ symbolic tuples. We denote the symbolic database by $\context$ where \revision{$\context: \text{RelName} \to [\text{SymTuple}]$ maps relation names} to their corresponding lists of symbolic tuples.
In general, we introduce an uninterpreted predicate $\del(t)$ for each tuple $t$ to indicate whether or not $t$ is deleted by subquery.
\footnote{\revision{
Alternatively, we can introduce a predicate $\textsf{Present}(t)$ to indicate a tuple is indeed present, i.e., $\textsf{Present}(t) \Leftrightarrow \neg \del(t)$.
}}
Since the $\del$ predicates hold non-deterministic values in the symbolic database, we can use $\context$ to encode all possible databases where each relation has \emph{at most} $\bound$ tuples.

\begin{example}
Consider again the schema $\schema$ in Example~\ref{ex:conformance}. Given a bound $\bound = 2$, we can build a symbolic database $\context = \set{ \texttt{EMP} \mapsto [t_1, t_2], \texttt{CAR} \mapsto [t_3, t_4] }$, where $t_1, t_2, t_3, t_4$ are symbolic tuples.
\end{example}

\newpara{Encoding attributes.}
\revision{
As is standard, we use uninterpreted functions to encode attributes. Specifically, for each attribute $attr$ in the database schema, we introduce an uninterpreted function called $attr$ that takes as input a symbolic tuple and produces as output a symbolic value. For example, $t.name$ for getting the $name$ attribute of tuple $t$ should be encoded as $name(t)$ where $name$ is an uninterpreted function.
}

\newpara{Encoding \sqlnull.}
\revision{
To support \sqlnull and three-valued semantics, we encode each symbolic value as a pair $(b, v)$ in the SMT formula, where $b$ is a boolean variable indicating whether the value is \sqlnull, and $v$ is a non-\sqlnull value. In particular, if $b$ is $\top$ (i.e., \textsf{true}), then the value $(b, v)$ is \sqlnull. Otherwise, if $b$ is $\bot$ (i.e., \textsf{false}), the value is $v$. For example, constant 1 is represented by $(\bot, 1)$. In this way, \sqlnull is not equal to any legitimate non-\sqlnull values.
}
Furthermore, we consider all the null values are equal. Although we allow different representations of null values like $(\top, 1)$ and $(\top, 2)$, they are considered equal because both of them are considered to be \sqlnull.

\subsection{Encoding Integrity Constraints}

\begin{figure}[!t]
\footnotesize
\[
\begin{array}{c}

\irulelabel
{\begin{array}{c}
\context(R) = [t_1, \ldots, t_n] \quad
\formula_1 = \revision{\land_{i=1}^{n}} \land_{k=1}^{m} t_i.a_k \neq \nullv \\
m = |\vec{a}| \quad
\formula_2 = \land_{i=1}^{n} \land_{j=i+1}^{n} \neg (\land_{k=1}^{m} t_i.a_k = t_j.a_k) \\
\end{array}}
{\context \vdash \text{PK}(R, \vec{a}) \rightsquigarrow \formula_1 \land \formula_2}
{\textrm{(IC-PK)}}

\irulelabel
{\begin{array}{c}
\context(R_1) = [t_1, \ldots, t_n] \\
\context(R_2) = [t'_1, \ldots, t'_m] \\
\formula = \land_{i=1}^{n} \lor_{j=1}^{m} t_i.a_1 = t'_j.a_2 \\
\end{array}}
{\context \vdash \text{FK}(R_1, a_1, R_2, a_2) \rightsquigarrow \formula}
{\textrm{(IC-FK)}}

\\ \ \\

\irulelabel
{\begin{array}{c}
\context(R) = [t_1, \ldots, t_n] \quad
\formula = \land_{i=1}^{n} t_i.a \neq \nullv \\
\end{array}}
{\context \vdash \text{NotNull}(R, a) \rightsquigarrow \formula}
{\textrm{(IC-NN)}}

\irulelabel
{\begin{array}{c}
\context(R) = [t_1, \ldots, t_n] \quad
\formula = \land_{i=1}^{n} \denot{\psi}_{t_i} \\
\end{array}}
{\context \vdash \text{Check}(R, \psi) \rightsquigarrow \formula}
{\textrm{(IC-Check)}}

\\ \ \\

\irulelabel
{\begin{array}{c}
\context(R) = [t_1, \ldots, t_n] \\
\formula_1 = \land_{i=1}^{n} t_i.a = v+i-1 \quad
\formula_2 = \land_{i=1}^{n} t_i.a \neq \nullv \\
\end{array}}
{\context \vdash \text{Inc}(R, a, v) \rightsquigarrow \formula_1 \land \formula_2}
{\textrm{(IC-Inc)}}

\irulelabel
{\begin{array}{c}
\context \vdash \constraint_1 \rightsquigarrow \formula_1 \\
\context \vdash \constraint_2 \rightsquigarrow \formula_2 \\
\end{array}}
{\context \vdash \constraint_1 \land \constraint_2 \rightsquigarrow \formula_1 \land \formula_2}
{\textrm{(IC-Comp)}}

\end{array}
\]
\vspace{-10pt}
\caption{Symbolic encoding of integrity constraints.}
\label{fig:rules-constraints}
\vspace{-10pt}
\end{figure}

\begin{figure}[!t]
\small







\begin{tabular}{rcl c rcl}
$\denot{a \alllogic v}_{t}$ & = & $t.a \neq \nullv \land t.a \alllogic v$  & \hspace{50pt}  & $\denot{\psi_1 \land \psi_2}_{t}$  & =  & $\denot{\psi_1}_{t} \land \denot{\psi_2}_{t}$ \\
$\denot{a_1 \in \vec{v}}_{t}$ & = & $t.a \neq \nullv \land \bigvee_{i=1}^{|\vec{v}|} t.a = v_i$  &   & $\denot{\psi_1 \lor \psi_2}_{t}$  & =  & $\denot{\psi_1}_{t} \lor \denot{\psi_2}_{t}$ \\
$\denot{a_1 \alllogic a_2}_{t}$ & = & $t.a_1 \alllogic t.a_2 ~\text{where}~ \alllogic \in \{=, \neq\}$  &   & $\denot{\neg \psi}_{t}$  & =  & $\neg \denot{\psi}_{t}$ \\
$\denot{a_1 \alllogic a_2}_{t}$ & = & \multicolumn{5}{l}{$t.a_1 \neq \nullv \land t.a_2 \neq \nullv \land t.a_1 \alllogic t.a_2 ~\text{where}~ \alllogic \in \{\leq, <, >, \geq \}$}
\end{tabular}

\vspace{-10pt}
\caption{Symbolic encoding of predicates in integrity constraints.}
\label{fig:rules-ic-preds}
\vspace{-10pt}
\end{figure}

Given a symbolic database $\context$ and an integrity constraint $\constraint$, we follow the semantics of $\constraint$ to encode $\constraint$ as an SMT formula over $\context$. The encoding procedure is summarized as inference rules in Figure~\ref{fig:rules-constraints}, where judgments of the form $\context \vdash \constraint \rightsquigarrow \formula$ represent that the encoding of integrity constraint $\constraint$ is $\formula$ given a symbolic database $\context$.

In a nutshell, we encode each atomic integrity constraint in $\constraint$ and conjoin the formulas together according to the IC-Comp rule.
Specifically, the IC-PK rule specifies that the encoding of a primary key constraint $\text{PK}(R, a)$ consists of two parts: $\formula_1$ asserting all attributes in the primary key have no \sqlnull values, and $\formula_2$ stating for any pair of tuples $t_1$ and $t_2$ where $t_1 \neq t_2$, they do not agree on all attributes in $\vec{a}$.
For $\text{FK}(R_1, a_1, R_2, a_2)$ where $R_1.a_1$ is a foreign key referencing $R_2.a_2$, the IC-FK rule looks up the symbolic database $\context$ and finds the tuples for $R_1$ are $t_1, \ldots, t_n$ and the tuples for $R_2$ are $t'_1, \ldots, t'_m$. The formula asserts that for each tuple $t_i$ in $R_1$, there exists a tuple $t'_j$ in $R_2$ such that the value $t_i.a_1$ is equal to $t'_j.a_2$.
The IC-NN rule simply encodes that $\text{NotNull}(R, a)$ requires that all tuples in R must have a non-\sqlnull value on attribute a.
For the constraint $\text{Check}(R, \psi)$ that specifies ranges of values, the IC-Check rule uses an auxiliary function $\denot{\psi}_t$ (shown in Figure~\ref{fig:rules-ic-preds}) to compute the formula of predicate $\psi$ on each tuple $t_i$ in $R$ and obtain the formula by conjoining the $\denot{\psi}_t$ formulas together.
For the auto-increment constraint $\text{Inc}(R, a, v)$, the IC-Inc rule also looks up the symbolic database $\context$ and finds all tuples $t_1, \ldots, t_n$ of $R$. It then enforces the value of $t_i.a$ is not \sqlnull and that $t_i.a = v+i-1$.

\begin{lemma}\label{lem:ic}
\footnote{The proof of all lemmas and theorems can be found in \appx{Appendix~\ref{sec:proof}}{the Appendix of the extended version~\cite{extended}}.}
Given a symbolic database $\context$ and an integrity constraint $\constraint$, consider a formula $\formula$ such that $\context \vdash \constraint \rightsquigarrow \formula$. If $\formula$ is satisfiable, then the model of $\formula$ corresponds to a database consistent with $\context$ that satisfies $\constraint$. If $\formula$ is unsatisfiable, then no database consistent with $\context$ satisfies $\constraint$.
\end{lemma}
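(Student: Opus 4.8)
The plan is to prove the two directions of the lemma by a structural induction on the integrity constraint $\constraint$, showing that the encoding judgment $\context \vdash \constraint \rightsquigarrow \formula$ preserves a tight correspondence between satisfying assignments of $\formula$ and databases that are consistent with $\context$ and satisfy $\constraint$. The first step is to make precise what ``a database consistent with $\context$'' means: given a model $\mathcal{M}$ of $\formula$, one reads off, for each relation name $R$ with $\context(R) = [t_1,\dots,t_n]$, a concrete relation $\db(R)$ consisting of exactly those tuples $t_i$ for which $\mathcal{M} \not\models \del(t_i)$, with each attribute value $t_i.a$ obtained by decoding the pair $(b,v)$ that $\mathcal{M}$ assigns to the uninterpreted function application $a(t_i)$ (i.e.\ $\nullv$ if $b$ is true, and the integer/bool $v$ otherwise). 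Conversely, any $\db$ that arises this way fixes the $\del$ predicates and the attribute functions on the symbolic tuples; the key observation is that because $\del$ and the attribute functions are \emph{uninterpreted}, every database with at most $\bound$ tuples per relation conforming to $\context$'s shape corresponds to some assignment, and vice versa. I would state this bijection (up to the irrelevant choice of $v$ when $b$ is true, and up to the values on ``deleted'' tuples) as a preliminary observation, since both directions of the lemma reduce to it.

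The inductive core is then one case per inference rule in Figure~\ref{fig:rules-constraints}, and for each I argue that the encoded formula is true under $\mathcal{M}$ \emph{iff} the corresponding semantic condition on $\db$ holds. For IC-NN, $\formula = \bigwedge_i t_i.a \neq \nullv$ holds under $\mathcal{M}$ iff every present tuple of $\db(R)$ has a non-$\nullv$ value on $a$ --- note the encoding quantifies over all symbolic tuples $t_1,\dots,t_n$, not just present ones, but this is harmless because the constraint on a deleted tuple's attribute is vacuously consistent (the $v$ component is free). For IC-PK, $\formula_1$ handles the not-null requirement exactly as above, and $\formula_2 = \bigwedge_{i<j} \neg(\bigwedge_k t_i.a_k = t_j.a_k)$ says no two symbolic tuples agree on the whole key; restricting to present tuples this is precisely ``any two distinct tuples of $\db(R)$ differ on some $a_k$,'' matching Definition of $\text{PK}$. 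The subtlety to flag here is that if two distinct symbolic tuples $t_i,t_j$ are both deleted, or one is deleted, $\formula_2$ still forbids them from agreeing on the key --- this over-constrains the ``junk'' part of the model but does not change which \emph{databases} are representable, because given any valid $\db$ we can always pick distinct key values for the deleted slots (the domain is infinite). IC-FK, IC-Check, and IC-Inc follow the same pattern; for IC-Check I additionally need a sub-lemma that $\denot{\psi}_t$ (Figure~\ref{fig:rules-ic-preds}) is true under $\mathcal{M}$ iff the decoded tuple $t$ satisfies $\psi$ in the intended three-valued-to-two-valued reading, which is itself a routine structural induction on $\psi$ (the only interesting clauses being the $\neq \nullv$ guards on comparisons). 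IC-Comp is immediate from the induction hypotheses on $\constraint_1$ and $\constraint_2$ together with the fact that the conjunction of the two encodings is satisfied exactly when both component conditions hold on the same $\db$.

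Finally I assemble the two stated conclusions. If $\formula$ is satisfiable with model $\mathcal{M}$, decode $\db$ from $\mathcal{M}$ as above; $\db$ is consistent with $\context$ by construction, and by the per-rule equivalences just established $\db \models \constraint$. If $\formula$ is unsatisfiable, then no assignment --- hence, by the bijection, no database consistent with $\context$ --- can make all the encoded conditions true, so none satisfies $\constraint$. The main obstacle, and the place I would spend the most care, is the handling of \emph{deleted / absent symbolic tuples}: the encoding rules quantify over all $n$ symbolic slots rather than over present tuples only, so I must argue at each step that the extra conjuncts imposed on absent slots neither exclude any genuine database nor admit a model whose decoding violates $\constraint$ --- this rests on the attribute domains being unconstrained on absent slots and infinite, which lets us always realize the ``spare'' constraints. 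Everything else is bookkeeping over the finite list of rules.
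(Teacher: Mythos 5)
Your proof takes essentially the same route as the paper's: structural induction on $\constraint$ with one case per encoding rule (PK, FK, NotNull, Check, Inc), arguing in each case that the generated formula mirrors the semantic requirement, and discharging the composite case via IC-Comp. You are in fact more explicit than the paper about the model-to-database decoding and about the subtlety that the rules constrain every symbolic slot rather than only non-deleted tuples; the paper's own proof does not discuss $\del$ at all and simply treats each symbolic tuple as a tuple of the database.
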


\subsection{Encoding SQL Queries} \label{sec:encode-sql}

To encode the semantics of a query, we traverse the query and encode how each operator transforms its input to output. Since this process requires the attributes and symbolic tuples of each subquery, we first describe how to compute attributes and tuples of arbitrary intermediate subqueries, followed by the encoding of all query operators.

\subsubsection{Attributes of Intermediate Subqueries} \hfill
\vspace{8pt}

\begin{figure}[!t]
\footnotesize
\[
\hspace{-10pt}
\begin{array}{c}




\irulelabel
{R \in dom(\schema)}
{\schema \vdash R : \schema(R)}
{\textrm{(A-Rel)}}

\irulelabel
{\begin{array}{c}
\schema \vdash \query : \attributes \\
\forall e \in L. \forall a \in \text{Attrs}(e).~ a \in \attributes \\
\end{array}}
{\schema \vdash \Pi_L(\query) : L}
{\textrm{(A-Proj)}}

\irulelabel
{\begin{array}{c}
\schema \vdash \query_1 : \attributes_1 \quad
\schema \vdash \query_2 : \attributes_2 \\
\alljoin \in \set{\times, \bowtie_\phi, \leftouterjoin_\phi, \rightouterjoin_\phi, \fullouterjoin_\phi} \\
\end{array}}
{\schema \vdash \query_1 \alljoin \query_2 : \attributes_1 \doubleplus \attributes_2}
{\textrm{(A-Join)}}

\\ \ \\

\irulelabel
{\begin{array}{c}
\schema \vdash \query : \attributes \quad
\attributes' = [\revision{\text{rename}(R, a)} ~|~ a \in \attributes] \\
\end{array}}
{\schema \vdash \revision{\rho_{R}}(\query) : \attributes'}
{\textrm{(A-Rename)}}

\irulelabel
{\begin{array}{c}
\schema \vdash \query : \attributes \\
\end{array}}
{\schema \vdash \filter_\phi(\query) : \attributes}
{\textrm{(A-Filter)}}

\\ \ \\

\irulelabel
{\begin{array}{c}
\schema \vdash \query : \attributes \\
\end{array}}
{\schema \vdash \text{GroupBy}(\query, \vec{E}, L, \phi) : \attributes}
{\textrm{(A-Group)}}

\irulelabel
{\begin{array}{c}
\schema \vdash \query : \attributes \\
\end{array}}
{\schema \vdash \text{OrderBy}(\query, \vec{E}, b) : \attributes}
{\textrm{(A-Order)}}

\irulelabel
{\begin{array}{c}
\schema \vdash \query : \attributes \\
\end{array}}
{\schema \vdash \text{Distinct}(\query) : \attributes}
{\textrm{(A-Dist)}}

\\ \ \\

\irulelabel
{\begin{array}{c}
\allcoll \in \set{\cup, \cap, \setminus, \uplus, \cplus, -} \\
\schema \vdash \query_1 : \attributes \quad
\schema \vdash \query_2 : \attributes \\
\end{array}}
{\schema \vdash \query_1 \allcoll \query_2 : \attributes}
{\textrm{(A-Coll)}}

\irulelabel
{\begin{array}{c}
\schema \vdash \query_i : \attributes_i \quad
1 \leq i \leq n \quad
n = |\vec{\query}| = \revision{|\vec{R}|}\\
\revision{\schema[R_1 \mapsto \attributes_1, \ldots, R_n \mapsto \attributes_n]} \vdash \query' : \attributes'
\end{array}}
{\schema \vdash \text{With}(\vec{\query}, \revision{\vec{R}}, \query') : \attributes'}
{\textrm{(A-With)}}



\end{array}
\]
\vspace{-10pt}
\caption{Rules for inferring the attributes of a query result. As is standard in SQL, the name of an attribute expression is assumed to be a function over all attributes involved in the expression. For example, the name of attribute expression $\text{Avg}(a)$ is a string \texttt{Avg\_a}.}
\label{fig:rules-schema}
\vspace{-10pt}
\end{figure}

\noindent
While the database schema describes the attributes of each relation in the \emph{initial} database, it does not directly give the attributes of those intermediate query results. To infer the attributes of each intermediate subquery, we develop an algorithm for all query operations in our SQL language. The algorithm is summarized as a set of inference rules, shown in Figure~\ref{fig:rules-schema}. Intuitively, our rules for inferring attributes are similar to the traditional typing rules. The judgments of the form $\schema \vdash \query : \attributes$ mean the attributes of query $\query$ is $\attributes$ under schema $\schema$.

Specifically, the attributes of a relation in the initial database can be obtained by looking up the schema $\schema$ directly (A-Rel).
To find all attributes of a projection $\proj_L(\query)$, the A-Proj rule first computes the attribute list $\attributes$ of query $\query$ and then checks if all attributes occurring in the attribute expression list $L$ belong to $\attributes$. If so, the attributes of $\proj_L(\query)$ are $L$ with all attribute expressions converted to the corresponding attribute names, e.g., \textsf{Avg(a)} to \texttt{Avg\_a}. Otherwise, there is a type error in the query.
For Cartesian product or any join operator $\alljoin$, the attributes of $\query_1 \alljoin \query_2$ are obtained by concatenating the attributes of $\query_1$ and those of $\query_2$ (A-Join).
The renaming operation \revision{$\rename_R(\query)$} first computes the attributes of $\query$ and then renames each of them according to the new relation name $R$ in the result (A-Rename).
Based on the A-Filter, A-Group, A-Order, and A-Dist rules, the filtering, GroupBy, OrderBy, and Distinct operations do not change the attribute list. 
The A-Coll rule specifies that for any collection operator $\allcoll$, the attributes of its two operands $\query_1$ and $\query_2$ must be the same, which are also identical to the attributes of $\query_1 \allcoll \query_2$.
The inference rule for a \sqlwith clause \revision{$\text{With}(\vec{\query}, \vec{R}, \query')$} is slightly more involved. Since the query $\query'$ can use \revision{$R_i$} to refer to the results of a subquery $\query_i$, the A-With rule first infers the attributes $\attributes_i$ of each subquery $\query_i$ and then augments the schema $\schema$ to a temporary new schema $\schema'$ by adding entries \revision{$R_i \mapsto \attributes_i$}. Finally, the attributes of query $\query'$ (and the whole \sqlwith clause) are inferred based on the schema $\schema'$.

\subsubsection{Symbolic Tuples of Intermediate Subqueries.} \hfill
\vspace{8pt}

\begin{figure}[!t]
\footnotesize
\[
\begin{array}{c}





\irulelabel
{\begin{array}{c}
R \in dom(\context) \\
\end{array}}
{\context \vdash R \hookrightarrow \context(R)}
{\textrm{(T-Rel)}}

\irulelabel
{\begin{array}{c}
    \text{hasAgg}(L) \quad
\end{array}}
{\context \vdash \Pi_L(\query) \hookrightarrow \fresh(1)}
{\textrm{(T-Agg)}}

\irulelabel
{\begin{array}{c}
    \neg \text{hasAgg}(L) \quad
    \context \vdash Q \hookrightarrow \tuples \\
\end{array}}
{\context \vdash \Pi_L(\query) \hookrightarrow \fresh(|\tuples|)}
{\textrm{(T-Proj)}}

\\ \ \\

\irulelabel
{\begin{array}{c}
\context \vdash \query \hookrightarrow \tuples \\
\end{array}}
{\context \vdash \filter_\phi(\query) \hookrightarrow \fresh(|\tuples|)}
{\textrm{(T-Filter)}}

\irulelabel
{\begin{array}{c}
\context \vdash \query \hookrightarrow \tuples \\
\end{array}}
{\context \vdash \revision{\rho_R(Q)} \hookrightarrow \fresh(|\tuples|)}
{\textrm{(T-Rename)}}

\irulelabel
{\begin{array}{c}
\context \vdash \query_1 \hookrightarrow \tuples_1 \quad
\context \vdash \query_2 \hookrightarrow \tuples_2 \\
\end{array}}
{\context \vdash \query_1 \times \query_2 \hookrightarrow \fresh(|\tuples_1| \cdot |\tuples_2|)}
{\textrm{(T-Prod)}}

\\ \ \\

\irulelabel
{\begin{array}{c}
\context \vdash \filter_{\phi}(\query_1 \times \query_2) \hookrightarrow \tuples \\
\end{array}}
{\context \vdash \query_1 \bowtie_\phi \query_2 \hookrightarrow \tuples}
{\textrm{(T-IJoin)}}

\irulelabel
{\begin{array}{c}
\context \vdash \query_1 \bowtie_\phi \query_2 \hookrightarrow \tuples_1 \\
\context \vdash \query_1 \hookrightarrow \tuples_2 \\
\tuples' = \tuples_1 \doubleplus \fresh(|\tuples_2|) \\
\end{array}}
{\context \vdash \query_1 \leftouterjoin_\phi \query_2 \hookrightarrow \tuples'}
{\textrm{(T-LJoin)}}

\irulelabel
{\begin{array}{c}
\context \vdash \query_1 \bowtie_\phi \query_2 \hookrightarrow \tuples_1 \\
\context \vdash \query_2 \hookrightarrow \tuples_2 \\
\tuples' = \tuples_1 \doubleplus \fresh(|\tuples_2|) \\
\end{array}}
{\context \vdash \query_1 \rightouterjoin_\phi \query_2 \hookrightarrow \tuples'}
{\textrm{(T-RJoin)}}

\\ \ \\

\irulelabel
{\begin{array}{c}
\context \vdash \query_1 \bowtie_\phi \query_2 \hookrightarrow \tuples_1 \quad
\context \vdash \query_1 \hookrightarrow \tuples_2 \quad
\context \vdash \query_2 \hookrightarrow \tuples_3 \\
\end{array}}
{\context \vdash \query_1 \fullouterjoin_\phi \query_2 \hookrightarrow 
\tuples_1 \doubleplus \fresh(|\tuples_2| + |\tuples_3|)}
{\textrm{(T-FJoin)}}

\irulelabel
{\begin{array}{c}
\context \vdash \query \hookrightarrow \tuples \\
\end{array}}
{\context \vdash \text{Distinct}(\query) \hookrightarrow \fresh(|\tuples|)}
{\textrm{(T-Dist)}}

\\ \ \\

\irulelabel
{\begin{array}{c}
\context \vdash \query \hookrightarrow \tuples \\
\end{array}}
{\context \vdash \text{GroupBy}(\query, \vec{E}, L, \phi) \hookrightarrow \fresh(|\tuples|)}
{\textrm{(T-GroupBy)}}

\irulelabel
{\begin{array}{c}
\context \vdash \query \hookrightarrow \tuples \\
\end{array}}
{\context \vdash \text{OrderBy}(\query, \vec{E}, b) \hookrightarrow \fresh(|\tuples|)}
{\textrm{(T-OrderBy)}}

\\ \ \\

\irulelabel
{\begin{array}{c}
\allcoll \in \set{\cap, \cplus} \quad
\context \vdash \query_1 \hookrightarrow \tuples_1 \\
\end{array}}
{\context \vdash \query_1 \allcoll \query_2 \hookrightarrow 
\fresh(|\tuples_1|)}
{\textrm{(T-Intx)}}

\irulelabel
{\begin{array}{c}
\allcoll \in \set{\setminus, -} \quad
\context \vdash \query_1 \hookrightarrow \tuples_1 \\
\end{array}}
{\context \vdash \query_1 \allcoll \query_2 \hookrightarrow \fresh(|\tuples_1|)}
{\textrm{(T-Ex)}}

\\ \ \\

\irulelabel
{\begin{array}{c}
\allcoll \in \set{\cup, \uplus} \\
\context \vdash \query_1 \hookrightarrow \tuples_1 \quad
\context \vdash \query_2 \hookrightarrow \tuples_2 \\
\end{array}}
{\context \vdash \query_1 \allcoll \query_2 \hookrightarrow \fresh(|\tuples_1| + |\tuples_2|)}
{\textrm{(T-Union)}}

\irulelabel
{\begin{array}{c}
\context \vdash \query_i \hookrightarrow \tuples_i \quad
1 \leq i \leq n \quad
n = |\vec{Q}| = \revision{|\vec{R}|} \\
\revision{\context[R_1 \mapsto \tuples_1, \ldots, R_n \mapsto \tuples_n]} \vdash \query' \hookrightarrow \tuples' \\
\end{array}}
{\context \vdash \text{With}(\vec{\query}, \revision{\vec{R}}, \query') \hookrightarrow \tuples'}
{\textrm{(T-With)}}

\\ \ \\

\end{array}
\]
\vspace{-10pt}
\caption{Rules for inferring the tuples of a query. $\fresh(n)$ generates a list of $n$ fresh tuples.}
\label{fig:rules-tuples}
\vspace{-10pt}
\end{figure}

\noindent
Similar to the database schema, a symbolic database $\context$ only describes those tuples in each relation of the \emph{initial} database, but it does not present what tuples are in the result of intermediate subqueries. To obtain the tuples in each intermediate subquery, we develop a set of inference rules as shown in Figure~\ref{fig:rules-tuples}, where judgments of the form $\context \vdash \query \hookrightarrow \tuples$ mean the result of query $\query$ has a list of symbolic tuples $\tuples$ given the initial database $\context$.

The goal of these rules is to compute the number (at most) of symbolic tuples in the result of each intermediate subquery and what are their names, so we just need to look up the tuples for relations in $\context$ (T-Rel) and generate fresh tuples for other queries.
Specifically, if a projection $\proj_L(\query)$ has aggregate functions in the attribute expression list $L$, there is only one tuple in the result (T-Agg). But if the projection $\proj_L(\query)$ does not have aggregate functions, the number of tuples in the result is the same as that in $\query$.
For filtering $\filter_\phi(\query)$, the T-Filter rule generates the same number of tuples as $\query$, because the predicate $\phi$ may not filter out any tuple from $\query$. The T-Rename rule specifies that the renaming operation \revision{$\rho_R(\query)$} produces the same number of tuples as $\query$. 
Given the tuples of $\query_1$ are $\tuples_1$ and the tuples of $\query_2$ are $\tuples_2$, the numbers of fresh tuples for the Cartesian product, inner join, left outer join, right outer join, and full outer join are $|\tuples_1| \cdot |\tuples_2|$, $|\tuples_1| \cdot |\tuples_2|$, $|\tuples_1| \cdot (|\tuples_2|+1)$, $(|\tuples_1|+1) \cdot |\tuples_2|$, and $|\tuples_1| \cdot |\tuples_2| + |\tuples_1| + |\tuples_2|$, respectively.
In addition, the numbers of fresh tuples for union, intersect, and except operations of $\query_1$ and $\query_2$ are $|\tuples_1| + |\tuples_2|$, $|\tuples_1|$, and $|\tuples_1|$, respectively. 
According to the T-Dist, T-GroupBy, and T-OrderBy rules, the Distinct, GroupBy, and OrderBy operation preserves the number of tuples in their subqueries.
Finally, to infer the tuples for \revision{$\text{With}(\vec{\query}, \vec{R}, \query')$}, we first need to obtain the tuples $\tuples_i$ for its subquery $\query_i$. Then we create a new symbolic database $\context'$ by adding mappings from tuples $\tuples_i$ to relation \revision{$R_i$} to $\context$. Finally, we infer the tuples $\tuples'$ for $\query'$ given the new symbolic database $\context'$ and get the result of \revision{$\text{With}(\vec{\query}, \vec{R}, \query')$} based on the T-With rule.

\subsubsection{Symbolic Encoding of Query Operators} \hfill
\vspace{8pt}

\begin{figure}[!t]
\[
\hspace{-12pt}
\scriptsize
\begin{array}{c}


\irulelabel
{\begin{array}{c}
    R \in \dom(\context) \\
\end{array}}
{\schema, \context \vdash R \leadsto \top}
{\textrm{(E-Rel)}}

\irulelabel
{\begin{array}{c}
    \neg \text{hasAgg}(L) \quad
    \schema, \context \vdash \query \leadsto \formula_1 \\

    \context \vdash \query \hookrightarrow [t_1, \ldots, t_n] \\

    \schema \vdash \Pi_{L}(\query) : [a_1', \ldots, a_l'] \quad 
    \context \vdash \Pi_{L}(\query) \hookrightarrow [t_1', \ldots, t_n'] \\


    \formula_2 = \land_{i=1}^{n} (\land_{j=1}^{l} 
    \denot{a_j'}_{\schema, \context, [t_{i}']} = 
    \denot{\revision{a_j'}}_{\schema, \context, [t_{i}]} \land \del(t_i') \leftrightarrow \del(t_{i})) \\
\end{array}}
{\schema, \context \vdash \Pi_{L}(\query) \leadsto \formula_1 \land \formula_2}
{\hspace{-1pt}\textrm{(E-Proj)}}







\\ \ \\

\irulelabel
{\begin{array}{c}
    \context \vdash \query \hookrightarrow [t_1, \ldots, t_n] \quad
    \context \vdash \filter_\phi(\query) \hookrightarrow [t_1', \ldots, t_n'] \quad
    \schema, \context \vdash \query \leadsto \formula_1 \\
    \formula_2 = \land_{i=1}^{n} (
    (\neg \del(t_i) \land \denot{\phi}_{\schema, \context, [t_i]} = \top) \to t_{i}' =  t_{i} 
    \land
    (\del(t_i) \lor \denot{\phi}_{\schema, \context, [t_i]} \neq \top) \to \del(t_{i}'))\\
\end{array}}
{\schema, \context \vdash \filter_\phi(\query) \leadsto \formula_1 \land \formula_2}
{\textrm{(E-Filter)}}

\\ \ \\

\irulelabel
{\begin{array}{l}
    \schema \vdash \query_1 : [a_1, \ldots, a_{p}] \quad
    \context \vdash \query_1 \hookrightarrow [t_1, \ldots, t_{n}] \quad
    \schema, \context \vdash \query_1 \leadsto \formula_1 \\
    
    \schema \vdash \query_2 : [a_1', \ldots, a_{q}'] \quad
    \context \vdash \query_2 \hookrightarrow [t_1', \ldots, t_{m}'] \quad
    \schema, \context \vdash \query_2 \leadsto \formula_2 \quad
    
    \context \vdash \query_1 \times \query_2 \hookrightarrow [t_{1,1}'', \ldots, t_{n,m}''] 
    \\
    
    \formula_3 = \land_{i=1}^{n} \land_{j=1}^{m} (\neg \del(t_i) \land \neg \del(t_j')) \to \land_{k=1}^{p} \denot{a_{k}}_{\schema, \context, [t_{i,j}'']} = \denot{a_{k}}_{\schema, \context, [t_{i}]} \land \land_{k=1}^{q} \denot{a_{k}'}_{\schema, \context, [t_{i,j}'']} = \denot{a_{k}'}_{\schema,\context,[t_{j}']} \land \\
    \qquad (\del(t_i) \lor \del(t_j')) \to \del(t_{i,j}'')\\
\end{array}}
{\schema, \context \vdash \query_1 \times \query_2 \leadsto \formula_1 \land \formula_2 \land \formula_3}
{\textrm{(E-Prod)}}
\\ \ \\



    
    

\end{array}
\]
\vspace{-10pt}
\caption{
Sample inference rules for encoding SQL queries. $\del(t)$ indicates that a tuple $t$ is deleted.
}
\label{fig:rules-encode}
\vspace{-10pt}
\end{figure}

\noindent
Since our query language supports various complex attribute expressions and predicates, we introduce two auxiliary functions $\denot{e}_{\schema, \context, \tuples}$ and $\denot{\phi}_{\schema, \context, \tuples}$ to encode attribute expressions and predicates in a query, respectively\footnote{We precisely define these auxiliary functions in \appx{Appendix~\ref{sec:full-semantics}}{the appendix of the extended version~\cite{extended}}.}. Intuitively, $\denot{\cdot}_{\schema, \context, \tuples}$ evaluates an attribute expression or a predicate in a recursive fashion given the schema $\schema$, database $\context$, and a list of tuples $\tuples$. For example, suppose $a$ is an attribute of an relation and $t$ is a tuple of that relation, $\denot{a + 1}_{\schema, \context, [t]} = \denot{a}_{\schema, \context, [t]} + \denot{1}_{\schema, \context, [t]} = t.a + 1$. {Furthermore, it is worthwhile to point out that the tuple list $\tuples$ has more than one tuple in several cases when evaluating \sqlgroupby and aggregate queries. For instance, $\denot{\text{AVG}(\text{EMP.age})}_{\schema, \context, [t_1, t_2, ..., t_n]}$ is used to compute the average age of EMP where the EMP table has $n$ symbolic tuples.}

Our encoding algorithm for query operators is represented as a set of inference rules. Judgments are of the form $\schema, \context \vdash \query \leadsto \formula$, meaning the encoding of query $\query$ is formula $\formula$ given schema $\schema$ and database $\context$.
Figure~\ref{fig:rules-encode} presents a sample set of such inference rules.

\noindent \textbf{\emph{Relation.}}
The encoding for a simple relation query is trivially $\top$, because the result can be obtained from the database $\context$ directly, i.e., the output is the same as input.

\newpara{Filtering.}
The E-Filter rule specifies how to encode filtering operations. In particular, given a query $\filter_{\phi}(\query)$, we can first determine the input (result of $\query$) is $[t_1, \ldots, t_n]$ and the output should be $[t'_1, \ldots, t'_n]$ based on the rules in Figure~\ref{fig:rules-tuples}. Then we can generate a formula $\formula_2$ to describe the relationship between $[t'_1, \ldots, t'_n]$ and $[t_1, \ldots, t_n]$. Specifically, if a tuple $t_i$ is not deleted and the predicate $\phi$ evaluates to be $\top$ on $t_i$, then it is retained in the result, i.e., $t'_i = t_i$. Otherwise, the corresponding output tuple $t'_i$ is deleted. The final formula encoding $\filter_{\phi}(\query)$ is the conjunction of $\formula_2$ and the formula $\formula_1$ that encodes $\query$.

\newpara{Projection.}
According to the E-Proj rule, to encode $\proj_L(\query)$ where $L$ does not contain aggregate functions, we can first infer that its input is $[t_1, \ldots, t_n]$ with attributes $[a_1, \ldots, a_m]$ and output is $[t'_1, \ldots, t'_n]$ with attributes $[a'_1, \ldots, a'_l]$ based on the rules in Figure~\ref{fig:rules-tuples} and Figure~\ref{fig:rules-schema}. Then for each $a'_k$, we find its corresponding index $c_k$ in the attributes of $\query$ and generate a formula $\formula_2$ that asserts for each tuple and its output $(t_i, t'_i)$, they have the same \del status and they agree on attribute $a'_k$.
The E-Agg rule for encoding projection with aggregate functions is similar to E-Proj. The main difference is that it only generates one tuple $t'_1$ in the output and sets the aggregated value based on all input tuples, i.e., $\denot{a_j'}_{\schema,\context,[t_1']} = \denot{\revision{a_j'}}_{\schema,\context,\vec{t}}$.

\newpara{Cartesian product.}
Based on the E-Prod rule to encode $\query_1 \times \query_2$, we just need to generate a formula $\formula_3$ that encodes an output tuple $t''_{i,j}$ is obtained by concatenating a tuple $t_i$ from $\query_1$ and a tuple $t'_j$ from $\query_2$. Specifically, $\formula_3$ describes the attributes of $t''_{i, j}$ agree with those of $t_i$ and $t'_j$, and $t''_{i,j}$ is deleted if either $t_i$ or $t'_j$ is deleted.

\newpara{Left outer join.}
The E-LJoin rule specifies the encoding of a left outer join $\query_1 \ljoin_\phi \query_2$ is based on the formula $\formula_1$ of inner join $\query_1 \ijoin_\phi \query_2$ (which is a syntactic sugar of $\filter_\phi(\query_1 \times \query_2)$). In addition to $\formula_1$, the encoding also includes $\formula_2$, which describes the output tuples from $\query_1$'s null extension.

\newpara{Other operations.}
The rules for encoding other query operations are in similar flavor of the above rules. Specifically, the high-level idea is to first obtain the schema and tuples for its input and output based on the rules in Figure~\ref{fig:rules-schema} and Figure~\ref{fig:rules-tuples}. Then we can encode the relationship between the input and output tuples and generate a formula that encodes the query semantics. Due to page limit, the complete set of our inference rules are presented in \appx{Appendix~\ref{sec:appendix-encoding}}{the Appendix of the extended version~\cite{extended}}.
\revision{
These rules closely follow the standard three-valued logic when evaluating expressions and predicates that involve \sqlnull's. For example, the predicate \sqlnull = \sqlnull evaluates to \sqlnull, which is consistent with the three-valued logic.
}

\begin{definition}[Interpretation]
\label{def:interp}
\revision{
An interpretation of a formula is a mapping from variables, function symbols, and predicate symbols in the formula to values, functions, and predicates, respectively.
}
\end{definition}

\begin{definition}[Interpretation Extension]
\label{def:interp_ext}
\revision{
Interpretation $\interpretation_1$ is an \emph{extension} of interpretation $\interpretation_2$, denoted $\interpretation_1 \extends \interpretation_2$, if (1) $\dom(\interpretation_1) \supseteq \dom(\interpretation_2)$, (2) $\forall v \in \vars(\interpretation_2).~ \interpretation_1(v) = \interpretation_2(v)$, and (3) for the $\del$ predicate, $\forall v \in \vars(\interpretation_1).~ \interpretation_2(\del)(v) \leftrightarrow \interpretation_1(\del)(v)$.
}
\end{definition}

\revision{
Intuitively, interpretation $\interpretation_1$ is an extension of $\interpretation_2$ if (1) all variables, functions, and predicates defined by $\interpretation_2$ are also defined by $\interpretation_1$, (2) $\interpretation_1$ preserves the values of variables occurring in $\interpretation_2$ while it assigns values to new variables, and (3) $\interpretation_1$ preserves the definition of $\del$ predicate on variables occurring in $\interpretation_2$.
}

\begin{theorem}\label{lem:query1}
\revision{
Let $\db$ be a database over schema $\schema$ and $o_R$ be the output relation of running query $\query$ over $\db$.
Consider a symbolic database $\context$ over $\schema$, a symbolic relation $R$ and a formula $\formula$ such that $\context \vdash \query \hookrightarrow R$ and $\schema, \context \vdash \query \leadsto \formula$. For any interpretation $\interpretation$ such that $\interpretation(\context) = \db$, there exists an extension $\interpretation'$ of $\interpretation$ such that (1) the result of query $\query$ over $\db$ is $\interpretation'(R)$, and (2) the database $\db$ and $\interpretation'$ jointly satisfy $\formula$, i.e.,
\footnote{\revision{
We slightly abuse the notation $\interpretation(x)$ over lists to apply $\interpretation$ to each element in list $x$. We also abuse the notation $\interpretation(x)$ over maps to apply $\interpretation$ to each value in the range of map $x$.
}}
\[
\begin{array}{l}
(\context \vdash \query \hookrightarrow R) \land (\schema, \context \vdash \query \leadsto \formula ) \land (\interpretation(\context) = \db) \Rightarrow \\
\qquad \qquad \qquad \qquad \qquad \qquad \qquad \qquad \exists \interpretation' \extends \interpretation.~ \interpretation'(R) = \denot{Q}_{\db} \land (\interpretation', \db[o_R \mapsto \interpretation'(R)] \models \formula) 
\end{array}
\]
}
\end{theorem}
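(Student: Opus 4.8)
The plan is to prove Theorem~\ref{lem:query1} by structural induction on the query $\query$, mirroring the inductive definition of the denotational semantics in Figure~\ref{fig:semantics-query} and the inference rules in Figures~\ref{fig:rules-schema}, \ref{fig:rules-tuples}, and \ref{fig:rules-encode}. For each syntactic form of $\query$, I will show how to construct the required extension $\interpretation'$ from the interpretation $\interpretation$ (and, for compound queries, from the extensions furnished by the induction hypotheses applied to the subqueries), and then verify the two conjuncts: that $\interpretation'(R)$ equals the concrete output $\denot{Q}_{\db}$, and that $(\interpretation', \db[o_R \mapsto \interpretation'(R)])$ satisfies $\formula$. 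A key bookkeeping point throughout is that the tuple-inference rules (Figure~\ref{fig:rules-tuples}) generate a \emph{fixed} list of fresh symbolic tuples whose length is an \emph{upper bound} on the true output size; the extension $\interpretation'$ must assign concrete values to all of them, using the $\del$ predicate to mark the surplus tuples as deleted so that $\interpretation'(R)$, read as the sublist of non-deleted tuples, coincides with $\denot{Q}_{\db}$.

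First I would set up the base case: when $\query$ is a relation name $R$, we have $\context \vdash R \hookrightarrow \context(R)$, $\formula = \top$ by E-Rel, and $\denot{R}_{\db} = \db(R)$. Since $\interpretation(\context) = \db$ gives $\interpretation(\context(R)) = \db(R)$ directly, we simply take $\interpretation' = \interpretation$; the second conjunct holds trivially because $\formula = \top$. Next, for the inductive cases I would go operator by operator, but the arguments fall into a few patterns. For the "shape-preserving" operators --- $\filter_\phi$, $\rename_R$, projection without aggregation --- the induction hypothesis on the subquery yields an extension $\interpretation''$ with $\interpretation''(\tuples) = \denot{\query'}_{\db}$; I then extend $\interpretation''$ to the fresh output tuples $[t_1',\dots,t_n']$ by setting $t_i'$ to equal (the appropriate projection/renaming of) $t_i$ when $t_i$ is not deleted and $\phi$ evaluates to $\top$ under $\interpretation''$, and marking $t_i'$ as deleted otherwise; the formula $\formula_2$ in E-Filter / E-Proj is then satisfied by construction, and the non-deleted sublist of $[t_1',\dots,t_n']$ matches $\sfilter/\smap$ applied to $\denot{\query'}_{\db}$. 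Here I must also invoke the correctness of the auxiliary encodings $\denot{e}_{\schema,\context,\tuples}$ and $\denot{\phi}_{\schema,\context,\tuples}$ with respect to the concrete semantics $\denot{e}_{\db,xs}$, $\denot{\phi}_{\db,xs}$ --- in particular that the two-valued/three-valued evaluation is faithfully modelled by the $(b,v)$ null-encoding --- which I would state as a companion lemma (proved by mutual structural induction on expressions and predicates) and cite. For the binary operators --- $\times$, the joins, the collection and bag operators, $\text{GroupBy}$, $\text{With}$ --- I apply the induction hypothesis to each subquery, noting that the two resulting extensions agree with $\interpretation$ on $\interpretation$'s domain and hence are compatible, take their union, and then extend to the fresh tuples of the compound result following the concrete semantics (e.g.\ for $\times$, concatenation of attribute values with $\del$ being the disjunction; for joins, additionally the null-extension tuples $\smerge(x,T_{\nullv})$; for $\text{With}$, first extend to the subquery results, then form $\context' = \context[R_i \mapsto \tuples_i]$ and apply the induction hypothesis to $\query'$ under $\context'$, observing $\interpretation'(\context') = \db' = \db[R_i \mapsto \denot{Q_i}_\db]$). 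In each case the generated formula components (e.g.\ $\formula_3$ in E-Prod) are satisfied because they literally transcribe the relationship we used to define $\interpretation'$.

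The main obstacle, I expect, is the interaction between (i) the \emph{fixed upper-bound} sizing of symbolic tuple lists versus the \emph{variable} true output size, and (ii) operators whose concrete semantics are defined by order-sensitive fold-based combinators --- principally $\text{Distinct}$, the bag-minus $-$, $\text{GroupBy}$, and especially $\text{OrderBy}$ (which the encoding realizes via a selection/bubble sort over symbolic tuples, per the overview, but whose deterministic selection-sort semantics in Figure~\ref{fig:semantics-query} must be matched step for step). For these I would need to argue that the encoding's handling of deleted tuples (so that padding tuples are inert) correctly simulates the list-filtering that the functional semantics performs, and that the comparison encoding (including the $-\infty$ treatment of $\nullv$) produces exactly the ordering that $\text{Cmp}$ and $\text{MinTuple}$ prescribe --- this requires a small internal induction on the sorting loop rather than just a direct appeal to the outer structural induction. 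A secondary subtlety is ensuring the $\del$-preservation clause (3) in Definition~\ref{def:interp_ext} is respected by every extension step: since each rule introduces \emph{fresh} tuples for which $\del$ is unconstrained in $\interpretation$, we are free to define $\del$ on them however the construction demands, and clause (3) only pins down $\del$ on the pre-existing variables, which we never touch --- so this is routine but must be checked uniformly. Once the sorting and aggregation cases are discharged, the remaining cases are mechanical, and the theorem follows by the induction principle over the grammar of Figure~\ref{fig:syntax-sql}.
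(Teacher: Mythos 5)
Your plan follows essentially the same route as the paper's own proof: structural induction on $\query$, constructing the extension $\interpretation'$ case by case by assigning the fresh symbolic tuples according to the concrete semantics and using $\del$ to mark surplus tuples, relying on companion lemmas for the faithfulness of the expression/predicate encodings (the paper's Lemmas~\ref{lem:predicate}--\ref{lem:attribute}), and giving the order-sensitive operators (notably OrderBy via the moveDelToEnd/find and Cmp/MinTuple analysis, plus GroupBy) their own finer-grained argument. The construction you sketch for the base case, the shape-preserving operators, the products/joins, and With matches the paper's, so no substantive divergence to report.
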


\vspace{5pt}

\begin{theorem}\label{lem:query2}
\revision{
Let $\context$ be a symbolic database over schema $\schema$ and $\query$ be a query. Consider a symbolic relation $R$ and a formula $\formula$ such that $\context \vdash \query \hookrightarrow R$ and $\schema, \context \vdash \query \leadsto \formula$. If $\formula$ is satisfiable, then for any satisfying interpretation $\interpretation$ of $\formula$, running $\query$ over the concrete database $\interpretation(\context)$ yields the relation $\interpretation(R)$, i.e.,
\[
(\context \vdash \query \hookrightarrow R) \land (\schema, \context \vdash \query \leadsto \formula ) \land  (\interpretation \models \formula) \Rightarrow \denot{Q}_{\interpretation(\context)} = \interpretation(R)
\]
}
\end{theorem}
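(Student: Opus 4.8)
The plan is to prove the statement by structural induction on the query $\query$, in lockstep with the three rule systems of Figures~\ref{fig:rules-schema}, \ref{fig:rules-tuples}, and \ref{fig:rules-encode}. To make the induction close I would strengthen the hypothesis so that it also records the shape facts the companion rules establish: that $\schema \vdash \query : \attributes$ names the attributes carried by every interpreted tuple of $\interpretation(R)$, and that $|R|$ is the bound predicted by $\context \vdash \query \hookrightarrow R$. I would first dispatch a preliminary lemma, proved by mutual induction on attribute expressions $E$/attributes $A$ and predicates $\phi$, stating that for every interpretation $\interpretation$ and tuple list $\tuples$ we have $\interpretation\!\left(\denot{e}_{\schema,\context,\tuples}\right) = \denot{e}_{\interpretation(\context),\,\interpretation(\tuples)}$, together with the three-valued analogue for predicates: $\denot{\phi}_{\schema,\context,\tuples}$ evaluates under $\interpretation$ to $\top$ (resp.\ $\bot$, resp.\ $\nullv$) exactly when $\denot{\phi}_{\interpretation(\context),\,\interpretation(\tuples)}$ does. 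This is where the $(b,v)$ encoding of \sqlnull\ and the ITE/Case/membership encodings get checked against the semantics of Figure~\ref{fig:semantics-query}; it is routine but must be carried out carefully for the comparison operators and the five aggregate functions over multi-tuple lists.

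The recurring move in the main induction is: write the encoding as $\formula_{\mathrm{sub}} \land \formula_{\mathrm{op}}$, where $\formula_{\mathrm{sub}}$ is the conjunction of the subqueries' encodings and $\formula_{\mathrm{op}}$ is the operator's local constraint; apply the IH to each subquery under the same $\interpretation$ to learn that each interpreted subquery relation equals its denotation; then read the interpreted output relation off $\formula_{\mathrm{op}}$ and match it term-by-term against the matching clause of $\denot{\query}_{\interpretation(\context)}$. The base case $R$ is immediate ($\formula = \top$, $R = \context(R)$, so $\interpretation(R) = \interpretation(\context)(R) = \denot{R}_{\interpretation(\context)}$). Projection in both the aggregate and non-aggregate variants (via the expression lemma, matching the single-tuple and $\smap$ forms), filtering (via the predicate lemma, matching $\sfilter$), renaming, Cartesian product (checking the fresh tuples $t''_{i,j}$ are laid out in the same row-major order as the $\sfoldl$/$\smap$ of Figure~\ref{fig:semantics-query}, with $\del(t''_{i,j})$ iff $\del(t_i)\lor\del(t'_j)$), and inner join (which desugars to $\filter_\phi(\query_1\times\query_2)$) all follow this template. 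Throughout I rely on the convention that applying $\interpretation$ to a symbolic relation yields the sequence of interpreted \emph{non-deleted} tuples, so the padding tuples introduced by several $\hookrightarrow$ rules are harmless precisely because $\formula_{\mathrm{op}}$ forces them deleted whenever they go unused.

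The substantive cases are the outer joins, the collection/bag operators, \sqlwith, \sqlorderby, and \sqlgroupby. For the outer joins I would combine the inner-join case with the invariant that each \sqlnull-extension slot is deleted exactly when its left (resp.\ right) tuple has no join partner, matching the $\site(\lvert v_1(x)\rvert = 0,\dots)$ tests in the semantics. For $\text{Distinct}$, $\cup$, $\cap$, $\setminus$, $\uplus$, $\cplus$, and $-$ I would check each local encoding against the $\sfoldr$/$\sfoldl$/$\sfilter$ definition, again modulo deleted tuples. For $\text{With}(\vec{\query},\vec{R},\query')$ the key point is that, after applying the IH to each $\query_i$, extending $\context$ to $\context[R_i\mapsto\tuples_i]$ and $\interpretation$ accordingly preserves compatibility, $\interpretation(\context') = \db' = \db[R_i\mapsto\denot{\query_i}_{\db}]$, so the IH applies to $\query'$. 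For \sqlorderby\ I must show the symbolic sort encoding realizes the deterministic selection sort of Figure~\ref{fig:semantics-query}, including the treatment of \sqlnull\ as $-\infty$ inside $\text{Cmp}$.

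I expect \sqlgroupby\ to be the main obstacle. Its encoding performs symmetry breaking over symbolic groups, whereas the semantics builds groups via $\text{Dedup}$ and $\text{Eval}$ over the subquery result; the crux is to extract, from a satisfying $\interpretation$, a correspondence between the symbolic groups and the semantic groups that respects the \sqlhaving\ filter and under which each symbolic aggregate tuple interprets to $\denot{L}_{\interpretation(\context),\,\interpretation(xs)}$, all while honoring the output-ordering convention and the deleted-tuple bookkeeping. A related subtlety, mild elsewhere but acute here, is that $\formula$ is non-deterministic on the symbolic side (values of deleted tuples, witnesses chosen for existential disjunctions), so the argument must only inspect the parts of $\interpretation$ that $\formula$ actually constrains. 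Finally, I would remark on the link to Theorem~\ref{lem:query1}: since it already yields \emph{some} extension producing the correct output, an alternative is to show that $\formula$ pins down $\interpretation(R)$ uniquely (modulo deleted tuples) from $\interpretation(\context)$; but that uniqueness claim is an induction of the same shape, so the direct argument above is the more economical route.
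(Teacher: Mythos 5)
Your proposal is correct and follows essentially the same route as the paper: structural induction on $\query$, supported by auxiliary commutation lemmas for expressions, attributes, and predicates (the paper's Lemmas~\ref{lem:expression}, \ref{lem:attribute}, and \ref{lem:predicate}), with each operator case matching the local encoding against the denotational clause modulo deleted-tuple bookkeeping, and with GroupBy handled via a correspondence between symbolic groups and the semantic groups built by Dedup/Eval. No substantive divergence from the paper's own proof.
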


\vspace{3pt}

\subsection{Equivalence under Bag and List Semantics}

We first discuss how to check the equality of symbolic tuples and then present how to check the equality of query outputs under \emph{both} bag and list semantics. We use bag semantics by default, but for queries that involve sorting, we switch to list semantics to preserve the order of sorted tuples.

\newpara{Equality of symbolic tuples.}
Recall from Figure~\ref{fig:db} that a tuple is represented by a list of pairs, where each pair consists of an attribute name and its corresponding value.
Two symbolic tuples $t$ and $t'$ are considered equal iff (1) they have the same number of pairs and (2) their corresponding symbolic values at the same index are equal.
Specifically, suppose $t = [(a_1, v_1), (a_2, v_2), \ldots, (a_n, v_n)]$ and $t' = [(a'_1, v'_1), (a'_2, v'_2), \ldots, (a'_n, v'_n)]$ where $a_i, a'_i$ are the attribute names and $v_i, v'_i$ are their corresponding values. We say $t = t'$ if $v_i = v'_i$ holds for all $1 \leq i \leq n$. Note that the attribute names are ignored for equality comparison to support the renaming operations in SQL.

\newpara{Equivalence under bag semantics.}
Given two query results with symbolic tuples $R_1 = [t_1, \ldots, t_n]$ and $R_2 = [r_1, \ldots, r_m]$, we encode the following two properties to ensure $R_1$ and $R_2$ are equivalent under bag semantics:

\begin{itemize}[leftmargin=*]
\item
$R_1$ and $R_2$ have the same number of non-deleted tuples.
\begin{equation} \label{eq:bag1}
\vspace{-2pt}
\sum\limits_{i=1}^{n} \indicator{\neg \del(t_i)} = \sum\limits_{j=1}^{m} \indicator{\neg \del(r_j)}
\vspace{-2pt}
\end{equation}

where $\indicator{b}$ is an indicator function that evaluates to 1 when $b$ is true or 0 when $b$ is false.
\item
For each non-deleted symbolic tuple in $R_1$, its multiplicity in $R_1$ is equal to its multiplicity in $R_2$.
\begin{equation} \label{eq:bag2}
\vspace{-2pt}
\bigwedge\limits_{i=1}^{n} \Big( \sum\limits_{j=1}^{n} \indicator{\neg \del(t_i) \land \neg \del(t_j) \land t_i = t_j} = \sum\limits_{k=1}^{m} \indicator{\neg \del(t_i) \land \neg \del(r_k) \land t_i = r_k} \Big)
\vspace{-2pt}
\end{equation}
Here, we include the predicates $\neg \del(t_i)$, $\neg \del(t_j)$, and $\neg \del(r_k)$ to only count those non-deleted symbolic tuples in $R_1$ and $R_2$. 
\end{itemize}

\newpara{Equivalence under list semantics.}
Since the equivalence of two queries is only checked when they involve sorting operations, the corresponding query results $R_1 = [t_1, \ldots, t_n]$ and $R_2 = [r_1, \ldots, r_m]$ are sorted using the OrderBy operator.
\revision{
Recall from Figure~\ref{fig:syntax-sql} that OrderBy must be the last operation of a query. Furthermore, our encoding of OrderBy ensures that all deleted tuples are placed at the end of the list}
and non-deleted tuples are sorted in ascending or descending order, so we just need to encode the following two properties to assert $R_1$ and $R_2$ are equivalent under list semantics.

\begin{itemize}[leftmargin=*]
\item
$R_1$ and $R_2$ have the same number of non-deleted tuples.
\begin{equation} \label{eq:list1}
\vspace{-2pt}
\sum\limits_{i=1}^{n} \indicator{\neg \del(t_i)} = \sum\limits_{j=1}^{m} \indicator{\neg \del(r_j)}
\vspace{-2pt}
\end{equation}
\item
The tuples with the same index in $R_1$ and $R_2$ are equal.
\begin{equation} \label{eq:list2}
\vspace{-2pt}
\bigwedge\limits_{i=1}^{\text{min}\set{m,n}} t_i = r_i
\vspace{-2pt}
\end{equation}
\end{itemize}

Intuitively, we just need to check the pair-wise equality of tuples until the end of $R_1$ or $R_2$, because the OrderBy operator has moved the deleted tuples to the end of the list.
\revision{
This OrderBy encoding greatly simplifies the equality check. Alternatively, one can encode the check without the assumption of OrderBy moving deleted tuples to the end. The key idea is to maintain two pointers, one on each list, moving from the beginning to the end of the lists. Every time a pointer moves to a new location, it checks whether or not the corresponding tuple is deleted. If the tuple is not deleted, then it checks if the tuple is equal to the tuple pointed by the other pointer. Otherwise, if the tuple is deleted, it moves to the next location.
}

\begin{lemma}\label{lem:equal}
Given two relations $R_1 = [t_1, \ldots, t_n]$ and $R_2 = [r_1, \ldots, r_m]$, if the formula $(\ref{eq:bag1}) \land (\ref{eq:bag2})$ is valid, then $R_1$ is equal to $R_2$ under bag semantics. If the formula $(\ref{eq:list1}) \land (\ref{eq:list2})$ is valid, then $R_1$ is equal to $R_2$ under list semantics.
\end{lemma}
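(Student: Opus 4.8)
The plan is to prove the two halves separately, since they rest on different structural facts.

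\textbf{Bag semantics.} Fix an interpretation under which $(\ref{eq:bag1}) \wedge (\ref{eq:bag2})$ evaluates to $\top$; equivalently, regard $R_1$ and $R_2$ as the concrete relations it assigns. For a tuple value $v$ write $\mu_1(v) = \sum_{j=1}^{n} \indicator{\neg\del(t_j) \wedge t_j = v}$ and $\mu_2(v) = \sum_{k=1}^{m} \indicator{\neg\del(r_k) \wedge r_k = v}$ for the multiplicity of $v$ among the non-deleted tuples of $R_1$ and $R_2$; proving $R_1 = R_2$ as bags amounts to proving $\mu_1(v) = \mu_2(v)$ for every $v$. Equation~(\ref{eq:bag2}), instantiated at each $i$ with $\neg\del(t_i)$, states exactly $\mu_1(t_i) = \mu_2(t_i)$, so $\mu_1$ and $\mu_2$ agree on every value that occurs as a non-deleted tuple of $R_1$. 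It remains to exclude a value $v$ with $\mu_1(v) = 0 < \mu_2(v)$. Letting sums range over the finite set of distinct values among $t_1,\dots,t_n,r_1,\dots,r_m$, we have $\sum_v \mu_1(v) = \sum_{i} \indicator{\neg\del(t_i)}$ and $\sum_v \mu_2(v) = \sum_{k} \indicator{\neg\del(r_k)}$, and these totals are equal by Equation~(\ref{eq:bag1}). Then $\sum_v \mu_1(v) = \sum_{v:\mu_1(v)>0}\mu_1(v) = \sum_{v:\mu_1(v)>0}\mu_2(v) \le \sum_v \mu_2(v)$, where the middle step uses the agreement just established; since the outer terms are equal, $\sum_{v:\mu_1(v)=0}\mu_2(v) = 0$, and by nonnegativity each such $\mu_2(v) = 0$. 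Hence $\mu_1 = \mu_2$ everywhere and the bags coincide. (This argument also shows $(\ref{eq:bag1}) \wedge (\ref{eq:bag2})$ entails the symmetric property for $R_2$'s non-deleted tuples, which is why that formula is omitted.)

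\textbf{List semantics.} Here we invoke the structural invariant guaranteed by our encoding of $\sqlorderby$: in a relation produced by $\text{OrderBy}$, every deleted tuple is placed after every non-deleted one, and the list denoted by the relation is its non-deleted prefix. So $R_1 = [t_1,\dots,t_p,\dots,t_n]$ with $t_1,\dots,t_p$ non-deleted and $t_{p+1},\dots,t_n$ deleted, and similarly $R_2$ has a non-deleted prefix $r_1,\dots,r_q$. Equation~(\ref{eq:list1}) gives $p = q$, and since $p \le n$ and $q \le m$ we get $p = q \le \min\{m,n\}$, so Equation~(\ref{eq:list2}) yields $t_i = r_i$ for all $1 \le i \le p$. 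Thus the two non-deleted prefixes are pointwise equal, i.e.\ $R_1 = R_2$ as lists.

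\textbf{Main obstacle.} The list case is essentially immediate once the ``deleted-tuples-at-the-end'' invariant for $\sqlorderby$ is available (that invariant is discharged where the correctness of OrderBy's encoding is proved). The more delicate point is the bag case: Equation~(\ref{eq:bag2}) only quantifies over tuples of $R_1$, so ruling out an ``extra'' value in $R_2$ genuinely requires the global count Equation~(\ref{eq:bag1}) together with nonnegativity of multiplicities, via the pigeonhole-style inequality above. I would be careful to fix once and for all the finite index set over which the multiplicity sums range so that the rearrangement of sums is unambiguous.
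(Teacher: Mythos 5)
Your proof is correct and follows essentially the same route as the paper's: for bags, equating per-tuple multiplicities via formula~(\ref{eq:bag2}) and using the total count in formula~(\ref{eq:bag1}) to rule out values occurring only in $R_2$; for lists, invoking the OrderBy invariant that deleted tuples sit at the end and combining (\ref{eq:list1}) with (\ref{eq:list2}) on the non-deleted prefixes. If anything, your bag-semantics argument spells out the pigeonhole step (excluding $\mu_1(v)=0<\mu_2(v)$) more explicitly than the paper does, but the underlying idea is identical.
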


\begin{theorem}\label{thm:verify}
Given two queries $Q_1, Q_2$ under schema $\schema$, an integrity constraint $\constraint$, a bound $\bound$, if $\textsc{Verify}(\query_1, \query_2, \schema, \constraint, \bound)$ returns $\top$, then \revision{$\query_1 \simeq_{\schema, \constraint, \bound} \query_2$}. Otherwise, if $\textsc{Verify}(\query_1, \query_2, \schema, \constraint, \bound)$ returns a database $\db$, then \revision{$\db \conforms \schema$ and $\denot{Q_1}_{\db} \neq \denot{Q_2}_{\db}$}. 
\end{theorem}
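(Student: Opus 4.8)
The plan is to split on the two possible outcomes of $\textsc{Verify}$ and, in each case, chain together the already-proved soundness results: Lemma~\ref{lem:ic} for $\formula_\constraint$, Theorems~\ref{lem:query1} and~\ref{lem:query2} for the two directions of the query encoding, and Lemma~\ref{lem:equal} for the equality check. Recall that $\textsc{Verify}$ returns $\top$ exactly when $\formula = \formula_\constraint \land \formula_{R_1} \land \formula_{R_2} \land \neg\textsc{Equal}(R_1,R_2)$ is unsatisfiable, and otherwise returns $\textsf{BuildExample}(\schema,\textsf{Model}(\formula))$, where $\context = \textsf{BuildSymbolicDB}(\schema,\bound)$, $\context \vdash \query_i \hookrightarrow R_i$, and $\schema,\context \vdash \query_i \leadsto \formula_{R_i}$.

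\emph{Case $\textsc{Verify} = \top$.} I would prove the contrapositive of $\query_1 \simeq_{\schema,\constraint,\bound} \query_2$. Assume some $\db$ with $\db \conforms \schema$, $\constraint(\db)$, every relation of size $\le \bound$, and $\denot{Q_1}_\db \neq \denot{Q_2}_\db$. Since each relation of $\db$ has at most $\bound$ tuples whereas $\context$ has exactly $\bound$ symbolic tuples per relation, there is an interpretation $\interpretation_0$ with $\interpretation_0(\context) = \db$, obtained by matching the first $|\db(R)|$ symbolic tuples of each $R$ to $\db(R)$ and marking the remaining ones as deleted via $\del$. As $\db$ is consistent with $\context$ and satisfies $\constraint$, $\interpretation_0 \models \formula_\constraint$ (this is exactly what the tuple-by-tuple encoding of Figure~\ref{fig:rules-constraints} expresses, cf.\ Lemma~\ref{lem:ic}). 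Applying Theorem~\ref{lem:query1} to $\query_1$ and then to $\query_2$ produces extensions $\interpretation_1, \interpretation_2 \extends \interpretation_0$ with $\interpretation_i(R_i) = \denot{Q_i}_\db$, each satisfying the corresponding $\formula_{R_i}$. The fresh tuples introduced for $\query_1$ are disjoint from those for $\query_2$, and both extensions agree with $\interpretation_0$ and preserve $\del$ on its variables (Definition~\ref{def:interp_ext}), so the union $\interpretation^*$ is well-defined and satisfies $\formula_\constraint \land \formula_{R_1} \land \formula_{R_2}$. Finally $\interpretation^*(R_1) = \denot{Q_1}_\db \neq \denot{Q_2}_\db = \interpretation^*(R_2)$; inspecting (\ref{eq:bag1})--(\ref{eq:list2}), two distinct bags (resp.\ lists) must differ either in the count of non-deleted tuples or in the multiplicity of some such tuple, so $\interpretation^* \models \neg\textsc{Equal}(R_1,R_2)$, hence $\interpretation^* \models \formula$ --- contradicting unsatisfiability. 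Thus no such $\db$ exists.

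\emph{Case $\textsc{Verify}$ returns $\db$.} Then $\formula$ is satisfiable; put $\interpretation = \textsf{Model}(\formula)$ and let $\db = \textsf{BuildExample}(\schema,\interpretation)$ be the concrete database read off from $\interpretation(\context)$ (dropping deleted tuples). Because the attribute functions of $\context$ are typed by $\schema$ and $\textsf{BuildSymbolicDB}$ gives $\dom(\context) = \dom(\schema)$, we have $\db \conforms \schema$ (Definition~\ref{def:conformance}), and from $\interpretation \models \formula_\constraint$ and Lemma~\ref{lem:ic} also $\constraint(\db)$, so $\db$ is a genuine counterexample. Since $\interpretation \models \formula_{R_1}$ and $\interpretation \models \formula_{R_2}$, Theorem~\ref{lem:query2} yields $\denot{Q_1}_\db = \interpretation(R_1)$ and $\denot{Q_2}_\db = \interpretation(R_2)$. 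And since $\interpretation \models \neg\textsc{Equal}(R_1,R_2)$, some conjunct of (\ref{eq:bag1})--(\ref{eq:bag2}) (resp.\ (\ref{eq:list1})--(\ref{eq:list2})) fails under $\interpretation$, which forces $\interpretation(R_1) \neq \interpretation(R_2)$ as bags (resp.\ lists); hence $\denot{Q_1}_\db \neq \denot{Q_2}_\db$, as required.

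\emph{Main obstacle.} Modulo the supporting lemmas the argument is short, so the real care lies in the bookkeeping. The first delicate point is exhibiting $\interpretation_0$ with $\interpretation_0(\context) = \db$ when $\db$ has fewer than $\bound$ tuples, and checking that the ``pad with $\del$'' construction is compatible with how $\textsf{BuildSymbolicDB}$ and the semantics treat deleted tuples. The second, and I expect the hardest, is merging $\interpretation_1$ and $\interpretation_2$ into a single interpretation witnessing $\formula_{R_1}$ and $\formula_{R_2}$ at once: this relies on the two subderivations using disjoint fresh tuples and on the $\del$-preservation clause of Definition~\ref{def:interp_ext}, and one must also reconcile the two slightly different forms of satisfaction appearing in Theorems~\ref{lem:query1} and~\ref{lem:query2} (one carries an explicit output-relation component $\db[o_R \mapsto \interpretation'(R)]$, the other does not). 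Third, one must make explicit that $\textsc{Equal}$ captures bag/list equality under a \emph{fixed} interpretation in both directions: Lemma~\ref{lem:equal} supplies ``$\textsc{Equal}$ holds $\Rightarrow$ relations equal'', used in the first case, while the second case needs its converse, immediate from (\ref{eq:bag1})--(\ref{eq:list2}). None of these needs a new idea, but each must be discharged for the chain of implications to go through.
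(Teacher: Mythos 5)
Your proposal follows essentially the same route as the paper's own proof: a case split on the outcome of $\textsc{Verify}$, then chaining Lemma~\ref{lem:ic}, Theorems~\ref{lem:query1} and~\ref{lem:query2}, and Lemma~\ref{lem:equal} (via its contrapositive in the UNSAT case and its converse in the SAT case). If anything, you are more explicit than the paper's rather terse argument --- e.g., constructing the padded interpretation $\interpretation_0$ with $\interpretation_0(\context) = \db$, merging the two extensions for $\query_1$ and $\query_2$ over disjoint fresh tuples, and flagging that the SAT case needs the converse of Lemma~\ref{lem:equal} --- details the paper glosses over.
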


\section{Implementation} \label{sec:impl}

Based on the techniques in Section~\ref{sec:checking}, we have implemented a tool called \tool for verifying the bounded equivalence of SQL queries modulo integrity constraints. \tool uses the Z3 SMT solver~\cite{z3-tacas08} for constraint solving and model generation. In this section, we discuss several details that are important to the implementation of \tool.

\newpara{Attribute renaming.}
We pre-process the queries to resolve attribute renaming issues before starting the verification. Specifically, we design a unique identifier generator and a name cache for pre-processing. The unique identifier generator allows \tool to efficiently track attributes while the name cache pool stores temporary aliases of an attribute under different scopes and automatically updates them for scope change.

\newpara{Sorting symbolic tuples.}
To sort symbolic tuples based on an \sqlorderby operation, we implement an encoding for the standard bubble sort algorithm. More specifically, we first move those deleted tuples to the end of the list, and then sort non-deleted tuples based on the sorting criteria specified by \sqlorderby. As part of the bubble sort algorithm, two adjacent tuples $t_1, t_2$ in the list are swapped under two conditions: (1) $t_1 \neq \nullv$ whereas $t_2 = \nullv$, or (2) $t_1, t_2$ are not Null but $t_1 < t_2$ for ascending order ($t_1 > t_2$ for descending order).



\section{Evaluation} \label{sec:eval}

This section presents a series of experiments that are designed to answer the following questions. 
\begin{itemize}[leftmargin=*]
\setlength\itemsep{3pt}
\item
\textbf{RQ1}: 
How does $\tool$ compare with state-of-the-art SQL equivalence checking techniques in terms of \emph{coverage}? 
That is, does $\tool$ support more complex queries? 
(Section~\ref{sec:comparison})
\item 
\textbf{RQ2:}
How effective is $\tool$ at \emph{disproving} query equivalence? 
Does $\tool$ generate useful counterexamples to facilitate downstream tasks? 
(Section~\ref{sec:counterexample})
\item 
\textbf{RQ3}: 
\revision{
How large input bounds can \tool reach during verification?
}
(Section~\ref{sec:exp-bound})
\end{itemize}

\subsection{Experimental Setup}


\newpara{Benchmarks.}
We collected benchmarks --- each of which is a pair of queries --- from {three} different workloads. To the best of our knowledge, we have incorporated \emph{all} benchmarks publicly available from recent work on query equivalence checking~\cite{chu2017cosette,chu2017demonstration,chu2017hottsql,chu2018axiomatic,zhou2019automated,zhou2022spes,wang2018speeding}. Additionally, we also curated a very large collection of new queries that, to the best of our knowledge, has not been used in any prior work. 

\begin{itemize}[leftmargin=*]
\setlength\itemsep{3pt}
\item 
$\leetcode$. 
This is a new dataset curated by us containing a total of {23,994} query pairs $(\query_1, \query_2)$. 
To curate this dataset, we first crawled \emph{all} publicly available queries accepted by LeetCode~\cite{leetcode} that are syntactically distinct. 
Then, we grouped them based on the LeetCode problem --- 
queries in the same group are 
written and submitted by actual users that solve the same problem. 
For each problem, we have also manually written a ``ground-truth'' query that is \emph{guaranteed} to be correct. 
Finally, each \emph{user query} $\query_1$ is paired with its corresponding \emph{ground-truth query} $\query_2$, yielding a total of {23,994} query \emph{pairs}.
In each pair, $\query_1$ is supposed to be equivalent to $\query_2$, which, however, is not always the case due to missing test cases on LeetCode. 
Indeed, as we will see shortly, $\tool$ has identified many wrong queries confirmed by LeetCode. 
As a final note: queries in our dataset naturally exhibit diverse patterns since they were written by different people with diverse backgrounds around the world, which makes this particular dataset especially valuable for evaluating query equivalence checkers. 
We also manually formalize the schema and integrity constraint for each LeetCode problem. 
\item 
$\calcite$. 
Our second benchmark set is constructed from the Calcite's optimization rules test suite~\cite{calcite-tests}. 
This is a standard workload used extensively in prior work~\cite{zhou2022spes,zhou2019automated,wang2018speeding,chu2017cosette,chu2017demonstration,chu2017hottsql,chu2018axiomatic}. 
In particular, each Calcite test case has a pair of queries $(\query_1, \query_2)$, where $\query_1$ can be rewritten to $\query_2$ using the optimization rule under test. 
In other words, $\query_1$ and $\query_2$ should be equivalent to each other, since all optimization rules are supposed to preserve the semantics. 
We included query pairs from \emph{all} test cases and ended up with a total of 397 query pairs. 
This number is noticeably higher than 232 pairs reported in prior work~\cite{zhou2022spes}, because the Calcite project is under active development and prior work used test cases from the version in year {2018}.  
The Calcite project also includes necessary schema and integrity constraints, which were all incorporated into our benchmark suite. 
\item 
$\literature$. 
Finally, we used \emph{all} other benchmarks from recent work that are publicly available. 
In particular, the $\cosette$ series~\cite{cosette-website,chu2017cosette,chu2017demonstration,chu2017hottsql,chu2018axiomatic} of work has {38} benchmarks --- all of them are included in our suite (8 involve integrity constraints).
We also used all \revision{26} benchmarks from~\cite{wang2018speeding}. 
In total, our $\literature$ dataset has {64} benchmarks. 
\end{itemize}
Across all our workloads, the average query size is 84.54 (measured by the number of AST nodes), where the max, min, and median are 679, 5, and 73, respectively.
In addition to the large query size, our benchmarks are complex in terms of query nesting. In particular, 57.7\% have sub-queries (e.g., \texttt{SELECT\:*\:FROM\:T1\:WHERE\:T1.id\:IN\:(SELECT\:uid\:FROM\:T2)}), and a number of them have four levels of query-nesting. 
In addition, 95.7\% of the queries involve joining multiple tables, and 33.2\% use the outer-join operation, which is generally quite challenging to verify. 
Finally, our queries also use other SQL features, such as group-by (58.7\%), aggregate functions (65.3\%), and order-by (21.6\%).


%

\newpara{Baselines.}
We compare $\tool$ against \emph{all} state-of-the-art \revision{\emph{bounded}} equivalence checking techniques that have corresponding tools publicly available. 
These tools all accept a smaller subset of SQL than $\tool$. 

\begin{itemize}[leftmargin=*]
\setlength\itemsep{1pt}
\item 
An extended version of $\cosette$~\cite{chu2017cosette,chu2017demonstration} which uses the provenance-base pruning technique from~\cite{wang2018speeding} to speed up the original $\cosette$ technique. 
This baseline represents the state-of-the-art along the $\cosette$ line of work. 
It does not support any integrity constraints.\footnote{$\cosette$ assumes all columns are NOT NULL --- which can be viewed as a ``built-in integrity constraint'' --- even if a column can in fact be NULL. In other words, $\cosette$ may not consider all valid inputs and, therefore, may be incomplete.}
For brevity, we call this baseline $\cosette$. 
\item 
An extension of $\qex$~\cite{qex} that incorporates the provenance-base pruning technique from~\cite{wang2018speeding} and 
significantly outperforms the original $\qex$.
For brevity, we call this baseline $\qex$. 
This baseline does not consider integrity constraints.
\end{itemize}

\revision{
In addition, we compare \tool against two SQL testing tools that cannot verify equivalence of two queries (neither in a bounded nor unbounded way) but can potentially generate counterexamples to disprove their equivalence.
}

\begin{itemize}[leftmargin=*]
\setlength\itemsep{1pt}
\item
\revision{
\datafiller~\cite{guagliardo2017formalsemantics, datafiller-website} is a random database generator that produces databases based on the schema and integrity constraints. We use \datafiller to generate random databases and run two queries over the databases to check if the outputs are different.
}
\item
\revision{
\xdata~\cite{chandra2015data} is a constraint-based mutation testing tool that can generate databases to disprove equivalence of two queries.
}
\end{itemize}

While not apples-to-apples, we also compare~$\tool$ with state-of-the-art \emph{full} (i.e., unbounded) verification techniques that can prove query equivalence. None of these tools are able to generate counterexamples and all of them support a smaller subset of SQL than $\tool$. 

\begin{itemize}[leftmargin=*]
\setlength\itemsep{1pt}
\item 
$\spes$~\cite{zhou2022spes} is a state-of-the-art verifier from the databases community. 
It 
does not support query operations beyond select-project-join (e.g., $\sqlorderby$ and set operations such as $\sqlintersect$), and has very limited support for integrity constraints\footnote{While the $\spes$ paper claims to accept simple primary key constraints, the artifact (https://github.com/georgia-tech-db/spes) is not parameterized with any constraints. {Instead, it is specialized to Calcite's tables, schema, and integrity constraints.}}.
\item 
$\hottsql$~\cite{chu2017hottsql} and $\udp$~\cite{chu2018axiomatic} use a proof assistant (in particular, Coq and Lean respectively) to prove query equivalence. 
They are quite different from the aforementioned techniques that utilize an automated theorem prover (e.g., Z3) and require more manual effort. 
\revision{
Since $\udp$ does not have a publicly available artifact that is usable, we include \hottsql as a baseline to facilitate a complete and thorough evaluation.
}
\end{itemize}


\subsection{RQ1: Coverage and Comparison against State-of-the-Art Techniques}
\label{sec:comparison}

In this section, we report the number of (1) query pairs whose \emph{bounded} equivalence (against a space of bounded-size inputs) can be successfully verified by $\tool$ and (2) those that can be proved non-equivalent (i.e., counterexamples are generated). 
We also compare $\tool$ with baselines.

\newpara{Setup.}
Given a benchmark consisting of queries $(\query_1, \query_2)$ with their schema and integrity constraint, we run $\tool$ using a {10-minute} timeout. 
There are three possible outcomes for each benchmark: (1) unsupported, (2) checked, or (3) \revision{refuted}. 
``unsupported'' means $\tool$ is not applicable to the benchmark (e.g., due to unsupported integrity constraints or SQL features). 
Otherwise, $\tool$ would report either ``checked'' (meaning bounded equivalence) or ``\revision{refuted}'' (with a counterexample witnessing the non-equivalence of $\query_1$ and $\query_2$). 
The bound is incrementally increased from 1 until the timeout is reached or a counterexample is identified. 
If no counterexample is found before timeout \revision{and the bounded equivalence is verified for at least bound 1}, we report ``checked''. 

\revision{
The same setup is used for bounded verification baselines, namely \cosette and \qex. For testing baselines (i.e., \datafiller and \xdata), there are three possible outcomes: (1) unsupported, (2) not-refuted, and (3) refuted. Since \datafiller is a random database generator, we use it to generate $1,000$ random databases where each relation has $100$ tuples for each supported benchmark. If any of these databases leads to different execution results on the two queries, we report it as ``refuted''; otherwise, we report ``not-refuted''. By contrast, \xdata performs mutation testing, so we only run the tool once on each supported benchmark. We report ``refuted'' if it finds a counterexample of equivalence and report ``not-refuted'' otherwise. For unbounded verification baselines (i.e., \spes and \hottsql), there are three possible outcomes: (1) unsupported, (2) verified, and (3) not-verified. ``Verified'' means the full equivalence of two queries is verified, and ``not-verified'' means the result is unknown (as none of them can generate counterexamples). 
}
This leads to an apples-to-oranges comparison; nevertheless, we include the results as well for a complete evaluation.
\revision{
All baseline tools use a 10-minute timeout, which is the same as \tool.
}



\begin{figure}[!t]
\centering
\begin{subfigure}{\linewidth}
\includegraphics[width=.9\linewidth]{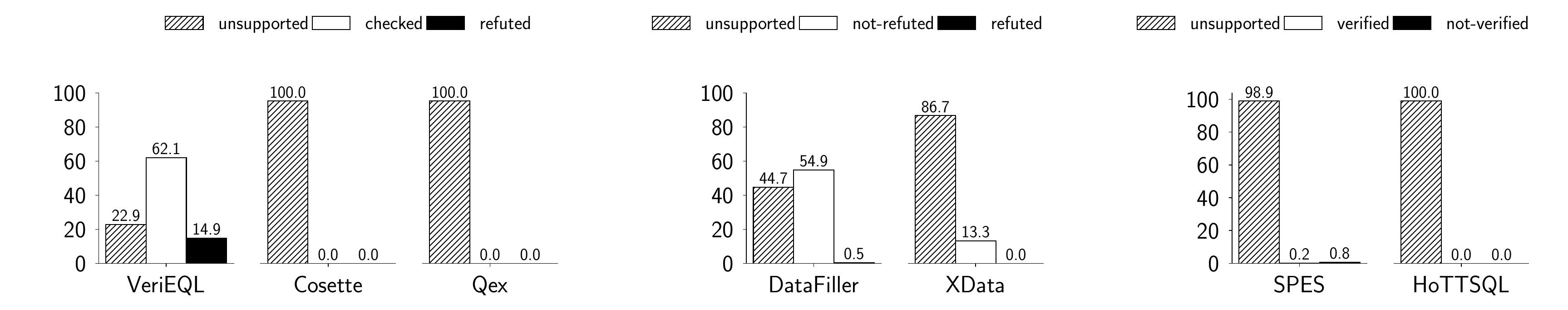}
\centering
\caption{\emph{LeetCode.}}
\label{fig:coverage-leetcode}
\end{subfigure}
\hfill
\begin{subfigure}{\linewidth}
\includegraphics[width=.9\linewidth]{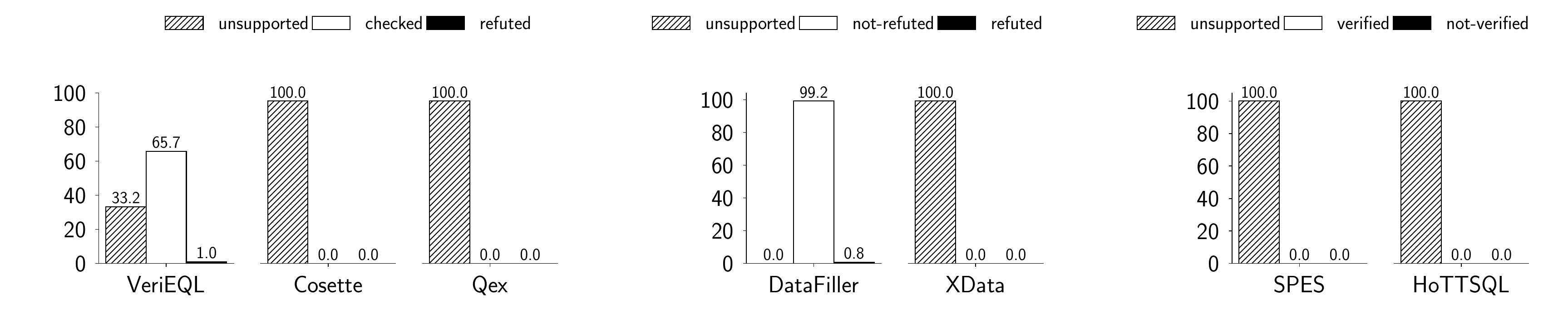}
\centering
\caption{\emph{Calcite.}}
\label{fig:coverage-calcite}
\end{subfigure}
\hfill
\begin{subfigure}{\linewidth}
\includegraphics[width=.9\linewidth]{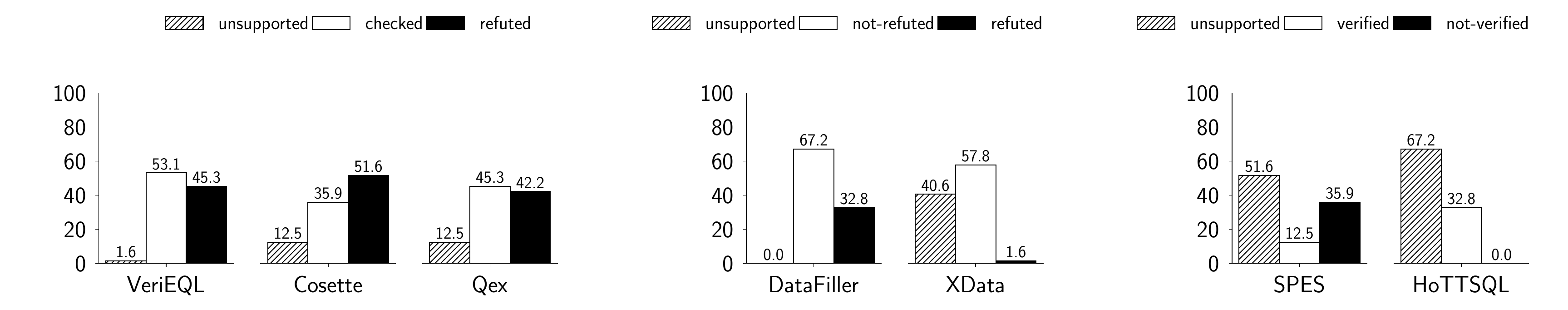}
\centering
\caption{\emph{Literature.}}
\label{fig:coverage-literature}
\end{subfigure}
\vspace{-20pt}
\caption{\textbf{RQ1}: Each workload has a sub-figure that shows, for each tool, the percentage of benchmarks (i.e., query pairs) in each category:
\revision{
unsupported/checked/refuted for bounded verification, unsupported/not-refuted/refuted for testing, unsupported/verified/not-verified for unbounded verification.
}
}
\label{fig:coverage-result}
\vspace{-10pt}
\end{figure}

\evalfinding{\textbf{RQ1 take-away:} $\tool$ supports over {75\%} of the benchmarks, \revision{which is more than all} baselines. $\tool$ also significantly outperforms all baselines across all workloads in terms of both disproving equivalence and proving (bounded) equivalence of query pairs.}

\newpara{Results.}
Our main result for each workload is shown in Figures~\ref{fig:coverage-leetcode},~\ref{fig:coverage-calcite}, and~\ref{fig:coverage-literature}.
\revision{
For the \leetcode dataset, \tool supports 77.1\% of benchmarks, which is higher than all baselines. This very large gap is due to baselines' poor support for integrity constraints and complex SQL features such as $\sqlorderby$.
}
Furthermore, \tool is able to disprove equivalence for {14.9\%} of the $\leetcode$ benchmarks, whereas none of the \emph{bounded} verification baselines can disprove any. \revision{
Unbounded verification tools cannot disprove equivalence by nature, and only the testing tool \datafiller can disprove 0.5\%.
For \calcite (see Figure~\ref{fig:coverage-calcite}), the result is similar. The only difference is that \datafiller can support all benchmarks, but it can only refute 0.8\% of benchmarks, smaller than the 1.0\% of \tool.
For \literature (see Figure~\ref{fig:coverage-literature}), \tool supports 98.4\% of benchmarks, which outperforms most of baselines other than \datafiller. It is comparable with bounded verification baselines, hypothetically because baselines were better-engineered for  $\literature$ queries, but it still outperforms testing baseline on disproving equivalence.
}

\newpara{Qualitative Analysis on Failure to Disprove Equivalence.}
\revision{
To understand whether two queries are indeed equivalent if \tool reports ``checked'', we perform manual inspections during the evaluation. Given that we have over $24,000$ benchmarks in total and exhaustive inspection is not practically feasible, we perform a non-exhaustive manual inspection.
Specifically, we sampled 50 benchmarks from \leetcode where \tool reported ``checked'' and confirmed all of them are indeed equivalent. In addition, for the benchmarks that can be proved equivalent by other tools (i.e., SPES), \tool also agrees on the result. This serves as some evidence on which we believe the equivalences checked by \tool are valid.
For \calcite dataset, we found that one benchmark where \tool reported ``checked'' but one baseline tool (namely \cosette) reported ``refuted''. After further inspection, we believe this benchmark should be equivalent but \cosette gave a spurious counterexample input on which both queries produce the same output table. We suspect \cosette may have an implementation bug that led to spurious counterexamples.
For \literature dataset, we manually inspected all of the 34 ``checked'' benchmarks and found that 27 are indeed equivalent but 7 are not equivalent. Among these 7 non-equivalent benchmarks, baselines generate spurious counterexamples for 5 of them; in other words, baselines also cannot disprove these 5. The remaining two non-equivalent benchmarks need large input databases (e.g., with $>$1000 tuples) in order to be differentiated; within our 10-minute timeout, \tool cannot disprove them but \cosette can. The reason is that \cosette has some internal heuristics that consider specific inputs with more tuples before exhausting the space of smaller inputs, whereas \tool simply increments the size of the input database from one.\footnote{Even with this simple size-increasing strategy, $\tool$ can disprove 5 $\literature$ benchmarks for which $\cosette$ reports ``checked.''}
}

\newpara{Small-Scope Hypothesis.}
\revision{
Our evaluation echos the small-scope hypothesis discussed in prior work~\cite{miao2019explaining}: ``mistakes in most of the queries may be explained by only a small number of tuples.'' There are only a small number of sources in SQL queries that could potentially break the small scope hypothesis, such as \sqllimit clause~\footnote{\revision{Recall from Figure~\ref{fig:syntax-sql} that we do not support \sqllimit clauses.}} or aggregation function \sqlcount with a large constant. Empirically, we observe that the small scope hypothesis holds for most of all our benchmarks. We only find two benchmarks that require an input relation with more than 1000 tuples to disprove equivalence; both of them have a clause like \texttt{COUNT(a) > 1000}.
}

\newpara{Limitations of Bounded Equivalence Verification.}
\revision{
While in principle the bounded equivalence verification approach taken by \tool can disprove or prove equivalence of two SQL queries for all input relations up to a finite size, it may not be able to disprove equivalence within a practical amount of time if the counterexample can only be large input relations. For example, during our evaluation, \tool failed to disprove equivalence for the following two queries from the \literature dataset:
}
\begin{tabular}{l}
\sqlselectcolor C.name \sqlfromcolor Carriers \sqlascolor C \sqljoincolor Flights \sqlascolor F \sqloncolor C.cid = F.carrier\_id \\
\hspace{3em}    \sqlgroupbycolor \ C.name, F.year, F.month\_id, F.day\_of\_month \sqlhavingcolor \sqlcountcolor(F.fid) > 1000 \\
\end{tabular}

\begin{tabular}{l}
\sqlselectcolor C.name \sqlfromcolor Carriers \sqlascolor C \sqljoincolor Flights \sqlascolor F \sqloncolor C.cid = F.carrier\_id \\
\hspace{3em}    \sqlgroupbycolor \ C.name, F.day\_of\_month \sqlhavingcolor \sqlcountcolor(F.fid) > 1000 \\
\end{tabular}

\revision{
In fact, a counterexample refuting their equivalence has more than $1000$ tuples, which \tool cannot find within the 10-minute time limit.
}

\newpara{Discussion.}
Interested readers might wonder why baselines perform poorly on $\calcite$ given it was used extensively as a benchmark suite in prior work~\cite{wang2018speeding,chu2017cosette}. 
The reason is that $\cosette$, $\qex$, and $\hottsql$ do not support the $\sqlnotnull$ constraint (among others) required in all $\calcite$ benchmarks. As a result, we directly report ``unsupported'' in \textbf{RQ1} setup. 
Results reported in prior work were obtained by running the tools \emph{without considering} constraints (like $\sqlnotnull$).
One may also wonder why $\tool$ has {1\%} (i.e., {4} benchmarks) ``not checked'' for $\calcite$, where all Calcite's query pairs are supposed to be equivalent. {We manually inspected these benchmarks and believe some of Calcite's rewrite rules are not equivalence preserving (i.e., wrong), leading to non-equivalent query pairs --- we will expand on this in \revision{\textbf{RQ2}}.}

\vspace{-5pt}
\subsection{RQ2: Effectiveness at Generating Counterexamples to Facilitate Downstream Tasks}
\label{sec:counterexample}

So far, we have seen the overall result of $\tool$ and how that compares against baselines.
In \revision{\textbf{RQ2}}, we will concentrate on those ``\revision{refuted}'' benchmarks and evaluate $\tool$'s capability at generating counterexamples. 
Notably, $\tool$'s counterexamples revealed serious bugs in MySQL and Calcite, and suggested new test inputs to augment LeetCode's existing test suite. 


\newpara{Setup.}
\revision{
Since the unbounded verifiers do not generate counterexamples, let us answer the following question: given each workload, for each \emph{bounded verification} and \emph{testing}
}
approach (including $\tool$ and baselines), how many benchmarks did the tool identify a counterexample for? How many of these counterexamples are genuine (i.e., not spurious)? 
Here, a counterexample is genuine if (i) it meets the integrity constraint associated with the benchmark and (ii) the two queries yield different outputs. 
\revision{
To study the impact of integrity constraints, we run two versions of each tool: (1) the original version as in \textbf{RQ1}, and (2) a version denoted by a suffix ``-noIC'' that drops all integrity constraints when running the tool (but we still consider integrity constraints when checking if the counterexample is genuine).
}


\evalfinding{\textbf{RQ2 take-away:} $\tool$ can \revision{disprove equivalence} for significantly more benchmarks than all bounded baselines across all workloads. 
$\tool$ can also generate counterexamples to help identify bugs in real-world systems and augment existing test suites.}

\begin{figure}[!t]
\centering
\begin{subfigure}{.32\linewidth}
\includegraphics[width=\linewidth]{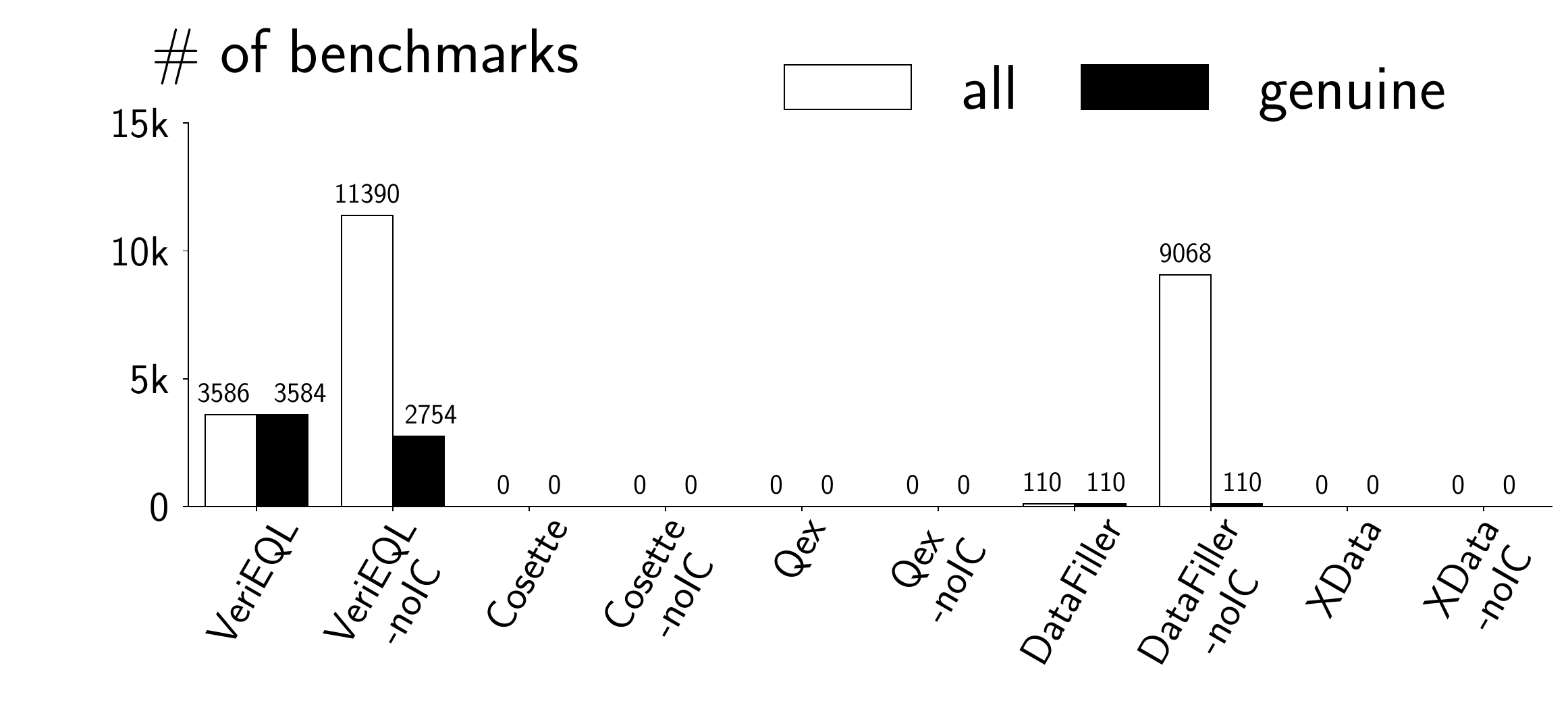}
\label{fig:counterexamples-leetcode}
\vspace{-15pt}
\caption{\emph{LeetCode.}}
\end{subfigure}
\hfill
\begin{subfigure}{.32\linewidth}
\includegraphics[width=\linewidth]{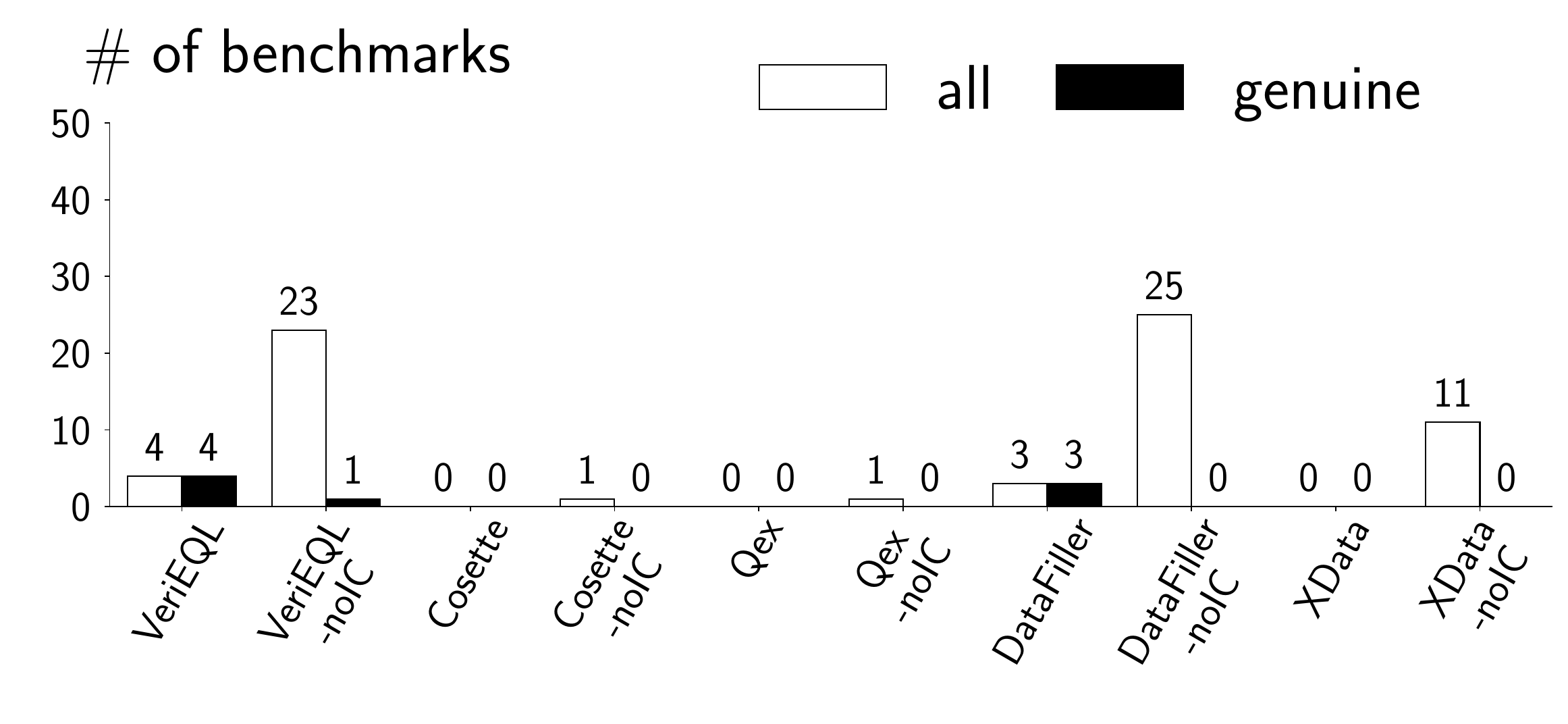}
\label{fig:counterexamples-calcite}
\vspace{-15pt}
\caption{\emph{Calcite.}}
\end{subfigure}
\hfill
\begin{subfigure}{.32\linewidth}
\includegraphics[width=\linewidth]{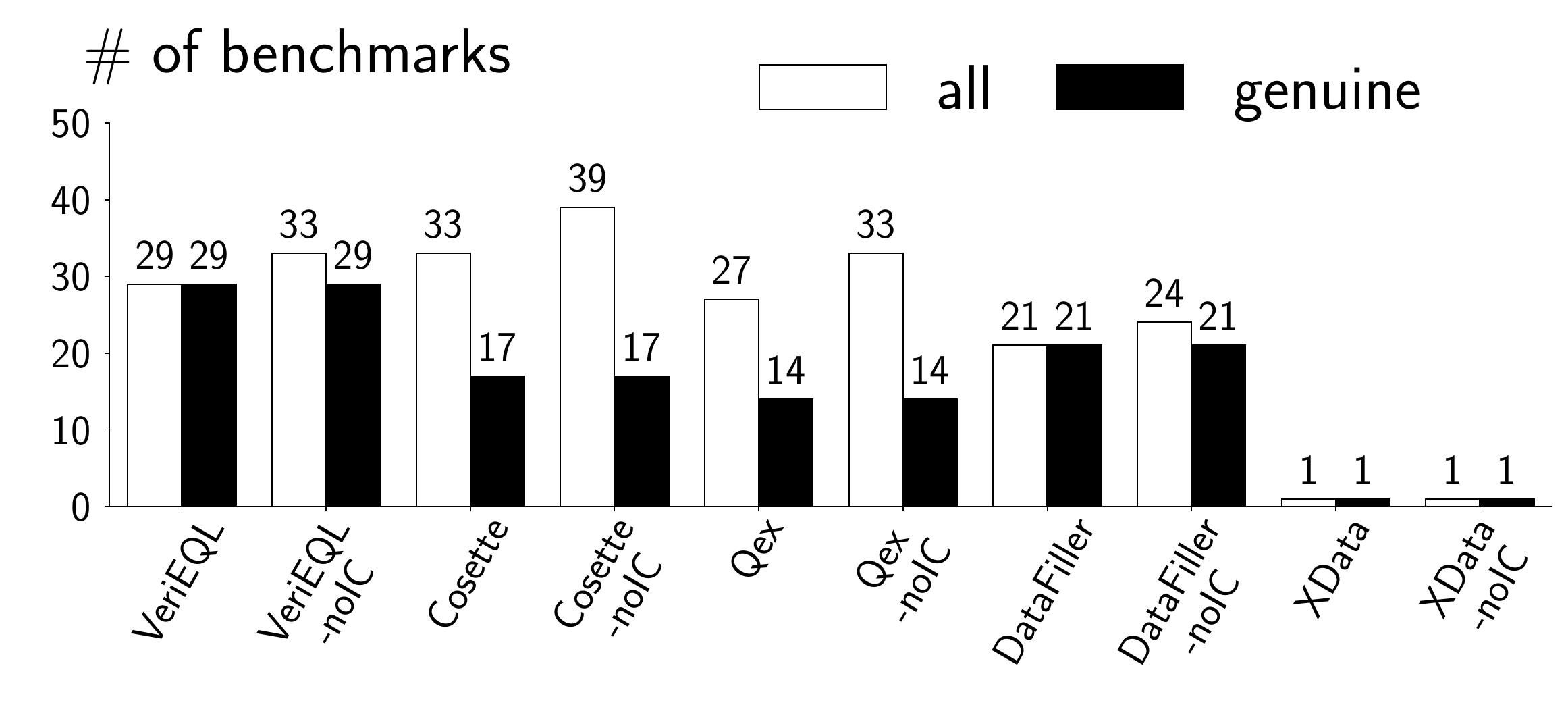}
\label{fig:counterexamples-literature}
\vspace{-15pt}
\caption{\emph{Literature.}}
\end{subfigure}
\vspace{-5pt}
\caption{\revision{\textbf{RQ2}: We have a sub-figure for each workload that shows (1) the number of \emph{all} benchmarks for which a tool generates a counterexample and (2) the number of benchmarks where the counterexamples are \emph{genuine}.}}
\label{fig:counterexamples}
\vspace{-15pt}
\end{figure}


\newpara{Results.}
Our main results are presented in Figure~\ref{fig:counterexamples}. 
In a nutshell, $\tool$ can prove much more non-equivalent benchmarks than all baselines across all workloads. 
For example, $\tool$ found counterexamples for {3,586} $\leetcode$ benchmarks; {3,584} of them are confirmed to be genuine. 
The counterexample inputs for the remaining two benchmarks conform to the integrity constraints but they produce the same output when executed on the input \emph{using MySQL.} 
It turns out these two seemingly ``spurious'' counterexamples are due to a previously unknown bug in MySQL --- they should produce different outputs. 
When dropping integrity constraints, $\tool$-noIC found {11,390} counterexamples --- but only {2,754} are genuine, which is even lower than $\tool$'s. This suggests that the false alarm rate will be significantly higher without considering integrity constraints.
For the other two workloads, we observe a very similar trend.

\newpara{Finding bugs in MySQL.}
As mentioned earlier, for two $\leetcode$ benchmarks, $\tool$ generates ``spurious'' counterexamples --- that are actually genuine --- due to a bug
\revision{
in MySQL's latest release version 8.0.32. The MySQL verification team had confirmed and classified this bug with \emph{serious} severity. Details of the bug are provided in Appendix D under supplementary materials.
}

\newpara{Detecting missing test cases for LeetCode.}
Recall that some of the user-submitted queries from our $\leetcode$  workload may be wrong (i.e., not equivalent to the ground-truth queries), which suggests that new test cases are needed.
In particular, as of the submission date (October 20, 2023), we have manually inspected 17 LeetCode problems for which $\tool$ reported non-equivalent benchmarks, and filed issues to the LeetCode team. Notably, our issue reports also have meaningful counterexamples generated by $\tool$.  
To date, 13 of them have been confirmed and fixed, while the remaining ones have all been acknowledged. 
A common pattern is that existing tests miss the $\sqlnull$ case --- this again highlights the importance of modeling the three-valued logic in SQL.


\newpara{Identifying buggy Calcite rewrite rules.}
$\tool$ found 4 ``not checked'' (i.e., non-equivalent) $\calcite$ benchmarks with valid counterexamples. The Calcite team has confirmed that two of them are due to bugs in two rewrite rules: one is a duplicate of an existing bug report, while the other is a new bug. In particular, for the new bug, the rule would rewrite a query $\query_1$ to a non-equivalent $\query_2$ when $\query_1$ has $\sqlsum$ applied to an empty table. 
The third is due to a bug in an internal translation step in Calcite, which has also been confirmed by the Calcite team. 
The fourth one is still being worked on by the team currently.
This result again demonstrates that $\tool$ is able to uncover very tricky bugs from well-maintained open-source projects.


\subsection{\revision{\textbf{RQ3}: Distribution of Bounds on Checked Benchmarks}}
\label{sec:exp-bound}

\begin{figure}
\centering
\begin{subfigure}[b]{0.3\textwidth}
 \centering
 \includegraphics[width=\textwidth]{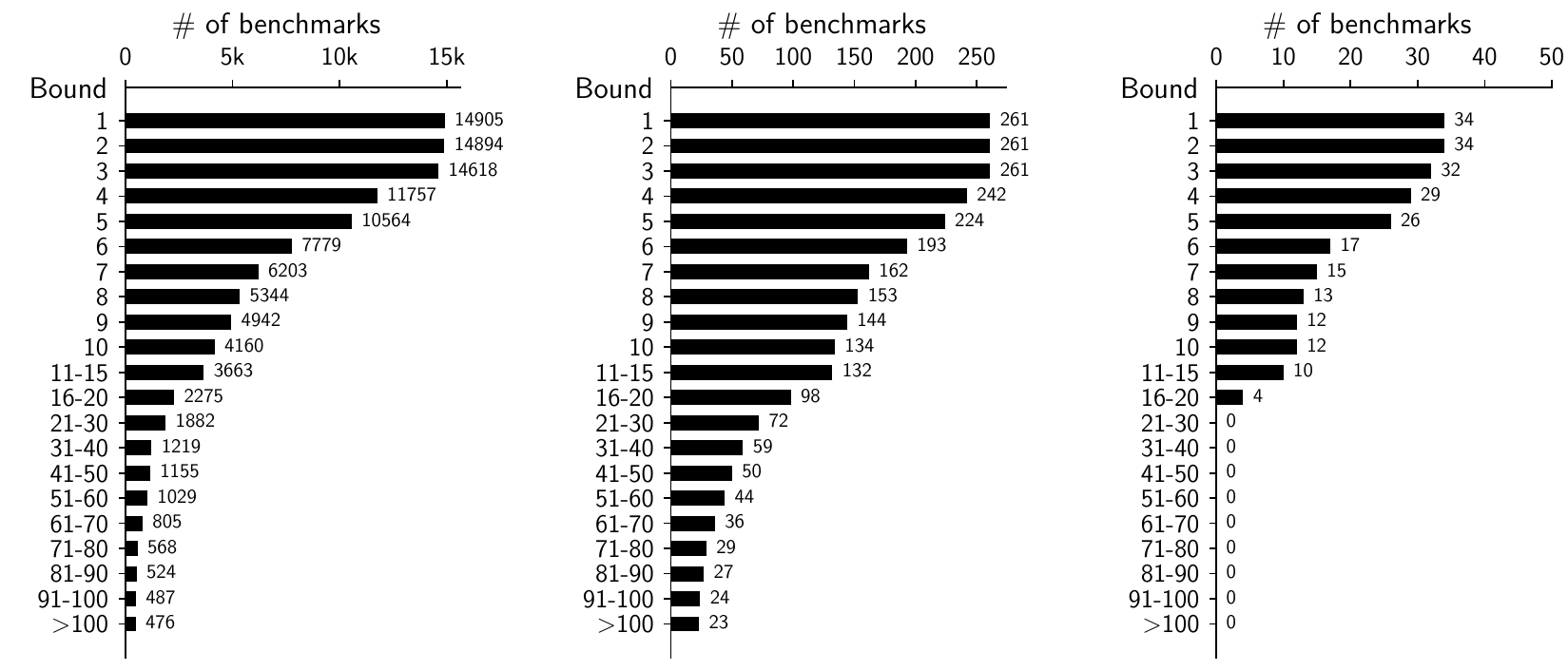}
 \caption{\emph{LeetCode}}
 \label{fig:varying_bound_leetcode}
\end{subfigure}
\hfill
\begin{subfigure}[b]{0.283\textwidth}
 \centering
 \includegraphics[width=\textwidth]{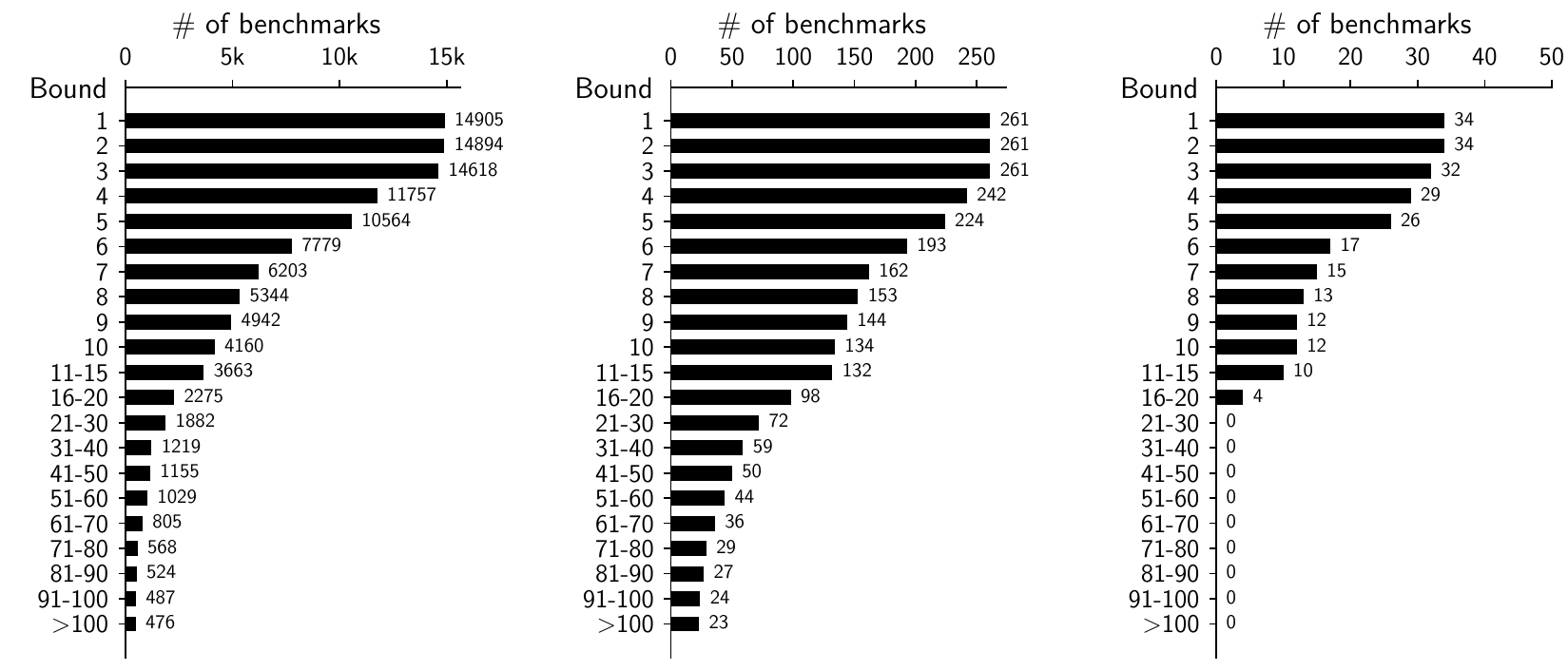}
 \caption{\emph{Calcite}}
 \label{fig:varying_bound_calcite}
\end{subfigure}
\hfill
\begin{subfigure}[b]{0.275\textwidth}
 \centering
 \includegraphics[width=\textwidth]{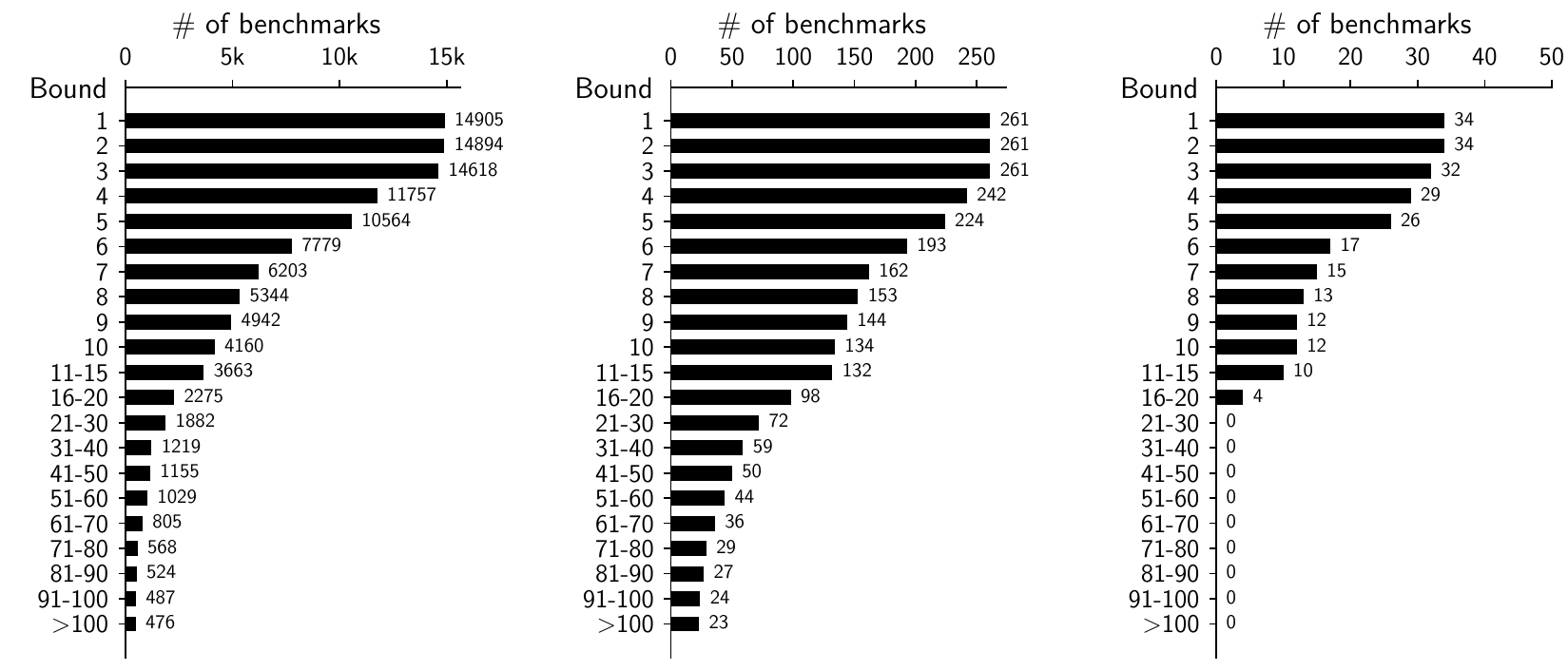}
 \caption{\emph{Literature}}
 \label{fig:varying_bound_literature}
\end{subfigure}
\vspace{-10pt}
\caption{\revision{\textbf{RQ3}: distribution of bounds on checked benchmarks.}}
\label{fig:varying_bound}
\vspace{-10pt}
\end{figure}

\revision{
As an indicator of \tool's scalability, we study how large input bounds \tool can reach within the 10-minute time limit when it reports a benchmark is checked.
}

\newpara{Setup.}
\hspace{-5pt}
\revision{
We collected all 14,905 (62.1\% of 23,994) \leetcode benchmarks that \tool reports checked in \textbf{RQ1}. Similarly, we also collected all 261 (65.7\% of 397) checked benchmarks from \calcite and all 34 (53.1\% of 64) checked benchmarks from \literature.
For each benchmark, we run \tool by gradually increasing the bound from 1 until it gets timeout and keep records of the bound.
}

\evalfinding{\revision{\textbf{RQ3 take-away:} \tool can verify over 70\% of the checked benchmarks within 10 minutes given an input bound 5. For 3\% of the checked benchmarks, \tool can verify the query equivalence for an input bound >100.}}

\newpara{Results.}
\revision{
The evaluation results are presented in Figure~\ref{fig:varying_bound}, where the bar shows the number of benchmarks \tool can verify for a given input bound. For example, as shown in Figure~\ref{fig:varying_bound_leetcode}, \tool can verify equivalence for bound 1 on all 14,905 \leetcode benchmarks. Among them, it can verify equivalence for bound 2 on 14,894 benchmarks. In a nutshell, on 71\% of checked \leetcode benchmarks, \tool can reach bound 5, 28\% for bound 10, 8\% for bound 50, and 3\% for bound 100. It shows similar patterns on \calcite. For \literature, \tool only reaches bound 20 within the time limit, hypothetically because many benchmarks combine several complex SQL operators (e.g., Distinct and GroupBy) that are challenging for symbolic reasoning.
}
\section{Related Work} \label{sec:related}


\newpara{Equivalence checking for SQL queries.}
SQL equivalence checking is an important problem in both programming languages and database communities. There is a line of work on automated verification of query equivalence, including full verification~\cite{zhou2022spes,zhou2019automated,chu2017hottsql,chu2018axiomatic,Green11containment,Chandra77cq,Aho79equivalence} and bounded verification~\cite{chu2017demonstration,chu2017cosette,qex,wang2018speeding}.
For example, SPES~\cite{zhou2022spes} takes a symbolic approach for proving the existence of a bijective identity map between tuples returned by two queries.
HoTTSQL~\cite{chu2017hottsql} and  UDP~\cite{chu2018axiomatic} encode SQL queries in an algebraic structure to syntactically canonicalize two queries and then verify the equivalence of corresponding algebraic expressions using interactive theorem provers.
Unlike prior work that aims for full equivalence verification, \tool targets to solve the bounded equivalence verification problem modulo integrity constraints and can generate counterexamples for disproving equivalence.
Compared to prior work for bounded verification such as Cosette~\cite{chu2017cosette,wang2018speeding}, \tool supports a larger fragment of SQL queries and a richer set of integrity constraints and significantly outperforms all baselines as demonstrated by our comprehensive evaluation.
\revision{
One limitation of \tool is that it does not support correlated subqueries in its current status, which can be overcome by incorporating unnesting techniques proposed by~\citet{seshadri1996complex}.
}


\newpara{Testing SQL queries.}
Another line of related work is about testing the correctness of SQL queries. For instance, \revision{\xdata}~\cite{chandra2015data} uses mutation-based testing to detect common mistake patterns in SQL queries. \citet{shah2011generating} also uses mutation testing with hard-coded rules to kill as many query mutants as possible. In addition, EvoSQL~\cite{castelein2018search} uses an evolutionary search algorithm to generate test data in an offline manner, guided by predicate coverage~\citet{tuya2010full}.  RATest~\cite{miao2019explaining} uses a provenance-based algorithm to find a minimal counterexample database that can explain incorrect queries.
\revision{\datafiller~\cite{guagliardo2017formalsemantics, datafiller-website} uses random database generation to produce test databases based on the schema and integrity constraints.}
Different from these testing approaches, \tool provides formal guarantees on query equivalence (in a bounded fashion).

\newpara{Formal methods for databases.}
Formal methods have been used to facilitate the development and maintenance of database systems and applications, such as optimizing database applications~\cite{Cheung13optimizing,Delaware15fiat} and data storage~\cite{Feser20deductive}, model checking database applications~\cite{Deutsch05verifier,Deutsch07spec,gligoric2013model}, providing foundations to queries~\cite{Ricciotti19mixing,Cheney21null,Ricciotti22null}, verifying correctness of schema refactoring~\cite{mediator}, and synthesizing programs for schema evolution~\cite{wang2019synthesizing,Wang20dynamite}.
Among various work, Mediator~\cite{mediator} is the most related to our work. However, Mediator studies the full equivalence verification problem of two database applications over different schemas, whereas \tool performs bounded equivalence verification of two SQL queries under the same schema, but it generates counterexamples for disproving query equivalence.

\newpara{Semantics of SQL queries.}
Different formal semantics of SQL queries have been proposed in the literature, such as set semantics, bag semantics, list semantics, and their combinations~\cite{negri1991formal,benzaken2019coq,Ricciotti19mixing,chu2017hottsql,chu2018axiomatic,mediator}.
We base the SQL semantics of \tool on list operations and quotient by bag equivalence at the end mainly because lists, by their ordered nature, facilitate the definition of \sqlorderby in the semantics. We have not explored the formal difference between our semantics and state-of-the-art semantics proposals.
In fact, we believe that proving the equivalence between our semantics and existing semantics or characterizing the differences between semantics with demonstrating examples is an interesting direction for future work.

\newpara{Bounded verification.}
Bounded verification has been used to find various kinds of bugs in software systems while providing formal guarantees on the result.
Prior work such as CBMC~\cite{clarke2004tool}, JBMC~\cite{cordeiro2018jbmc}, and Corral~\cite{lal2012solver} has been quite successful to this end for imperative languages like C, C++, or Java.
However, \tool performs bounded verification to check equivalence of declarative SQL queries. Unlike prior work that unrolls loops, \tool bounds the size of relations in the database and can guarantee the equivalence of two queries for all databases where relations are up to a given finite size.

\newpara{Equivalence verification.}
Equivalence verification has been studied extensively in the literature. Researchers have developed many approaches for checking equivalence of various applications, such as translation validation for compilers~\cite{Pnueli98tv,Necula00tv,Stepp11tv}, product program~\cite{Zaks08product,Barthe11product}, relational Hoare logic~\cite{Benton04rhl}, Cartesian Hoare Logic~\cite{Sousa16chl}, and so on.
\tool is along the line of equivalence verification, but it is tailored towards a different application, SQL queries, and presents an SMT-based approach for verifying bounded equivalence of queries modulo integrity constraints.

\section{Conclusion} \label{sec:concl}

In this paper, we presented $\tool$, a simple yet highly practical approach that can both prove and disprove the bounded equivalence of \emph{complex SQL queries with integrity constraints}.
For a total of 24,455 benchmarks, $\tool$ can prove and disprove over 70\% of them, significantly outperforming all state-of-the-art SQL equivalence checking techniques.

\section*{Acknowledgments} \label{sec:ack}

We would like to thank the anonymous reviewers of OOPSLA for their detailed and helpful comments on an earlier version of this paper. This research is supported by the Natural Sciences and Engineering Research Council of Canada (NSERC) Discovery Grant and the National Science Foundation (NSF) Grant No. CCF-2210832.

\section*{Data-Availability Statement}\label{sec:das}

The software that implements the techniques described in Section~\ref{sec:checking} and supports the evaluation results reported in Section~\ref{sec:eval} is available on Zenodo~\cite{artifact}.

\bibliography{main}

\appx{
\newpage
\appendix
\section{Semantics of Query Language} \label{sec:full-semantics}

\vspace{-300pt}
\begin{figure}[H]
\footnotesize


\begin{mdframed}
\[
\denot{Q_r} :: \text{Database } \db \to \text{Relation}
\]
\end{mdframed}

\vspace{-3pt}

\[
\begin{array}{lcl}
\denot{\text{OrderBy}(Q, \vec{E}, b)}_\db & = & \sfoldl(\lambda xs. \lambda \_. (\sappend(xs, [\text{MinTuple}(\vec{E}, b, \denot{Q}_\db - xs)])), [], \denot{Q}_\db) ~\text{where}\\
& & \quad \text{MinTuple}(\vec{E}, b, xs) = \sfoldl(\lambda x. \lambda y. \site(\text{Cmp}(\vec{E}, b, x, y), y, x), \shead(xs), xs), \\
& & \quad \text{Cmp}(\vec{E}, b, x_1, x_2) = b \neq \sfoldr(\lambda E_i, \lambda y. \site(v(x_1, E_i) < v(x_2, E_i), \top, \\
& & \quad \quad \quad \quad \quad \quad \quad \quad \quad \quad \quad \quad \quad \quad \quad \quad \quad \quad \site(v(x_1, E_i) > v(x_2, E_i), \bot, y)), \top, \vec{E}), \\
& & \quad v(x, E) = \site(\denot{E}_{\db, [x]} = \nullv, -\infty, \denot{E}_{\db, [x]}) \\
\end{array}
\]


\begin{mdframed}
\[
\denot{Q} :: \text{Database } \db \to \text{Relation}
\]
\end{mdframed}

\vspace{-3pt}

\[
\begin{array}{lcl}

\denot{R}_\db & = & \db(R) \\
\denot{\proj_L(Q)}_\db & = & \site(\text{hasAgg(L)}, [\denot{L}_{\db, \denot{Q}_\db}], \smap(\denot{Q}_\db, \lambda x. \denot{L}_{\db, x})) \\
\denot{\filter_\phi(Q)}_\db & = & \sfilter(\denot{Q}_\db, \lambda x. \denot{\phi}_{\db, [x]} = \top) \\
\denot{\rename_R{(Q)}}_\db & = & \smap(\denot{Q}_\db, \lambda x. \smap(x, \lambda (n,v). (\text{rename}(R, n), v))) \\

\denot{Q_1 \cap Q_2}_\db & = & \sfilter(\denot{\text{Distinct}(Q_1)}_\db, \lambda x. x \in \denot{Q_2}_\db) \\
\denot{Q_1 \cup  Q_2}_\db & = & \denot{\text{Distinct}(Q_1 \uplus Q_2)}_\db \\
\denot{Q_1 \setminus Q_2}_\db & = & \sfilter(\denot{\text{Distinct}(Q_1)}_\db, \lambda x. x \not \in \denot{Q_2}_\db) \\
\denot{Q_1 \cplus Q_2}_\db & = & \denot{Q_1 - (Q_1 - Q_2)}_\db \\
\denot{Q_1 \uplus Q_2}_\db & = & \sappend(\denot{Q_1}_\db, \denot{Q_2}_\db) \\
\denot{Q_1 - Q_2}_\db & = & \sfoldl(\lambda xs. \lambda x. \site(x \in xs, xs - x, xs), \denot{Q_1}_\db, \denot{Q_2}_\db) \\
\denot{\text{Distinct}(Q)}_\db & = & \sfoldr(\lambda x. \lambda xs. \scons(x, \sfilter(xs, \lambda y. x \neq y)), [], \denot{Q}_\db) \\

\denot{Q_1 \times Q_2}_\db & = & \sfoldl(\lambda xs. \lambda x. \sappend(xs, \smap(\denot{Q_2}_\db, \lambda y. \smerge(x, y))), [], \denot{Q_1}_\db) \\
\denot{Q_1 \ijoin_\phi Q_2}_\db & = & \denot{\filter_\phi(Q_1 \times Q_2)}_\db \\
\denot{Q_1 \ljoin_\phi Q_2}_\db & = & \sfoldl(\lambda xs. \lambda x. \sappend(xs, \site(|v_1(x)| = 0, v_2(x), v_1(x)), [],\denot{Q_1}_\db) \\
& & \quad \text{where}~ v_1(x) = \denot{[x] \ijoin_\phi Q_2}_\db ~\text{and}~ v_2(x) = [\smerge(x, T_{\nullv})] \\
\denot{Q_1 \rjoin_\phi Q_2}_\db & = & \sfoldl(\lambda xs. \lambda x. \sappend(xs, \site(|v_1(x)| = 0, v_2(x), v_1(x)), [],\denot{Q_2}_\db) \\
& & \quad \text{where}~ v_1(x) = \denot{Q_1 \ijoin_\phi [x]}_\db ~\text{and}~ v_2(x) = [\smerge(T_{\nullv}, x)] \\
\denot{Q_1 \fjoin_\phi Q_2}_\db & = & \sappend(\denot{Q_1 \ljoin_\phi Q_2}_\db, \smap(v, \lambda x. \smerge(T_{\nullv}, x)) ) \\
& & \quad \text{where}~ v = \sfilter(\denot{Q_2}_\db, \lambda y. |\denot{Q_1 \ijoin_\phi [y]}_\db| = 0 ) \\

\denot{\text{GroupBy}(Q, \vec{E}, L, \phi)}_\db \!\!\!\!& = & 
\smap(\sfilter(Gs, \lambda xs. \denot{\phi}_{\db, xs} = \top), \lambda xs. \denot{L}_{\db, xs}) ~\text{where} \\
& & \quad Gs = \smap(\text{Dedup}(Q, \vec{E}), \lambda y. \smap(\denot{Q}_\db, \lambda z. \text{Eval}(\vec{E}, [z]) = y)), \\
& & \quad \text{Dedup}(Q, \vec{E}) = \sfoldr(\lambda x. \lambda xs. \scons(x, \sfilter(xs, \lambda y. x \neq y)), [], \\
& & \quad \qquad \qquad \qquad \qquad \qquad  \smap(\denot{Q}_\db, \lambda z. \text{Eval}(\vec{E}, [z]))),\\
& & \quad \text{Eval}(\vec{E}, xs) = \smap(\vec{E}, \lambda e. \denot{e}_{\db, xs}) \\
\denot{\text{With}(\vec{Q}, \vec{R}, Q)}_\db & = & 
\denot{Q}_{\db'} ~\text{where}~ \db' = \db[R_i \mapsto \denot{\query_i}_\db ~|~ R_i \in \vec{R}] \\
\end{array}
\]

\vspace{10pt}

\vspace{-10pt}
\caption{Formal semantics for queries.}
\label{fig:proof-sematics1}
\vspace{-10pt}
\end{figure}

\begin{figure}[!h]
\footnotesize

\vspace{20pt}

\begin{mdframed}
\[
\denot{L} :: \text{Database } \db \to \text{Relation} \to [(\text{AttributeName}, \text{Value})]
\]
\end{mdframed}

\[
\begin{array}{lcl}

\denot{L}_{\db, xs} & = & \smap(L, \lambda y. \denot{y}_{\db, xs}) \\ 
\denot{id(A)}_{\db, xs} & = & [(\text{ToString}(A), \denot{A}_{\db, xs})] \\ 
\denot{\rho_a(A)}_{\db, xs} & = & [(a, \denot{A}_{\db, xs})] \\ 
\denot{L_1, L_2}_{\db, xs} & = & \smap(\sappend(L_1, L_2), \lambda y. \denot{y}_{\db, xs}) \\ 

\end{array}
\]

\vspace{10pt}

\begin{mdframed}
\[
\denot{E} :: \text{Database } \db \to \text{Relation} \to \text{Value}
\]
\end{mdframed}

\[
\begin{array}{lcl}
\denot{a}_{\db, xs} & = & \textsf{lookup}(\shead(xs), a) \\ 
\denot{v}_{\db, xs} & = & v \\
\denot{E_1 \allarith E_2}_{\db, xs} & = & \text{let}~ v_1 = \denot{E_1}_{\db, xs} ~\text{and}~ v_2 = \denot{E_2}_{\db, xs} ~\text{in}~ \site(v_1 = \nullv \lor v_2 = \nullv, \nullv, v_1 \allarith v_2) \\

\denot{\text{ITE}(\phi, E_1, E_2)}_{\db, xs} & = & \site(\denot{\phi}_{\db, xs} = \top, \denot{E_1}_{\db, xs}, \denot{E_2}_{\db, xs}) \\
\denot{\text{Case}(\vec{\phi}, \vec{E}, E')}_{\db, xs} & = & 
\sfoldr(\lambda y. \lambda (\phi_i, E_i). \denot{\text{\site}(\phi_i, E_i, y)}_{\db, xs}, E', \textsf{reverse}(zip(\vec{\phi}, \vec{E})))
~\text{where}~ n = |\vec{\phi}| = |\vec{E}| \\

\end{array}
\]

\vspace{10pt}

\begin{mdframed}
\[
\denot{\aggr(E)} :: \text{Database } \db \to \text{Relation} \to \text{Value}
\]
\end{mdframed}

\[
\begin{array}{lcl}
\denot{\text{Count}(E)}_{\db, xs} & = & \site(\text{AllNull}(E, \db, xs), \nullv, \sfoldl(+, 0, \smap(xs, \lambda y. \site(\denot{E}_{\db, [y]} = \nullv, 0, 1))) \\
& & \quad \text{where}~ \text{AllNull}(E, \db, xs) = \land_{x \in xs} \denot{E}_{\db, [x]} = \nullv \\
\denot{\text{Sum}(E)}_{\db, xs} & = & \site(\text{AllNull}(E, \db, xs), \nullv, \sfoldl(+, 0, \smap(xs, \lambda y. \site(\denot{E}_{\db, [y]} = \nullv, 0, \denot{E}_{\db, [y]}))) \\
\denot{\text{Avg}(E)}_{\db, xs} & = & \denot{\text{Sum}(E)}_{\db, xs} / \denot{\text{Count}(E)}_{\db, xs} \\
\denot{\text{Min}(E)}_{\db, xs} & = & \site(\text{AllNull}(E, \db, xs), \nullv, \sfoldl(\text{min}, +\infty, \smap(xs, \lambda y. \site(\denot{E}_{\db, [y]} = \nullv, +\infty, \denot{E}_{\db, [y]})))) \\
\denot{\text{Max}(E)}_{\db, xs} & = & \site(\text{AllNull}(E, \db, xs) \nullv, \sfoldl(\text{max}, -\infty, \smap(xs, \lambda y. \site(\denot{E}_{\db, [y]} = \nullv, -\infty, \denot{E}_{\db, [y]})))) \\
\end{array}
\]

\vspace{10pt}

\begin{mdframed}
\[
\denot{\phi} :: \text{Database } \db \to \text{Relation} \to \text{Bool} \cup {\nullv}
\]
\end{mdframed}

\[
\begin{array}{lcl}

\denot{\top}_{\db, xs} & = & \top \\
\denot{\bot}_{\db, xs} & = & \bot \\
\denot{\nullv}_{\db, xs} & = & \nullv \\
\denot{A_1 \alllogic A_2}_{\db, xs} & = & \text{let}~ v_1 = \denot{A_1}_{\db, xs}, v_2 = \denot{A_2}_{\db, xs} ~\text{in}~ \site(v_1 = \nullv \lor v_2 = \nullv, \bot, v_1 \alllogic v_2) \\
\denot{\text{IsNull}(E)}_{\db, xs} & = & \site(\denot{E}_{\db, xs} = \nullv, \top, \bot) \\
\denot{\vec{E} \in \vec{v}}_{\db, xs} & = & \lor_{i = 1}^{n} \land_{j = 1}^{m} \denot{E_j}_{\db, xs} = \denot{v_{i,j}}_{\db, xs} ~\text{where}~ n = |\vec{v}| ~\text{and}~ m = |\vec{E}| = |\vec{v}_i| \\
\denot{\vec{E} \in Q}_{\db, xs} & = & 
\sfoldl(\lambda ys. \lambda y. ys \lor \denot{\vec{E} \in y}_{\db, xs}, \bot, \denot{Q}_\db)
\\
\denot{\phi_1 \land \phi_2}_{\db, xs} & = & \text{let}~ v_1 = \denot{E_1}_{\db, xs} ~\text{and}~ v_2 = \denot{E_2}_{\db, xs} ~\text{in}~ \\
&  & \quad \site((v_1 = v_2 = \nullv) \lor (v_1 = \nullv \land v_2 = \top) \lor (v_1 = \top \land v_2 = \nullv), \nullv, v_1 = \top \land v_2 = \top) \\
\denot{\phi_1 \lor \phi_2}_{\db, xs} & = & \text{let}~ v_1 = \denot{E_1}_{\db, xs} ~\text{and}~ v_2 = \denot{E_2}_{\db, xs} ~\text{in}~ \\
&  & \quad \site((v_1 = v_2 = \nullv) \lor (v_1 = \nullv \land v_2 = \top) \lor (v_1 = \top \land v_2 = \nullv), \nullv, v_1 = \top \lor v_2 = \top) \\
\denot{\neg \phi}_{\db, xs} & = & \text{let}~ v = \denot{\phi}_{\db, xs} ~\text{in}~ \site(v = \nullv, \nullv, \neg v) \\
\end{array}
\]
\caption{Formal semantics for expressions and predicates.}
\label{fig:proof-sematics2}
\end{figure}

\section{Encoding of SQL Queries} \label{sec:appendix-encoding}

\begin{figure}[H]
\[
\hspace{-12pt}
\scriptsize
\begin{array}{c}

\irulelabel
{\begin{array}{c}
    R \in \dom(\context) \\
\end{array}}
{\schema, \context \vdash R \leadsto \top}
{\textrm{(E-Rel)}}

\irulelabel
{\begin{array}{c}
\context \vdash \query \hookrightarrow [t_1, \ldots, t_n] \quad
\schema, \context \vdash \query \leadsto \formula_1 \\
\context \vdash \rho_X(\query) \hookrightarrow [t_1', \ldots, t_n'] \quad
\formula_2 = \land_{i=1}^{n} t_{i}' = t_{i} \\
\end{array}}
{\schema, \context \vdash \rho_X(\query) \leadsto \formula_1 \land \formula_2}
{\textrm{(E-Rename)}}

\\ \ \\

\irulelabel
{\begin{array}{c}
    \neg \text{hasAgg}(L) \quad
    \schema, \context \vdash \query \leadsto \formula_1 \quad

    \context \vdash \query \hookrightarrow [t_1, \ldots, t_n] \quad

    \schema \vdash \Pi_{L}(\query) : [a_1', \ldots, a_l'] \quad 
    \context \vdash \Pi_{L}(\query) \hookrightarrow [t_1', \ldots, t_n'] \\


    \formula_2 = \land_{i=1}^{n} (\land_{j=1}^{l} 
    \denot{a_j'}_{\schema, \context, [t_{i}']} = 
    \denot{\revision{a_j'}}_{\schema, \context, [t_{i}]} \land \del(t_i') \leftrightarrow \del(t_{i})) \\
\end{array}}
{\schema, \context \vdash \Pi_{L}(\query) \leadsto \formula_1 \land \formula_2}
{\textrm{(E-Proj)}}







\\ \ \\

\irulelabel
{\begin{array}{c}
    \text{hasAgg}(L) \quad
    \schema, \context \vdash \query \leadsto \formula_1 \quad
    \context \vdash \query \hookrightarrow [t_1, \ldots, t_n]  \quad

    \schema \vdash \Pi_{L}(\query) : [a_1', \ldots, a_l'] \quad
    \context \vdash \Pi_{L}(\query) \hookrightarrow [t_1'] \\

    \formula_2 = \land_{j=1}^{l} \denot{a_j'}_{\schema,\context,[t_1']} = \denot{\revision{a_j'}}_{\schema,\context,\vec{t}} \land {\neg \del(t_1')} \\
\end{array}}
{\schema, \context \vdash \Pi_{L}(\query) \leadsto \formula_1 \land \formula_2}
{\textrm{(E-Agg)}}

\\ \ \\

\irulelabel
{\begin{array}{c}
    \context \vdash \query \hookrightarrow [t_1, \ldots, t_n] \quad
    \context \vdash \filter_\phi(\query) \hookrightarrow [t_1', \ldots, t_n'] \quad
    \schema, \context \vdash \query \leadsto \formula_1 \\
    \formula_2 = \land_{i=1}^{n} (
    (\neg \del(t_i) \land \denot{\phi}_{\schema, \context, [t_i]} = \top) \to t_{i}' =  t_{i} 
    \land
    (\del(t_i) \lor \denot{\phi}_{\schema, \context, [t_i]} \neq \top) \to \del(t_{i}'))\\
\end{array}}
{\schema, \context \vdash \filter_\phi(\query) \leadsto \formula_1 \land \formula_2}
{\textrm{(E-Filter)}}

\\ \ \\

\irulelabel
{\begin{array}{l}
    \schema \vdash \query_1 : [a_1, \ldots, a_{p}] \quad
    \context \vdash \query_1 \hookrightarrow [t_1, \ldots, t_{n}] \quad
    \schema, \context \vdash \query_1 \leadsto \formula_1 \\
    
    \schema \vdash \query_2 : [a_1', \ldots, a_{q}'] \quad
    \context \vdash \query_2 \hookrightarrow [t_1', \ldots, t_{m}'] \quad
    \schema, \context \vdash \query_2 \leadsto \formula_2 \quad
    
    \context \vdash \query_1 \times \query_2 \hookrightarrow [t_{1,1}'', \ldots, t_{nm}''] 
    \\
    
    \formula = \land_{i=1}^{n} \land_{j=1}^{m} (\neg \del(t_i) \land \neg \del(t_j')) \to \land_{k=1}^{p} \denot{a_{k}}_{\schema, \context, [t_{i,j}'']} = \denot{a_{k}}_{\schema, \context, [t_{i}]} \land \land_{k=1}^{q} \denot{a_{k}'}_{\schema, \context, [t_{i,j}'']} = \denot{a_{k}'}_{\schema,\context,[t_{j}']} \land \\
    \qquad (\del(t_i) \lor \del(t_j')) \to \del(t_{i,j}'')\\
\end{array}}
{\schema, \context \vdash \query_1 \times \query_2 \leadsto \formula_1 \land \formula_2 \land \formula_3}
{\textrm{(E-Prod)}}
\\ \ \\

\irulelabel
{\begin{array}{l}
    \schema, \context \vdash \filter_{\phi}(\query_1 \times \query_2) \leadsto \formula \\
\end{array}}
{\schema, \context \vdash \query_1 \bowtie_\phi \query_2 \leadsto \formula}
{\textrm{(E-IJoin)}}

\\ \ \\

\irulelabel
{\begin{array}{c}
    \schema \vdash \query_1 : [a_1, \ldots, a_{p}] \quad 
    \context \vdash \query_1 \hookrightarrow [t_1, \ldots, t_n] \quad
    \schema \vdash \query_2 : [a_1', \ldots, a_{q}'] \quad
    \context \vdash \query_2 \hookrightarrow [t_1', \ldots, t_m']
    \\
    
    \schema, \context \vdash \query_1 \bowtie_\phi \query_2 \leadsto \formula_1 
    \quad
    \context \vdash \query_1 \leftouterjoin_\phi \query_2 \hookrightarrow [t_{1,1}'', \ldots, t_{n, m}'', t_{1, m+1}'', \ldots, t_{n, m+1}'']
    \\
    \formula_2 = \land_{i=1}^{n} (\land_{k=1}^{p} \denot{a_{k}}_{\schema,\context,[t_{i,m+1}'']} = \denot{a_{k}}_{\schema,\context,[t_{i}]} \land \land_{k=1}^{q} \denot{a_{k}'}_{\schema,\context,[t_{i,m+1}]} = \text{Null} \land (\revision{\neg \del(t_i) \land} \land_{j=1}^{m} \del(t_{i,j}'')) \leftrightarrow \neg \del(t_{i,m+1}'')  )\\
\end{array}}
{\schema, \context \vdash \query_1 \leftouterjoin_\phi \query_2 \leadsto \formula_1 \land \formula_2}
{\textrm{(E-LJoin)}}
\\ \ \\

\irulelabel
{\begin{array}{c}
    \schema \vdash \query_1 : [a_1, \ldots, a_{p}] \quad 
    \context \vdash \query_1 \hookrightarrow [t_1, \ldots, t_n] \quad
    \schema \vdash \query_2 : [a_1', \ldots, a_{q}'] \quad
    \context \vdash \query_2 \hookrightarrow [t_1', \ldots, t_m'] \\
    
    \schema, \context \vdash \query_1 \bowtie_\phi \query_2 \leadsto \formula_1 
    \quad
    \context \vdash \query_1 \rightouterjoin_\phi \query_2 \hookrightarrow [t_{1,1}'', \ldots, t_{n, m}'', t_{n+1, 1}'', \ldots, t_{n+1, m}'']
    \\
    \formula_2 = \land_{j=1}^{m} (\land_{k=1}^{p} \denot{a_{k}}_{\schema,\context,[t_{n+1,j}'']} = \text{Null} \land \land_{k=1}^{q} \denot{a_{k}'}_{\schema,\context,[t_{n+1,j}'']} = \denot{a_k'}_{\schema,\context,[t_j]} \land (\revision{\neg \del(t_j') \land} \land_{i=1}^{n} \del(t_{ij}'') \leftrightarrow \neg \del(t_{n+1,j}''))  )\\
\end{array}}
{\schema, \context \vdash \query_1 \rightouterjoin_\phi \query_2 \leadsto \formula_1 \land \formula_2}
{\textrm{(E-RJoin)}}
\\ \ \\

\irulelabel
{\begin{array}{c}
    \schema \vdash \query_1 : [a_1, \ldots, a_{p}] \quad 
    \context \vdash \query_1 \hookrightarrow [t_1, \ldots, t_n] \quad
    \schema \vdash \query_2 : [a_1', \ldots, a_{q}']  \quad
    \context \vdash \query_2 \hookrightarrow [t_1', \ldots, t_m']
    \\

    \revision{\schema, \context \vdash \query_1 \ljoin_\phi \query_2 \leadsto \formula } \quad
    \context \vdash \query_1 \fullouterjoin_\phi \query_2 \hookrightarrow [t_{1,1}'', \ldots, t_{n, m}'', t_{n+1,1}'', \ldots, t_{n+1,m}'', t_{1,m+1}'', \ldots, t_{n,m+1}''] \\

    \formula_1 = \land_{j=1}^{m} (\land_{k=1}^{p} \denot{a_{k}}_{\schema,\context,[t_{n+1,j}'']} = \text{Null}
    \land \land_{k=1}^{q} \denot{a_{k}'}_{\schema,\context,[t_{n+1,j}'']} = \denot{a_{k}'}_{\schema,\context,[t_{j}']} ) \land \revision{(\neg \del(t_j') \land  \land_{i=1}^{n} \del(t_{ij}'') \leftrightarrow \neg \del(t_{n+1,j}''))}) \\

\end{array}}
{\context \vdash \query_1 \fullouterjoin_\phi \query_2 \leadsto \formula \land \formula_1}
{\textrm{(E-FJoin)}}
\\ \ \\

\irulelabel
{\begin{array}{c}
\context \vdash \query \hookrightarrow [t_1, \ldots, t_n] \quad
\schema, \context \vdash \query \leadsto \formula_1 \quad
\context \vdash \text{GroupBy}(\query, \vec{E},\phi) \hookrightarrow [t_1', \ldots, t_n'] \\

\formula_2 = \land_{i=1}^{n} (
\sum_{j=1}^{i} \text{If}(g(t_{i}, j), 1, 0) = \text{If}(\del(t_i), 0, 1) \land \revision{\land_{j=1}^{i-1}} g(t_{i}, j) = (\neg \del(t_i) \land g(t_j, j) \land \denot{\vec{E}}_{\schema, \context, [t_i]} = \denot{\vec{E}}_{\schema, \context, [t_j]})
)
\\

\formula_3 = \land_{i=1}^{n} (
 (
g(t_i, i) \land 
 \revision{\denot{\phi}_{\schema, \context, g^{-1}(i)} = \top} \to 
 \neg \del(t_i') \land \land_{k=1}^{l} \denot{a_{k}'}_{\schema, \context, [t_i']} = \denot{\revision{a_k'}}_{\schema, \context, g^{-1}(i)} 
 )
 \land 
 \\
 \qquad
 (
 \neg g(t_i, i) \lor
 \revision{\denot{\phi}_{\schema, \context, g^{-1}(i)} \neq \top} \to \del(t_i') 
 )
 )\\
\end{array}}
{\schema, \context \vdash \text{GroupBy}(\query, \vec{E}, [a_1', ..., a_l'],\phi) \leadsto \formula_1 \land \formula_2 \land \formula_3}
{\textrm{(E-GroupBy)}}

\\ \ \\



\end{array}
\]

\caption{
Symbolic encoding of SQL queries. 
The $\del(t)$ function represents whether a symbolic tuple $t$ is deleted.
The $g(t, j)$ is an uninterpreted function determining whether the tuple $t$ belongs to the group $j$; and $g^{-1}(j)$ is a reversible function of $g$ which returns the tuples $\vec{t}$ belonging to the group $j$.
}
\label{fig:rules-encode-b}
\end{figure}

Figure~\ref{fig:rules-encode-b} and ~\ref{fig:rules-encode-c} present symbolic encoding for SQL queries in \tool.
Specifically, Figure~\ref{fig:rules-encode-b} mainly focuses on formulating \sqljoin, \sqlgroupby and \sqlorderby while the set operations are fully encoded in Figure~\ref{fig:rules-encode-c}.
Among these encoding rules, most of which are straightforward to understand, there are two interesting operations (i.e., \sqlgroupby and \sqlexceptall) on which we further elaborate.

We decompose \sqlgroupby into the \textit{grouping} and \textit{reducing} procedures.
While grouping, our top priority focuses on the partition over a symbolic table based on the evaluation of the metrics $\vec{E}$.
The tuples with the same $\vec{E}$ values will be partitioned into a group.
Furthermore, a tuple $t_i$ cannot appear in the groups with greater indices, i.e., $\exists j. 1 \leq j \leq i \to t_i \in G_j$.
In Figure~\ref{fig:groupby}, for instance, $t_1$ and $t_3$ are partitioned into $G_1$ since they share the same grouping expressions, whereas $t_2$ solely constructs $G_2$.
In the reducing process, we exploit those groups and their shared information to determine the existence of new tuples $[t_1', t_2', t_3']$ and to compute projected expressions $L$.
For instance, $t_1'$ and $t_2'$ exist because $G_1$ and $G_2$ have at least a non-deleted tuple, while $t_3'$ is deleted because no tuple is categorized under this group $G_3$.

\begin{figure}[!t]
\[
\hspace{-12pt}
\scriptsize
\begin{array}{c}

\irulelabel
{\begin{array}{c}
\context \vdash \query \hookrightarrow [t_1, \ldots, t_n] \quad
\schema, \context \vdash \query \leadsto \formula_1 \quad
\context \vdash \text{OrderBy}(\query, \vec{E}, b) \hookrightarrow [t_1', \ldots, t_n'] \\
\revision{
\text{fresh } t''_i \quad
1 \leq i \leq n \quad
\formula_2 =  \text{moveDelToEnd}(\vec{t}, \vec{t}'') \quad
\formula_3 = \land_{i=1}^{n} t_i' = t_{\text{find}(i, \vec{E}, b)}''
}

\end{array}}
{\schema, \context \vdash \text{OrderBy}(\query, \vec{E}, \revision{b}) \leadsto \formula_1\revision{ \land \formula_2 \land \formula_3}}
{\textrm{(E-OrderBy)}}
\\ \ \\

\irulelabel
{\begin{array}{l}
    \context \vdash \query \hookrightarrow [t_1, \ldots, t_n] \quad
    \schema, \context \vdash \query \hookrightarrow \formula_1 \quad
    \context \vdash \text{Distinct}(\query) \hookrightarrow [t_1', \ldots, t_n'] \quad
    \formula_2 = \text{Dedup}(\vec{t'}, \vec{t})


    
\end{array}}
{\schema, \context \vdash \text{Distinct}(\query) \leadsto \formula_1 \land \formula_2}
{\textrm{(E-Dist)}}
\\ \ \\

\irulelabel
{\begin{array}{l}
\schema, \context \vdash \text{Distinct}(\query_1 \uplus \query_2) \leadsto \formula 
\end{array}}
{\schema, \context \vdash \query_1 \cup \query_2 \leadsto \formula}
{\textrm{(E-Union)}}
\\ \ \\

\irulelabel
{\begin{array}{c}
    \context \vdash \query_1 \hookrightarrow [t_1, \ldots, t_n] \quad
    \schema, \context \vdash \query_1 \leadsto \formula_1 \quad
    \context \vdash \query_2 \hookrightarrow [t_1', \ldots, t_m'] \quad
    \schema, \context \vdash \query_2 \leadsto \formula_2 \\
    \context \vdash \query_1 \cup \query_2 \hookrightarrow [t_1'', \ldots, t_{n + m}'']  \quad
    \formula_3 = \land_{i=1}^{n} t_{i}'' = t_{i} \land \land_{j=n+1}^{n+m} t_{j}'' = t_{j-n}' \\
\end{array}}
{\schema, \context \vdash \query_1 \uplus \query_2 \leadsto \formula_1 \land \formula_2 \land \formula_3}
{\textrm{(E-UnionAll)}}
\\ \ \\

\irulelabel
{\begin{array}{c}
    \revision{\context \vdash \text{Distinct}(\query_1) \hookrightarrow [t_1, \ldots, t_n] \quad 
    \schema, \context \vdash \text{Distinct}(\query_1) \leadsto \formula_1} \quad
    \context \vdash \query_2 \hookrightarrow [t_1', \ldots, t_m'] \quad
    \schema, \context \vdash \query_2 \leadsto \formula_2 \\
    \context \vdash \query_1 \cap \query_2 \hookrightarrow [t_1'', \ldots, t_n'']  \quad
    \formula_3 = \land_{i=1}^{n} (t_i \in \vec{t'} \to t_{i}'' = t_{i} \land t_i \not \in \vec{t'} \to \del(t_{i}''))
    \\
\end{array}}
{\schema, \context \vdash \query_1 \cap \query_2 \leadsto \formula_1 \land \formula_2 \land \formula_3}
{\textrm{(E-Intx)}}
\\ \ \\

\irulelabel
{\begin{array}{l}
 \schema, \context \vdash \query_1 - (\query_1 - \query_2) \leadsto \formula   
\end{array}}
{\schema, \context \vdash \query_1 \cplus \query_2 \leadsto \formula}
{\textrm{(E-IntxAll)}}
\\ \ \\

\irulelabel
{\begin{array}{c}
    \revision{\context \vdash \text{Distinct}(\query_1) \hookrightarrow [t_1, \ldots, t_n] \quad 
    \schema, \context \vdash \text{Distinct}(\query_1) \leadsto \formula_1} \\
    
    \context \vdash \query_2 \hookrightarrow [t_1', \ldots, t_m'] \quad
    \schema, \context \vdash \query_2 \leadsto \formula_2 \\
    
    \context \vdash \query_1 \setminus \query_2 \hookrightarrow [t_1'', \ldots, t_n'']  \quad
    \formula_3 = \land_{i=1}^{n}  (t_i \not \in \vec{t'} \to t_{i}'' = t_{i} \land t_i \in \vec{t'} \to \del(t_{i}'')) 
\end{array}}
{\schema, \context \vdash \query_1 \setminus \query_2 \leadsto \formula_1 \land \formula_2 \land \formula_3}
{\textrm{(E-Ex)}}
\\ \ \\

\irulelabel
{\begin{array}{l}
    \context \vdash \query_1 \hookrightarrow [t_1, \ldots, t_n] \quad 
    \context \vdash \query_2 \hookrightarrow [t_1', \ldots, t_m'] \quad
    \schema, \context \vdash \query_1 \leadsto \formula_1 \quad
    \schema, \context \vdash \query_2 \leadsto \formula_2 
    \\

    \schema \vdash \query : \attributes \quad 
    \context \vdash \query_1 - \query_2 \hookrightarrow [t_1'', \ldots, t_n''] 
    \\

    \formula_3 = \paired(1, 1) \leftrightarrow \eqtext(t_1, t_1', \attributes) \\
    \formula_4 = \land_{j=2}^{m} \paired(1, j) \leftrightarrow (\eqtext(t_1, t_j', \attributes) \land \land_{k=1}^{j-1} \neg \paired(1, k)) \\ 
    \formula_5 = \land_{i=2}^{n} \paired(i, 1) \leftrightarrow (\eqtext(t_i, t_1', \attributes) \land \land_{k=1}^{i-1} \neg \paired(k, 1)) \\
    \formula_6 = \land_{i=2}^{n} \land_{j=2}^{m} \paired(i, j) =  \eqtext(t_i, t_j, \attributes) \land \land_{k=1}^{i-1} \neg \paired(k, j) \land \land_{k=1}^{j-1} \neg \paired(i, k)
    \\
    
    \formula_7 = \revision{\land_{i=1}^{n} (\neg \del(t_i) \land \land_{j=1}^{m} \neg \paired(i,j) \to t_{i}'' = t_{i} \land \del(t_i) \lor \lor_{j=1}^{m} \paired(i,j) \to \del(t_{i}'') )}
    
\end{array}}
{\schema, \context \vdash \query_1 - \query_2 \leadsto \formula_1 \land \formula_2 \land \formula_3 \land \formula_4 \land \formula_5 \land \formula_6 \land \formula_7}
{\textrm{(E-ExAll)}}
\\ \ \\

\irulelabel
{\begin{array}{l}
\schema \vdash \query : \attributes_i \quad
\context \vdash \query \hookrightarrow \tuples_i \quad
\schema, \context \vdash \query_i \leadsto \formula_i \quad
1 \leq i \leq n \quad
n = |\vec{Q}| \\
\revision{\schema[R_1 \mapsto \attributes_1, \ldots, R_n \mapsto \attributes_n], \context[R_1 \mapsto \tuples_1, \ldots, R_n \mapsto \tuples_n]} \vdash \query' \leadsto \formula' \\
\end{array}}
{\schema, \context \vdash \text{With}(\vec{\query}, \revision{\vec{R}}, \query') \leadsto \land_{i=1}^{n} \formula_i \land \formula'}
{\textrm{(E-With)}}
\\ \ \\

\end{array}
\]

\vspace{-10pt}

\caption{
Symbolic encoding of SQL queries. 
\revision{
In $\text{OrderBy}$, we first ``remove'' those deleted symbolic tuples (i.e., $\del(t_i) = \top$) by moving them to the end of the list s.t. $\exists m. 1 \leq m \leq n \land \land_{i=1}^m \del(t_i) = \bot \land \land_{i=m+1}^n \del(t_i) = \top$; and the $\text{find}(i, \vec{E}, b)$ function will find the $i$-th smallest/largest non-deleted tuple $t_i''$ w.r.t. the expression list $\vec{E}$ and the ascending order $b$.
}
$\text{Dedup}(\tuples', \tuples)$ serves as de-duplication on a set of tuples $\text{Dedup}(\vec{t'}, \vec{t}) = \tuples$, 
i.e., $t_1' = t_1 \land \land_{i=2}^{n} (t_i \in \vec{t}_{1:i-1} \to \del(t_i') \land t_i \notin \vec{t}_{1:i-1} \to t_i' = t_i)$ where $|\vec{t'}| = |\vec{t}| = n$. 
}
\label{fig:rules-encode-c}
\end{figure}

Unlike the \sqlexcept operation, which rules out all occurrences of duplicate tuples, \sqlexceptall only eliminates occurrences of duplicate data with limited times.
We utilize dynamic programming to encode the \sqlexceptall operation.
Let us define two tuples $t_i \in R_1$ and $t_j' \in R_2$ are \textit{paired} iff they are (1) \textit{equal} (i.e., $t_i$ and $t_j'$ are not deleted and share the same values), and (2) \textit{not paired} with the previous tuples from the other relation.
Formally, \textit{paired} is defined as follows:
\begin{equation}
\paired(i, j) = \left\{
\begin{aligned}
& \eqtext(t_i, t_j', \attributes) & i = 1, j = 1, \\
& \eqtext(t_i, t_j', \attributes) \land \land_{k=1}^{j-1} \neg \paired(1,k) & i = 1,j \neq 1, \\
& \eqtext(t_i, t_j', \attributes) \land \land_{k=1}^{i-1} \neg \paired(k,1) & i \neq 1,j = 1, \\
& \eqtext(t_i, t_j', \attributes) \land \land_{k=1}^{i-1} \neg \paired(k,j)  \land \land_{k=1}^{j-1} \neg \paired(i,k) & \text{otherwise}.
\end{aligned}
\right.
\nonumber
\end{equation}
where the $\eqtext(t_i, t_j', \attributes)$ predicate asserts $t_i \in R_1$ and $t_j' \in R_2$ are equal, i.e., $\eqtext(t_i, t_j', \attributes) \Leftrightarrow \neg \del(t_i) \land \neg \del(t_j') \land \land_{a \in \attributes} t_i.a = t_j'.a$.
For instance, $t_3$ and $t_2'$ are paired because they share the same values and are unable to be paired with other tuples (such as $t_1$ or $t_1'$), as shown in Figure~\ref{fig:exceptall}.

\begin{figure}[!t]
\centering
\begin{subfigure}{.4\linewidth}
\includegraphics[width=\linewidth]{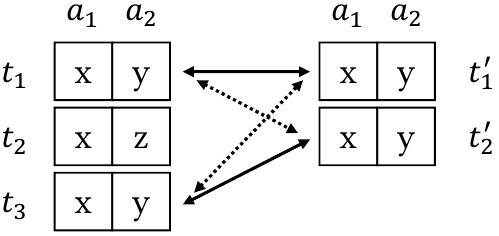}
\caption{The \textit{paired} mechanism of \sqlexceptall.}
\label{fig:exceptall}
\end{subfigure}
\hfill
\begin{subfigure}{.55\linewidth}
\includegraphics[width=\linewidth]{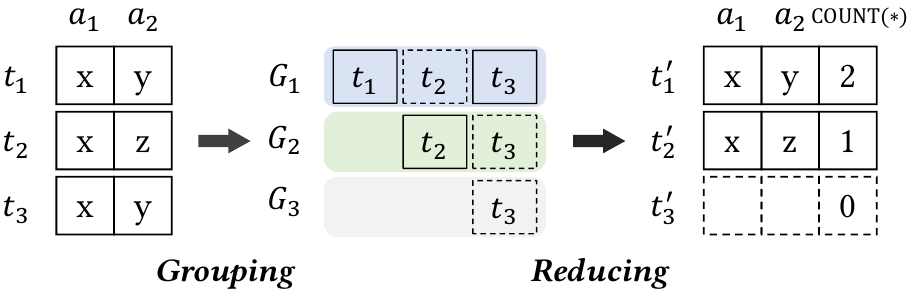}
\caption{The decomposition of \sqlgroupby.}
\label{fig:groupby}
\end{subfigure}
\caption{An illustration on \sqlexceptall and \sqlgroupby.}
\label{fig:exceptall_and_groupby}
\vspace{-10pt}
\end{figure}

Moreover, Figure~\ref{fig:rules-predicates} encodes rules for evaluating predicates.
Note that since we consider the three-valued logic, the logical operations $\alllogic$ are able to return \sqlnull values if the operands have \sqlnull, and the situations for \sqland and \sqlor become more complicated.
Furthermore, Figure~\ref{fig:rules-expressions} evaluates symbolic expressions in SQL queries.
Due to \sqlnull values, aggregation functions might also be \sqlnull if the predicate $\text{AllNull}(E, \schema, \context, \tuples)$ holds, and return numeric values otherwise.

\begin{figure}[!h]
\footnotesize
\[
\begin{array}{lcl}

\denot{\top}_{\schema, \context, \tuples} &=& \top \\
\denot{\bot}_{\schema, \context, \tuples} &=& \bot \\
\denot{\nullv}_{\schema, \context, \tuples} &=& \nullv \\
\denot{A_1 \odot A_2}_{\schema, \context, \tuples} &=& \text{let}~ v_1 = \denot{A_1}_{\schema, \context, \tuples}, v_2 = \denot{A_2}_{\schema, \context, \tuples} ~\text{in}~ \ite(v_1 = \nullv \lor v_2 = \nullv, \nullv, v_1 \odot v_2) \\
\denot{\text{IsNull}(E)}_{\schema, \context, \tuples} & = & \denot{E}_{\schema, \context, \tuples} = \nullv \\

\denot{\vec{E} \in \vec{v}}_{\schema, \context, \tuples} & = & 
\text{let} \\
& & \quad \vec{v} = [v_{1,1}, \ldots, v_{n,m}] \quad
\text{where}~ m = |\vec{E}| = |\vec{v}_i| \\
& & \text{in}~ \denot{\lor_{i=1}^{n} \land_{j=1}^{m} E_j = v_{i,j} }_{\schema, \context, \tuples}
\\

\denot{\vec{E} \in \query}_{\schema, \context, \tuples} & = & \text{let}~ \\
& & \quad 
\schema \vdash \query : [a_1, \ldots, a_m], \context \vdash \query \hookrightarrow [t_1, \ldots, t_n], \schema, \context \vdash \query \leadsto \formula \quad \text{where}~ m = |\vec{E}| \\
& & \quad v_{i,j} = \denot{a_j}_{\schema, \context, [t_i]} \quad 1 \leq i \leq n, 1 \leq j \leq m \\
& & ~\text{in}~ \formula \land \denot{\vec{E} \in \vec{v}}_{\schema, \context, \tuples}
\\

\denot{\phi_1 \land \phi_2}_{\schema, \context, \tuples} & = & \text{let}~ v_1 = \denot{\phi_1}_{\schema, \context, \tuples},
v_2 = \denot{\phi_2}_{\schema, \context, \tuples}
~\text{in}~ \\
& & \quad 
\ite((v_1 = v_2 = \nullv) \lor (v_1 = \nullv \land v_2 = \top) \lor (v_1 = \top \land v_2 = \nullv), \nullv, 
v_1  = \top \land v_2 = \top)
\\

\denot{\phi_1 \lor \phi_2}_{\schema, \context, \tuples} & = & 
\text{let}~ v_1 = \denot{\phi_1}_{\schema, \context, \tuples},
v_2 = \denot{\phi_2}_{\schema, \context, \tuples}
~\text{in}~ \\
& & \quad 
\ite((v_1 = v_2 = \nullv) \lor (v_1 = \nullv \land v_2 = \bot) \lor (v_1 = \bot \land v_2 = \nullv), \nullv, v_1  = \top \lor v_2 = \top)
\\

\denot{\neg \phi}_{\schema, \context, \tuples} & = & 
\text{let}~ v = \denot{\phi}_{\schema, \context, \tuples} ~\text{in}~
\ite(v = \nullv, \nullv, \neg v)
\\

\end{array}
\]
\caption{Rules for evaluating predicates.}
\label{fig:rules-predicates}
\end{figure}

\begin{figure}[!h]
\footnotesize
\[
\begin{array}{lcl}

\denot{a}_{\schema, \context, \tuples} & = & t.a \quad \text{where } t = \text{First}(\tuples) \\
\denot{v}_{\schema, \context, \tuples} & = & v \\
\denot{E_1 \allarith E_2}_{\schema, \context, \tuples} & = & \text{let}~ v_1 = \denot{E_1}_{\schema, \context, \tuples},~ v_2 = \denot{E_2}_{\schema, \context, \tuples} ~\text{in}~ \ite(v_1 = \nullv \lor v_2 = \nullv, \nullv, v_1 \allarith v_2) \\
\denot{\text{Count}(E)}_{\schema, \context, \tuples} & = & \ite(\text{AllNull}(E, \schema, \context, \tuples), \nullv, \Sigma_{t \in \tuples} \ite(\del(t) \lor \denot{E}_{\schema, \context, [t]} = \nullv, 0, 1)) \\
\denot{\text{Sum}(E)}_{\schema, \context, \tuples} & = & \ite(\text{AllNull}(E, \schema, \context, \tuples), \nullv, \Sigma_{t \in \tuples} \ite(\del(t) \lor \denot{E}_{\schema, \context, [t]} = \nullv, 0, \denot{E}_{\schema, \context, [t]})) \\
\denot{\text{Avg}(E)}_{\schema, \context, \tuples} & = & \denot{\text{Sum}(E)}_{\schema, \context, \tuples} /~ \denot{\text{Count}(E)}_{\schema, \context, \tuples} \\
\denot{\text{Min}(E)}_{\schema, \context, \tuples} & = & \ite(\text{AllNull}(E, \schema, \context, \tuples), \nullv, \text{min}_{t \in \tuples} \ite(\del(t) \lor \denot{E}_{\schema, \context, [t]} = \nullv, +\infty, \denot{E}_{\schema, \context, [t]})) \\
\denot{\text{Max}(E)}_{\schema, \context, \tuples} & = & \ite(\text{AllNull}(E, \schema, \context, \tuples), \nullv, \text{max}_{t \in \tuples} \ite(\del(t) \lor \denot{E}_{\schema, \context, [t]} = \nullv, -\infty, \denot{E}_{\schema, \context, [t]})) \\
\denot{\text{ITE}(\phi, E_1, E_2)}_{\schema, \context, \tuples} & = & \ite(\denot{\phi}_{\schema, \context, \tuples} = \top, \denot{E_1}_{\schema, \context, \tuples}, \denot{E_2}_{\schema, \context, \tuples}) \\
\denot{\text{Case}(\vec{\phi}, \vec{E}, E')}_{\schema, \context, \tuples} & = & \denot{\text{ITE}(\phi_1, E_1, \text{ITE}(\phi_2, E_2, \text{ITE}(\ldots, \text{ITE}(\phi_n, E_n, E'))))}_{\schema, \context, \tuples} \text{ where } n = |\vec{\phi}| = |\vec{E}| \\

\end{array}
\]
\caption{Rules for evaluating expressions. $f$ is an uninterpreted function. The $\text{First}(\tuples)$ function finds the first tuple from $\tuples$. The $\text{AllNull}(E, \schema, \context, \tuples)$ predicate asserts the attribute expression $E$ of all non-deleted tuples in $\tuples$ is \nullv, i.e., $\text{AllNull}(E, \schema, \context, \tuples) \leftrightarrow \land_{t \in \tuples} (\neg \del(t) \to \denot{E}_{\schema, \context, [t]} = \nullv)$.}
\label{fig:rules-expressions}
\end{figure}

\section{Additional Experimental Results}

\subsection{RQ: Importance of Supporting Rich Integrity Constraints}
\label{sec:exp-ic}

As mentioned in \textbf{RQ1} of Section~\ref{sec:comparison}, one reason that leads to a large number of ``unsupported'' benchmarks is due to baselines' poor support for integrity constraints. 
A natural question arises --- what if we \emph{do not} consider integrity constraints?
That is, how important is it to support integrity constraints?
This section aims to address exactly this.

\begin{figure}[!t]
\centering
\begin{subfigure}{\linewidth}
\includegraphics[width=.9\linewidth]{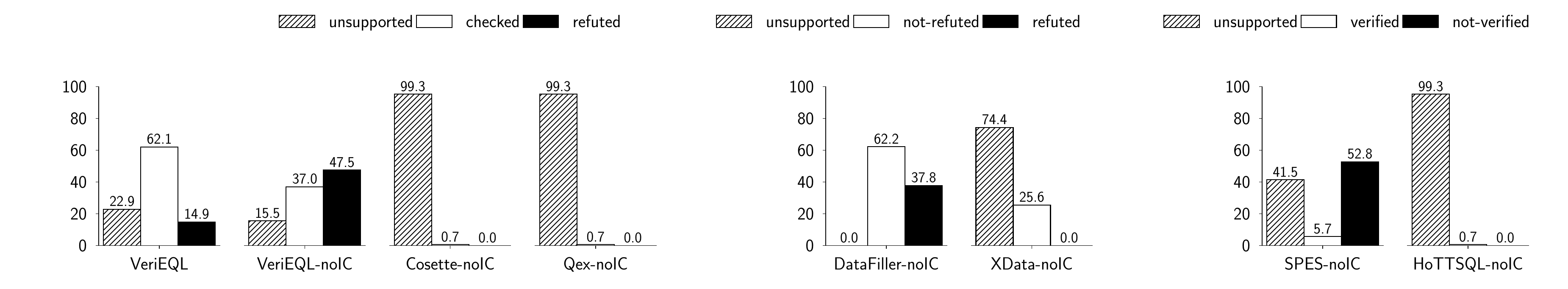}
\centering
\caption{\emph{LeetCode.}}
\label{fig:coverage-no-IC-leetcode}
\end{subfigure}
\hfill
\begin{subfigure}{\linewidth}
\includegraphics[width=.9\linewidth]{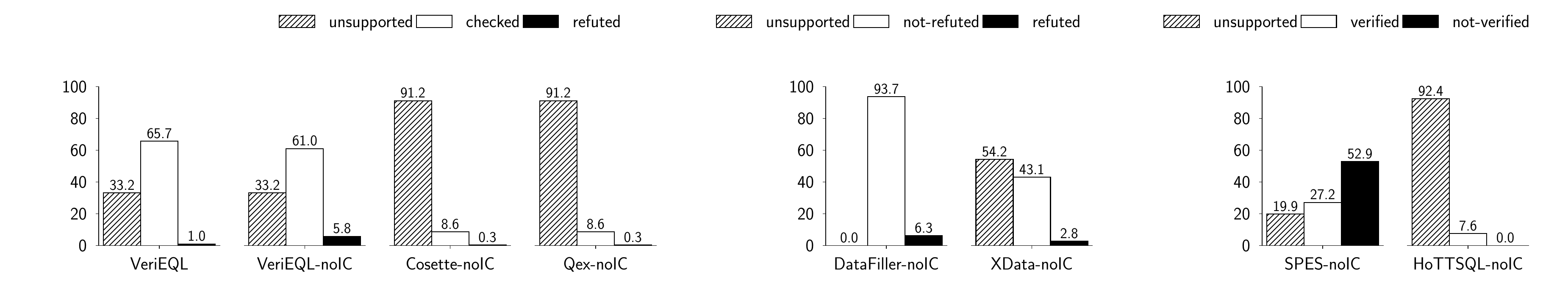}
\centering
\caption{\emph{Calcite.}}
\label{fig:coverage-no-IC-calcite}
\end{subfigure}
\hfill
\begin{subfigure}{\linewidth}
\includegraphics[width=.9\linewidth]{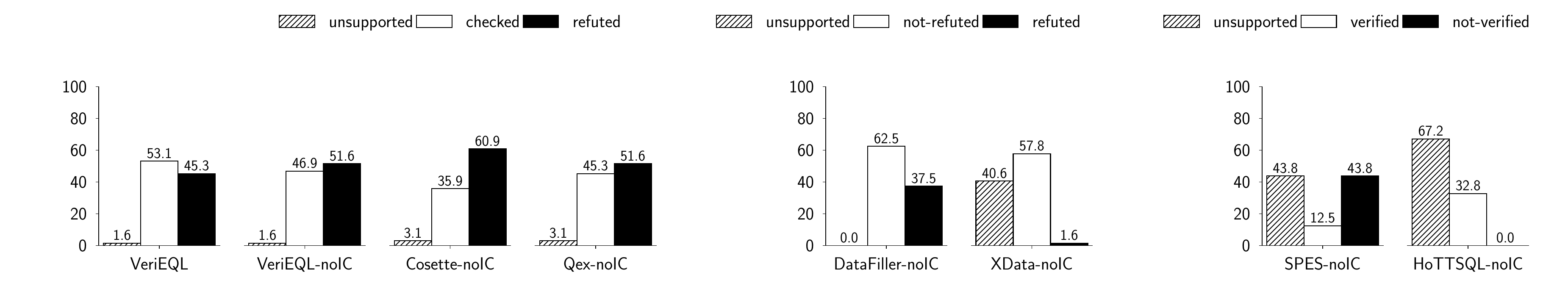}
\centering
\caption{\emph{Literature.}}
\label{fig:coverage-no-IC-literature}
\end{subfigure}
\vspace{-20pt}
\caption{Each sub-figure corresponds to one workload, where it shows the percentage of benchmarks in each outcome category. A tool name that is appended with ``-RQ2'' means the tool is run under \textbf{RQ2} setup. We also include $\tool$-RQ1 in this figure to facilitate an easy comparison of $\tool$'s results under two setups. To compare a baseline's results under two setups, please also refer to Figure~\ref{fig:coverage-result}.}
\label{fig:coverage-no-IC-result}
\vspace{-10pt}
\end{figure}

\newpara{Setup.}
We use the same setup as in \textbf{RQ1} except for one difference --- we discard integrity constraints. 
That is, we drop the integrity constraint $\constraint$ associated with a benchmark, regardless of whether or not the tool supports $\constraint$. 
This effectively means 
we are checking equivalence with respect to a \emph{superset of inputs} considered in \textbf{RQ1} which now also includes databases that violate $\constraint$. 

In what follows, we report the number of benchmarks in each category under this new setup,
\revision{annotated with a suffix ``-noIC''.}
The same setup is used for $\tool$ and baselines. 
We will also compare the results with \textbf{RQ1}'s from Section~\ref{sec:comparison}.

\evalfinding{\textbf{Take-away}: it is critical to support rich integrity constraints in order to avoid generating spurious counterexamples. For instance, without considering integrity constraints, the majority of  ``\revision{refuted}'' benchmarks --- many of which should be equivalent --- reported by $\tool$ are mistakenly proved non-equivalent, due to (spurious) inputs that violate the integrity constraints.}

\newpara{Results.}
Our main results are presented in Figure~\ref{fig:coverage-no-IC-result}. 
Let us first focus on $\tool$. 
For $\leetcode$ benchmarks, comparing $\tool$ and $\tool$-noIC, the most noticeable change is the \emph{significantly higher} ``\revision{refuted}'' bar --- from {14.9\%} (with IC) to {47.5\%} (without IC). 
Notably, more than {75\%} of \revision{noIC's ``refuted''} benchmarks are caused by \emph{spurious counterexamples}. 
Some of \revision{noIC's ``refuted''} benchmarks were previously ``checked'' in \textbf{RQ1} --- the ``checked'' bar lowers from {62.1\%} to {37\%} --- whereas the rest were previously ``unsupported'' in \textbf{RQ1}. 
A very small number (i.e., {3.3\%}) of benchmarks previously ``unsupported'' in \textbf{RQ1} now become ``checked'' in noIC --- which suggests that discarding integrity constraints does \emph{not} help prove many more benchmarks. 
For the other two workloads, we observe a similar trend: ``\revision{refuted}'' bars go up (due to spurious counterexamples) and ``checked'' bars go down, indicating it is important to support integrity constraints for $\calcite$ and $\literature$ as well. 

Let us now switch gears and briefly go over the baselines' results. 
On $\leetcode$, all baselines \revision{except \datafiller-noIC} still have high ``unsupported'' percentages, due to their limited support of SQL operations. 
\revision{
\datafiller is a random database generation tool, which does not reply on SQL queries to generate database instances, so its noIC ``unsupported'' percentages drop to zero.
However, as shown in Figure~\ref{fig:counterexamples} of Section~\ref{sec:counterexample}, most of the counterexamples generated by \datafiller-noIC are spurious.
}
$\spes$-noIC also has a high ``\revision{not-verified}'' bar: it can now successfully run on {58.5\%} but {52.8\%} are ``\revision{not-verified}''. 
But unfortunately, there is no easy way to validate its result in the entirety, as $\spes$ does not provide counterexamples. 
Among these {52.8\%}, $\tool$ can find valid counterexamples for {26.2\%}. 
We manually inspected {50} samples from the remaining {26.6\%}, and believe all of them are equivalent. 
The situation for $\calcite$ is similar, where $\spes$-noIC has the highest ``\revision{not-verified}'' bar ({52.9\%}), though we believe they are false alarms since most  $\calcite$ query pairs (if not all) should be equivalent. 
Such anecdotal evidence again indicates the importance of considering integrity constraints. 
Finally, for $\literature$, we also see noticeably higher ``\revision{refuted}'' --- we confirmed for $\cosette$, $\qex$, \revision{and \datafiller}, it is due to spurious counterexamples.

\newpara{Discussion.}
Careful readers may wonder why $\tool$-noIC's ``unsupported'' bar is still non-zero, given we discard integrity constraints. 
This is because $\tool$ currently does not support some SQL features such as {correlated subqueries (i.e., subqueries that use values from the outer query) and window functions.}

\subsection{Finding Bugs in MySQL}
\label{sec:mysql-bug}

\begin{figure}[!t]
\centering
\begin{minipage}{.62\linewidth}
\centering
\begin{subfigure}{\linewidth}
\footnotesize
\begin{tabular}{|l|l|}
\hline
$Q_1$ & 
\begin{tabular}[c]{@{}l@{}}
\sqlselectcolor \sqldistinctcolor page\_id \sqlascolor recommended\_page \\
\sqlfromcolor (\sqlselectcolor \sqlcasecolor \\
\hspace{5.5em}           \sqlwhencolor user1\_id~ =~ 1 \sqlthencolor user2\_id \\
\hspace{5.5em}           \sqlwhencolor user2\_id~ =~ 1 \sqlthencolor user1\_id \\
\hspace{5.5em}           \sqlelsecolor \sqlnull \\
\hspace{5.5em}         \sqlendcolor \sqlascolor user\_id \\
\hspace{2.3em}  \sqlfromcolor friendship) \sqlascolor tb1 \sqljoincolor likes \sqlascolor tb2 \\
\sqloncolor tb1.user\_id~ =~ tb2.user\_id \\
\sqlwherecolor page\_id \sqlnotcolor \sqlincolor (\sqlselectcolor page\_id \\ \hspace{9.3em} \sqlfromcolor likes \sqlwherecolor user\_id~ =~ 1) \\

\end{tabular} \\ \hline
$Q_2$  & 
\begin{tabular}[c]{@{}l@{}}
\sqlselectcolor \sqldistinctcolor page\_id \sqlascolor recommended\_page \\
\sqlfromcolor (\sqlselectcolor b.user\_id, b.page\_id \\
\hspace{2.3em}  \sqlfromcolor friendship a \sqlleftjoincolor likes b \\
\hspace{2.3em}  \sqloncolor \hspace{1.5em} (a.user2\_id~ =~ b.user\_id \sqlorcolor a.user1\_id=b.user\_id) \\ 
\hspace{3.5em} \sqlandcolor (a.user1\_id~ =~ 1 \sqlorcolor a.user2\_id~ =~ 1) \\
\hspace{2.3em}  \sqlwherecolor b.page\_id \sqlnotcolor \sqlincolor (\sqlselectcolor \sqldistinctcolor(page\_id) \\
\hspace{12.5em} \sqlfromcolor likes \sqlwherecolor user\_id=1) ) T
\end{tabular} 
\\ \hline
\end{tabular}
\caption{Two non-equivalent queries $(\query_1, \query_2)$ from $\leetcode$.}
\label{fig:mysql-bug-queries}
\end{subfigure}
\end{minipage}
\ \  
\begin{minipage}{.36\linewidth}
\small 
\centering
\begin{subfigure}{\linewidth}
\texttt{friendship} table: 
\\[10pt]
\begin{tabular}{|l|l|}
\hline
user1\_id & user2\_id \\ \hline
0         & 1         \\ \hline
\end{tabular}
\\[10pt] 
\texttt{likes} table: 
\\[10pt]
\begin{tabular}{|l|l|}
\hline
user\_id & page\_id \\ \hline
-1       & 0        \\ \hline
\end{tabular}
\caption{Counterexample input database.}
\label{fig:mysql-bug-cex}
\end{subfigure}
\\[30pt] 
\begin{subfigure}{\linewidth}
\begin{tabular}{|l|l|}
\hline
page\_id \\ \hline
\sqlnull \\ \hline
\end{tabular}
\caption{$\query_2$'s expected output.}
\label{fig:mysql-bug-q2-expected-output}
\end{subfigure}
\end{minipage}
\vspace{-5pt}
\caption{Queries from Figure~\ref{fig:mysql-bug-queries}, when executed using MySQL on the database from Figure~\ref{fig:mysql-bug-cex}, both return the empty table, but they should return different output tables according to SQL semantics. In particular, $\query_1$ should return the empty table and $\query_2$ should return the table in Figure~\ref{fig:mysql-bug-q2-expected-output}.}
\label{fig:mysql-bug} 
\vspace{-10pt}
\end{figure}

As mentioned in Section~\ref{sec:counterexample}, for two $\leetcode$ benchmarks, $\tool$ generates ``spurious'' counterexamples --- that are actually genuine --- due to a bug in MySQL.
Figure~\ref{fig:mysql-bug} shows one such benchmark\footnote{\href{https://leetcode.com/problems/page-recommendations/}{https://leetcode.com/problems/page-recommendations/}} $(\query_1, \query_2)$ (see Figure~\ref{fig:mysql-bug-queries}) and the corresponding counterexample database $\db$ generated by $\tool$ (see Figure~\ref{fig:mysql-bug-cex}). 
This benchmark has an integrity constraint $\constraint$ requiring: 
(i) (\texttt{user1\_id}, \texttt{user2\_id}) is the primary key of the \texttt{friendship} table, 
(ii) (\texttt{user\_id}, \texttt{page\_id}) is the primary key of the \texttt{likes} table, 
and (iii) the \texttt{friendship} table does not have any rows with $\texttt{user1\_id} = \texttt{user2\_id}$. 
As we can see, $\db$ satisfies $\constraint$. Furthermore, we believe $\query_1$'s output on $\db$ should be the empty table, because the $\sqlon$ predicate associated with the $\sqljoin$ operation in $Q_1$ is always false. 
On the other hand, $\query_2$ should return the table shown in Figure~\ref{fig:mysql-bug-q2-expected-output}: the $\sqlleftjoin$ in $Q_2$ will retain all records from the \texttt{friendship} table in the output, and the $\sqlwhere$ predicate will be true since \texttt{b.page\_id} (i.e., \texttt{likes.page\_id} which is 0) will not be in an empty set; therefore, every tuple after the join will be eventually preserved in $Q_2$'s output.
However, both queries return the empty table on MySQL's latest release version 8.0.32. The MySQL verification team had confirmed and classified this bug with \emph{serious} severity.

\subsection{Detailed Analysis}
\label{sec:RQ4}

This section reports detailed performance statistics of $\tool$.

\newpara{Time breakdown.}
$\tool$ essentially performs two steps: constraint generation and constraint solving, where the latter takes 67.7\% of the total time on average across all benchmarks.


\newpara{Efficiency at disproving query equivalence.}
$\tool$ is very efficient at disproving query pairs. 
For example, among the {3,586} non-equivalent $\leetcode$ benchmarks that $\tool$ can generate counterexamples for in \textbf{RQ1}, {3,108} of them can be disproved within 1 second, {3,528} can be disproved within 5 seconds, and the slowest one takes {251} seconds. 
All $\calcite$'s non-equivalent benchmarks can be proved within 2 seconds. 
Finally, all {29} non-equivalent $\literature$ benchmarks are identified within 2 minutes where {26} of them took less than 10 seconds.

\newpara{Efficiency of bounded verification.}
This experiment concerns the number of benchmarks whose \emph{bounded equivalence} can be proved by $\tool$ in a given timeout, with respect to different input bounds. 
In particular, we vary the bound from 1 to {10}, and for each bound we record the number of benchmarks whose equivalence can be \emph{completely} verified using a 10-minute timeout. That is, if $\tool$ does not terminate on a benchmark within the given timeout, we do not count this benchmark as ``proved'' for the given bound. 
The main result can be found in Table~\ref{tab:varying-bound}. 
The key message is that $\tool$ can efficiently prove most of the benchmarks up to bound 5. 
While the number drops for larger bounds, $\tool$ can still prove a significant number of benchmarks quite efficiently. 

\begin{table}[t]
\caption{For each workload, we vary the input database size from 1 to 10 and show the number of benchmarks that can be \emph{completely} verified by $\tool$ for the given bound using a 10-minute timeout. If $\tool$ does not finish the bounded verification within the timeout, we do not count it as ``checked.'' We also show the average and median verification times for those benchmarks that terminate. }
\vspace{-10pt}
\footnotesize 
\centering
\setlength{\tabcolsep}{3pt}
\begin{tabular}{ c  ||  r | r | r || r  | r |  r ||  r | r | r }
\toprule
\multirow{2}{*}{\rotatebox{0}{\emph{bound}}} 
& \multicolumn{3}{c||}{$\leetcode$} 
& \multicolumn{3}{c||}{$\calcite$} 
& \multicolumn{3}{c}{$\literature$}
\\[2pt]
& \emph{\# proved} & \emph{mean (sec)} & \emph{median (sec)} 
& \emph{\# proved} & \emph{mean (sec)} & \emph{median (sec)} 
& \emph{\# proved} & \emph{mean (sec)} & \emph{median (sec)} 
\\ 
\midrule
%
%

1 &14905	&0.4	&0.3	&261	&0.7	&0.7	&34	&0.7	&0.7 \\
2 &14894	&0.7	&0.4	&261	&0.4	&0.3	&34	&2.6	&0.3 \\
3 &14618	&16.1	&0.8	&261	&3.1	&0.4	&32	&4.6	&0.6 \\
4 &11757	&24.5	&1.6	&242	&9.7	&0.6	&29	&17.0	&1.3 \\
5 &10564	&73.0	&5.1	&224	&33.6	&0.7	&26	&71.9	&3.4 \\
6 &7779	&63.0	&4.5	&193	&33.0	&0.8	&17	&19.3	&0.8 \\
7 &6203	&41.9	&6.6	&162	&15.5	&0.8	&15	&43.3	&1.2 \\
8 &5344	&34.4	&7.7	&153	&20.0	&1.0	&13	&12.6	&2.8 \\
9 &4942	&53.5	&12.0	&144	&26.5	&1.3	&12	&37.0	&5.7 \\
10 &4160	&49.7	&4.6	&134	&14.4	&1.7	&12	&65.2	&12.4 \\
\bottomrule
\end{tabular}
\label{tab:varying-bound}
\end{table}


\section{Proof} \label{sec:proof}
In this section, we provide the proof of lemmas and theorems in the main paper.

\begin{proof}[Proof of Theorem~\ref{lem:ic}:]
\revision{
Given a symbolic database $\context$ and an integrity constraint $\constraint$, consider a formula $\formula$ such that $\context \vdash \constraint \rightsquigarrow \formula$. If $\formula$ is satisfiable, then the model of $\formula$ corresponds to a database consistent with $\context$ that satisfies $\constraint$. If $\formula$ is unsatisfiable, then no database consistent with $\context$ satisfies $\constraint$.
}
\end{proof}

\begin{proof}[Proof.]
By structural induction on $\constraint$. 

\begin{enumerate}

\item Base case: $\constraint = \text{PK}(R, \vec{a})$.

Suppose $\context(R) = [t_1, \ldots, t_n]$ and $m = |\vec{a}|$ by Figure~\ref{fig:rules-ic-preds} and, for every tuple $t_i \in$ $\context(R)$, it has $t_i.a_k \neq \nullv$ where $a_k$ denotes the attribute and $1 \leq k \leq m$. 
Also, $\formula_2$ in Figure~\ref{fig:rules-ic-preds} ensures no two tuples $t_i$ and $t_j$ where $t_i \in \context(R)$ and $t_j \in \context(R)$ such that $\land_{k=1}^{m} t_i.a_k = t_j.a_k$ is true.  
Therefore, it holds that if $\formula$ is satisfiable, we have a model of $\formula$ corresponds to a database consistent with $\context$ that satisfies $\text{PK}(R, \vec{a})$ as $\formula_1 \land \formula_2$ satisfies $\forall t_i \in \context(R) . t_i.a \neq \nullv  \land \forall t_i \in \context(R). \forall t_j \in \context(R) . i \neq j \to \neg (\land_{k=1}^{m} t_i.a_k = t_j.a_k)$, thus ensuring the model conforms to primary key properties that 1) attributes $\vec{a}$ of every tuple in $\context(R)$ are non-\sqlnull, and 2) there are no two tuples in $\context(R)$ such that their values of $\vec{a}$ are identical.  And on the contrary, if $\formula$ is unsatisfiable, then it is necessarily true that there exists no database consistent with $\context$ that can satisfy $\text{PK}(R, \vec{a})$.

\item Base case: $\constraint = \text{FK}(R_1, a_1, R_2, a_2)$.

Suppose $\context(R_1) = [t_1, \ldots, t_n]$ and $\context(R_2) = [t_1', \ldots, t_m']$, by Figure~\ref{fig:rules-ic-preds}, $\formula$ ensures for any tuple $t_i \in \context(R_1)$, $t_i.a_1 \in [t_1'.a_2, \ldots, t_m'.a_2]$.
Therefore, it holds that if $\formula$ is satisfiable, we have a model of $\formula$ corresponds to a database consistent with $\context$ that satisfies $\text{FK}(R_1, a_1, R_2, a_2)$ as $\formula$ satisfies $\forall t_i \in \context(R_1). t_i.a_1 \in [t_1'.a_2, \ldots, t_m'.a_2]$ (the value of $a_1$ of every tuple in $\context(R_1)$ references the value $a_2$ of any tuple in $\context(R_2)$); also, if $\formula$ is unsatisfiable, then it is necessarily true that there exists no database consistent with $\context$ satisfies $\text{FK}(R_1, a_1, R_2, a_2)$.

\item Base case: $\constraint = \text{NotNull}(R, a)$.

Suppose $\context(R) = [t_1, \ldots, t_n]$, by Figure~\ref{fig:rules-ic-preds}, $\formula$ ensures for any tuple $t_i$ in $\context(R)$, $t_i.a \neq \nullv$.
Therefore, it holds that if $\formula$ is satisfiable, we have a model of $\formula$ corresponding to a database consistent with $\context$ where $\forall t_i \in \context(R) . t_i.a \neq \nullv$, which satisfies that the value of $a$ of every tuple in $\context(R)$ is non-\sqlnull, and also if $\formula$ is unsatisfiable, then it is necessarily true that there exists no database consistent with $\context$ satisfies $\text{NotNull}(R, a)$.

\item Base case: $\constraint = \text{Check}(R, \psi)$.

Suppose $\context(R) = [t_1, \ldots, t_n]$, by Figure~\ref{fig:rules-ic-preds}, $\formula$ ensures for any tuple $t_i$, the predicate $\denot{\psi}_{t_i}$ evaluates to $\top$.
Therefore, it holds that if $\formula$ is satisfiable, the model of $\formula$ corresponds to a database consistent with $\context$ that satisfies $\forall t_i \in \context(R) . \denot{\psi}_{t_i} = \top$ that conforms to the \text{Check} constraint that every tuple in $\context(R)$ satisfies $\psi$, and if $\formula$ is unsatisfiable, then it is necessarily true that there exists no database consistent with $\context$ satisfies $\text{Check}(R, \psi)$.

\item Base case: $\constraint = \text{Inc}(R, a, v)$.

Suppose $\context(R) = [t_1, \ldots, t_n]$, by Figure~\ref{fig:rules-ic-preds}, $\formula_2$ foremost constrains the attribute $t_i.a$ is non-\sqlnull for every tuple $t_i \in \context(R)$.  $\formula_1$ then ensures for any tuple $t_i$, the attribute $t_i.a = v + i - 1$, where $v$ is the initial value.
Therefore, it holds that if $\formula$ is satisfiable, we have a model of $\formula$ corresponding to a database consistent with $\context$ that satisfies $\forall t_i \in \context(R) . t_i.a \neq \sqlnull \land t_i.a = v + i - 1$ and hence ensures the value of $a$ of every tuple in $\context(R)$ is incrementally increased by 1 from the initial value $v$; if $\formula$ is unsatisfiable, then it is necessarily true that there exists no database consistent with $\context$ satisfies $\text{Inc}(R, a, v)$.

\item Inductive case: $\constraint'' = \constraint \land \constraint'$.

By the base cases shown above, the Theorem for the inductive case $\constraint'' = \constraint \land \constraint'$ can be proved.

\end{enumerate}
\end{proof}

\begin{proof}[Proof of Theorem~\ref{lem:query1}:]
\revision{
Let $\db$ be a database over schema $\schema$ and $o_{R}$ be the output relation of running query $\query$ over $\db$. Consider a symbolic database $\context$ over $\schema$, a symbolic relation $R$ and a formula $\formula$ such that $\context \vdash \query \hookrightarrow R$ and $\schema, \context \vdash \query \leadsto \formula$. For any interpretation $\interpretation$ such that $\interpretation(\context) = \db$, there exists an extension $\interpretation'$ of $\interpretation$ such that (1) the result of query $\query$ over $\db$ is $\interpretation'(R)$, and (2) the database $\db$ and $\interpretation'$ jointly satisfy $\formula$, i.e., 
\[
\begin{array}{l}
(\context \vdash \query \hookrightarrow R) \land (\schema, \context \vdash \query \leadsto \formula) \land (\interpretation(\context) = \db) \Rightarrow \\
\qquad \qquad \qquad \qquad \qquad \qquad \qquad \qquad \exists \interpretation' \sqsupseteq \interpretation. \interpretation'(R) = \denot{\query}_{\db} \land (\interpretation', \db[o_R \mapsto \interpretation'(R)] \models \formula)
\end{array}
\]
}

\vspace{10pt}

Prove by structural induction on $\query$. ~\footnote{\revision{To be concise, we widely use a function $\ptr$ in this proof section. The function $\ptr$ takes as input an index $i$  and returns an index $j$ s.t. $\interpretation(t_j) = x_i$ where $x_i \in \denot{\query}_\db$ and $t_j \in R$.
}}

\begin{enumerate}
\item Base case: $\query = T$. 

$R = \context(T)$ by Figure~\ref{fig:rules-tuples}.
$\formula = \top$ by Figure~\ref{fig:rules-encode}.

Take $\interpretation' = \interpretation$ and $\db' = \db[o_R \mapsto \interpretation(R)]$.
Obviously, we have (a) $\interpretation'(R) = \interpretation(R) = \interpretation(\context(T)) = \db(T) = \denot{T}_\db$ because $\denot{T}_\db = \db(T)$ by Figure~\ref{fig:rules-encode-b}, (b) $\interpretation', \db' \models \top$.

Thus, Theorem~\ref{lem:query1} for the inductive case $\query = T$ is proved.

\item Inductive case: $\query' = \proj_{L}(\query)$. 

Suppose that $\schema \vdash \query : [a_1, \ldots, a_m]$ by Figure~\ref{fig:rules-schema}; 
$\schema \vdash \filter_L(\query) : [a_1', \ldots, a_l']$ where $l = |L|$ by Figure~\ref{fig:rules-schema};
$\context \vdash \query \hookrightarrow R$ where $R$ has symbolic tuples $[t_1, \ldots, t_n]$ by Figure~\ref{fig:rules-tuples};
$\schema, \context \vdash \query \leadsto \formula$ by Figure~\ref{fig:rules-encode-b};
$\denot{\proj_L(\query)}_{\db} = \site(\text{hasAggr}(L), [\denot{L}_{\db, \denot{\query}_{\db}}], \smap(\denot{Q}_\db, \lambda x. \denot{L}_{\db, x}))$ by Figure~\ref{fig:proof-sematics2}.

By the inductive hypothesis, we have $\interpretation(R) = \denot{\query}_\db$ and $\interpretation, \db \models \formula$.

Let us discuss $\text{hasAggr}(L)$ in two cases.
\begin{enumerate}[label=(\alph*)]
\item \label{proj_L_query_a}
If $\text{hasAggr}(L)$ is true, then $\context \vdash \proj_{L}(\query) \hookrightarrow R'$ where $R'$ only has one symbolic tuple $[t_1']$ by Figure~\ref{fig:rules-tuples}, $\denot{\proj_L(\query)}_{\db} = [\denot{L}_{\db, \denot{\query}_{\db}}]$ by Figure~\ref{fig:proof-sematics2}, $\schema, \context \vdash \proj_{L}(\query) \leadsto \formula \land \land_{i=1}^{l} \formula_{i}' \land \neg \del(t_1')$ where $\formula_{i}' = \denot{a_j'}_{\schema, \context, [t_1']} = \denot{a_j'}_{\schema, \context, \vec{t}}$ by Figure~\ref{fig:rules-encode}.

Take $\interpretation'$ s.t. $\interpretation'(\del)(t_i) \leftrightarrow \interpretation(\del)(t_i)$, $\interpretation'(t_i) = \interpretation(t_i)$, $\interpretation'(\del)(t_1') = \bot$ and $\interpretation'(t_1') = [(\text{ToString}(a_1'), \denot{a_1'}_{\db, \interpretation(R)}), \ldots, (\text{ToString}(a_l'), \denot{a_l'}_{\db, \interpretation(R)})]$.

By the definition of $\interpretation'$ and the inductive hypothesis $\interpretation(R) = \denot{\query}_{\db}$, we know 
\[
\begin{array}{rcl}
\interpretation'(\proj_L(\query)) 
&=& [(\text{ToString}(a_1'), \denot{a_1'}_{\db, \interpretation(R)}), \ldots, (\text{ToString}(a_l'), \denot{a_l'}_{\db, \interpretation(R)})] \\
&=& [(\text{ToString}(a_1'), \denot{a_1'}_{\db, \denot{\query}_\db}), \ldots, (\text{ToString}(a_l'), \denot{a_l'}_{\db, \denot{\query}_\db})] \\
&=& [\denot{L}_{\db, \denot{\query}_\db}] \\
&=& \denot{\proj_L(\query)}_{\db}
\end{array}
\]

Let $\db' = \db[o_{\proj_L(\query)} \mapsto \denot{\proj_L(\query)}_{\db}]$.
By Lemma~\ref{lem:attribute}, we know $\interpretation(\denot{a}_{\schema, \context, \tuples}) = \denot{a}_{\interpretation(\context), \interpretation(\tuples)}$ and , therefore, $\interpretation, \db \models \denot{a'}_{\schema, \context, \tuples} = \denot{a}_{\schema, \context, \tuples} \Leftrightarrow \interpretation, \db \models \denot{a'}_{\interpretation(\context) , \interpretation(\tuples)} = \denot{a}_{\interpretation(\context) , \interpretation(\tuples)}$.
\[
\begin{array}{rcl}
\interpretation', \db' &\models& \land_{j=1}^{l} \denot{a_j'}_{\interpretation(\context), [\interpretation(t_1')]} = \denot{a_j'}_{\interpretation(\context), \interpretation([t_1, \ldots, t_n])} \\
\interpretation', \db' &\models& \land_{j=1}^{l} \denot{a_j'}_{\schema, \context, [t_1']} = \denot{a_j'}_{\schema, \context, [t_1, \ldots, t_n]} \\
\end{array}
\]
Thus, we have $\interpretation', \db' \models \formula \land \land_{i=1}^{l} \formula_i' \land \neg \del(t_1')$

\item \label{proj_L_query_b}
If $\text{hasAggr}(L)$ is false, then $\context \vdash \proj_{L}(\query) \hookrightarrow R'$ where $R'$ has symbolic tuples $[t_1', \ldots, t_n']$ by Figure~\ref{fig:rules-tuples}, $\denot{\proj_L(\query)}_{\db} = \smap(\denot{\query}_\db, \lambda x. \denot{L}_{\db ,x})$ by Figure~\ref{fig:proof-sematics2}, $\schema, \context \vdash \proj_{L}(\query) \leadsto \formula \land \land_{i=1}^{n} \formula_{i}'$ where $\formula_{i}' = \land_{j=1}^{l} \denot{a_j'}_{\schema, \context, [t_i']} = \denot{a_j'}_{\schema, \context, [t_i]} \land \del(t_i') \leftrightarrow \del(t_i)$ by Figure~\ref{fig:rules-encode}.

Take $\interpretation'$ s.t. $\interpretation'(\del)(t_i) \leftrightarrow \interpretation(\del)(t_i)$, $\interpretation'(t_i) = \interpretation(t_i)$, $\interpretation'(\del)(t_i') \leftrightarrow \interpretation(\del)(t_i)$ and $\interpretation'(t_i') = [(\text{ToString}(a_1'), \denot{a_1'}_{\db, [\interpretation(t_i)]}), \ldots, (\text{ToString}(a_l'), \denot{a_l'}_{\db, [\interpretation(t_i)]})] = \denot{L}_{\db, [\interpretation(t_i)]}$.

Let $x_i \in \denot{\query}_\db$ where $1 \leq i \leq c$ and $c = |\denot{\query}_\db|$, then we know $x_i \in \interpretation(R)$ because $\interpretation(R) = \denot{\query}_\db$, $\interpretation(\del)(x_i) = \bot$.
Further, for any tuple $x_i \in \denot{\query}_\db$, we have two symbolic tuples $t_{\ptr(i)} \in \query$ and $t_{\ptr(i)}' \in \proj_L(\query)$ s.t. $\interpretation'(\del)(t_{\ptr(i)}') = \interpretation(\del)(t_{\ptr(i)}) = \interpretation(\del)(x_i) = \bot$ and $\denot{L}_{\db, [x_i]} = \denot{L}_{\db, [\interpretation(t_{\ptr(i)})]} = \interpretation'(t_{\ptr(i)}')$.

By the semantics of $\smap$, we have
\[
\begin{array}{rcl}
\denot{\proj_L(\query)}_{\db}
&=& \smap(\denot{\query}_\db, \lambda x. \denot{L}_{\db ,x}) \\
&=& [\denot{L}_{\db, [x_1]}, \ldots, \denot{L}_{\db, [x_c]}] \\
&=& [\denot{L}_{\db, [\interpretation(t_{\ptr(1)})]}, \ldots, \denot{L}_{\db, [\interpretation(t_{\ptr(c)})]}] \\
&=& [\interpretation'(t_{\ptr(1)}'), \ldots, \interpretation'(t_{\ptr(c)}')] \\
&=& \interpretation'([t_{\ptr(1)}', \ldots, t_{\ptr(c)}']) \\
&=& \interpretation'(\proj_L(\query)) \\
\end{array}
\]

Let $\db' = \db[o_{} \mapsto \denot{\proj_\phi(\query)}_\db]$.
Since $\denot{a_j'}_{\db, [\interpretation(t_i')]} = \denot{a_j'}_{\db, [\interpretation(t_i)]} \Leftrightarrow \denot{a_j'}_{\schema, \context, [t_i']} = \denot{a_j'}_{\schema, \context, [t_i]}$ and $\interpretation(\del)(t_i') = \interpretation(\del)(t_i)$, we know $\interpretation', \db' \models \formula_i'$ and, therefore, $\interpretation', \db' \models \formula \land \land_{i=1}^{n} \formula_i'$.
\end{enumerate}

Hence, $\interpretation', \db' \models \formula \land \land_{i=1}^{n} \formula_i'$ by \ref{proj_L_query_a} and \ref{proj_L_query_b} and Theorem~\ref{lem:query1} for the inductive case $\query' = \proj_L(\query)$ is proved.

\item Inductive case: $\query' = \filter_{\phi}(\query)$. 

Suppose that $\context \vdash \query \hookrightarrow R$ where $R$ has symbolic tuples $[t_1, \ldots, t_n]$, $\context \vdash \filter_{\phi}(\query) \hookrightarrow R'$ where $R'$ has symbolic tuples $[t_1', \ldots, t_n']$ by Figure~\ref{fig:rules-tuples}; $\schema, \context \vdash \query \leadsto \formula$, $\schema, \context \vdash \filter_\phi(\query) \leadsto \formula \land \land_{i=1}^n \formula_i'$ where $\formula_i' = (\neg \del(t_i) \land \denot{\phi}_{\schema, \context, [t_i]} = \top) \to t_i' = t \land (\del(t_i) \lor \denot{\phi}_{\schema, \context, [t_i]} \neq \top) \to \del(t_i')$ by Figure~\ref{fig:rules-encode-b}.

By the inductive hypothesis, we have $\interpretation(R) = \denot{\query}_{\db}$ and $\interpretation, \db \models \formula$.

Take $\interpretation'$ s.t. $\interpretation'(\del)(t_i) \leftrightarrow \interpretation(\del)(t_i)$, $\interpretation'(t_i') = \interpretation'(t_i) = \interpretation(t_i)$ and $\interpretation'(\del)(t_i') \leftrightarrow \interpretation(\del)(t_i) \lor \denot{\phi}_{\db, [\interpretation(t_i)]} \neq \top$.

For any tuple $x_i \in \denot{\query}_\db$, we have a symbolic tuple $t_{\ptr(i)} \in R$ s.t. $\interpretation(\del)(t_{\ptr(i)}) = \interpretation(\del)(x_i) \\$ $= \bot$ and $\interpretation(t_{\ptr(i)}) = \interpretation(x_i)$.
If $\denot{\phi}_{\db, [x_i]} = \top$, then $\interpretation'(\del)(t_{\ptr(i)}') =  \interpretation(\del)(t_{\ptr(i)}) \lor  \\$ $\denot{\phi}_{\db, [\interpretation(t_{\ptr(i)})]} \neq \top = \bot$ and $\interpretation'(t_{\ptr(i)}') = \interpretation(t_{\ptr(i)}) = x_i$.
Otherwise, $\interpretation'(\del)(t_{\ptr(i)}') = \interpretation(\del)(t_{\ptr(i)}) \lor (\denot{\phi}_{\db, [\interpretation(t_{\ptr(i)})]} \neq \top) = \top$ and $\interpretation'(t_{\ptr(i)}') = \interpretation(t_{\ptr(i)}) = x_i$.
Therefore, for any $x_i'' \in \denot{\filter_\phi(\query)}_\db$, we have a symbolic tuple $t_{ptr(i)}' \in R'$.
$\denot{\filter_\phi(\query)}_\db = \interpretation(R')$ is proved.

Let $\db' = \db[o_{R'} \mapsto \denot{\filter_\phi(\query)}_\db]$. 
By Lemma~\ref{lem:predicate}, we know $\interpretation(\denot{\phi}_{\schema, \context, [x]}) = \denot{\phi}_{\interpretation(\context), \interpretation([x])} = \denot{\phi}_{\db, [\interpretation(x)]}$ and, therefore, $\interpretation, \db \models \denot{\phi}_{\schema, \context, [x]} \Leftrightarrow \interpretation, \db \models \denot{\phi}_{\db, [\interpretation(x)]}$ and $\interpretation, \db \models \formula_i' \Leftrightarrow \interpretation, \db \models (\neg \del(t_i) \land \denot{\phi}_{\db, [\interpretation(t_i)]} = \top) \to t_{i}' =  t_{i} \land (\del(t_i) \lor \denot{\phi}_{\db, [\interpretation(t_i)]} \neq \top) \to \del(t_i')$.

For any  $t_i \in R$, let us discuss it in two cases.
\begin{enumerate}[label=(\alph*)]
\item \label{filter_phi_Q_a} If $\interpretation(\del)(t_i) = \top$, then, by the definition of $\interpretation'$, we have $\interpretation'(\del)(t_i) = \interpretation(\del)(t_i) = \top$, $\interpretation'(\del)(t_i') = \interpretation(\del)(t_i) \lor (\denot{\phi}_{\db, [\interpretation(t_i)]} \neq \top) = \top$, $\interpretation'(t_i') = \interpretation'(t_i) = \interpretation(t_i)$,
\[
\interpretation', \db' \models (\neg \del(t_i) \land \denot{\phi}_{\db', [\interpretation'(t_i)]} = \top) \to t_{i}' = t_{i}
\]
and
\[
\interpretation', \db' \models (\del(t_i) \lor \denot{\phi}_{\db', [\interpretation'(t_i)]} \neq \top) \to \del(t_i')
\]
Thus, $\interpretation', \db' \models \formula_i'$.

\item \label{filter_phi_Q_b} If $\interpretation(\del)(t_i) = \bot$, then, by the definition of $\interpretation'$, we have $\interpretation'(\del)(t_i) = \interpretation(\del)(t_i) = \bot$, $\interpretation'(\del)(t_i') = \interpretation(\del)(t_i) \lor (\denot{\phi}_{\db, [\interpretation(t_i)]} \neq \top) =  \denot{\phi}_{\db, [\interpretation(t_i)]} \neq \top$ and $\interpretation'(t_i') = \interpretation'(t_i) = \interpretation(t_i)$. 

Let us further discuss $\denot{\phi}_{\db, [\interpretation(t_i)]}$ in two cases.
\begin{enumerate}[label=(b\arabic*)]
\item \label{filter_phi_Q_b1} If $\denot{\phi}_{\db, [\interpretation(t_i)]} = \top$, then $\interpretation'(\del)(t_i') = \denot{\phi}_{\db, [\interpretation(t_i)]} \neq \top = \bot$, we have
\[
\interpretation', \db' \models (\neg \del(t_i) \land \denot{\phi}_{\db, [\interpretation(t_i)]} = \top) \to t_{i}' =  t_{i}
\]
and
\[
\interpretation', \db' \models (\del(t_i) \lor \denot{\phi}_{\db, [\interpretation(t_i)]} \neq \top) \to \del(t_i')
\]

\item \label{filter_phi_Q_b2} If $\denot{\phi}_{\db, [\interpretation(t_i)]} \neq \top$, then $\interpretation'(\del)(t_i') = \denot{\phi}_{\db, [\interpretation(t_i)]} \neq \top = \top$, we have
\[
\interpretation', \db' \models (\neg \del(t_i) \land \denot{\phi}_{\db, [\interpretation(t_i)]} = \top) \to t_{i}' =  t_{i}
\]
and
\[
\interpretation', \db' \models (\del(t_i) \lor \denot{\phi}_{\db, [\interpretation(t_i)]} \neq \top) \to \del(t_i')
\]

\end{enumerate}

Hence, $\interpretation', \db' \models \formula_i'$ by \ref{filter_phi_Q_b1} and \ref{filter_phi_Q_b2}.

\end{enumerate}

Thus, $\interpretation', \db' \models \formula \land \land_{i=1}^{n} \formula_i'$ by \ref{filter_phi_Q_a} and \ref{filter_phi_Q_b} and Theorem~\ref{lem:query1} for the inductive case $\query' = \filter_\phi(Q)$ is proved.

\item Inductive case: $\query' = \rename_T(\query)$.

Suppose that $\schema \vdash \query : [a_1, \ldots, a_p]$ by Figure~\ref{fig:rules-schema}; $\context \vdash \query \hookrightarrow R$ where $R$ has symbolic tuples $[t_1, \ldots, t_n]$, $\context \vdash \rename_T(\query) \hookrightarrow R'$ where $R'$ has symbolic tuples $[t_1', \ldots, t_n']$ by Figure~\ref{fig:rules-tuples}; $\schema, \context \vdash \query \leadsto \formula$, $\schema, \context \vdash \formula \land \land_{i=1}^{n} t_i' = t_i$ by Figure~\ref{fig:rules-encode-b}.

By the inductive hypothesis, we have $\interpretation(R) = \denot{\query}_{\db}$ and $\interpretation, \db \models \formula$.

Take $\interpretation'$ s.t. $\interpretation'(\del)(t_i') \leftrightarrow \interpretation'(\del)(t_i) \leftrightarrow \interpretation(\del)(t_i)$, $\interpretation'(t_i) = \interpretation(t_i)$, and $\interpretation'(t_i') = \\$ $ [(\text{rename}(R, \text{ToString}(a_1)), \denot{a_1}_{\db, [\interpretation(t_i)]}), \ldots, (\text{rename}(R, \text{ToString}(a_p)), \denot{a_p}_{\db, [\interpretation(t_i)]})]$.

Let $x_i \in \denot{\query}_\db$ where $1 \leq i \leq n$, then we have $t_{\ptr(i)} \in \interpretation(R)$ because $\interpretation(R) = \denot{\query}_\db$.
Also, by the semantics of $\smap$, we know
\[
\begin{array}{rcl}
\denot{\rho_T(\query)}_\db
&=& \smap(\denot{\query}_\db, \lambda x. \smap(x, \lambda (n, v). (v, \text{rename}(T, n)))) \\
&=& \smap([x_1, \ldots, x_n], \lambda x. \smap(x, \lambda (n, v). (v, \text{rename}(T, n)))) \\
&=& [ \\
&& \quad [(\text{rename}(T, \text{ToString}(a_1)), \denot{a_1}_{\db, [x_1]}), \ldots, \\
&& \qquad \qquad (\text{rename}(T, \text{ToString}(a_p)), \denot{a_p}_{\db, [x_1]})] \\
&& \quad \ldots, \\
&& \quad [(\text{rename}(T, \text{ToString}(a_1)), \denot{a_1}_{\db, [x_n]}), \ldots, \\
&& \qquad \qquad (\text{rename}(T, \text{ToString}(a_p)), \denot{a_p}_{\db, [x_n]})] \\
&& ] \\
\end{array}
\]
By the definition of $\ptr$, we know, for any $x_i' \in \denot{\rho_T(\query)}_\db$, there exists a symbolic tuple $t_{\ptr(i)}' \in R'$ s.t. $\interpretation'(t_{\ptr(i)}') = x_i'$.
Further, $\denot{\rho_T(\query)}_\db = [x_1', \ldots, x_n'] = [\interpretation'(t_{\ptr(1)}'), \ldots, $ $\interpretation'(t_{\ptr(n)}')] = \interpretation'([t_{\ptr(1)}', \ldots, t_{\ptr(n)}']) = \interpretation'(R')$.

Let $\db' = \db[T \mapsto \denot{\rho_T(\query)}_\db]$.
For any symbolic tuple $t_i \in R$ and $t_i \in R$, we know $\interpretation'(\del)(t_i') = \interpretation(\del)(t_i)$ and $t_i$ is equal to $t_i'$ in terms of every attribute.
Therefore, $\interpretation', \db' \models t_i' = t_i$ and $\interpretation', \db' \models \formula \land \land_{i=1}^{n} t_i' = t_i$.

Thus, Theorem~\ref{lem:query1} for the inductive case $\query' = \rename_T(Q)$ is proved.

\item Inductive case: $\query'' = \query \times \query'$. 

Suppose that $\schema \vdash \query : [a_1, \ldots, a_p]$, $\schema \vdash \query' : [a_1', \ldots, a_q']$ by Figure~\ref{fig:rules-schema}; $\context \vdash \query \hookrightarrow R$ where $R$ has symbolic tuples $[t_1, \ldots, t_n]$, $\context \vdash \query' \hookrightarrow R'$ where $R'$ has symbolic tuples $[t_1', \ldots, t_m']$, $\context \vdash \query \times \query' \hookrightarrow R''$ where $R''$ has symbolic tuples $[t_{1,1}'', \ldots, t_{n,m}'']$ by Figure~\ref{fig:rules-tuples}; $\schema, \context \vdash \query \leadsto \formula$, $\schema, \context \vdash \query' \leadsto \formula'$, $\schema, \context \vdash \query \times \query' \leadsto \formula \land \formula' \land \land_{i=1}^{n} \land_{j=1}^{m} \formula''_{i,j}$ where $\formula''_{i,j} = (\neg \del(t_i) \land \neg \del(t_j')) \to (\neg \del(t_{i,j}'') \land \land_{k=1}^{p} \denot{a_k}_{\schema, \context, [t_{i,j}'']} = \denot{a_k}_{\schema, \context, [t_i]} \land \land_{k=1}^{q} \denot{a_k}_{\schema, \context, [t_{i,j}'']} = \denot{a_k'}_{\schema, \context, [t_j']}) \land (\del(t_i) \lor \del(t_j') \to \del(t_{i,j}''))$ by Figure~\ref{fig:rules-encode-b}.

By the inductive hypothesis, we have $\interpretation(R) = \denot{\query}_{\db}$, $\interpretation(R') = \denot{\query'}_{\db}$ and $\interpretation, \db \models \formula \land \formula'$.

Take $\interpretation'$ s.t. $\interpretation'(\del)(t_i) \leftrightarrow \interpretation(\del)(t_i)$, $\interpretation'(t_i) = \interpretation(t_i)$, $\interpretation'(\del)(t_j') \leftrightarrow \interpretation(\del)(t_j')$, $\interpretation'(t_j') = \interpretation(t_j')$, $\interpretation'(\del)(t_{i,j}'') = \interpretation(\del)(t_i) \lor \interpretation(\del)(t_j')$ and $\interpretation'(t_{i,j}'') = \smerge(\interpretation(t_i), \interpretation(t_j'))$. 


Let $x_i \in \denot{\query}_\db$, $x_j' \in \denot{\query'}_\db$ where $1 \leq i \leq c \leq n$, $1 \leq j \leq d \leq m$, $c = |\denot{\query}_\db|$ and $d = |\denot{\query'}_\db|$, then we have $x_i \in \interpretation(R)$ and $x_j' \in \interpretation(R')$ because $\interpretation(R) = \denot{\query}_\db$ and $\interpretation(R') = \denot{\query'}_\db$, $\interpretation(\del)(x_i) = \interpretation(\del)(x_j') = \bot$.
Also, by the semantics of $\sfoldl$, we know
\[
\begin{array}{rcl}
\denot{\query \times \query'}_{\db} 
&=& \sfoldl(\lambda xs. \lambda x. \sappend(xs, \smap(\denot{\query'}_{\db}, \lambda y. \smerge(x, y))), [], \denot{\query}_{\db}) \\
&=& \sfoldl(\lambda xs. \lambda x. \sappend(xs, \\
&& \qquad [\smerge(x, x_1'), \ldots, \smerge(x, x_q')]), [], \denot{\query}_{\db}) \\
&=& [ \\
&& \qquad [\smerge(x_1, x_1'), \ldots, \smerge(x_1, x_q')], \\
&& \qquad \ldots, \\
&& \qquad [\smerge(x_p, x_1'), \ldots, \smerge(x_p, x_q')] \\
&& ] \\
&=& [ [x_{1,1}, \ldots, x_{1,d}], \ldots, [x_{c,1}, \ldots, x_{c,d}]] \\
&=& \interpretation'([[x_{1,1}, \ldots, x_{1,d}], \ldots, [x_{c,1}, \ldots, x_{c,d}]]) \\
\end{array}
\]
By the definition of $\ptr$, we know, for any $x_{i,j}'' \in \denot{\query \times \query'}_\db$, there exists a symbolic tuple $t_{\ptr(i),\ptr(j)}'' \in R''$ as $x_{i,j}$ is a concatenation of $x_i \in \query$ and $x_j \in \query'$.
Conversely, for any symbolic tuple $t_{i,j}'' \in R'' - \denot{\query \times \query'}_\db$, $\interpretation'(\del)(t_{i,j}'') = \bot$ because $\interpretation(\del)(t_i) \lor \interpretation(\del)(t_j') = \bot$.
Therefore, $\interpretation'(R'') = \denot{\query \times \query'}_\db$.

Let $\db' = \db[o_{\query''} \mapsto \denot{\query \times \query'}_\db]$.
For any $t_i \in R$ and $t_j' \in R'$, let us discuss them in four cases.
\begin{enumerate}[label=(\alph*)]

\item If $\interpretation(\del)(t_i) = \interpretation(\del)(t_j') = \top$, then, by the definition of $\interpretation'$, we have $\interpretation'(\del)(t_i) = \interpretation(\del)(t_i) = \top$ and $\interpretation'(\del)(t_j') = \interpretation(\del)(t_j') = \top$, $\interpretation'(\del)(t_{i,j}'') = \interpretation(\del)(t_i)  \lor \interpretation(\del)(t_j') \\$$= \top$, $\denot{a_k}_{\db', [\interpretation'(t_{i,j}'')]} = \denot{a_k}_{\db', [\interpretation(t_i)]}$ and $\denot{a_k'}_{\db', [\interpretation'(t_{i,j}'')]} = \denot{a_k'}_{\db', [\interpretation(t_j')]}$,
\[
\begin{array}{rcl}
\interpretation', \db' & \models & (\neg \del(t_i) \land \neg \del(t_j')) \to (\neg \del(t_{i,j}'') \land \land_{k=1}^{p} \denot{a_k}_{\db', [\interpretation'(t_{i,j}'')]} = \denot{a_k}_{\db', [\interpretation'(t_i)]} \\
&& \qquad \land \land_{k=1}^{q} \denot{a_k}_{\db', [\interpretation'(t_{i,j}'')]} = \denot{a_k'}_{\db', [\interpretation'(t_j')]})
\end{array}
\]
and
\[
\interpretation', \db' \models \del(t_i) \lor \del(t_j') \to \del(t_{i,j}'')
\]
Thus, $\interpretation', \db' \models \formula_{i,j}''$.

\item If $\interpretation(\del)(t_i) = \bot$ and $\interpretation(\del)(t_j') = \top$, then, by the definition of $\interpretation'$, we have $\interpretation'(\del)(t_i) = \interpretation(\del)(t_i) = \bot$ and $\interpretation'(\del)(t_j') = \interpretation(\del)(t_j') = \top$, $\interpretation'(\del)(t_{i,j}'') = \interpretation(\del)(t_i) \lor \interpretation(\del)(t_j') \\$$= \top$, $\denot{a_k}_{\db', [\interpretation'(t_{i,j}'')]} = \denot{a_k}_{\db', [\interpretation(t_i)]}$ and $\denot{a_k'}_{\db', [\interpretation'(t_{i,j}'')]} = \denot{a_k'}_{\db', [\interpretation(t_j')]}$,
\[
\begin{array}{rcl}
\interpretation', \db' & \models & (\neg \del(t_i) \land \neg \del(t_j')) \to (\neg \del(t_{i,j}'') \land \land_{k=1}^{p} \denot{a_k}_{\db', [\interpretation'(t_{i,j}'')]} = \denot{a_k}_{\db', [\interpretation'(t_i)]} \\
&& \qquad \land \land_{k=1}^{q} \denot{a_k}_{\db', [\interpretation'(t_{i,j}'')]} = \denot{a_k'}_{\db', [\interpretation'(t_j')]})
\end{array}
\]
and
\[
\interpretation', \db' \models \del(t_i) \lor \del(t_j') \to \del(t_{i,j}'')
\]
Thus, $\interpretation', \db' \models \formula_{i,j}''$.

\item If $\interpretation(\del)(t_i) = \top$ and $\interpretation(\del)(t_j') = \bot$, then, by the definition of $\interpretation'$, we have $\interpretation'(\del)(t_i) = \interpretation(\del)(t_i) = \top$ and $\interpretation'(\del)(t_j') = \interpretation(\del)(t_j') = \bot$, $\interpretation'(\del)(t_{i,j}'') = \interpretation(\del)(t_i) \lor \interpretation(\del)(t_j') \\$$= \top$, $\denot{a_k}_{\db', [\interpretation'(t_{i,j}'')]} = \denot{a_k}_{\db', [\interpretation(t_i)]}$ and $\denot{a_k'}_{\db', [\interpretation'(t_{i,j}'')]} = \denot{a_k'}_{\db', [\interpretation(t_j')]}$,
\[
\begin{array}{rcl}
\interpretation', \db' & \models & (\neg \del(t_i) \land \neg \del(t_j')) \to (\neg \del(t_{i,j}'') \land \land_{k=1}^{p} \denot{a_k}_{\db', [\interpretation'(t_{i,j}'')]} = \denot{a_k}_{\db', [\interpretation'(t_i)]} \\
&& \qquad \land \land_{k=1}^{q} \denot{a_k}_{\db', [\interpretation'(t_{i,j}'')]} = \denot{a_k'}_{\db', [\interpretation'(t_j')]})
\end{array}
\]
and
\[
\interpretation', \db' \models \del(t_i) \lor \del(t_j') \to \del(t_{i,j}'')
\]
Thus, $\interpretation', \db' \models \formula_{i,j}''$.

\item If $\interpretation(\del)(t_i) = \bot$ and $\interpretation(\del)(t_j') = \bot$, then, by the definition of $\interpretation'$, we have $\interpretation'(\del)(t_i) = \interpretation(\del)(t_i) = \bot$ and $\interpretation'(\del)(t_j') = \interpretation(\del)(t_j') = \bot$, $\interpretation'(\del)(t_{i,j}'') = \interpretation(\del)(t_i) \lor \interpretation(\del)(t_j') \\$$= \bot$, $\denot{a_k}_{\db', [\interpretation'(t_{i,j}'')]} = \denot{a_k}_{\db', [\interpretation(t_i)]}$ and $\denot{a_k'}_{\db', [\interpretation'(t_{i,j}'')]} = \denot{a_k'}_{\db', [\interpretation(t_j')]}$,
\[
\begin{array}{rcl}
\interpretation', \db' & \models & (\neg \del(t_i) \land \neg \del(t_j')) \to (\neg \del(t_{i,j}'') \land \land_{k=1}^{p} \denot{a_k}_{\db', [\interpretation'(t_{i,j}'')]} = \denot{a_k}_{\db', [\interpretation'(t_i)]} \\
&& \qquad \land \land_{k=1}^{q} \denot{a_k}_{\db', [\interpretation'(t_{i,j}'')]} = \denot{a_k'}_{\db', [\interpretation'(t_j')]})
\end{array}
\]
and
\[
\interpretation', \db' \models \del(t_i) \lor \del(t_j') \to \del(t_{i,j}'')
\]
Thus, $\interpretation', \db' \models \formula_{i,j}''$.

\end{enumerate}

Thus, $\interpretation', \db' \models \formula \land \formula' \land \land_{i=1}^{n} \land_{j=1}^{m} \formula_{i,j}''$ and Theorem~\ref{lem:query1} for the inductive case $\query'' = \query \times \query'$ is proved.

\item Inductive case: $\query'' = \query \ijoin_\phi \query'$. 

This case can be proved by the inductive cases $\query'' = \query \times \query'$ and $\query' = \filter_\phi(\query)$.

\item Inductive case: $\query'' = \query \ljoin_\phi \query'$. 

Suppose that $\schema \vdash \query : [a_1, \ldots, a_p]$, $\schema \vdash \query' : [a_1', \ldots, a_q']$ by Figure~\ref{fig:rules-schema}; $\schema \vdash \query \hookrightarrow R$ where $R$ has symbolic tuples $[t_1, \ldots, t_n]$, $\schema \vdash \query' \hookrightarrow R'$ where $R'$ has symbolic tuples $[t_1', \ldots, t_m']$ by Figure~\ref{fig:rules-schema}; $\schema, \context \vdash \query \ijoin_\phi \query' \leadsto \formula$, $\schema, \context \vdash \query \ljoin_\phi \query' \leadsto \formula \land \land_{i=1}^{n} \formula_i'$ where $\formula_i' = \land_{k=1}^{p} \denot{a_k}_{\schema, \context, [t_{i,m+1}'']} = \denot{a_k}_{\schema, \context, [t_i]} \land \land_{k=1}^{q} \denot{a_k'}_{\schema, \context, [t_{i,m+1}'']} = \nullv \land (\neg \del(t_i) \land \land_{j=1}^{m} \del(t_{i,j}'') \leftrightarrow \neg \del(t_{i,m+1}''))$ by Figure~\ref{fig:rules-encode-b}.

By the inductive hypothesis, we have $\interpretation(R) = \denot{\query}_\db$, $\interpretation(R') = \denot{\query'}_\db$ and $\interpretation, \db \models \formula$.

Take $\interpretation'$ s.t. $\interpretation'(\del)(t_i) \leftrightarrow \interpretation(\del)(t_i)$, $\interpretation'(t_i) = \interpretation(t_i)$, $\interpretation'(\del)(t_j') \leftrightarrow \interpretation(\del)(t_j')$, $\interpretation'(t_j') = \interpretation(t_j')$, 
$\interpretation'(t_{i,j}'') = \smerge(t_i, t_j')$ for $1 \leq i \leq n$ and $1 \leq j \leq m$,
$\interpretation'(t_{i,m+1}'') = \smerge(t_i, T_{\nullv})$ for $1 \leq i \leq n$,
$\interpretation'(\del)(t_{i,j}'') = \interpretation(\del)(t_i) \lor \interpretation(\del)(t_j') \lor \denot{\phi}_{\db, [\interpretation'(t_{i,j}'')]}$ for $1 \leq i \leq n$ and $\interpretation'(\del)(t_{i, m+1}'') = \interpretation'(\del)(t_i) \lor \neg \land_{j=1}^{m} \interpretation'(\del)(t_{i,j}'')$ for $1 \leq i \leq n$. 

Let $x_i \in \denot{\query}_\db$, $x_j' \in \denot{\query'}_\db$ where $1 \leq i \leq c \leq n$, $1 \leq j \leq d \leq m$, $c = |\denot{\query}_\db|$ and $d = |\denot{\query'}_\db|$, then 
$\interpretation(\del)(x_i) = \interpretation(\del)(x_j') = \bot$. Also, by the semantics of $\sfoldl$ and $\sappend$, we know
\[
\begin{array}{rcl}
\denot{\query \ljoin_\phi \query'}_\db
&=& \sfoldl(\lambda xs. \lambda x. \sappend(xs, \site(|v_1(x)| = 0, v_2(x), v_1(x)), [], \denot{\query}_\db)) \\
&=& \sfoldl(\lambda xs. \lambda x. \sappend(xs, \site(|v_1(x)| = 0, v_2(x), v_1(x)), [], \vec{x}_{1:c})) \\
&=& \sfoldl(\lambda xs. \lambda x. \sappend(xs, \site(|v_1(x)| = 0, v_2(x), v_1(x)), \\
&& \quad [\site(|v_1(x_1)| = 0, v_2(x_1), v_1(x_1))], \vec{x}_{2:c})) \\
&=& \ldots \\
&=& [ \\
&& \quad \site(|v_1(x_1)| = 0, v_2(x_1), v_1(x_1)) \\
&& \quad \ldots \\
&& \quad \site(|v_1(x_c)| = 0, v_2(x_c), v_1(x_c)) \\
&& ]
\end{array}
\]

For any entry $\site(|v_1(x_i)| = 0, v_2(x_i), v_1(x_i)) \in \denot{\query \ljoin_\phi \query'}$, let us discuss $|v_1(x_i)| = 0$ in two cases.
\begin{enumerate}[label=(\alph*)]
\item \label{ljoin_phi_query_times_query_a}
If $|v_1(x_i)| = 0$ is true, then $|\denot{[x_i] \ijoin_\phi \query'}_\db| = 0$ and $\site(|v_1(x_i)| = 0, v_2(x_i), v_1(x_i)) = v_2(x_i) = [\smerge(x_i, T_{\nullv})]$.
By the inductive case $\query'' = \query \times \query'$, we know $\denot{[x_i] \ijoin_\phi \query'}_\db = \filter_\phi(\denot{[x_i] \times \query'}_\db) = \filter_\phi(\denot{[x_i] \times [x_1', \ldots, x_d']}_\db) = \filter_\phi([x_{i,1}'', \ldots, x_{i,d}''])$ where $\interpretation(\del)(x_i) = \interpretation(\del)(x_j') = \bot$.
Therefore, for any $1 \leq j \leq d$, $|\denot{[x_i] \ijoin_\phi \query'}_\db| = 0 \Leftrightarrow \denot{\phi}_{\db, [x_{i,j}'']} \neq \top$ by Lemma~\ref{lem:predicate}, $\interpretation'(\del)(x_{i,j}'') = \top$ and $\interpretation'(\del)(x_{i, c+1}'') = \bot$,
Further, for $x_i \in \denot{\query}_\db$ and $x_j' \in \denot{\query'}_\db$, there exist two symbolic tuples $t_{\ptr(i)} \in R$ and $t_{\ptr(j)}' \in R'$ s.t. $\interpretation(t_{\ptr(i)}) = x_i$, $\interpretation(t_{\ptr(j)}') = x_j'$ and $\interpretation(\del)(x_i) = \interpretation(\del)(t_{\ptr(i)}) = \interpretation(\del)(x_j') = \interpretation(\del)(t_{\ptr(j)}') = \bot$.
Therefore, $\interpretation'(t_{\ptr(i),m+1}'') = \smerge(x_i, T_{\nullv})$ and all tuples $t_{\ptr(i),j}''$ are deleted where $1 \leq j \leq m$, i.e., $\land_{j=1}^m \interpretation'(\del)(t_{\ptr(i),j}'') = \top$ and $\interpretation'(\del)(t_{\ptr(i),m+1}'') = \interpretation'(\del)(t_{\ptr(i)}) \lor \neg \land_{j=1}^m \interpretation'(\del)(t_{\ptr(i),j}'') = \bot$.

\item \label{ljoin_phi_query_times_query_b}
If $|v_1(x_i)| = 0$ is false, then $|\denot{[x_i] \ijoin_\phi \query'}_\db| \neq 0$ and $\site(|v_1(x_i)| = 0, v_2(x_i), v_1(x_i)) = v_2(x_i) = [[x_{i,1}'', \ldots, x_{i,c}'']]$.
By the inductive case $\query'' = \query \times \query'$, we know $\denot{[x_i] \ijoin_\phi \query'}_\db = \filter_\phi(\denot{[x_i] \times \query'}_\db) = \filter_\phi([x_{i,1}'', \ldots, x_{i,d}''])$.
Therefore, $|\denot{[x_i] \ijoin_\phi \query'}_\db| \neq 0 \Leftrightarrow \denot{\phi}_{\db, [x_{i,j}'']} = \top$ for some $1 \leq j \leq d$ by Lemma~\ref{lem:predicate}, $\interpretation'(\del)(x_{i, c+1}'') = \top$.
Further, for $x_i \in \interpretation(R)$ and $x_j' \in \interpretation(R')$, there exist two symbolic tuples $t_{\ptr(i)} \in R$ and $t_{\ptr(j)}' \in R'$ s.t.  $\interpretation(t_{\ptr(i)}) = x_i$, $\interpretation(t_{\ptr(j)}') = x_j'$ and $\interpretation(\del)(t_{\ptr(i)}) = \interpretation(\del)(x_i) = \interpretation(\del)(t_{\ptr(j)}') = \interpretation(\del)(x_j') = \bot$.
Therefore, we have $\interpretation'(\del)(t_{\ptr(i),m+1}'') = \top$ and some tuples $t_{\ptr(i),j}''$ remain alive, i.e., $\interpretation'(\del)(t_{\ptr(i),\ptr(j)}'') = \bot$ for some $1 \leq j \leq m$ and $\interpretation'(t_{\ptr(i),\ptr(j)}'') = x_{i,j}''$.

\end{enumerate}

Besides \ref{ljoin_phi_query_times_query_a} and \ref{ljoin_phi_query_times_query_b}, we know, for any tuple $t \in R'' - \denot{\query \ljoin_\phi \query'}_{\db}$, $\interpretation'(\del)(t) = \top$. Thus, $\interpretation'(R'') = \denot{\query \ljoin_\phi \query'}_{\db}$.

Let $\db' = \db[o_{\query''} \mapsto \denot{\query \ljoin_\phi \query'}_\db]$.
$T_\nullv(A)$ denote a tuple that has the attributes from $A$ but all their values are \nullv. For instance, assume $A = [\text{"a"}, \text{"b"}]$, then $T_\nullv(A) = [(\text{"a"}, \nullv), $ $(\text{"b"}, \nullv)]$.
By the semantics of $\smerge$ and the definition of $T_\nullv$, $\interpretation', \db' \models \formula_j'$ and, therefore, $\interpretation', \db' \models \formula \land \land_{j=1}^m \formula_j'$

Hence, Theorem~\ref{lem:query1} for the inductive case $\query \ljoin_\phi \query'$ is proved.

\item Inductive case: $\query'' = \query \rjoin_\phi \query'$. 

Following a similar method in the inductive case $\query'' = \query \ljoin_\phi \query'$, we can find a $\interpretation'$  s.t. $\interpretation'(\query \ljoin_\phi \query') = \denot{\query \ljoin_\phi \query'}_\db$ and a $\db'$ s.t. $\interpretation', \db' \models \formula$.

\item Inductive case: $\query'' = \query \fjoin_\phi \query'$. 

Suppose that by $\schema \vdash \query : [a_1, \ldots, a_p]$, $\schema \vdash \query : [a_1', \ldots, a_q']$ by Figure~\ref{fig:rules-schema}; $\schema \vdash \query \hookrightarrow R$ where $R$ has symbolic tuples $[t_1, \ldots, t_n]$, $\schema \vdash \query' \hookrightarrow R'$ where $R'$ has symbolic tuples $[t_1', \ldots, t_m']$, $\schema \vdash \query \fjoin_\phi \query' \hookrightarrow R''$ where $R$ has symbolic tuples $[t_{1,1}'', \ldots, t_{n,m}'', t_{n+1,1}'', \ldots, t_{n+1,m}'', t_{1,m+1}'', \ldots, t_{n,m+1}'']$, by Figure~\ref{fig:rules-tuples}; $\schema, \context \vdash \query \ljoin_\phi \query' \leadsto \formula$, $\schema, \context \vdash \query \fjoin_\phi \query' \leadsto \formula \land \land_{j=1}^{m} \formula_j'$ where $\formula_j' = \land_{k=1}^p \denot{a_k}_{\schema, \context, [t_{n+1,j}'']} = \nullv \land \land_{k=1}^q \denot{a_k'}_{\schema, \context, [t_{n+1,j}'']} = \denot{a_k'}_{\schema, \context, [t_j']} \land (\neg \del(t_j') \land \land_{i=1}^{n} \del(t_{i,j}'') \leftrightarrow \neg \del(t_{n+1,j}''))$ by Figure~\ref{fig:rules-encode-b}.

By the inductive hypothesis, we have $\interpretation, \db \models \formula$, $\interpretation(R) = \denot{\query}_\db$, $\interpretation(R') = \denot{\query'}_\db$, $\interpretation([t_{1,1}'', \ldots,$ $ t_{n,m}'', t_{n+1,1}'', \ldots, t_{n+1,m}'']) = \denot{\query \ljoin_\phi \query'}_\db$.

Take $\interpretation'$ s.t. $\interpretation'(\del)(t_i) \leftrightarrow \interpretation(\del)(t_i)$, $\interpretation'(t_i) = \interpretation(t_i)$, $\interpretation'(\del)(t_j') \leftrightarrow \interpretation(\del)(t_j')$, $\interpretation'(t_i) = \interpretation(t_i)$, $\interpretation'(\del)(t_{i,j}'') \leftrightarrow \interpretation(\del)(t_{i,j}'')$ for $1 \leq i \leq n$ and $1 \leq j \leq m+1$, $\interpretation'(t_{i,j}'') = \interpretation(t_{i,j}'')$  for $1 \leq i \leq n$ and $1 \leq j \leq m+1$, $\interpretation'(\del)(t_{n+1,j}'') \leftrightarrow (\interpretation(\del)(t_j') \lor \neg \land_{i=1}^{n} \interpretation(\del)(t_{i,j}''))$ for $1 \leq j \leq m$, $\interpretation'(t_{n+1,j}'') = \smerge(T_\nullv, t_j')$ for $1 \leq j \leq m$.

Let $x_j' \in \denot{\query'}_\db$ where $1 \leq j \leq c \leq m$ and $c = |\denot{\query'}_\db|$, then we have $x_j' \in \interpretation(\query')$ because $\interpretation(R') = \denot{\query'}_\db$, $\interpretation(\del)(x_j') = \bot$.
For any tuple $x_j' \in \denot{\query'}_\db$, we have a symbolic tuple $t_{\ptr(j)}'$ s.t. $\interpretation(t_{\ptr(j)}') = \interpretation(x_j')$ and $\interpretation(\del)(t_{\ptr(j)}') = \bot$ by the definition of $\interpretation'$.
By the semantics of $\sfilter$, we know $|v| \leq |\denot{\query}_\db|$.
Also, for any tuple $x_j' \in \denot{\query'}_\db$, $\smerge(T_\nullv, x_j')$ exists iff $|\denot{\query \ijoin_\phi [x_j']}_\db| = 0$.
Therefore, $\smap(v, \lambda x. \smerge(T_\nullv, x)) = \interpretation'([t_{n+1, 1}'', \ldots, t_{n+1, m}''])$ by Figure~\ref{fig:rules-encode-b}.
Furthermore, we have 
\[
\begin{array}{rcl}
\denot{\query \fjoin_\phi \query'}_\db 
&=& \sappend(\denot{\query \ljoin_\phi \query'}_\db, \smap(v, \lambda x. \smerge(T_\nullv, x))) \\
&=& \sappend(\interpretation'([t_{1, 1}'', \ldots, t_{n, m+1}'']),\interpretation'([t_{n+1, 1}'', \ldots, t_{n+1, m}''])) \\
&=& \interpretation'(R'')
\end{array}
\] by the semantics of $\smap$.

Let $\db' = \db[o_{\query''} \mapsto \denot{\query \fjoin_\phi \query'}_\db]$. 
By the semantics of $\smerge$ and the definition of $T_\nullv$, $\interpretation', \db' \models \formula_j'$ and, therefore, $\interpretation', \db' \models \formula \land \land_{j=1}^m \formula_j'$

Hence, Theorem~\ref{lem:query1} for the inductive case $\query \fjoin_\phi \query'$ is proved.

\item Inductive case: $\query' = \text{Distinct}(\query)$. 

Suppose that $\schema, \context \vdash \query \hookrightarrow R$ where $R$ has symbolic tuples $[t_1, \ldots, t_n]$, $\schema, \context \vdash \text{Distinct}(\query) \hookrightarrow R'$ where $R'$ has symbolic tuples $[t_1', \ldots, t_n']$ by Figure~\ref{fig:rules-tuples}; $\schema, \context \vdash \query \leadsto \formula$, $\schema, \context \vdash \query \leadsto \formula \land \formula'$ where $\formula' = \text{Dedup}(\vec{t}, \vec{t'})$ by Figure~\ref{fig:rules-encode-b}.

By the inductive hypothesis, we have $\interpretation, \db \models \formula$ and $\interpretation(R) = \denot{\query}_\db$.

Take $\interpretation'$ s.t. $\interpretation'(\del)(t_i) \leftrightarrow \interpretation'(\del)(t_i)$, $\interpretation'(t_i) = \interpretation'(t_i)$, $\interpretation'(\del)(t_1) \leftrightarrow \interpretation'(\del)(t_1)$, $\interpretation'(\del)\\$$(t_i') \leftrightarrow \land_{j=1}^{i-1} (\interpretation(t_i) = \interpretation(t_j) \land \neg \interpretation(\del)(t_j))$ for $2 \leq j \leq i$, $\interpretation'(t_i') = \interpretation(t_i)$.

For any tuples $x_i'$ and $x_j'$ in $\denot{\text{Distinct}(\query)}_\db$ s.t. $x_i' \neq x_j'$, we have two symbolic tuples $t_{\ptr(i)}'$ and $t_{\ptr(j)}'$ in $R$ s.t. $x_i' = \interpretation(t_{\ptr(i)}')$ and $\interpretation(\del)(t_{\ptr(i)}') = \bot$, $x_j' = \interpretation(t_{\ptr(j)}')$ and $\interpretation(\del)(t_{\ptr(j)}') = \bot$.
Moreover, if $t_i$ is a unique symbolic tuple in $R$, then $\interpretation'(t_i') = \interpretation(t_i)$ and $\interpretation'(\del)(t_i') = \interpretation(\del)(t_i) = \bot$ where $t_i' \in R'$ by the definition of $\interpretation'$; otherwise, there exist some tuples $[t_{i_1}, \ldots, t_{i_l}]$ s.t. $\interpretation(\del)(t_{i_j}) = \bot$ and $t_{i_j} \in R$ for $1 \leq j \leq k$, and $\interpretation(t_{i_j}) = \interpretation(t_{i_k})$ for $1 \leq j \neq k \leq k$.
Further, by the definition of $\interpretation'(\del)(t_i') = \land_{j=1}^{i} \interpretation(t_i) = \interpretation(t_j)$, only the first symbolic tuple $t_{i_1}$ will be retained in $R'$ and the other tuples $[t_{i_2}, \ldots, t_{i_l}]$ will be removed, i.e., $\interpretation'(t_{i_1}) = \top$ and $\interpretation'(t_{i_j}) = \top$ for $2 \leq j \leq l$.
Therefore, $\interpretation'(R') = \denot{\text{Distinct}(\query)}_\db$.

Let $\db' = \db[o_{\query'} \mapsto \text{Distinct}(\query)]$.
By the definition of $\interpretation'$, we know $\interpretation', \db' \models t_1' = t_1 \land \land_{i=2}^{n} (t_i \in \vec{t}_{1:i-1} \to \del(t_i') \land t_i \not \in \vec{t}_{1:i-1} \to t_i' = t_i)$ and, therefore, $\interpretation', \db' \models \formula \land \formula'$.

Hence, Theorem~\ref{lem:query1} for the inductive case $\text{Distinct}(\query)$ is proved.

\item Inductive case: $\query'' = \query \cap \query'$. 

Suppose that $\schema \vdash \text{Distinct}(\query) \hookrightarrow R$ where $R$ ash symbolic tuples $[t_1, \ldots, t_n]$, $\schema \vdash \query' \hookrightarrow R'$ where $R'$ ash symbolic tuples $[t_1', \ldots, t_m']$, $\schema \vdash \query \cap \query' \hookrightarrow R''$ where $R''$ ash symbolic tuples $[t_1'', \ldots, t_n'']$ by Figure~\ref{fig:rules-tuples}; $\schema, \context \vdash \text{Distinct}(\query) \leadsto \formula$, $\schema, \context \vdash \query'' \leadsto \formula'$ , $\schema, \context \vdash \query \cap \query' \leadsto \formula \land \formula' \land \land_{i=1}^{n} \formula_i''$ where $\formula_i'' = t_i \in \vec{t'} \to t_i'' = t_i \land t_i \not \in \vec{t'} \to \del(t_i'')$ by Figure~\ref{fig:rules-encode-b}.

By the inductive hypothesis, we know $\interpretation, \db \models \formula \land \formula'$, $\interpretation(R) = \denot{\text{Distinct}(\query)}_\db$ and $\interpretation(R') = \denot{\query'}_\db$.

Take $\interpretation'$ s.t. $\interpretation'(t_i) = \interpretation(t_i)$, $\interpretation'(\del)(t_i) \leftrightarrow \interpretation(\del)(t_i)$, $\interpretation'(t_j') = \interpretation'(t_j)$, $\interpretation'(\del)(t_j') \leftrightarrow \interpretation'(\del)(t_j)$, $\interpretation'(t_i'') = \interpretation(t_i)$ and $\interpretation'(\del)(t_i'') \leftrightarrow \interpretation(\del)(t_i) \lor \neg \lor_{j=1}^m (\interpretation(t_i) = \interpretation(t_j') \land \neg \interpretation(\del)(t_j'))$.

For any tuple $x_i \in \denot{\text{Distinct}(\query)}_\db$, let us discuss $x_i \in \denot{\query'}_\db$ in two cases.
\begin{enumerate}[label=(\alph*)]
\item \label{query_cap_query_a}
If $x_i \in \denot{\query'}_\db$ is true, then we can find at least one tuple $x_j' \in \denot{\query'}_\db$ and its corresponding symbolic tuple $t_{\ptr(j)}' \in R'$ s.t. $\interpretation(\del)(x_i) = \interpretation(\del)(x_j') = \interpretation(\del)(t_{\ptr(j)}') = \bot$ and $\interpretation(x_i) = \interpretation(x_j') = \interpretation(t_{\ptr(j)}')$. 
By the definition of $\interpretation'$ and the semantics of $\sfilter$, we know there exist two symbolic tuples $t_{\ptr(i)}'' \in R''$ and $t_{\ptr(i)} \in R$ s.t. $\interpretation(\del)(t_{\ptr(i)}) = \bot$, $\interpretation'(t_{\ptr(i)}'') = \interpretation(t_{\ptr(i)}) = \interpretation(x_i)$, $\interpretation'(\del)(t_{\ptr(i)}'') = \interpretation(\del)(t_{\ptr(i)}) \lor \neg \lor_{j=1}^m (\interpretation(t_{\ptr(i)}) = \interpretation(t_j') \land \neg \interpretation(\del)(t_j')) = \bot$.

\item \label{query_cap_query_b}
If $x_i \in \denot{\query'}_\db$ is false, then $x_i$ will be removed by the semantics of $\sfilter$.
Assume the symbolic tuple $t_{\ptr(i)} \in R$ is corresponding to $x_i$, i.e., $\interpretation(t_{\ptr(i)}) = \interpretation(x_i)$ and $\interpretation(\del)(t_{\ptr(i)}) = \interpretation(\del)(x_i) =\bot$, then $\interpretation'(\del)(t_{\ptr(i)}'')= \interpretation(\del)(t_i) \lor \neg \lor_{j=1}^m (\interpretation(t_{\ptr(i)}) = \interpretation(t_j') \land  \neg \interpretation(\del)\\$$(t_j')) = \top$.

\end{enumerate}

By \ref{query_cap_query_a} and \ref{query_cap_query_b}, we have two symbolic tuples $t_{\ptr(i)} \in R$ and $t_{\ptr(i)}'' \in R''$ for any tuple $x_i \in \denot{\query \cap \query'}_\db$ s.t. $\interpretation'(\del)(t_{\ptr(i)}'') = \interpretation'(\del)(t_{\ptr(i)}) = \bot$ and $\interpretation'(t_{\ptr(i)}'') = \interpretation'(t_{\ptr(i)})$ iff $\interpretation(\del)(t_{\ptr(i)}) \lor \neg \lor_{j=1}^m (\interpretation(t_a) = \interpretation(t_j') \land \neg \interpretation(\del)(t_j'))$ is false.
For the other tuple $t'' \in R'' - \denot{\query \cap \query'}_\db$, $\interpretation'(\del)(t'') = \top$ because $\interpretation'(\del)(t) = \top$ or $\interpretation'(t) \not \in \denot{\query'}_\db$.
Therefore, $\interpretation'(R'') = \denot{\query \cap \query'}_\db$.

Let $\db' = \db[o_{\query''} \mapsto \denot{\query \cap \query'}_\db]$.
By the definition of $\interpretation'$, $\interpretation', \db' \models t_i \in \vec{t'} \to t_i'' = t_i \land t_i \not \in \vec{t'} \to \del(t_i'')$ and, therefore, $\interpretation', \db' \models \formula \land \formula' \land \land_{i=1}^{n} \formula''_i$

Hence, Theorem~\ref{lem:query1} for the inductive case $\query'' = \query \cap \query'$ is proved.

\item Inductive case: $\query'' = \query \setminus \query'$.

Suppose that $\schema \vdash \text{Distinct}(\query) \hookrightarrow R$ where $R$ ash symbolic tuples $[t_1, \ldots, t_n]$, $\schema \vdash \query' \hookrightarrow R'$ where $R'$ ash symbolic tuples $[t_1', \ldots, t_m']$, $\schema \vdash \query \setminus \query' \hookrightarrow R''$ where $R''$ ash symbolic tuples $[t_1'', \ldots, t_n'']$ by Figure~\ref{fig:rules-tuples}; $\schema, \context \vdash \text{Distinct}(\query) \leadsto \formula$, $\schema, \context \vdash \query'' \leadsto \formula'$ , $\schema, \context \vdash \query \setminus \query' \leadsto \formula \land \formula' \land \land_{i=1}^{n} \formula_i''$ where $\formula_i'' = t_i \not \in \vec{t'} \to t_i'' = t_i \land t_i \in \vec{t'} \to \del(t_i'')$ by Figure~\ref{fig:rules-encode-b}.

By the inductive hypothesis, we know $\interpretation(R) = \denot{\text{Distinct}(\query)}_\db$, $\interpretation(R') = \denot{\query'}_\db$ and $\interpretation, \db \models \formula \land \formula'$.

Take $\interpretation'$ s.t. $\interpretation'(t_i) = \interpretation(t_i)$, $\interpretation'(\del)(t_i) \leftrightarrow \interpretation(\del)(t_i)$, $\interpretation'(t_j') = \interpretation'(t_j)$, $\interpretation'(\del)(t_j') \leftrightarrow \interpretation'(\del)(t_j)$, $\interpretation'(t_i'') = \interpretation(t_i)$ and $\interpretation'(\del)(t_i'') \leftrightarrow \interpretation(\del)(t_i) \lor \lor_{j=1}^m (\interpretation(t_i) = \interpretation(t_j') \land \neg \interpretation(\del)\\$$(t_j'))$

For any tuple $x_i \in \denot{\text{Distinct}(\query)}_\db$, let us discuss the predicate  $x_i \in \denot{\query'}_\db$ in two cases.
\begin{enumerate}[label=(\alph*)]
\item \label{query_setminus_query_a}
If the predicate $x_i \in \denot{\query'}_\db$ is true, then we can find at least one tuple $x_j' \in \denot{\query'}_\db$ and its corresponding symbolic tuple $t_{\ptr(j)}'$ s.t. $\interpretation(\del)(x_i) = \interpretation(\del)(x_j') = \interpretation(\del)(x_{\ptr(j)}') = \bot$ and $x_i = x_j' = \interpretation(t_{\ptr(j)}')$. By the definition of $\interpretation'$ and the semantics of $\sfilter$, we have the symbolic tuples $t_{\ptr(i)}'' \in R''$ and $t_{\ptr(i)} \in R$ s.t. $\interpretation(\del)(t_{\ptr(i)}) = \bot$, $\interpretation'(t_{\ptr(i)}'') = \interpretation(t_{\ptr(i)}) = x_i$, $\interpretation'(\del)(t_{\ptr(i)}'') = \interpretation(\del)(t_{\ptr(i)}) \lor \lor_{j=1}^m (\interpretation(t_{\ptr(i)}) = \interpretation(t_j') \land \neg \interpretation(\del)(t_j')) = \top$.

\item \label{query_setminus_query_b}
If the predicate $x_i \in \denot{\query'}_\db$ is false, then $x_i$ will be retained by the semantics of $\sfilter$.
Assume the symbolic tuple $t_{\ptr(i)} \in R$ is corresponding to $x_i$, i.e., $\interpretation(t_{\ptr(i)}) = \interpretation(x_i)$ and $\interpretation(\del)(t_a) = \interpretation(\del)(x_i)$, then $\interpretation'(t_{\ptr(i)}'')= \neg \interpretation(t_i) \land \neg \lor_{j=1}^m (\interpretation(t_{\ptr(i)}) = \interpretation(t_j') \land \neg \interpretation(\del)(t_j')) = \bot$.

\end{enumerate}

By \ref{query_setminus_query_a} and \ref{query_setminus_query_b}, we have two symbolic tuples $t_i \in R$ and $t_i'' \in R''$ for any tuple $x \in \denot{\query \setminus \query'}_\db$ s.t. $\interpretation'(\del)(t_i'') = \interpretation'(\del)(t_i) = \bot$ and $\interpretation'(t_i'') = \interpretation'(t_i)$ iff $\interpretation(\del)(t_i) \lor \lor_{j=1}^m (\interpretation(t_i) = \interpretation(t_j') \land \neg \interpretation(\del)(t_j'))$ is false.
For the other tuple $t'' \in R'' \setminus \denot{\query \setminus \query'}_\db$, $\interpretation'(\del)(t'') = \top$ because $\interpretation'(\del)(t) = \top$ or $\interpretation'(t) \in \denot{\query'}_\db$.
Therefore, $\interpretation'(R'') = \denot{\query \setminus \query'}_\db$.

Let $\db' = \db[o_{\query''} \mapsto \denot{\query \setminus \query'}_\db]$.
By the definition of $\interpretation'$, $\interpretation', \db' \models t_i \not \in \vec{t'} \to t_i'' = t_i \land t_i \in \vec{t'} \to \del(t_i'')$ and, therefore, $\interpretation', \db' \models \formula \land \formula' \land \land_{i=1}^{n} \formula''_i$

Hence, Theorem~\ref{lem:query1} for the inductive case $\query'' = \query \setminus \query'$ is proved.

\item Inductive case: $\query'' = \query - \query'$.

Suppose that $\schema \vdash \query \hookrightarrow R$ where $R$ has symbolic tuples $[t_1, \ldots, t_n]$, $\schema \vdash \query' \hookrightarrow R'$ where $R'$ has symbolic tuples $[t_1', \ldots, t_m']$, $\schema \vdash \query - \query' \hookrightarrow R''$ where $R''$ has symbolic tuples $[t_1'', \ldots, t_n'']$ by Figure~\ref{fig:rules-tuples}; $\schema, \context \vdash \query \leadsto \formula$, $\schema, \context \vdash \query' \leadsto \formula'$, $\schema, \context \vdash \query - \query' \leadsto \formula \land \formula' \land \land_{i=1}^n \land_{j=1}^m \formula_{i,j}''$ where $\formula_{i,j}'' = \neg \del(t_i) \land \neg \text{Paired}(i,j) \to t_i'' = t_i \land \del(t_i) \lor \text{Paired}(i,j) \to \del(t_i'')$ by Figure~\ref{fig:rules-encode-c}.

Let the Paired function only register in $\interpretation$.~\footnote{We slightly abuse interpretation here since the Paired function only works in this case.}
Take $\interpretation'$ s.t. $\interpretation'(t_i) = \interpretation(t_i)$, $\interpretation'(\del)(t_i) \leftrightarrow \interpretation(\del)(t_i)$, $\interpretation'(t_j') = \interpretation(t_j')$, $\interpretation'(\del)(t_j') \leftrightarrow \interpretation(\del)(t_j')$, $\interpretation'(t_i'') = \interpretation(t_i)$, $\interpretation'(\del)(t_i'') \leftrightarrow \interpretation(\del)(t_i) \lor \lor_{j=1}^{m} \interpretation'(\text{Paired}(i,j))$.

By definition of $\denot{\query - \query'}_\db$, let $xs_k = \site(x' \in xs_{k-1}, xs_{k-1} - x, xs_{k-1})$ and $xs_0 = \denot{\query}_\db$. 
Obviously, the value of $xs_k$ will be updated iff $x_j' \in xs_{k-1}$.
Let us discuss the predicate $x_j' \in xs_{k-1}$ in two cases.
\begin{enumerate}[label=(\alph*)]
\item \label{query_except_all_query_a}
If the predicate $x_j' \in xs_{k-1}$ is true, then we have two corresponding symbolic tuples $t_i \in R$ and $t_{\ptr(j)}' \in R'$ s.t. $\interpretation(t_i) = \interpretation(t_{\ptr(j)}') = x_j' \in xs_{k-1}$ and $\interpretation(\del)(t_i) = \interpretation(\del)(t_{\ptr(j)}') = \interpretation(\del)(x_j') = \bot$.
Therefore, $\text{Paired}(i,j) = \top$ and $\text{Paired}(i,l) = \text{Paired}(k,j) = \bot$ for any $1 \leq k \neq i \leq n$ and $1 \leq l \neq j \leq m$ to follow the semantics of \sqlexceptall.
Further, we know $xs_{k} = xs_{k-1} - x$ and $\interpretation'(\del)(t_{\ptr(i)}'') = \top$.

\item \label{query_except_all_query_b}
If the predicate $x_j' \in xs_{k-1}$ is false, then we have a corresponding symbolic tuple $t_{\ptr(j)}' \in R'$ s.t. $\interpretation(t_{\ptr(j)}') = x_j' \not \in xs_{k-1}$, $\interpretation(\del)(t_{\ptr(j)}') = \interpretation(\del)(x_j) = \bot$.
Also, by the definition of the Paired function, we know $\text{Paired}(i,{\ptr(j)}) = \bot$ where $1 \leq i \leq n$ because $t_{\ptr(j)}'$ is never paired with previously paired tuples in $R$ and $\interpretation(x_i) \neq \interpretation(t_{\ptr(j)}')$.
Therefore, we have $xs_{k} = xs_{k-1}$.

\end{enumerate}

Let $xs' = [x_{i_1}, \ldots, x_{i_l}]$ be the deleted tuples from $\denot{\query}_\db$, $i_k$ the real index in $\denot{\query}_\db$ and $1 \leq k \leq l$.
By \ref{query_except_all_query_a} and \ref{query_except_all_query_b}, we have corresponding symbolic tuples $t_{\ptr(i_k)}'' \in R''$ s.t. $\interpretation'(\del)(t_{\ptr(i_k)}'') = \top$.
Further, for any symbolic tuple $t_i \in \denot{\query}_\db - xs'$, $\interpretation'(\del)(t_i'') = \bot$ and $\interpretation'(t_i'') = \interpretation(t_i)$; for the resting symbolic tuples $t \in R' - \denot{\query}_\db$, $\interpretation'(\del)(t'') = \top$ because $t \not \in \denot{\query}_\db$.
Therefore, $\interpretation'(R'') = \denot{\query - \query'}_\db$.


Let $\db' = \db[o_{\query''} \mapsto \denot{\query - \query'}_\db]$
For any tuple $xs_i' \in xs'$, there exist symbolic tuples $t_a \in R$ and $t_b' \in R'$ s.t. $\interpretation(xs_i') = \interpretation(t_a) = \interpretation(t_b')$, $\interpretation(\del)(xs_i') = \interpretation(\del)(t_a) =\interpretation(\del)(t_b') = \bot$ and $\text{Paired}(a, b) = \top$. 
Also, by the definition of the Paired function, we know $\text{Paired}(a, j) = \bot$ for any $j \neq b$ and $\text{Paired}(i, b) = \bot$ for any $i \neq a$.
Therefore, $\interpretation', \db \models \neg \del(t_a) \land \land_{j=1}^m \neg \text{Paired}(a,j) \to t_a'' = t_a \land \del(t_a) \lor \lor_{j=1}^m \neg \text{Paired}(a,j) \to \del(t_a'')$ and $\interpretation', \db' \models \formula \land \formula' \land \land_{i=1}^n \land_{j=1}^m \formula_{i,j}''$

Hence, Theorem~\ref{lem:query1} for the inductive case $\query'' = \query - \query'$ is proved.

\item Inductive case: $\query'' = \query \cup \query'$. 

Since $\query \cup \query' \Leftrightarrow \text{Distinct}(\query \cplus \query')$, this case can be proved by the inductive cases $\query''= \query \cplus \query'$ and $\query' = \text{Distinct}(\query \cplus \query')$.

\item Inductive case: $\query'' = \query \uplus \query'$. 

Suppose that $\schema \vdash \query \hookrightarrow R$ where $R$ has symbolic tuples $[t_1, \ldots, t_n]$, $\schema \vdash \query' \hookrightarrow R'$ where $R'$ has symbolic tuples $[t_1', \ldots, t_m']$, $\schema \vdash \query \uplus \query' \hookrightarrow R''$ where $R''$ has symbolic tuples $[t_1'', \ldots, t_{n+m}'']$ by Figure~\ref{fig:rules-tuples}; $\schema, \context \vdash \query \leadsto \formula$, $\schema, \context \vdash \query' \leadsto \formula'$, $\schema, \context \vdash \query \uplus \query' \leadsto \formula \land \formula' \land \formula''$ where $\formula'' = \land_{i=1}^n t_i'' = t_i \land \land_{j=n+1}^{n+m} t_j'' = t_{j-n}'$ by Figure~\ref{fig:rules-encode-b}.

By the inductive hypothesis, we know $\interpretation(R) = \denot{\query}_\db$, $\interpretation(R') = \denot{\query'}_\db$ and $\interpretation, \db \models \formula \land \formula'$.

Take $\interpretation'$ s.t. $\interpretation'(t_i) = \interpretation(t_i)$, $\interpretation'(\del)(t_i) \leftrightarrow \interpretation(\del)(t_i)$, $\interpretation'(t_j') = \interpretation(t_j')$, $\interpretation'(\del)(t_j') \leftrightarrow \interpretation(\del)(t_j')$, $\interpretation'(t_i'') = \interpretation(t_i)$ and $\interpretation'(\del)(t_i'') \leftrightarrow \interpretation'(\del)(t_i)$ for $1 \leq i \leq n$, $\interpretation'(t_j'') = \interpretation(t_{j-n}')$ and $\interpretation'(\del)(t_j'') \leftrightarrow \interpretation'(\del)(t_j')$ for $n+1 \leq i \leq n+m$.

Further, since $\interpretation(R) = \denot{\query}_\db$ and $\interpretation(R') = \denot{\query'}_\db$, 
\[
\begin{array}{rcl}
\denot{\query \uplus \query'}_\db 
&=& \sappend(\denot{\query}_\db, \denot{\query'}_\db) \\
&=& \sappend(\interpretation(R), \interpretation(R')) \\
&=& \sappend(\interpretation([t_1, \ldots, t_n]), \interpretation([t_1', \ldots, t_m'])) \\
&=& \sappend([\interpretation(t_1), \ldots, \interpretation(t_n)], [\interpretation(t_1'), \ldots, \interpretation(t_m')]) \\
&=& [\interpretation(t_1), \ldots, \interpretation(t_n), \interpretation(t_1'), \ldots, \interpretation(t_m')] \\
&=& [\interpretation'(t_1''), \ldots, \interpretation'(t_n''), \interpretation'(t_{n+1}''), \ldots, \interpretation'(t_{n+m}'')] \\
&=& \interpretation'([t_1'', \ldots, t_n'', t_{n+1}'', \ldots, t_{n+m}'']) \\
&=& \interpretation'(R'') 
\end{array}
\]

Let $\db' = \db[o_{\query''} \mapsto \denot{\query \uplus \query'}_\db ]$.
By the definition of $\interpretation'$, we have $\interpretation', \db' \models \land_{i=1}^n t_i'' = t_i \land \land_{j=n+1}^{n+m} t_j'' = t_{j-n}'$ and, therefore, $\interpretation', \db' \models \formula \land \formula' \land \formula''$.

Hence, Theorem~\ref{lem:query1} for the inductive case $\query'' = \query \uplus \query'$ is proved.

\item Inductive case: $\query'' = \query \cplus \query'$. 

Since $\query \cplus \query' \Leftrightarrow \query - (\query - \query')$, this case can be proved by the inductive case $\query''= \query - \query'$.

\item Inductive case: $\query' = \text{GroupBy}(\query, \vec{E}, L, \phi)$. 

Suppose that $\context \vdash \query \hookrightarrow R$ where $R$ has symbolic tuples $[t_1, \ldots, t_n]$, $\context \vdash \text{GroupBy}(\query, \vec{E}, L, \phi) \hookrightarrow R$ where $R$ has symbolic tuples $[t_1', \ldots, t_n']$ by Figure~\ref{fig:rules-tuples}; $\schema, \context \vdash \query \leadsto \formula$, $\schema, \context \vdash \text{GroupBy}(\query, \vec{E}, $ $L, \phi) \leadsto \formula \land \land_{i=1}^n \formula_i' \land \land_{i=1}^n \formula_i''$ where $\formula_i' = \Sigma_{j=1}^n \text{If}(g(t_i, j), 0, 1) = \text{If}(\del(t_i), 0, 1) \land \land_{j=1}^{i-1} g(t_i, j) = (\neg \del(t_i) \land g(t_i, j) \land \denot{\vec{E}})_{\schema, \context, [t_i]} = \denot{\vec{E}})_{\schema, \context, [t_ij}$ and $\formula_i'' = (g(t_i, j) \land \denot{\phi}_{\schema, \context, g^{-1}(i)} = \top \to \neg \del(t_i') \land \land_{k=1}^l \denot{a_k'}_{\schema, \context, [t_i']} =\denot{a_k'}_{\schema, \context, g^{-1}(i)}) \land (\neg g(t_i, j) \lor \denot{\phi}_{\schema, \context, g^{-1}(i)} \neq \top \to \del(t_i'))$ by Figure~\ref{fig:rules-encode-b}; $L = [a_1, \ldots a_l]$.

By the inductive hypothesis, we have $\interpretation(R) = \denot{\query}_\db$ and $\interpretation, \db \models \formula$.

Take $\interpretation'$ s.t. $\interpretation'(\del)(t_i) \leftrightarrow \interpretation(\del)(t_i)$, $\interpretation'(t_i) = \interpretation(t_i)$, $\interpretation'(\del)(t_i') \leftrightarrow \neg g(t_i, i)$, $\interpretation'(t_i) = \denot{L}_{\db, [g^{-1}(i)]}$ where $g$ checks whether $t_i$ is partitioned into the $i$-th group, and $g^{-1}(i)$ finds symbolic tuples from the $i$-th group.

For any tuple $x_i \in \denot{\query}_\db$, we have a symbolic tuple $t_{\ptr(i)}' \in R$ s.t. $\interpretation(\del)(t_{\ptr(i)}') = \interpretation(\del)(x_i) \\$$= \bot$ and $\interpretation(t_{\ptr(i)}') = \interpretation(x_i)$.
By the definition of $\sfoldl$, we know $\text{Dedup}(\query, \vec{E}) = \\$$\text{Distinct}([\denot{\vec{E}}_{\db, [x_1]}, \ldots, \denot{\vec{E}}_{\db, [x_c]}])$ where $c = |\denot{\query}_\db| \leq n$.
Assuming that $\text{Dedup}(\query, \vec{E}) = [ \denot{\vec{E}}_{\db, [x_{f(1)}]}, \ldots, \denot{\vec{E}}_{\db, [x_{f(d)}]} ]$ after executing Distinct operation where $d = |\text{Dedup}(\query, \vec{E})|$ $ \leq c$ and the function $f(i)$ is the original index for the $i$-th entry in $\text{Dedup}(\query, \vec{E})$.
Furthermore, by the definition of $ys = \smap(\text{Dedup}(\query, \vec{E}), \lambda y. \smap(\denot{\query}_\db, \lambda z. \text{Eval}(\vec{E}, [z]) = y))$, we have a set of tuples $G_i$ sharing the same values over $\vec{E}$ for any entry $e_{f(i)} \in \text{Dedup}(\query, \vec{E})$.

On the other hand, the definition of $g(t_i,j)$ guarantees two properties. The first property states that, for any existing symbolic tuple $t_i$, it only can be partitioned into one symbolic group $G_j'$ whose group index $j$ cannot be greater than $i$, i.e., $\sum_{j=1}^i \text{If}(g(t_i,j), 1, 0) = \text{If}(\del(t_i), 0, 1)$; otherwise, this deleted symbolic tuple $t_i$ never belongs to any group.
The second one ensures the symbolic tuple $t_i$ is partitioned into a previous symbolic group $G_j'$ iff $t_i$ and $t_j \in G_j'$ shares the same values over $\vec{E}$, i.e., $\land_{j=1}^{i-1} g(t_i, j) = (\neg \del(t_i) \land g(t_i, j) \land \denot{\vec{E}}_{\schema, \context, [t_i]} = \denot{\vec{E}}_{\schema, \context, [t_j]})$.
Therefore, we have symbolic groups $[G_1', \ldots, G_n']$ as every tuple might be not deleted and unique over $\vec{E}$.

Let us discuss the tuple $x_i \in \denot{\query}_\db$ in the two cases.
\begin{enumerate} [label=(\alph*)]
\item \label{groupby_query_a}
If no other tuple $x_j \in \denot{\query}_\db$ shares the same values over $\vec{E}$ with $x_i$, then only $x_i$ will be partitioned into a group $G_i$. Similarly, the symbolic tuple $t_{\ptr(i)}$ is separated into the symbolic group $G_i'$ because $g(t_{\ptr(i)}, i) = \top$ and $g(t_{\ptr(i)}, j) = \top$ where $1 \leq  j \leq i-1$.

\item \label{groupby_query_b}
If there exist some tuples in $\denot{\query}_\db$ share the same values over $\vec{E}$ with $x_i$. 
Let $x_j$ be the first tuple in the group $G_j$, then we have $\denot{E}_{\db, t_j} = \denot{E}_{\db, t_i}$ where $j < i$ and $G_j = \sappend(G_j, t_i)$.
In symbolic setting, we have $g(t_{\ptr(i)}, t_{\ptr(j)}) = \top$ because $g(t_j, j) = \top$ and $\denot{\vec{E}}_{\schema, \context, [t_i]} = \denot{\vec{E}}_{\schema, \context, [t_j]}$ is true.

\end{enumerate}

By \ref{groupby_query_a} and \ref{groupby_query_b}, we know if $\denot{\query}_\db$ is partitioned into $[G_1, \ldots, G_m]$ where $1 \leq m \leq |\denot{\query}_\db|$, there exists a symbolic group list s.t. $[G_{\ptr(1)}', \ldots, G_{\ptr(m)}']$ and $G_i = \interpretation(G_{\ptr(i)}')$ for $1 \leq i \leq m$ because $\denot{\query}_\db = \interpretation(R)$.
For any other group $G_j' \in [G_1', \ldots, G_n'] - [G_{\ptr(1)}', \ldots, G_{\ptr(m)}']$, there is no symbolic tuple inside, i.e., $\interpretation(g^{-1}(j)) = []$ by the definition of the function $g(t_i, j)$.
Therefore, $[G_1, \ldots, G_m] = \interpretation([G_{\ptr(1)}', \ldots, G_{\ptr(m)}'])$.
Furthermore, since we will perform the same operations (e.g., $\filter_\phi$ and $\proj$) over $[G_1, \ldots, G_m]$ and $[G_{\ptr(1)}', \ldots, G_{\ptr(m)}']$, 
\[
\begin{array}{rcl}
\denot{\text{GroupBy}(\query, \vec{E}, L, \phi)}_\db
&=& \smap(\sfilter([G_1, \ldots, G_m], \lambda xs. \denot{\phi}_{\db, xs} = \top), \lambda xs. \denot{L}_{\db, xs}) \\
&=& \smap(\sfilter(\interpretation([G_{\ptr(1)}', \ldots, G_{\ptr(m)}']), \lambda xs. \denot{\phi}_{\db, xs} = \top), \lambda xs. \denot{L}_{\db, xs}) \\
&=& \interpretation(\smap(\sfilter([G_{\ptr(1)}', \ldots, G_{\ptr(m)}'], \lambda xs. \denot{\phi}_{\db, xs} = \top), \lambda xs. \denot{L}_{\db, xs})) \\
&=& \interpretation(\text{GroupBy}(\query, \vec{E}, L, \phi)) \\
\end{array}
\]

Let $\db' = \db[\query' \mapsto \denot{\text{GroupBy}(\query, \vec{E}, L, \phi)}_\db]$, then we know $x_i' = \smap(\sfilter([G_i], \lambda xs. \denot{\phi}_{\db, xs}$ $=\top), \lambda xs. \denot{L}_{\db, xs})$ and $\interpretation(\del)(x_i)$ $ = \bot$ for any tuple $x_i' \in \denot{\text{GroupBy}(\query, \vec{E}, L, \phi)}_\db$.

If $x_i$ is the first tuple of $G_i$, then $t_{\ptr(i)}$ is the first symbolic tuple of $G_{\ptr(i)}'$, and $g^{-1}(\ptr(i)) = G_{\ptr(i)}'$.
Since $ys \in [G_1, \ldots, G_m]$, let us discuss the predicate $\denot{\phi}_{\db, G_i}$ in two cases.
\begin{enumerate}[label=(\alph*)]
\item 
If $\denot{\phi}_{\db, G_i} \neq \top$ is true, then the group $G_i$ will be removed and the tuple $\smap(\sfilter([G_i], $ $\lambda xs. \denot{\phi}_{\db, xs}=\top), \lambda xs. \denot{L}_{\db, xs})$ does not exist, $\interpretation'(\del)(t_{\ptr(i)}') = \top$,  $\denot{\phi}_{\db, \interpretation(G_{\ptr(i)}')} = \denot{\phi}_{\db, \interpretation(g^{-1}(\ptr(i)))} \neq \top$.
As $g(t_{\ptr(i)}, i)$ always holds for $G_i$, we know $g(t_{\ptr(i)}, i) \land\\$$ \denot{\phi}_{\db, \interpretation(g^{-1}(\ptr(i)))} = \top$ is false and $\neg g(t_{\ptr(i)}, i) \lor \denot{\phi}_{\db, \interpretation(g^{-1}(\ptr(i)))} \neq \top$ is true.
Therefore, $\interpretation', \db' \models \formula_{\ptr(i)}''$.

\item 
If $\denot{\phi}_{\db, G_i} \neq \top$ is false, then the group $G_i$ will projected out and the tuple $\smap(\sfilter([G_i], $ $\lambda xs. \denot{\phi}_{\db, xs}=\top), \lambda xs. \denot{L}_{\db, xs})$ does exist, $\denot{\phi}_{\db, \interpretation(G_{\ptr(i)}')} =\denot{\phi}_{\db, \interpretation(g^{-1}(\ptr(i)))} = \top$, \\ $\interpretation'(\del)(t_{\ptr(i)}') = \bot$.
As $g(t_{\ptr(i)}, i)$ always holds for $G_i$, we know $g(t_{\ptr(i)}, i) \land \\$$\denot{\phi}_{\db, \interpretation(g^{-1}(\ptr(i)))}  = \top$ is true and $\neg g(t_{\ptr(i)}, i) \lor \denot{\phi}_{\db, \interpretation(g^{-1}(\ptr(i)))} \neq \top$ is false.
Therefore, $\interpretation', \db' \models \formula_{\ptr(i)}''$.

\end{enumerate}
Also, by the definition of the function $g$, we know $\interpretation', \db' \models \land_{i=1}^n \formula_i'$.

Hence, $\interpretation', \db' \models \formula \land \land_{i=1}^n \formula_i' \land \land_{i=1}^n \formula_i''$ and Theorem~\ref{lem:query1} for the inductive case $\query' = \text{GroupBy}(\query, $ $\vec{E}, L, \phi)$ is proved.

\item Inductive case: $\query'' = \text{With}(\vec{\query}, \vec{T}, \query')$. 

Suppose that $\schema \vdash \query_i : \attributes_i$ by Figure~\ref{fig:rules-schema}; $\context \vdash \query_i \hookrightarrow R_i$ by Figure~\ref{fig:rules-tuples}; $\schema, \context \vdash \query_i \leadsto \formula_i$ by Figure~\ref{fig:rules-encode-b}.

By the inductive hypothesis, we have $\interpretation(R_i) = \denot{\query_i}_{\db}$ where $1 \leq i \leq n$ and $n = |\vec{\query}| = |\vec{T}|$; 
$\context' \vdash \query' \hookrightarrow R'$, $\schema', \context' \vdash \query' \leadsto \formula'$, $\interpretation(R') = \denot{\query'}_{\db_{n}}$ where $\schema' = \schema[T_1 \mapsto \attributes_1, \ldots, T_n \mapsto \attributes_n]$ and $\context' = \context[T_1 \mapsto \tuples_1, \ldots, T_n \mapsto \tuples_n]$.
Also, by Figure~\ref{fig:proof-sematics2}, we know $\db' = \db[R_i \mapsto \denot{\query}_\db ~|~ R_i \in R] = [R_1 \mapsto \denot{\query_1}_\db, \ldots, R_n \mapsto \denot{\query_n}_\db]$.

Let $\interpretation' = \interpretation$, $\db'' = \db'[o_{\query'} \mapsto \denot{\query'}_{\db}]$.
Obviously, we have $\denot{\text{With}(\vec{\query}, \vec{T}, \query')}_{\db''} = \denot{\query'}_{\db''} = \interpretation(R')$
 and $\interpretation', \db'' \models \land_{i=1}^{n} \formula_i \land \formula'$.

Hence, Theorem~\ref{lem:query1} for the inductive case $\query'' = \text{With}(\vec{\query}, \vec{T}, \query')$ is proved.

\item Inductive case: $\query' = \text{OrderBy}(\query, \vec{E}, b)$. 

Suppose that $\context \vdash \query \hookrightarrow R$ where $R$ has symbolic tuples $[t_1, \ldots, t_n]$, $\context \vdash \text{OrderBy}(\query, \vec{E}, b) \hookrightarrow R'$ where $R'$ has symbolic tuples $[t_1', \ldots, t_n']$ Figure~\ref{fig:rules-tuples};
$\schema, \context \vdash \query \leadsto \formula$ by Figure~\ref{fig:rules-encode-c};

By the inductive hypothesis, we have $\interpretation(R) = \denot{\query}_\db$ and $\interpretation, \db \models \formula$.

We create a fresh symbolic table $R''$ where $R''$ has symbolic tuples $[t_1'', \ldots, t_n'']$, and the function $\text{moveDelToEnd}(R, R'')$ can ''remove'' deleted symbolic tuples in $R$ by moving them to the end of a list $R$ and return the sorted list to $R''$. 
Further, $\text{moveDelToEnd}(\vec{t}, \vec{t}'')$ guarantees that there exists $m$ symbolic tuples s.t. $\forall i \leq m. \del(t_i) = \bot$ and $\forall m < i. \del(t_i) = \top$ and, for each tuple $t_i'' \in R''$, we can find its corresponding tuple $t_{\text{indexOf}(\vec{t}, i)}$ in $R$ where the $\text{indexOf}(R, i)$ function take as input a symbolic tuple list $R$ and an index $i$, and returns the corresponding index $\text{indexOf}(R, i)$ s.t. $\interpretation(t_i'') = \interpretation(t_{\text{indexOf}(\vec{t}, i)})$ and $\interpretation(\del)(t_i'') = \interpretation(\del)(t_{\text{indexOf}(\vec{t}, i)})$ because $t_i'' = t_{\text{indexOf}(\vec{t}, i)}$.

Take $\interpretation'$ s.t. $\interpretation'(\del)(t_i) \leftrightarrow \interpretation(\del)(t_i)$, $\interpretation'(t_i) = \interpretation(t_i)$, $\interpretation'(\del)(t_i'') = \interpretation(\del)(t_i'')$, $\interpretation'(t_i'') = \interpretation(t_{\text{indexOf}(\vec{t}, i)})$, $\interpretation'(t_i') = \interpretation(t_{\text{find}(i, \vec{E}, b)}'')$ and $\interpretation(\del)(t_i') \leftrightarrow \interpretation(\del)(t_{\text{find}(i, \vec{E}, b)}'')$ where the $\text{find}$ function seeks out the $i$-th smallest symbolic tuple $t_{\text{find}(i, \vec{E}, b)''} \in R''$ w.r.t. the expression list $\vec{E}$ and the ascending flag $b$.
Additionally, for those deleted tuples $[t_{m+1}, \ldots t_{n}]$ in $R''$, the $\text{find}$ function will not consider them because they are ''deleted''.
Hence, for any tuple $t_i'' \in [t_1'', \ldots, t_m'']$, we always can find a tuple $t_{\text{find}(i, \vec{E}, b)}''$ s.t. $\interpretation'(\del)(t_i') = \interpretation(\del)(t_{\text{find}(i, \vec{E}, b)}'') = \interpretation(\del)(t_{\text{find}(\text{indexOf}(\vec{t}, i), \vec{E}, b)}'') = \bot$ and $\interpretation'(t_i') = \interpretation(t_{\text{find}(i, \vec{E}, b)}'') = \interpretation(t_{\text{find}(\text{indexOf}(\vec{t}, i)}'', \vec{E}, b))$.

To compare two tuples (e.g., $x_1$ and $x_2$) w.r.t. the expression list $\vec{E}$ and the ascending flag $b$, let us discuss the value of $\text{Cmp}(\vec{E}, b, x_1, x_2)$ in three cases.
\begin{enumerate}[label=(\alph*)]
\item 
If $v(x_1, E_i) < v(x_2, E_i)$, then $\text{Cmp}(\vec{E}, b, x_1, x_2) = b \neq \top$ by the semantics of $\sfoldr$.
Furthermore, if $b = \top$ (i.e., in ascending order), then $\text{Cmp}(\vec{E}, b, x_1, x_2) = \bot$ and vice versa.

\item 
If $v(x_1, E_i) > v(x_2, E_i)$, then $\text{Cmp}(\vec{E}, b, x_1, x_2) = b \neq \bot$ by the semantics of $\sfoldr$.
Furthermore, if $b = \top$ (i.e., in ascending order), then $\text{Cmp}(\vec{E}, b, x_1, x_2) = \top$ and vice versa.

\item 
If $v(x_1, E_i) = v(x_2, E_i)$, then $\text{Cmp}(\vec{E}, b, x_1, x_2) = b \neq \sfoldr(\lambda E_{i}. \lambda y. \site(v(x_1, E_{i}) < v(x_2, E_{i}), \\$$\top, \site(v(x_1, E_{i}) > v(x_2, E_{i}), \bot, y)), \top, \vec{E}_{i+1:m})$ where $m=|\vec{E}|$, $y$ denotes the accumulator of $\sfoldr$ and $\text{Cmp}(\vec{E}, b, x_1, x_2)$ will be determined by the following expressions $[E_{i+1}, \ldots, E_m]$. Also, we initialize the accumulator $y$ to $\top$ which aims to find the first smallest (last biggest) tuple if $x_1$ and $x_2$ are equal w.r.t. $\vec{E}$ and b.
\end{enumerate}

Further, by the semantics of $\sfoldl$ and the value of $\text{Cmp}(\vec{E}, b, x_1, x_2)$, we know $\text{MinTuple}(\vec{E}, b,$ $ xs)$ is to find the first smallest (last biggest) tuple from $xs$ w.r.t. $\vec{E}$ and $b$.
\[
\begin{array}{rcl}
\text{MinTuple}(\vec{E}, b, xs)
&=& \sfoldl(\lambda x. \lambda y. \site(\text{Cmp}(\vec{E}, b, x_1, x_2), y, x), \shead(xs), xs) \\
&=& \sfoldl(\lambda x. \lambda y. \site(\text{Cmp}(\vec{E}, b, x_1, x_2), y, x), \\
&& \qquad \site(\text{Cmp}(\vec{E}, b, xs_1, xs_1), xs_1, xs_1), xs_{2:n}) \\
&=& \sfoldl(\lambda x. \lambda y. \site(\text{Cmp}(\vec{E}, b, x_1, x_2), y, x), xs_1, xs_{2:n}) \\
\end{array}
\]

Let us discuss $\text{MinTuple}(\vec{E}, b, xs)$ in two cases.
\begin{enumerate}[label=(\alph*)]
\item 
If $\text{Cmp}(\vec{E}, b, x_1, x_2)$ is true, the accumulator (i.e., the smallest or biggest tuple of $xs$) in $\text{MinTuple}(\vec{E}, b, xs)$ will be updated.

\item 
If $\text{Cmp}(\vec{E}, b, x_1, x_2)$ is false, the accumulator (i.e., the smallest or biggest tuple of $xs$) in $\text{MinTuple}(\vec{E}, b, xs)$ remains constant.

\end{enumerate}

Therefore, $\text{MinTuple}(\vec{E}, b, \denot{\query}_{\db} - xs)$ seeks out the smallest (biggest) tuple from $\denot{\query}_{\db} - xs$.
Let $a_i$ denote the $i$-th smallest (biggest) tuple from $\denot{\query}_{\db} - xs$ (i.e., $a_i = \text{MinTuple}(\vec{E}, b, \denot{\query}_{\db} - xs)$ where $xs = [a_i, \ldots, a_{i-1}]$) and $\vec{t}_{1:p} = \denot{\query}_\db$, then 
\[
\begin{array}{rcl}
\denot{\text{OrderBy}(\query, \vec{E}, b)}_{\db}
&=& \sfoldl(\lambda xs. \lambda \_. (\sappend(xs, \text{MinTuple}(\vec{E}, b, \denot{\query}_{\db} - xs))), [], \denot{\query}_{\db}) \\
&=& \sfoldl(\lambda xs. \lambda \_. (\sappend(xs, \text{MinTuple}(\vec{E}, b, \denot{\query}_{\db} - xs))), [], \vec{t}_{1:p}) \\
&=& \sfoldl(\lambda xs. \lambda \_. (\sappend(xs, \text{MinTuple}(\vec{E}, b, \denot{\query}_{\db} - xs))), [a_1], \vec{t}_{2:p}) \\
&& \ldots \\
&=& [a_1, \ldots, a_p] \\
\end{array}
\]

Since (i) $t_{\text{find}(i, \vec{E}, b)}''$ and $a_i$ are the $i$-th smallest (biggest) tuples from $\interpretation(R'')$ and $\denot{\query}_\db$, respectively,  (ii) $\interpretation(R'') = \denot{\query}_\db$, we have, for any $1 \leq i \leq p$, $\interpretation(t_{\text{find}(i, \vec{E}, b)}'') = \interpretation(a_i)$ and $\interpretation(\del)(t_{\text{find}(i, \vec{E}, b)}'') = \interpretation(\del)(a_i) = \bot$; and, for any $p \leq i \leq n$, $\interpretation(\del)(t_{\text{find}(i, \vec{E}, b)}'') = \top$.
Also by the definition of $\interpretation'$, we have $t_i' = t_{\text{find}(i, \vec{E}, b)}''$, $\interpretation(t_i') = \interpretation(a_i)$ and $\interpretation(\del)(t_i') = \interpretation(\del)(a_i) = \bot$ where $1 \leq i \leq p$. 
Thus, $\interpretation'(\text{OrderBy}(\query, \vec{E}, b)) = \denot{\text{OrderBy}(\query, \vec{E}, b)}_{\db}$.

Let $\db' = \db[o_{\text{OrderBy}(\query, \vec{E}, b)} \mapsto \denot{\text{OrderBy}(\query, \vec{E}, b)}_{\db}]$.
Obviously, $\interpretation', \db' \models \formula \land \formula' \land \formula''$ where $\formula' = \text{moveDelToEnd}(\vec{t}, \vec{t}'')$ and $\formula'' = \land_{i=1}^{n} t_i' = t_{\text{find}(i, \vec{E}, b)}''$ hold because $\interpretation'(t_i') = \interpretation'(t_{\text{find}(i, \vec{E}, b)}'') = \interpretation'(t_{\text{find}(\text{indexOf}(\vec{t}, i), \vec{E}, b)})$ and $\interpretation'(\del)(t_i') = \interpretation'(\del)(t_{\text{find}(i, \vec{E}, b)}) = \interpretation'(\del)$ $(t_{\text{find}(\text{indexOf}(\vec{t}, i), \vec{E}, b)})$.

Hence, Theorem~\ref{lem:query1} for the inductive case $\query' = \text{OrderBy}(\query, \vec{E}, b)$ is proved.

\end{enumerate}
\end{proof}

\begin{proof}[Proof of Theorem~\ref{lem:query2}:]
\revision{Let $\context$ be a symbolic database over schema $\schema$ and $\query$ be a query. Consider a symbolic relation $R$ and a formula $\formula$ such that $\context \vdash \query \hookrightarrow R$ and $\schema, \context, \vdash \query \leadsto \formula$. If $\formula$ is satisfiable, then for any satisfying interpretation $\interpretation$ of $\formula$, running $\query$ over the concrete database $\interpretation(\context)$ yields the relation $\interpretation(R)$, i.e., 
\[
(\context \vdash \query \hookrightarrow R) \land (\schema, \context \vdash \query \leadsto \formula) \land (\interpretation \models \formula) \Rightarrow \denot{\query}_{\interpretation(\context)} = \interpretation(R)
\]
}
\end{proof}

\begin{proof}[Proof.]
By structural induction on $\query$. 

\begin{enumerate}
\item Base case: $\query = T$. 

$R = \context(T)$ by Figure~\ref{fig:rules-tuples}.
$\denot{T}_{\interpretation(\context)} = \interpretation(\context)(T)$ by Figure~\ref{fig:proof-sematics2}.
Therefore, $\denot{T}_{\interpretation(\context)} = \interpretation(\context)(T) = \interpretation(\context(T)) = \interpretation(R)$.

Thus, Theorem~\ref{lem:query2} for the inductive case $\query = T$ is proved.

\item Inductive case: $\query' = \proj_{L}(\query)$. 

Suppose that $\schema \vdash \query : [a_1, \ldots, a_m]$ by Figure~\ref{fig:rules-schema}; 
$\schema \vdash \filter_L(\query) : [a_1', \ldots, a_l']$ where $l = |L|$ by Figure~\ref{fig:rules-schema};
$\context \vdash \query \hookrightarrow R$ where $R$ has symbolic tuples $[t_1, \ldots, t_n]$ by Figure~\ref{fig:rules-tuples};
$\schema, \context \vdash \query \leadsto \formula$ by Figure~\ref{fig:rules-encode-b};
$\denot{\proj_L(\query)}_{\interpretation(\context)} = \site(\text{hasAggr}(L), [\denot{L}_{\interpretation(\context), \denot{\query}_{\interpretation(\context)}}], \smap(\denot{Q}_{\interpretation(\context)}, \lambda x. \denot{L}_{\interpretation(\context), x}))$ by Figure~\ref{fig:proof-sematics2}.

Let us discuss $\text{hasAggr}(L)$ in two cases.
\begin{enumerate}[label=(\alph*)]
\item \label{proof_query2_proj_query_a}
If $\text{hasAggr}(L)$ is true, then $\context \vdash \proj_{L}(\query) \hookrightarrow R'$ where $R'$ only has one symbolic tuple $[t_1']$ by Figure~\ref{fig:rules-tuples}, $\schema, \context \vdash \proj_{L}(\query) \leadsto \formula \land \land_{i=1}^{l} \formula_{i}' \land (\del(t_1') \leftrightarrow \land_{i=1}^n \del(t_i))$ where $\formula_{i}' = \denot{a_j'}_{\schema, \context, [t_1']} = \denot{a_j'}_{\schema, \context, \vec{t}}$ by Figure~\ref{fig:rules-encode}.

Take $\interpretation$ s.t. $\interpretation \models \formula \land \land_{i=1}^{l} \formula_{i}' \land \neg \del(t_1')$.

By Lemma~\ref{lem:attribute}, we know $\interpretation(\denot{A}_{\schema, \context, \tuples}) = \denot{A}_{\interpretation(\context), \interpretation(\tuples)}$.
Also, by Figure~\ref{fig:proof-sematics2}, we have 
\[
\begin{array}{rcl}
\denot{\proj_L(\query)}_{\interpretation(\context)} 
&=& [\denot{L}_{\interpretation(\context), \denot{\query}_{\interpretation(\context)}}] = [\denot{L}_{\interpretation(\context), \interpretation([t_1, \ldots, t_n])}] \\
&=& [(\text{ToString}(a_1'), \denot{a_1'}_{\interpretation(\context), \interpretation([t_1, \ldots, t_n])}), \ldots, (\text{ToString}(a_l'), \denot{a_l'}_{\interpretation(\context), \interpretation([t_1, \ldots, t_n])})] \\
&=& [(\text{ToString}(a_1'), \denot{a_1'}_{\interpretation(\context), [\interpretation(t_1')]}), \ldots, (\text{ToString}(a_l'), \denot{a_l'}_{\interpretation(\context), [\interpretation(t_1')]})] \\
&=& [\denot{L}_{\interpretation(\context), [\interpretation(t_1')]}] = \interpretation(R')
\end{array}
\]

\item \label{proof_query2_proj_query_b}
If $\text{hasAggr}(L)$ is false, then $\context \vdash \proj_{L}(\query) \hookrightarrow R'$ where $R'$ has symbolic tuples $[t_1', \ldots, t_n']$ by Figure~\ref{fig:rules-tuples},  $\schema, \context \vdash \proj_{L}(\query) \leadsto \formula \land \land_{i=1}^{n} \formula_{i}'$ where $\formula_{i}' = \land_{j=1}^{l} \denot{a_j'}_{\schema, \context, [t_i']} = \denot{a_j'}_{\schema, \context, [t_i]} \land \del(t_i') \leftrightarrow \del(t_i)$ by Figure~\ref{fig:rules-encode}.

Take $\interpretation$ s.t. $\interpretation \models \formula \land \land_{i=1}^{n} \formula_{i}'$.

By Figure~\ref{fig:proof-sematics2}, we have 
\[
\begin{array}{rcl}
\denot{\proj_L(\query)}_{\interpretation(\context)} 
&=& \smap(\denot{\query}_{\interpretation(\context)}, \lambda x. \denot{L}_{\interpretation(\context) ,x}) \\
&=& \smap(\interpretation([t_1, \ldots, t_n]), \lambda x. \denot{L}_{\interpretation(\context) ,x}) \\
&=& [\denot{L}_{\interpretation(\context), [\interpretation(t_1)]}, \ldots, \denot{L}_{\interpretation(\context), [\interpretation(t_n)]}] \\
&=& [\denot{L}_{\interpretation(\context), [\interpretation(t_1')]}, \ldots, \denot{L}_{\interpretation(\context), [\interpretation(t_n')]}] \\
&=& \interpretation(R')
\end{array}
\]

\end{enumerate}

Thus, $\denot{\proj_L(\query)}_{\interpretation(\context)} = \interpretation(R)$ by \ref{proof_query2_proj_query_a} and \ref{proof_query2_proj_query_b} and Theorem~\ref{lem:query2} for the inductive case $\query' = \proj_L(\query)$ is proved.

\item Inductive case: $\query = \filter_{\phi}(\query)$. 

Suppose that $\context \vdash \query \hookrightarrow R$ where $R$ has symbolic tuples $[t_1, \ldots, t_n]$, $\context \vdash \filter_{\phi}(\query) \hookrightarrow R'$ where $R'$ has symbolic tuples $[t_1', \ldots, t_n']$ by Figure~\ref{fig:rules-tuples}; $\schema, \context \vdash \query \leadsto \formula \land \land_{i=1}^n \formula_i'$ where $\formula_i' = (\neg \del(t_i) \land \denot{\phi}_{\schema, \context, [t_i]} = \top) \to t_i' = t \land (\del(t_i) \lor \denot{\phi}_{\schema, \context, [t_i]} \neq \top) \to \del(t_i')$ by Figure~\ref{fig:rules-encode-b}.

Take $\interpretation$ s.t. $\interpretation \models \formula \land \land_{i=1}^n \formula_i'$.

By Lemma~\ref{lem:predicate}, we know $\interpretation(\denot{\phi}_{\schema, \context, [x]}) = \denot{\phi}_{\interpretation(\context), \interpretation([x])}$ and, therefore, 
\[
\begin{array}{rcl}
\denot{\filter_\phi(\query)}_{\interpretation(\context)} 
&=& \sfilter(\denot{\query}_{\interpretation(\context)}, \lambda x. \denot{\phi}_{\interpretation(\context), [x]} = \top) \\
&=& \sfilter(\interpretation([t_1, \ldots, t_n]), \lambda x. \denot{\phi}_{\interpretation(\context), [x]} = \top) \\
\end{array}
\]

For any tuple $t_i \in R$, let us discuss $\del(t_i)$ in two cases.
\begin{enumerate}[label=(\alph*)]
\item \label{proof_query2_filter_query_a}
If $\interpretation(\del)(t_i) = \top$, then we know $\sfilter([\interpretation(t_i)], \lambda x. \denot{\phi}_{\interpretation(\context), [x]} = \top)$ is deleted because of the semantics of $\sfilter$, and $\interpretation(\del)(t_i') = \top$ because of $\del(t_i') \leftrightarrow \del(t_i) \lor \denot{\phi}_{\interpretation(\context), [\interpretation(t_i)]}$.

\item \label{proof_query2_filter_query_b}
If $\interpretation(\del)(t_i) = \bot$, then we know the existence of $\sfilter([\interpretation(t_i)], \lambda x. \denot{\phi}_{\interpretation(\context), [x]} = \top)$ depends on $\denot{\phi}_{\interpretation(\context), [\interpretation(t_i)]} = \top$.
Let us discuss $\denot{\phi}_{\interpretation(\context), [\interpretation(t_i)]}$ in two cases.

\begin{enumerate}[label=(b\arabic*)]
\item 
If $\denot{\phi}_{\interpretation(\context), [\interpretation(t_i)]} = \top$ is true, then $\sfilter([\interpretation(t_i)], \lambda x. \denot{\phi}_{\interpretation(\context), [x]} = \top)$ exists and \\ $\interpretation(\del)(t_i') = \bot$ because $\del(t_i) \lor \denot{\phi}_{\schema, \context, [t_i]} = \top \to \del(t_i')$.
Therefore, $\sfilter([\interpretation(t_i)],$ $\lambda x. \denot{\phi}_{\interpretation(\context), [x]}$ $  = \top) = [\interpretation(t_i')]$.

\item 
If $\denot{\phi}_{\interpretation(\context), [\interpretation(t_i)]} = \top$ is false, then $\sfilter([\interpretation(t_i)], \lambda x. \denot{\phi}_{\interpretation(\context), [x]} = \top)$ is deleted and $\interpretation(\del)$ $(t_i') = \top$ because $\del(t_i) \lor \denot{\phi}_{\schema, \context, [t_i]} = \top \to \del(t_i')$.
Therefore, $\sfilter([\interpretation(t_i)],$ $  \lambda x. \denot{\phi}_{\interpretation(\context), [x]} = \top) = [\interpretation(t_i')]$.

\end{enumerate}

\end{enumerate}

Thus, $\denot{\filter_\phi(\query)}_{\interpretation(\context)} = \interpretation(R)$ by \ref{proof_query2_filter_query_a} and \ref{proof_query2_filter_query_b} and Theorem~\ref{lem:query2} for the inductive case $\query' = \filter_\phi(\query)$ is proved.

\item Inductive case: $\query' = \rename_T(\query)$. 

Suppose that $\context \vdash \query \hookrightarrow R$ where $R$ has symbolic tuples $[t_1, \ldots, t_n]$, $\context \vdash \rename_T(\query) \hookrightarrow R'$ where $R'$ has symbolic tuples $[t_1', \ldots, t_n']$ by Figure~\ref{fig:rules-tuples}; $\schema, \context \vdash \query \leadsto \formula$, $\schema, \context \vdash \formula \land \land_{i=1}^{n} t_i' = t_i$ by Figure~\ref{fig:rules-encode-b}.

Take $\interpretation$ s.t. $\interpretation \models \formula \land \land_{i=1}^{n} t_i' = t_i$.

$\denot{\rename_T(\query)}_{\interpretation(\context)} = \denot{\query}_{\interpretation(\context)[T \mapsto \denot{\query}_{\interpretation(\context)}]} = \denot{\query}_{\interpretation(\context)} = \interpretation([t_1, \ldots, t_n]) = \interpretation(R)$.

Thus, Theorem~\ref{lem:query2} for the inductive case $\query' = \rename_T(\query)$ is proved.

\item Inductive case: $\query'' = \query \times \query'$. 

Suppose that $\schema \vdash \query : [a_1, \ldots, a_p]$, $\schema \vdash \query' : [a_1', \ldots, a_q']$ by Figure~\ref{fig:rules-schema}; $\context \vdash \query \hookrightarrow R$ where $R$ has symbolic tuples $[t_1, \ldots, t_n]$, $\context \vdash \query' \hookrightarrow R'$ where $R'$ has symbolic tuples $[t_1', \ldots, t_m']$, $\context \vdash \query \times \query' \hookrightarrow R''$ where $R''$ has symbolic tuples $[t_{1,1}'', \ldots, t_{n,m}'']$ by Figure~\ref{fig:rules-tuples}; $\schema, \context \vdash \query \leadsto \formula$, $\schema, \context \vdash \query' \leadsto \formula'$, $\schema, \context \vdash \query \times \query' \leadsto \formula \land \formula' \land \land_{i=1}^{n} \land_{j=1}^{m} \formula''_{i,j}$ where $\formula''_{i,j} = (\neg \del(t_i) \land \neg \del(t_j')) \to (\neg \del(t_{i,j}'') \land \land_{k=1}^{p} \denot{a_k}_{\schema, \context, [t_{i,j}'']} = \denot{a_k}_{\schema, \context, [t_i]} \land \land_{k=1}^{q} \denot{a_k}_{\schema, \context, [t_{i,j}'']} = \denot{a_k'}_{\schema, \context, [t_j']}) \land (\del(t_i) \lor \del(t_j') \to \del(t_{i,j}''))$ by Figure~\ref{fig:rules-encode-b}.

Take $\interpretation$ s.t. $\interpretation \models \formula \land \formula' \land \land_{i=1}^{n} \land_{j=1}^{m} \formula''_{i,j}$.

By Figure~\ref{fig:proof-sematics2}, we have 
\[
\begin{array}{rcl}
\denot{\query \times \query'}_{\interpretation(\context)} 
&=& \sfoldl(\lambda xs. \lambda x. \sappend(xs, \smap(\denot{Q_2}_{\interpretation(\context)}, \lambda y. \smerge(x, y))), [], \denot{Q_1}_{\interpretation(\context)}) \\
&=& [ \\
&& \quad [\smerge(\interpretation(t_1), \interpretation(t_1')), \ldots, \smerge(\interpretation(t_1), \interpretation(t_m'))] \\
&& \quad \ldots \\
&& \quad [\smerge(\interpretation(t_n), \interpretation(t_1')), \ldots, \smerge(\interpretation(t_n), \interpretation(t_m'))] \\
&& ] \\
\end{array}
\]

For any tuples $t_i \in R$ and $t_j' \in R'$, let us discuss $\interpretation(\del)(t_i) \lor \interpretation(\del)(t_j')$ in two cases.
\begin{enumerate}[label=(\alph*)]
\item \label{proof_query2_times_query_a}
If $\interpretation(\del)(t_i) \lor \interpretation(\del)(t_j') = \top$, then we know $\smerge(\interpretation(t_i), \interpretation(t_j'))$ is deleted by the semantics of $\smerge$ and $\interpretation(\del)(t_{i,j}'') = \top$ because of $\del(t_i) \lor \del(t_j') \to \del(t_{i,j}'')$.
Therefore, $\smerge(\interpretation(t_i), \interpretation(t_j')) = \interpretation(t_{i,j}'')$.

\item \label{proof_query2_times_query_b}
If $\interpretation(\del)(t_i) \lor \interpretation(\del)(t_j') = \bot$, then we know $\smerge(\interpretation(t_i), \interpretation(t_j')) = \interpretation(t_{i,j}'')$ by the semantics of $\smerge$ and $\interpretation(\del)(t_{i,j}'') = \top$ because of $\formula_{i,j}''$. 
\end{enumerate}

Hence, $\denot{\query \times \query'}_{\interpretation(\context)} = \interpretation(R)$ by \ref{proof_query2_times_query_a} and \ref{proof_query2_times_query_b} and Theorem~\ref{lem:query2} for the inductive case $\query'' = \query \times \query'$ is proved.

\item Inductive case: $\query'' = \query \ijoin_\phi \query'$. 

This case can be proved by the inductive cases $\query'' = \query \times \query'$ and $\query' = \filter_\phi(\query)$.

\item Inductive case: $\query'' = \query \ljoin_\phi \query'$. 

Suppose that $\schema \vdash \query : [a_1, \ldots, a_p]$, $\schema \vdash \query' : [a_1', \ldots, a_q']$ by Figure~\ref{fig:rules-schema}; $\schema \vdash \query \hookrightarrow R$ where $R$ has symbolic tuples $[t_1, \ldots, t_n]$, $\schema \vdash \query' \hookrightarrow R'$ where $R'$ has symbolic tuples $[t_1', \ldots, t_m']$, $\schema \vdash \query \ijoin_\phi \query' \hookrightarrow R$ where $R''$ has symbolic tuples $[t_{1,1}'', \ldots, t_{n, m}'', t_{1, m+1}'', \ldots, t_{n, m+1}'']$ by Figure~\ref{fig:rules-schema}; $\schema, \context \vdash \query \ijoin_\phi \query' \leadsto \formula$, $\schema, \context \vdash \query \ljoin_\phi \query' \leadsto \formula \land \land_{i=1}^{n} \formula_i'$ where $\formula_i' = \land_{k=1}^{p} \denot{a_k}_{\schema, \context, [t_{i,m+1}'']} = \denot{a_k}_{\schema, \context, [t_i]} \land \land_{k=1}^{q} \denot{a_k'}_{\schema, \context, [t_{i,m+1}'']} = \nullv \land (\neg \del(t_i) \land \land_{j=1}^{m} \del(t_{i,j}'') \leftrightarrow \neg \del(t_{i,m+1}''))$ by Figure~\ref{fig:rules-encode-b}.

Take $\interpretation$ s.t. $\interpretation \models \formula \land \land_{i=1}^{n} \formula_i'$.

By Figure~\ref{fig:proof-sematics2}, we have 
\[
\begin{array}{rcl}
\denot{\query \ljoin_\phi \query'}_{\interpretation(\context)}
&=& \sfoldl(\lambda xs. \lambda x. \sappend(xs, \site(|v_1(x)| = 0, v_2(x), v_1(x)), [], \denot{\query}_{\interpretation(\context)})) \\
&=& \sfoldl(\lambda xs. \lambda x. \sappend(xs, \site(|v_1(x)| = 0, v_2(x), v_1(x)), [], \interpretation(\vec{t}_{1:n}))) \\
&=& \sfoldl(\lambda xs. \lambda x. \sappend(xs, \site(|v_1(x)| = 0, v_2(x), v_1(x)), \\
&& \quad [\site(|v_1(\interpretation(t_1))| = 0, v_2(\interpretation(t_1)), v_1(\interpretation(t_1)))], \interpretation(\vec{t}_{2:n}))) \\
&=& \ldots \\
&=& [ \\
&& \quad \site(|v_1(\interpretation(t_1))| = 0, v_2(\interpretation(t_1)), v_1(\interpretation(t_1))) \\
&& \quad \ldots \\
&& \quad \site(|v_1(\interpretation(t_n))| = 0, v_2(\interpretation(t_n)), v_1(\interpretation(t_n))) \\
&& ]
\end{array}
\]

For any entry $\site(|v_1(\interpretation(t_i))| = 0, v_2(\interpretation(t_i)), v_1(\interpretation(t_i))) \in \denot{\query \ljoin_\phi \query'}$, let us discuss $|v_1(\interpretation(t_i))| = 0$ in two cases.

\begin{enumerate}[label=(\alph*)]
\item \label{proof_query2_ljoin_query_a}
If $|v_1(\interpretation(t_i))| = 0$ is true, then $|\denot{[\interpretation(t_i)] \ijoin_\phi \query'}_{\interpretation(\context)}| = 0$ and $\site(|v_1(\interpretation(t_i))| = 0, v_2(\interpretation(t_i)), \\$$v_1(\interpretation(t_i))) = v_2(\interpretation(t_i)) = [\smerge(\interpretation(t_i), T_{\nullv})]$.

Further, if $\interpretation(\del)(t_i) = \top$, then $\site(|v_1(\interpretation(t_i))| = 0, v_2(\interpretation(t_i)), v_1(\interpretation(t_i))) = \smerge$ $(\interpretation(t_i), \\$$T_{\nullv})$ is deleted by the semantics of $\smerge$. Meanwhile, $\interpretation([t_{i,1}'', \ldots, t_{i,m}'', t_{i,m+1}''])$ are all deleted because of $\del(t_{i,j}'') \leftrightarrow \del(t_i) \lor \del(t_j') \lor \denot{\phi}_{\interpretation(\context), [\interpretation(\smerge(t_i, t_j'))]} \neq \top$ for $1 \leq j \leq m$ and $\del(t_{i, m+1}'') \leftrightarrow \del(t_i) \lor \neg \land_{j=1}^m \del(t_{i,j}'')$.
Therefore, $\site(|v_1(\interpretation(t_i))| = 0, v_2(\interpretation(t_i)),\\$$ v_1(\interpretation(t_i))) = \interpretation$ $([t_{i,1}'', \ldots, t_{i,m}'', t_{i,m+1}''])$.

Otherwise, if $\interpretation(\del)(t_i) = \bot$, then $\smerge(\interpretation(t_i), T_{\nullv}) = \interpretation(t_{i, m+1}'')$.
Also, since $\filter_\phi(\interpretation($ $[\smerge(t_i, t_1'), \ldots, \smerge(t_i, t_m')]))$ and $\interpretation([t_{i,1}'', \ldots, t_{i,m}''])$ are deleted, $\site(|v_1(\interpretation(t_i))| = 0, \\$$v_2(\interpretation(t_i)), v_1(\interpretation(t_i))) = [\smerge(\interpretation(t_i), T_\nullv)] = [\interpretation(t_{i,m+1}'')]$.

\item \label{proof_query2_ljoin_query_b}
If $|v_1(\interpretation(t_i))| = 0$ is false, then $|\denot{[\interpretation(t_i)] \ijoin_\phi \query'}_{\interpretation(\context)}| \neq 0$ and $\site(|v_1(\interpretation(t_i))| = 0, v_2(\interpretation(t_i)), \\$$v_1(\interpretation(t_i))) = v_1(\interpretation(t_i)) = \denot{[\interpretation(t_i)] \ijoin_\phi \query'}_{\interpretation(\context)} = \filter_\phi(\interpretation([\smerge(t_i, t_j'),$ $ \ldots, \smerge(t_i, t_j')]))$.

Further, if $\interpretation(\del)(t_i) = \top$, then $\interpretation(\filter_\phi([\smerge(t_i, t_j'), \ldots, \smerge(t_i, t_j')]))$ are deleted by the semantics of $\smerge$.
Meanwhile, $\interpretation([t_{i,1}'', \ldots, t_{i,m}'', t_{i,m+1}''])$ are all deleted because of $\del(t_{i,j}'') \leftrightarrow \del(t_i) \lor \del(t_j') \lor \denot{\phi}_{\interpretation(\context), [\interpretation(\smerge(t_i, t_j'))]} \neq \top$ for $1 \leq j \leq m$ and $\del(t_{i, m+1}'') \leftrightarrow \del(t_i) \lor \neg \land_{j=1}^m \del(t_{i,j}'')$.
Therefore, $\site(|v_1(\interpretation(t_i))| = 0, v_2(\interpretation(t_i)),$ $ v_1(\interpretation(t_i))) = \\$$\interpretation$ $([t_{i,1}'', \ldots, t_{i,m}'', t_{i,m+1}''])$.

Otherwise, if $\interpretation(\del)(t_i) = \bot$, $\interpretation(\del)(t_k') = \bot$ and $\denot{\phi}_{\interpretation(\context), [\interpretation(\smerge(t_i, t_k'))]} = \top$ where $1 \leq k \leq m$, then $\interpretation(t_{i, k}'') = \interpretation(\smerge(t_i, t_k'))$ and $\interpretation(\del)(t_{i, k}'') = \interpretation(\del)(\smerge(t_i, t_k')) = \bot$.

\end{enumerate}

Hence, $\denot{\query \ljoin_\phi \query'}_{\interpretation(\context)} = \interpretation(R)$ by \ref{proof_query2_ljoin_query_a} and \ref{proof_query2_ljoin_query_b} and Theorem~\ref{lem:query2} for the inductive case $\query'' = \query \ljoin_\phi \query'$ is proved.

\item Inductive case: $\query'' = \query \rjoin_\phi \query'$. 

Following a similar method in the inductive case $\query'' = \query \ljoin_\phi \query'$, we prove $\denot{\query \rjoin_\phi \query'}_{\interpretation(\context)} = \interpretation(R)$.

\item Inductive case: $\query'' = \query \fjoin_\phi \query'$. 

Suppose that by $\schema \vdash \query : [a_1, \ldots, a_p]$, $\schema \vdash \query : [a_1', \ldots, a_q']$ by Figure~\ref{fig:rules-schema}; $\schema \vdash \query \hookrightarrow R$ where $R$ has symbolic tuples $[t_1, \ldots, t_n]$, $\schema \vdash \query' \hookrightarrow R'$ where $R'$ has symbolic tuples $[t_1', \ldots, t_m']$, $\schema \vdash \query \fjoin_\phi \query' \hookrightarrow R''$ where $R''$ has symbolic tuples $[t_{1,1}'', \ldots, t_{n+1,m}'', t_{1,m+1}'', \ldots, t_{n,m+1}'']$, by Figure~\ref{fig:rules-tuples}; $\schema, \context \vdash \query \ljoin_\phi \query' \leadsto \formula$, $\schema, \context \vdash \query \fjoin_\phi \query' \leadsto \formula \land \land_{j=1}^{m} \formula_j'$ where $\formula_j' = \land_{k=1}^p \denot{a_k}_{\schema, \context, [t_{n+1,j}'']} = \nullv \land \land_{k=1}^q \denot{a_k'}_{\schema, \context, [t_{n+1,j}'']} = \denot{a_k'}_{\schema, \context, [t_j']} \land (\neg \del(t_j') \land \land_{i=1}^{n} \del(t_{i,j}'') \leftrightarrow \neg \del(t_{n+1,j}''))$ by Figure~\ref{fig:rules-encode-b}.

Take $\interpretation$ s.t. $\interpretation \models \formula \land \land_{j=1}^{m} \formula_j'$.

By the inductive hypothesis, we know $\denot{\query}_{\interpretation(\context)} = \interpretation(R)$,  $\denot{\query'}_{\interpretation(\context)} = \interpretation(R')$ and $\denot{\query \ljoin_\phi \query'}_{\interpretation(\context)} = \interpretation([t_{1,1}'', \ldots, t_{n,m+1}''])$.

For any tuple $t_j' \in \denot{\query'}_{\interpretation(\context)}$, if $\interpretation(\del)(t_j') = \top$, then $|\denot{\query \ijoin_\phi [t_j']}_{\interpretation(\context)}| = 0$ is true and, therefore, $\sfilter([\interpretation(t_j')], \lambda y. |\denot{\query \ijoin_\phi [y]}_{\interpretation(\context)}| = 0)$ is deleted because of the semantics of $\sfilter$ and $\interpretation(\del)(t_{n+1, j}'') = \top$ because of $\neg \del(t_j') \land \land_{i=1}^{n} \del(t_{i,j}'') \leftrightarrow \neg \del(t_{n+1,j}'')$.
Otherwise, $\interpretation(\del)(t_j') = \bot$ and then let us discuss $|\denot{\query \ijoin_\phi [t_j']}_{\interpretation(\context)}| = 0$ in two cases.

\begin{enumerate}[label=(\alph*)]
\item \label{proof_query2_fjoin_query_a}
If $|\denot{\query \ijoin_\phi [t_j']}_{\interpretation(\context)}| = 0$ is true, then $\smap(\sfilter([\interpretation(t_j')], \lambda y. |\denot{\query \ijoin_\phi [y]}_{\interpretation(\context)}| = 0), \\$$\lambda x. \smerge(T_\nullv, x)) = [\smerge$ $(T_\nullv, \interpretation(t_j'))]$, $\interpretation(\del)(t_{i,j}'') = \top$ for $1 \leq i \leq n$, $\interpretation(\del)(t_{n+1,j}'')\\$$ = \bot$ and $\interpretation(t_{n+1,j}'') = \smerge(T_\nullv, \interpretation(t_j'))$ because of $\formula_j'$ and the semantics of $\smerge$.

\item \label{proof_query2_fjoin_query_b}
If $|\denot{\query \ijoin_\phi [t_i]}_{\interpretation(\context)}| = 0$ is false, then $\smap(\sfilter([\interpretation(t_i)], \lambda y. |\denot{\query \ijoin_\phi [y]}_{\interpretation(\context)}| = 0), \lambda x.$ $ \smerge(T_\nullv, x))$ is deleted and $\interpretation(\del)(t_{n+1,j}') = \top$.

\end{enumerate}

Hence, $\denot{\query \fjoin_\phi \query'}_{\interpretation(\context)} = \interpretation(R'')$ by \ref{proof_query2_distinct_query_a} and \ref{proof_query2_distinct_query_b} and Theorem~\ref{lem:query2} for the inductive case $\query'' = \query \fjoin_\phi \query'$ is proved.

\item Inductive case: $\query' = \text{Distinct}(\query)$. 

Suppose that $\schema, \context \vdash \query \hookrightarrow R$ where $R$ has symbolic tuples $[t_1, \ldots, t_n]$, $\schema, \context \vdash \text{Distinct}(\query) \hookrightarrow R'$ where $R'$ has symbolic tuples $[t_1', \ldots, t_n']$ by Figure~\ref{fig:rules-tuples}; $\schema, \context \vdash \query \leadsto \formula$, $\schema, \context \vdash \query \leadsto \formula \land \formula'$ where $\formula' = \text{Dedup}(\vec{t}, \vec{t'})$ by Figure~\ref{fig:rules-encode-b}.

By the inductive hypothesis, we have $\interpretation, \db \models \formula$ and $\denot{\query}_{\interpretation(\context)} = \interpretation(R)$.

Take $\interpretation$ s.t. $\interpretation \models \formula \land \formula'$.

For any tuple $t_i \in \denot{\query}_{\interpretation(\context)}$, if $\interpretation(\del)(t_i) = \top$, then $\interpretation(\del)(t_i') = \top$.
On the other hand, for any existing tuples $t_i$ and $t_j$ in $\denot{\query}_{\interpretation(\context)}$ s.t. $\interpretation(\del)(t_i) = \interpretation(\del)(t_i) = \bot$ and $i \leq j$, let us discuss $\interpretation(t_i) = \interpretation(t_j)$ in two cases.

\begin{enumerate}[label=(\alph*)]
\item \label{proof_query2_distinct_query_a}
If $\interpretation(t_i) = \interpretation(t_j)$ is true, then $\interpretation(t_i') = \interpretation(t_i)$ and $\interpretation(\del)(t_i') = \interpretation(\del)(t_i) = \bot$, and $\sfilter(\interpretation(t_j), \lambda y. \interpretation(t_i) = y)$ is deleted and $\interpretation(\del)(t_j') = \top$ by the semantics of $\sfilter$.

\item \label{proof_query2_distinct_query_b}
If $\interpretation(t_i) = \interpretation(t_j)$ is false, then $\interpretation(t_i') = \interpretation(t_i)$ and $\interpretation(\del)(t_i') = \interpretation(\del)(t_i) = \bot$, and $\interpretation(t_j') = \interpretation(t_j)$ and $\interpretation(\del)(t_j') = \interpretation(\del)(t_j) = \bot$ by the semantics of $\sfilter$.

\end{enumerate}

Hence, $\denot{\text{Distinct}(\query)}_{\interpretation(\context)} = \interpretation(R'')$ by \ref{proof_query2_distinct_query_a} and \ref{proof_query2_distinct_query_b} and Theorem~\ref{lem:query2} for the inductive case $\query' = \text{Distinct}(\query)$ is proved.

\item Inductive case: $\query'' = \query \cap \query'$. 

Suppose that $\schema \vdash \text{Distinct}(\query) \hookrightarrow R$ where $R$ ash symbolic tuples $[t_1, \ldots, t_n]$, $\schema \vdash \query' \hookrightarrow R'$ where $R'$ ash symbolic tuples $[t_1', \ldots, t_m']$, $\schema \vdash \query \cap \query' \hookrightarrow R''$ where $R''$ ash symbolic tuples $[t_1'', \ldots, t_n'']$ by Figure~\ref{fig:rules-tuples}; $\schema, \context \vdash \text{Distinct}(\query) \leadsto \formula$, $\schema, \context \vdash \query' \leadsto \formula'$ , $\schema, \context \vdash \query \cap \query' \leadsto \formula \land \formula' \land \land_{i=1}^{n} \formula_i''$ where $\formula_i'' = t_i \in \vec{t'} \to t_i'' = t_i \land t_i \not \in \vec{t'} \to \del(t_i'')$ by Figure~\ref{fig:rules-encode-b}.

By the inductive hypothesis, we know $\denot{\text{Distinct}(\query)}_{\interpretation(\context)} = \interpretation(R)$ and $\denot{\query'}_{\interpretation(\context)} = \interpretation(R')$.

Take $\interpretation$ s.t. $\interpretation \models \formula \land \formula' \land \land_{i=1}^n \land_{j=1}^m \formula_{i,j}''$.

For any tuple $\interpretation(t_i) \in \denot{\text{Distinct}(\query)}_{\interpretation(\context)}$, if $\interpretation(\del)(t_i) = \top$, then $\interpretation(\del)(t_i') = \top$ because of $t_i \not \in \vec{t'} \to \del(t_i'')$; otherwise, let us discuss $\interpretation(t_i) \in \denot{\query'}_{\interpretation(\context)}$ in two cases.
\begin{enumerate}[label=(\alph*)]
\item \label{proof_query2_cap_query_a}
If $\interpretation(\del)(t_i) = \bot$ and $\interpretation(t_i) \in \denot{\query'}_{\interpretation(\context)}$ is true, then $\interpretation(t_i) = \interpretation(t_i')$ and $\interpretation(\del)(t_i) = \interpretation(\del)(t_i') = \bot$ by the semantics of $\sfilter$. 

\item \label{proof_query2_cap_query_b}
If $\interpretation(\del)(t_i) = \bot$ and $\interpretation(t_i) \in \denot{\query'}_{\interpretation(\context)}$ is false, then $\sfilter([t_i], \lambda x. x \in \denot{\query'}_{\interpretation(\context)})$ is deleted by the semantics of $\sfilter$ and $\interpretation(\del)(t_i') = \top$ because of $t_i \not \in \vec{t'} \to \del(t_i'')$. 

\end{enumerate}

Hence, $\denot{\query \cap \query'}_{\interpretation(\context)} = \interpretation(R'')$ by \ref{proof_query2_cap_query_a} and \ref{proof_query2_cap_query_b} and Theorem~\ref{lem:query2} for the inductive case $\query'' = \query \cap \query'$ is proved.

\item Inductive case: $\query = \query_1 \setminus \query_2$.

Suppose that $\schema \vdash \text{Distinct}(\query) \hookrightarrow R$ where $R$ ash symbolic tuples $[t_1, \ldots, t_n]$, $\schema \vdash \query' \hookrightarrow R'$ where $R'$ ash symbolic tuples $[t_1', \ldots, t_m']$, $\schema \vdash \query \setminus \query' \hookrightarrow R''$ where $R''$ ash symbolic tuples $[t_1'', \ldots, t_n'']$ by Figure~\ref{fig:rules-tuples}; $\schema, \context \vdash \text{Distinct}(\query) \leadsto \formula$, $\schema, \context \vdash \query' \leadsto \formula'$ , $\schema, \context \vdash \query \setminus \query' \leadsto \formula \land \formula' \land \land_{i=1}^{n} \formula_i''$ where $\formula_i'' = t_i \not \in \vec{t'} \to t_i'' = t_i \land t_i \in \vec{t'} \to \del(t_i'')$ by Figure~\ref{fig:rules-encode-b}.

By the inductive hypothesis, we know $\denot{\text{Distinct}(\query)}_{\interpretation(\context)} = \interpretation(R)$ and $\denot{\query'}_{\interpretation(\context)} = \interpretation(R')$.

Take $\interpretation$ s.t. $\interpretation \models \formula \land \formula' \land \land_{i=1}^n \land_{j=1}^m \formula_{i,j}''$.

For any tuple $\interpretation(t_i) \in \denot{\text{Distinct}(\query)}_{\interpretation(\context)}$, if $\interpretation(\del)(t_i) = \top$, then $\interpretation(\del)(t_i') = \top$ because of $t_i \in \vec{t'} \to t_i'' = t_i$; otherwise, let us discuss $\interpretation(t_i) \in \denot{\query'}_{\interpretation(\context)}$ in two cases.
\begin{enumerate}[label=(\alph*)]
\item \label{proof_query2_setmius_query_a}
If $\interpretation(\del)(t_i) = \bot$ and $\interpretation(t_i) \in \denot{\query'}_{\interpretation(\context)}$ is true, then $\sfilter([t_i], \lambda x. x  \not \in \denot{\query'}_{\interpretation(\context)})$ is deleted by the semantics of $\sfilter$ and $\interpretation(\del)(t_i') = \top$ because of $t_i \in \vec{t'} \to \del(t_i'')$. 

\item \label{proof_query2_setmius_query_b}
If $\interpretation(\del)(t_i) = \bot$ and $\interpretation(t_i) \in \denot{\query'}_{\interpretation(\context)}$ is false, then $\interpretation(t_i) = \interpretation(t_i')$ and $\interpretation(\del)(t_i) = \interpretation(\del)(t_i') = \bot$ by the semantics of $\sfilter$. 

\end{enumerate}

Hence, $\denot{\query \setminus \query'}_{\interpretation(\context)} = \interpretation(R'')$ by \ref{proof_query2_setmius_query_a} and \ref{proof_query2_setmius_query_b} and Theorem~\ref{lem:query2} for the inductive case $\query'' = \query \setminus \query'$ is proved.

\item Inductive case: $\query = \query_1 - \query_2$.

Suppose that $\schema \vdash \query \hookrightarrow R$ where $R$ has symbolic tuples $[t_1, \ldots, t_n]$, $\schema \vdash \query' \hookrightarrow R'$ where $R'$ has symbolic tuples $[t_1', \ldots, t_m']$, $\schema \vdash \query - \query' \hookrightarrow R''$ where $R''$ has symbolic tuples $[t_1'', \ldots, t_n'']$ by Figure~\ref{fig:rules-tuples}; $\schema, \context \vdash \query \leadsto \formula$, $\schema, \context \vdash \query' \leadsto \formula'$, $\schema, \context \vdash \query - \query' \leadsto \formula \land \formula' \land \land_{i=1}^n \land_{j=1}^m \formula_{i,j}''$ where $\formula_{i,j}'' = \neg \del(t_i) \land \neg \text{Paired}(i,j) \to t_i'' = t_i \land \del(t_i) \lor \text{Paired}(i,j) \to \del(t_i'')$ by Figure~\ref{fig:rules-encode-c}.

By the inductive hypothesis, we know $\denot{\query}_{\interpretation(\context)} = \interpretation(R)$ and $\denot{\query'}_{\interpretation(\context)} = \interpretation(R')$.

Take $\interpretation$ s.t. $\interpretation \models \formula \land \formula' \land \land_{i=1}^n \land_{j=1}^m \formula_{i,j}''$.

\item Inductive case: $\query'' = \query \cup \query'$. 

Since $\query \cup \query' \Leftrightarrow \text{Distinct}(\query \cplus \query')$, this case can be proved by the inductive cases $\query''= \query \cplus \query'$ and $\query' = \text{Distinct}(\query \cplus \query')$.

\item Inductive case: $\query'' = \query \uplus \query'$. 

Suppose that $\schema \vdash \query \hookrightarrow R$ where $R$ has symbolic tuples $[t_1, \ldots, t_n]$, $\schema \vdash \query' \hookrightarrow R'$ where $R'$ has symbolic tuples $[t_1', \ldots, t_m']$, $\schema \vdash \query \uplus \query' \hookrightarrow R''$ where $R''$ has symbolic tuples $[t_1'', \ldots, t_{n+m}'']$ by Figure~\ref{fig:rules-tuples}; $\schema, \context \vdash \query \leadsto \formula$, $\schema, \context \vdash \query' \leadsto \formula'$, $\schema, \context \vdash \query \uplus \query' \leadsto \formula \land \formula' \land \formula''$ where $\formula'' = \land_{i=1}^n t_i'' = t_i \land \land_{j=n+1}^{n+m} t_j'' = t_{j-n}'$ by Figure~\ref{fig:rules-encode-b}.

By the inductive hypothesis, we know $\denot{\query}_{\interpretation(\context)} = \interpretation(R)$ and $\denot{\query'}_{\interpretation(\context)} = \interpretation(R')$.

Take $\interpretation$ s.t. $\interpretation \models \formula \land \formula' \land \formula''$.

Further, since $t_i'' = t_i$ for $1 \leq i \leq n$ and $t_j'' = t_j$ for $n+1 \leq j \leq n+m$, we have
\[
\begin{array}{rcl}
\denot{\query \uplus \query'}_{\interpretation(\context)}
&=& \sappend(\denot{\query}_{\interpretation(\context)}, \denot{\query'}_{\interpretation(\context)}) \\
&=& \sappend(\interpretation(R), \interpretation(R')) \\
&=& \sappend([\interpretation(t_1), \ldots, \interpretation(t_n)], [\interpretation(t_1'), \ldots, \interpretation(t_m')]) \\
&=& [\interpretation(t_1), \ldots, \interpretation(t_n), \interpretation(t_1'), \ldots, \interpretation(t_m')] \\
&=& \interpretation(R'') \\
\end{array}
\]

Hence, Theorem~\ref{lem:query2} for the inductive case $\query'' = \query \uplus \query'$ is proved.

\item Inductive case: $\query'' = \query \cplus \query'$. 

Since $\query \cplus \query' \Leftrightarrow \query - (\query - \query')$, this case can be proved by the inductive case $\query''= \query - \query'$.

\item Inductive case: $\query' = \text{GroupBy}(\query, \vec{E}, L, \phi)$. 

Suppose that $\context \vdash \query \hookrightarrow R$ where $R$ has symbolic tuples $[t_1, \ldots, t_n]$, $\context \vdash \text{GroupBy}(\query, \vec{E}, L, \phi) \hookrightarrow R$ where $R$ has symbolic tuples $[t_1', \ldots, t_n']$ by Figure~\ref{fig:rules-tuples}; $\schema, \context \vdash \query \leadsto \formula$, $\schema, \context \vdash \text{GroupBy}(\query, \vec{E}, $ $L, \phi) \leadsto \formula \land \land_{i=1}^n \formula_i' \land \land_{i=1}^n \formula_i''$ where $\formula_i' = \Sigma_{j=1}^n \text{If}(g(t_i, j), 0, 1) = \text{If}(\del(t_i), 0, 1) \land \land_{j=1}^{i-1} g(t_i, j) = (\neg \del(t_i) \land g(t_i, j) \land \denot{\vec{E}})_{\schema, \context, [t_i]} = \denot{\vec{E}})_{\schema, \context, [t_ij}$ and $\formula_i'' = (g(t_i, j) \land \denot{\phi}_{\schema, \context, g^{-1}(i)} = \top \to \neg \del(t_i') \land \land_{k=1}^l \denot{a_k'}_{\schema, \context, [t_i']} =\denot{a_k'}_{\schema, \context, g^{-1}(i)}) \land (\neg g(t_i, j) \lor \denot{\phi}_{\schema, \context, g^{-1}(i)} \neq \top \to \del(t_i'))$ by Figure~\ref{fig:rules-encode-b}; $L = [a_1, \ldots a_l]$.

By the inductive hypothesis, we have $\denot{\query}_{\interpretation(\context)} = \interpretation(R)$.

Take $\interpretation$ s.t. $\interpretation \models \formula \land \land_{i=1}^n \formula_i' \land \land_{i=1}^n \formula_i''$.

Let $G_i$ be the $i$-th group of $\denot{\query}_{\interpretation(\context)}$ and $g^{-1}(i)$ be the function takes the index of the group $G_i$ and returns the tuples of the group $G_i$.
Assume there are $m$ non-empty groups in $\denot{\query}_{\interpretation(\context)}$ s.t. $m \leq n$ and the other $n-m$ empty groups are deleted because of no non-deleted tuple.
Therefore, we know that for each group $G_i$, (i) it must have at least one non-deleted tuple, (ii) all tuples $g^{-1}(i)$ in $G_i$ must share the same values over $\vec{E}$ and (iii) $\denot{\vec{E}}_{\interpretation(\context), \interpretation(g^{-1}(i))}$ is unique among those groups because of Figure~\ref{fig:rules-encode-b}.

Further, by the semantics of $\sfilter$ and $\denot{\query}_{\interpretation(\context)} = \interpretation(R)$, we know $\text{Dedup}(\query, \vec{E})$ denotes $m$ unique values over $\vec{E}$.
Therefore, $Gs$ in $\denot{\text{GroupBy}(\query, \vec{E}, L, \phi)}_{\interpretation(\context)}$ is equivalent to the groups $\interpretation([G_1, \ldots, G_m])$ in the formula $\formula_i''$ and $\interpretation(\del)(t_i) = \top$ since $G_i$ is empty group for $m < i \leq n$.
\[
\begin{array}{rcl}
\denot{\text{GroupBy}(\query, \vec{E}, L, \phi)}_{\interpretation(\context)} 
&=& \smap(\sfilter(Gs, \lambda xs. \denot{\phi}_{\interpretation(\context), xs} = \top), \lambda xs. \denot{L}_{\interpretation(\context), xs}) \\
&=& \smap(\sfilter([G_1, \ldots, G_m], \lambda xs. \denot{\phi}_{\interpretation(\context), xs} = \top), \lambda xs. \denot{L}_{\interpretation(\context), xs}) \\
&=& [\interpretation(t_1'), \ldots, \interpretation(t_m')] \\
&=& \sappend([\interpretation(t_1'), \ldots, \interpretation(t_m')], [\interpretation(t_{m+1}), \ldots, \interpretation(t_n)]) \\
&=& [\interpretation(t_1'), \ldots, \interpretation(t_m'), \interpretation(t_{m+1}'), \ldots, \interpretation(t_n')]\\
&=& \interpretation(R')
\end{array}
\]

Hence, Theorem~\ref{lem:query2} for the inductive case $\query' = \text{GroupBy}(\query, \vec{E}, L, \phi)$ is proved.

\item Inductive case: $\query'' = \text{With}(\vec{\query}, \vec{T}, \query')$. 

Suppose that $\schema \vdash \query_i : \attributes_i$ by Figure~\ref{fig:rules-schema}; $\context \vdash \query_i \hookrightarrow R_i$ by Figure~\ref{fig:rules-tuples}; $\schema, \context \vdash \query_i \leadsto \formula_i$ by Figure~\ref{fig:rules-encode-b}.

By the inductive hypothesis, we have $\interpretation(R_i) = \denot{\query_i}_{\interpretation(\context)}$ where $1 \leq i \leq n$ and $n = |\vec{\query}| = |\vec{T}|$; $\context' \vdash \query' \hookrightarrow R'$, $\schema', \context' \vdash \query' \leadsto \formula'$, $\interpretation(R') = \denot{\query'}_{\interpretation(\context')}$ where $\schema' = \schema[T_1 \mapsto \attributes_1, \ldots, T_n \mapsto \attributes_n]$ and $\context' = \context[T_1 \mapsto \tuples_1, \ldots, T_n \mapsto \tuples_n]$.

Take $\interpretation$ s.t. $\interpretation \models \land_{i=1}^n \formula_i \land \formula'$.

By Figure~\ref{fig:proof-sematics2}, we know $\denot{\text{With}(\vec{\query}, \vec{T}, \query')}_{\interpretation(\context')} 
= \denot{\query'}_{\interpretation(\context')} = \interpretation(R')$.


Hence, Theorem~\ref{lem:query2} for the inductive case $\query'' = \text{With}(\vec{\query}, \vec{T}, \query')$ is proved.

\item Inductive case: $\query' = \text{OrderBy}(\query, \vec{E}, b)$. 

Suppose that $\context \vdash \query \hookrightarrow R$ where $R$ has symbolic tuples $[t_1, \ldots, t_n]$, $\context \vdash \text{OrderBy}(\query, \vec{E}, b) \hookrightarrow R'$ where $R'$ has symbolic tuples $[t_1', \ldots, t_n']$ by Figure~\ref{fig:rules-tuples};
$\schema, \context \vdash \query \leadsto \formula$, $\schema, \context \vdash $ $\text{OrderBy}(\query, \vec{E}, b) \leadsto \formula \land \formula' \land \land_{i=1}^n \formula_i''$ where $\formula' = \text{moveDelToEnd}(\vec{t}, \vec{t''})$ and $\formula_i'' = t_i' = t_{\text{find}(i, \vec{E}, b)}''$ by Figure~\ref{fig:rules-encode-c}; $R''$ is the fresh symbolic table where $R''$ has the symbolic tuples $[t_1'', \ldots, t_n'']$ and $t_{\text{indexOf}(\vec{t}, i)} = t_i''$ for any tuple $t_i'' \in R''$.

By the inductive hypothesis, we have $\interpretation(R) = \denot{\query}_{\interpretation(\context)}$. Also, by the semantics of the $\text{moveDelToEnd}$ function, $\interpretation(\del)(t_{\text{indexOf}(\vec{t}, i)}) = \interpretation(\del)(t_i'')$ and $\interpretation(t_{\text{indexOf}(\vec{t}, i)}) = \interpretation(t_i'')$.

Take $\interpretation$ s.t. $\interpretation \models \formula \land \formula' \land \land_{i=1}^n \formula_i''$.


Assume there are $m$ non-deleted symbolic tuples in $R$ where $m \leq n$, then we have, for any tuple $t_i'' \in [t_1'', \ldots, t_m'']$,  $\interpretation(t_i) = \interpretation(t_{\text{indexOf}(\vec{t}, i)}'')$ and $\interpretation(del)(t_i) = \interpretation(del)(t_{\text{indexOf}(\vec{t}, i)}'') = \bot$ and the other tuples $[t_{m+1}'', \ldots, t_n'']$ are deleted (i.e., $\interpretation(\del)(t_i) = \top$ where $m < i \leq n$) by the semantics of the \text{moveDelToEnd} function.
Let $a_i$ denote the $i$-th smallest (biggest) tuple from $\denot{\query}_{\interpretation(\context)} - xs$ (i.e., $a_i = \text{MinTuple}(\vec{E}, b, $ $\denot{\query}_{\interpretation(\context)} - xs)$ where $xs = [a_i, \ldots, a_{i-1}]$), then

\[
\begin{array}{rcl}
\denot{\text{OrderBy}(\query, \vec{E}, b)}_{\interpretation(\context)}
&=& \sfoldl(\lambda xs. \lambda \_. (\sappend(xs, [\text{MinTuple}(\vec{E}, b, \denot{\query}_{\interpretation(\context)} - xs)])), \\
&& \quad [], \denot{\query}_{\interpretation(\context)}) \\
&=& \sfoldl(\lambda xs. \lambda \_. (\sappend(xs, [\text{MinTuple}(\vec{E}, b, \denot{\query}_{\interpretation(\context)} - xs)])), \\
&& \quad [], \interpretation(R)) \\
&& \ldots \\
&=& [\interpretation(a_1), \ldots, \interpretation(a_m)] \\
&=& \interpretation([a_1, \ldots, a_m]) \\
\end{array}
\]

Furthermore, let $t_{\text{indexOf}(\vec{t}, i)}$ be the corresponding tuple of $a_i \in \denot{\query}_{\interpretation(\context)}$,

\[
\begin{array}{rcl}
\denot{\text{OrderBy}$ $(\query, \vec{E}, b)}_{\interpretation(\context)} 
&=& \interpretation([a_1, \ldots, a_m]) \\
&=& \interpretation([t_{\text{indexOf}(\vec{t}, i)}, \ldots, t_{\text{indexOf}(\vec{t}, i)}]) \\
&=& \interpretation([t_1'', \ldots, t_m'']) \\
&=& \interpretation([t_1'', \ldots, t_m'']) + \interpretation([t_{m+1}'', \ldots, t_n'']) \\
&=& \interpretation([t_1'', \ldots, t_n'']) \\
&=& \interpretation([t_{\text{find}(1, \vec{E}, b)}'', \ldots, t_{\text{find}(n, \vec{E}, b)}'']) \\
&=& \interpretation([t_1', \ldots, t_n']) \\
&=& \interpretation(R') \\
\end{array}
\]

Hence, Theorem~\ref{lem:query2} for the inductive case $\query' = \text{OrderBy}(\query, \vec{E}, b)$ is proved.

\end{enumerate}
\end{proof}

\begin{lemma}\label{lem:predicate}
\revision{
Let $\db$ be a database over schema $\schema$ and $xs$ be a tuple list which formulates a predicate $\phi$.
Consider a symbolic database $\context$ over $\schema$ and a list of symbolic tuples $\tuples$ such that $\denot{\phi}_{\schema, \context, \tuples}$ is valid.
For any interpretation $\interpretation$ such that $\interpretation(\context) = \db \land \interpretation(\tuples) = xs$, evaluating $\phi$ over the concrete database $\interpretation(\context)$ and the concrete tuple list $\interpretation(\tuples)$ yields the value of $\interpretation(\denot{\phi}_{\schema, \context, \tuples})$, i.e., 
\[
\interpretation(\context) = \db \land \interpretation(\tuples) = xs \Rightarrow \denot{\phi}_{\interpretation(\context), \interpretation(\tuples)} = \interpretation(\denot{\phi}_{\schema, \context, \tuples})
\]
}
\end{lemma}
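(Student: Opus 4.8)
The plan is to prove Lemma~\ref{lem:predicate} by structural induction on the predicate $\phi$, mirroring the structure of the semantic definitions in Figure~\ref{fig:proof-sematics2} and the encoding rules in Figure~\ref{fig:rules-predicates}. Since the statement couples evaluation of $\phi$ in the concrete semantics with the interpreted value of the symbolic encoding, each case amounts to showing that the two definitions agree clause-by-clause once $\interpretation$ is pushed through. The induction will require a companion lemma (call it Lemma~\ref{lem:attribute}, which is already referenced elsewhere in the proof section) stating the analogous fact for attribute expressions $A$ and $E$, namely $\denot{A}_{\interpretation(\context),\interpretation(\tuples)} = \interpretation(\denot{A}_{\schema,\context,\tuples})$; I would prove that by a parallel structural induction on expressions, and in several predicate cases I would simply invoke it. I would also need the analogous statement for attribute lists $L$ and, for the membership case, a forward reference to Theorems~\ref{lem:query1} and \ref{lem:query2} about subquery encoding.

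First I would handle the base cases: $\phi = \top$, $\phi = \bot$, and $\phi = \nullv$ are immediate, since both the semantics and the encoding return the corresponding constant and $\interpretation$ fixes constants. Next, the comparison case $\phi = A_1 \alllogic A_2$: here both Figure~\ref{fig:proof-sematics2} and Figure~\ref{fig:rules-predicates} use the same $\site(v_1 = \nullv \lor v_2 = \nullv, \nullv, v_1 \alllogic v_2)$ shape, so after applying Lemma~\ref{lem:attribute} to rewrite $\denot{A_i}_{\interpretation(\context),\interpretation(\tuples)} = \interpretation(\denot{A_i}_{\schema,\context,\tuples})$ and using that $\interpretation$ commutes with $\site$ and with the null-check, the two sides coincide. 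The $\text{IsNull}(E)$ case is similar and only needs the expression lemma. For the boolean connectives $\phi_1 \land \phi_2$, $\phi_1 \lor \phi_2$, and $\neg \phi$, I would apply the induction hypothesis to the subpredicates to get $\denot{\phi_i}_{\interpretation(\context),\interpretation(\tuples)} = \interpretation(\denot{\phi_i}_{\schema,\context,\tuples})$, then observe that the three-valued truth tables encoded in Figure~\ref{fig:rules-predicates} are syntactically the same nested $\site$ expressions as those in Figure~\ref{fig:proof-sematics2}, so pushing $\interpretation$ through preserves equality. Care is needed only to check that the $\nullv$-handling clauses line up exactly (e.g. the $\lor$ case has the $\bot$-guard where $\land$ has the $\top$-guard).

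The main obstacle will be the membership cases $\vec{E} \in \vec{v}$ and especially $\vec{E} \in Q$. The list-of-constants case $\vec{E} \in \vec{v}$ reduces, via the rewriting in Figure~\ref{fig:rules-predicates}, to a big disjunction of conjunctions of equalities over expressions, which is then just an instance of the connective and comparison cases already handled — so that one is bookkeeping. The subquery case is harder: the encoding of $\vec{E} \in Q$ in Figure~\ref{fig:rules-predicates} introduces the encoding formula $\formula$ for $Q$ together with fresh symbolic tuples $[t_1,\ldots,t_n]$ for $Q$'s result, and asserts $\formula \land \denot{\vec{E} \in \vec{v}}_{\schema,\context,\tuples}$ where $v_{i,j} = \denot{a_j}_{\schema,\context,[t_i]}$, whereas the semantics in Figure~\ref{fig:proof-sematics2} folds over the actual result $\denot{Q}_\db$. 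To bridge these I would invoke Theorems~\ref{lem:query1}/\ref{lem:query2}: since $\interpretation$ satisfies $\formula$ (the hypothesis $\denot{\phi}_{\schema,\context,\tuples}$ valid, together with $\interpretation(\context)=\db$, forces this after possibly extending $\interpretation$), Theorem~\ref{lem:query2} gives $\denot{Q}_{\interpretation(\context)} = \interpretation([t_1,\ldots,t_n])$, so the symbolic rows exactly enumerate the concrete result rows, and the fold in the semantics becomes the disjunction in the encoding under $\interpretation$. One subtlety to watch is deleted tuples: the symbolic relation $[t_1,\ldots,t_n]$ may contain tuples with $\del(t_i)$ true that are absent from $\denot{Q}_\db$, so the equivalence argument must be phrased so that the disjunction over all $n$ symbolic rows agrees with the disjunction over the non-deleted ones — which is exactly what Theorem~\ref{lem:query2} guarantees since $\interpretation$ collapses $R$ to its non-deleted projection. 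Finally I would note that the statement as written implicitly assumes the lengths match ($\interpretation(\tuples) = xs$ and the schema well-formedness guarantees $|\vec{E}| = |\vec{v}_i|$), which I would make explicit at the start so the membership case's index bookkeeping goes through cleanly.
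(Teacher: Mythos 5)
Your proof takes essentially the same route as the paper: structural induction on $\phi$, with the companion expression/attribute lemma (Lemma~\ref{lem:expression}/\ref{lem:attribute}) discharging the comparison and $\text{IsNull}$ cases, and the connectives following because the symbolic encoding and the concrete semantics use the same three-valued $\site$ shapes, so $\interpretation$ commutes through. Your handling of the $\vec{E} \in Q$ case via Theorems~\ref{lem:query1} and~\ref{lem:query2} (including the remark about deleted symbolic tuples) is in fact more explicit than the paper's own write-up of that case, but the overall strategy is identical.
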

\vspace{10pt}

\begin{lemma}\label{lem:proof_predicate}
Suppose $\denot{\phi}_{\db, xs}$ is valid, then $\interpretation(\context) = \db \land \interpretation(\tuples) = xs \Rightarrow \denot{\phi}_{\interpretation(\context), \interpretation(\tuples)} = \interpretation(\denot{\phi}_{\schema, \context, \tuples})$ holds.
\end{lemma}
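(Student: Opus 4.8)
The plan is to prove Lemma~\ref{lem:predicate} (and the restated Lemma~\ref{lem:proof_predicate}, which is the same statement) by structural induction on the predicate $\phi$, mirroring the recursive definition of $\denot{\phi}_{\schema, \context, \tuples}$ given in Figure~\ref{fig:rules-predicates} and the denotational semantics $\denot{\phi}_{\db, xs}$ in Figure~\ref{fig:proof-sematics2}. The key invariant carried through the induction is that under any interpretation $\interpretation$ with $\interpretation(\context) = \db$ and $\interpretation(\tuples) = xs$, applying $\interpretation$ to the symbolic encoding of a predicate yields exactly the three-valued result ($\top$, $\bot$, or $\nullv$) obtained by evaluating the predicate concretely. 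Since both the encoding and the semantics are defined by the same case analysis on the shape of $\phi$, the proof amounts to checking that $\interpretation$ commutes with each constructor.

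First I would handle the base cases: $\phi \in \{\top, \bot, \nullv\}$ are immediate since both sides produce the literal constant and $\interpretation$ fixes constants. For the atomic comparison case $\phi = A_1 \alllogic A_2$, I would invoke the companion lemma for attribute expressions --- Lemma~\ref{lem:attribute}, which states $\interpretation(\denot{A}_{\schema, \context, \tuples}) = \denot{A}_{\interpretation(\context), \interpretation(\tuples)}$ --- to push $\interpretation$ through the subexpressions $A_1, A_2$, and then observe that the $\site(v_1 = \nullv \lor v_2 = \nullv, \nullv, v_1 \alllogic v_2)$ guard in the encoding is structurally identical to the one in the semantics, so the two agree once the operand values agree. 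The cases $\text{IsNull}(E)$, $\vec{E} \in \vec{v}$, and $\vec{E} \in Q$ are similar: the first two reduce to Lemma~\ref{lem:attribute} plus a routine unfolding of the disjunction/conjunction, and the membership-in-subquery case additionally requires Theorem~\ref{lem:query2} to establish that $\interpretation$ applied to the symbolic relation $R$ obtained from $\context \vdash Q \hookrightarrow R$ coincides with $\denot{Q}_{\interpretation(\context)}$, after which $\vec{E} \in Q$ is rewritten as $\vec{E} \in \vec{v}$ over that relation's tuples.

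For the inductive cases $\phi_1 \land \phi_2$, $\phi_1 \lor \phi_2$, and $\neg \phi$, I would apply the induction hypothesis to the immediate subpredicates to get $\interpretation(\denot{\phi_i}_{\schema,\context,\tuples}) = \denot{\phi_i}_{\interpretation(\context), \interpretation(\tuples)}$, and then note that the three-valued truth tables --- encoded via the nested $\site$ expressions for the null-propagation behavior --- are \emph{syntactically the same} in Figure~\ref{fig:rules-predicates} and Figure~\ref{fig:proof-sematics2}, so the outer combinator preserves the equality. The main obstacle, such as it is, is bookkeeping: the three-valued logic makes the $\land$/$\lor$ cases have several sub-branches (both null, one null and one $\top$, etc.), and one must be careful that the encoding's guard conditions exactly match the semantics' guard conditions; but this is mechanical once laid out. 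A secondary subtlety is the mutual dependency between this lemma and Lemma~\ref{lem:attribute} and Theorem~\ref{lem:query2} --- the predicate $\vec{E} \in Q$ mentions a query, and query encodings mention predicates (in filters, joins, having clauses) --- so strictly speaking the proof should proceed by a simultaneous induction on the combined size of queries, attribute expressions, and predicates. I would state this shared induction metric up front so that each appeal to Lemma~\ref{lem:attribute} or Theorem~\ref{lem:query2} is on a structurally smaller argument, and otherwise the argument goes through case by case without genuine difficulty.
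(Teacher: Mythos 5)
Your proposal matches the paper's own proof: structural induction on $\phi$, constants handled trivially, atomic comparisons, $\text{IsNull}$, and membership discharged via the attribute/expression lemma (and Theorem~\ref{lem:query2} for $\vec{E} \in Q$), and the $\land$, $\lor$, $\neg$ cases closed by the induction hypothesis together with the syntactic agreement of the three-valued $\site$ encodings in Figures~\ref{fig:rules-predicates} and~\ref{fig:proof-sematics2}. Your explicit remark that the mutual recursion with Lemma~\ref{lem:attribute} and Theorem~\ref{lem:query2} requires a simultaneous induction on a combined size metric is a point the paper handles only implicitly (its own $\vec{E} \in Q$ case is left essentially unelaborated), so including it is a welcome tightening rather than a deviation.
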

\begin{proof}[Proof.]
By structural induction on $\phi$. 
\begin{enumerate}

\item Base case: $\phi = \top$.

$\denot{\top}_{\schema, \context, \tuples} = \top$ by Figure~\ref{fig:rules-predicates}. Also, $\denot{\top}_{\interpretation(\context), \interpretation(\tuples)} = \top$ by Figure~\ref{fig:proof-sematics2}. Therefore, $\interpretation(\denot{\top}_{\schema, \context, \tuples}) = \interpretation(\top) = \top = \denot{\top}_{\interpretation(\context), \interpretation(\tuples)}$.

\item Base case: $\phi = \bot$.

$\denot{\bot}_{\schema, \context, \tuples} = \bot$ by Figure~\ref{fig:rules-predicates}. Also, $\denot{\bot}_{\interpretation(\context), \interpretation(\tuples)} = \bot$ by Figure~\ref{fig:proof-sematics2}. Therefore, $\interpretation(\denot{\bot}_{\schema, \context, \tuples}) = \interpretation(\bot) = \bot = \denot{\bot}_{\interpretation(\context), \interpretation(\tuples)}$.

\item Base case: $\phi = \nullv$.

$\denot{\nullv}_{\schema, \context, \tuples} = \nullv$ by Figure~\ref{fig:rules-predicates}. Also, $\denot{\nullv}_{\interpretation(\context), \interpretation(\tuples)} = \nullv$ by Figure~\ref{fig:proof-sematics2}. Therefore, $\interpretation(\denot{\nullv}_{\schema, \context, \tuples}) = \interpretation(\nullv) = \nullv = \denot{\nullv}_{\interpretation(\context), \interpretation(\tuples)}$.

\item Base case: $\phi = A_1 \alllogic A_2$.

$\denot{A_1 \alllogic A_2}_{\schema, \context, \tuples} = \site(v_1 = \nullv \lor v_2 = \nullv, \bot, v_1 \alllogic v_2)$ where $v_1 = \denot{A_1}_{\schema, \context, \tuples}$ and $v_2 = \denot{A_2}_{\schema, \context, \tuples}$ by Figure~\ref{fig:rules-predicates}.
$\denot{A_1 \alllogic A_2}_{\interpretation(\context), \interpretation(\tuples)} = \site(v_1' = \nullv \lor v_2' = \nullv, \bot, v_1' \alllogic v_2')$ where $v_1' = \denot{A_1}_{\interpretation(\context), \interpretation(\tuples)}$ and $v_2' = \denot{A_2}_{\interpretation(\context), \interpretation(\tuples)}$ by Figure~\ref{fig:proof-sematics2}.
By Lemma~\ref{lem:expression}, we have $\denot{A_1}_{\interpretation(\context), \interpretation(\tuples)} = \interpretation(\denot{A_1}_{\schema, \context, \tuples})$ and $\denot{A_2}_{\interpretation(\context), \interpretation(\tuples)} = \interpretation(\denot{A_2}_{\schema, \context, \tuples})$.
Also since $\interpretation(v_1) = \interpretation(\denot{A_1}_{\schema, \context, \tuples}) = \\$$\denot{A_1}_{\interpretation(\context), \interpretation(\tuples)}$ $ = v_1'$ and $\interpretation(v_2) = v_2'$,
\[
\begin{array}{rcl}
\interpretation(\denot{A_1 \alllogic A_2}_{\schema, \context, \tuples})
&=& \interpretation(\site(v_1 = \nullv \lor v_2 = \nullv, \bot, v_1 \alllogic v_2)) \\
&=& \site(\interpretation(v_1) = \nullv \lor \interpretation(v_2) = \nullv, \bot, \interpretation(v_1) \alllogic \interpretation(v_2)) \\
&=& \site(v_1' = \nullv \lor v_2' = \nullv, \bot, v_1' \alllogic v_2') \\
&=& \denot{A_1 \alllogic A_2}_{\interpretation(\context), \interpretation(\tuples)} \\
\end{array}
\]

\item Base case: $\phi = \text{IsNull}(E)$.

$\denot{\text{IsNull}(E)}_{\schema, \context, \tuples} = \site(\denot{E}_{\schema, \context, \tuples} = \nullv, \top, \bot)$ by Figure~\ref{fig:rules-predicates}.
$\denot{\text{IsNull}(E)}_{\interpretation(\context), \interpretation(\tuples)} = \site($ $\denot{E}_{\interpretation(\context), \interpretation(\tuples)} = \nullv, \top, \bot)$ by Figure~\ref{fig:proof-sematics2}.
By Lemma~\ref{lem:expression}, we have $\denot{E}_{\interpretation(\context), \interpretation(\tuples)} = \interpretation(\denot{E}_{\schema, \context, \tuples})$.
\[
\begin{array}{rcl}
\interpretation(\denot{\text{IsNull}(E)}_{\schema, \context, \tuples})
&=& \interpretation(\site(\denot{E}_{\schema, \context, \tuples} = \nullv, \top, \bot)) \\
&=& \site(\interpretation(\denot{E}_{\schema, \context, \tuples}) = \nullv, \top, \bot) \\
&=& \site(\denot{E}_{\interpretation(\context), \interpretation(\tuples)} = \nullv, \top, \bot) \\
&=& \denot{\text{IsNull}(E)}_{\interpretation(\context), \interpretation(\tuples)}
\end{array}
\]

\item Base case: $\phi = \vec{E} \in \vec{v}$.
$\denot{\vec{E} \in \vec{v}}_{\schema, \context, \tuples} = \denot{\lor_{i=1}^{n} \land_{j=1}^{m} E_j = v_{i,j}}_{\schema, \context, \tuples}$ where $\vec{v} = [v_{1,1}, \ldots, v_{n, m}]$ and $m = |\vec{E}|$ by Figure~\ref{fig:rules-predicates}.
$\denot{\vec{E} \in \vec{v}}_{\interpretation(\context), \interpretation(\tuples)} = \lor_{i=1}^{n} \denot{\vec{E}}_{\interpretation(\context), \interpretation(\tuples)} = \denot{v_i}_{\interpretation(\context), \interpretation(\tuples)}$ by Figure~\ref{fig:proof-sematics2}.
By Lemma~\ref{lem:expression}, we have $\denot{E_j}_{\interpretation(\context), \interpretation(\tuples)} = \interpretation(\denot{E_j}_{\schema, \context, \tuples})$ and $\denot{v_{i,j}}_{\interpretation(\context), \interpretation(\tuples)} = \interpretation(\denot{v_{i,j}}_{\schema, \context, \tuples})$.
\[
\begin{array}{rcl}
\interpretation(\denot{\vec{E} \in \vec{v}}_{\schema, \context, \tuples})
&=& \interpretation(\denot{\lor_{i=1}^{n} \land_{j=1}^{m} E_j = v_{i,j}}_{\schema, \context, \tuples}) \\
&=& \lor_{i=1}^{n} \land_{j=1}^{m} \interpretation(\denot{E_j = v_{i,j}}_{\schema, \context, \tuples}) \\
&=& \lor_{i=1}^{n} \land_{j=1}^{m} \interpretation(\denot{E_j}_{\schema, \context, \tuples}) = \interpretation(\denot{v_{i,j}}_{\schema, \context, \tuples}) \\
&=& \lor_{i=1}^{n} \land_{j=1}^{m} \denot{E_j}_{\interpretation(\context), \interpretation(\tuples)} = \denot{v_{i,j}}_{\interpretation(\context), \interpretation(\tuples)} \\
&=& \denot{\vec{E} \in \vec{v}}_{\interpretation(\context), \interpretation(\tuples)} \\
\end{array}
\]

\item Base case: $\phi = \vec{E} \in \query$.

Suppose that $\denot{\query}_{\interpretation(\context)} = \vec{v} = [v_{1,1}, \ldots, v_{n,m}]$ by .

\item Inductive case: $\phi = \phi_1 \land \phi_2$.

$\denot{\phi_1 \land \phi_2}_{\schema, \context, \tuples} = \ite((v_1 = v_2 = \nullv) \land (v_1 = \nullv \land v_2 = \bot) \land (v_1 = \bot \land v_2 = \nullv), \nullv, v_1  = \top \land v_2 = \top)$ where $v_1 = \denot{\phi_1}_{\schema, \context, \tuples}$ and $v_2 = \denot{\phi_2}_{\schema, \context, \tuples}$ by Figure~\ref{fig:rules-predicates}.
$\denot{\phi_1 \land \phi_2}_{\interpretation(\context), \interpretation(\tuples)} = \ite((v_1' = v_2' = \nullv) \land (v_1' = \nullv \land v_2' = \bot) \land (v_1' = \bot \land v_2' = \nullv), \nullv, v_1'  = \top \land v_2' = \top)$ where $v_1' = \denot{\phi_1}_{\interpretation(\context), \interpretation(\tuples)}$ and $v_2' = \denot{\phi_2}_{\interpretation(\context), \interpretation(\tuples)}$ by Figure~\ref{fig:proof-sematics2}.
By Lemma~\ref{lem:predicate}, we have $\denot{\phi_1}_{\interpretation(\context), \interpretation(\tuples)} = \interpretation(\denot{\phi_1}_{\schema, \context, \tuples})$ and $\denot{\phi_2}_{\interpretation(\context), \interpretation(\tuples)} = \interpretation(\denot{\phi_2}_{\schema, \context, \tuples})$.
Also since $\interpretation(v_1) = \interpretation(\denot{\phi_1}_{\schema, \context, \tuples}) = \denot{\phi_1}_{\interpretation(\context), \interpretation(\tuples)} = v_1'$ and $\interpretation(v_2) = v_2'$, 
\[
\begin{array}{rcl}
\interpretation(\denot{\phi_1 \land \phi_2}_{\schema, \context, \tuples}) 
&=& \interpretation(\ite((v_1 = \nullv \land v_2 = \nullv) \land (v_1 = \nullv \land v_2 = \bot) \\
&& \qquad \land (v_1 = \bot \land v_2 = \nullv), \nullv, v_1  = \top \land v_2 = \top)) \\
&=& \ite((\interpretation(v_1) = \nullv \land \interpretation(v_2) = \nullv) \land (\interpretation(v_1) = \nullv \land \interpretation(v_2) = \bot) \\
&& \qquad \land (\interpretation(v_1) = \bot \land \interpretation(v_2) = \nullv), \nullv, \interpretation(v_1)  = \top \land \interpretation(v_2) = \top) \\
&=& \ite((v_1' = \nullv \land v_2' = \nullv) \land (v_1' = \nullv \land v_2' = \bot) \\
&& \qquad \land (v_1' = \bot \land v_2' = \nullv), \nullv, v_1'  = \top \land v_2' = \top) \\
&=& \denot{\phi_1 \land \phi_2}_{\interpretation(\context), \interpretation(\tuples)}
\end{array}
\]

\item Inductive case: $\phi = \phi_1 \lor \phi_2$.

$\denot{\phi_1 \lor \phi_2}_{\schema, \context, \tuples} = \ite((v_1 = v_2 = \nullv) \lor (v_1 = \nullv \land v_2 = \bot) \lor (v_1 = \bot \land v_2 = \nullv), \nullv, v_1  = \top \lor v_2 = \top)$ where $v_1 = \denot{\phi_1}_{\schema, \context, \tuples}$ and $v_2 = \denot{\phi_2}_{\schema, \context, \tuples}$ by Figure~\ref{fig:rules-predicates}.
$\denot{\phi_1 \lor \phi_2}_{\interpretation(\context), \interpretation(\tuples)} = \ite((v_1' = v_2' = \nullv) \lor (v_1' = \nullv \land v_2' = \bot) \lor (v_1' = \bot \land v_2' = \nullv), \nullv, v_1'  = \top \lor v_2' = \top)$ where $v_1' = \denot{\phi_1}_{\interpretation(\context), \interpretation(\tuples)}$ and $v_2' = \denot{\phi_2}_{\interpretation(\context), \interpretation(\tuples)}$ by Figure~\ref{fig:proof-sematics2}.
By Lemma~\ref{lem:predicate}, we have $\denot{\phi_1}_{\interpretation(\context), \interpretation(\tuples)} = \interpretation(\denot{\phi_1}_{\schema, \context, \tuples})$ and $\denot{\phi_2}_{\interpretation(\context), \interpretation(\tuples)} = \interpretation(\denot{\phi_2}_{\schema, \context, \tuples})$.
Also since $\interpretation(v_1) = \interpretation(\denot{\phi_1}_{\schema, \context, \tuples}) = \denot{\phi_1}_{\interpretation(\context), \interpretation(\tuples)} = v_1'$ and $\interpretation(v_2) = v_2'$, 
\[
\begin{array}{rcl}
\interpretation(\denot{\phi_1 \lor \phi_2}_{\schema, \context, \tuples}) 
&=& \interpretation(\ite((v_1 = \nullv \lor v_2 = \nullv) \lor (v_1 = \nullv \land v_2 = \bot) \\
&& \qquad \lor (v_1 = \bot \land v_2 = \nullv), \nullv, v_1  = \top \lor v_2 = \top)) \\
&=& \ite((\interpretation(v_1) = \nullv \lor \interpretation(v_2) = \nullv) \lor (\interpretation(v_1) = \nullv \land \interpretation(v_2) = \bot) \\
&& \qquad \lor (\interpretation(v_1) = \bot \land \interpretation(v_2) = \nullv), \nullv, \interpretation(v_1)  = \top \lor \interpretation(v_2) = \top) \\
&=& \ite((v_1' = \nullv \lor v_2' = \nullv) \lor (v_1' = \nullv \land v_2' = \bot) \\
&& \qquad \lor (v_1' = \bot \land v_2' = \nullv), \nullv, v_1'  = \top \lor v_2' = \top) \\
&=& \denot{\phi_1 \lor \phi_2}_{\interpretation(\context), \interpretation(\tuples)}
\end{array}
\]

\item Inductive case: $\phi = \neg \phi_1$.

$\denot{\neg \phi_1}_{\schema, \context, \tuples} = \site(\denot{\phi_1}_{\schema, \context, \tuples} = \nullv, \nullv, \neg \denot{\phi_1}_{\schema, \context, \tuples})$ because $\denot{\neg \phi}_{\schema, \context, \tuples} = \text{let}~ v = \denot{\phi}_{\schema, \context, \tuples} \\$$~\text{in}~ \ite(v = \nullv, \nullv, \neg v)$ by Figure~\ref{fig:rules-predicates}.
$\denot{\neg \phi_1}_{\interpretation(\context), \interpretation(\tuples)} = \site(\denot{\phi_1}_{\interpretation(\context), \interpretation(\tuples)} = \nullv, \nullv, \\$$ \neg \denot{\phi_1}_{\interpretation(\context), \interpretation(\tuples)})$ because $\denot{\neg \phi}_{\db, xs} = \text{let}~ v = \denot{\phi}_{\db, xs} ~\text{in}~ \ite(v = \nullv, \nullv, \neg v)$ by Figure~\ref{fig:proof-sematics2}.
By Lemma~\ref{lem:predicate}, we have $\denot{\phi_1}_{\interpretation(\context), \interpretation(\tuples)} = \interpretation(\denot{\phi_1}_{\schema, \context, \tuples})$.
Therefore, 
\[
\begin{array}{rcl}
\interpretation(\denot{\neg \phi_1}_{\schema, \context, \tuples}) &=& \site(\interpretation(\denot{\phi_1}_{\schema, \context, \tuples} = \nullv) = \nullv, \interpretation(\neg \denot{\phi_1}_{\schema, \context, \tuples})) \\
&=& \site(\interpretation(\denot{\phi_1}_{\schema, \context, \tuples}) = \nullv, \nullv, \neg \interpretation(\denot{\phi_1}_{\schema, \context, \tuples})) \\
&=& \site(\denot{\phi_1}_{\interpretation(\context), \interpretation(\tuples)} = \nullv, \nullv, \neg \denot{\phi_1}_{\interpretation(\context), \interpretation(\tuples)}) \\
&=& \denot{\neg \phi_1}_{\interpretation(\context), \interpretation(\tuples)}
\end{array}
\]

\end{enumerate}
\end{proof}

\begin{lemma}\label{lem:expression}
\revision{
Let $\db$ be a database over schema $\schema$ and $xs$ be a tuple list which formulates an expression $E$.
Consider a symbolic database $\context$ over $\schema$ and a list of symbolic tuples $\tuples$ such that $\denot{E}_{\schema, \context, \tuples}$.
For any interpretation $\interpretation$ such that $\interpretation(\context) = \db \land \interpretation(\tuples) = xs$, evaluating $E$ over the concrete database $\interpretation(\context)$ and the concrete tuple list $\interpretation(\tuples)$ yields the value of $\interpretation(\denot{E}_{\schema, \context, \tuples})$, i.e., 
\[
\interpretation(\context) = \db \land \interpretation(\tuples) = xs \Rightarrow \denot{E}_{\interpretation(\context), \interpretation(\tuples)} = \interpretation(\denot{E}_{\schema, \context, \tuples})
\]
}
\end{lemma}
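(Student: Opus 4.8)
The plan is to prove Lemma~\ref{lem:expression} by structural induction on the expression $E$, in exactly the same style as the proof of Lemma~\ref{lem:proof_predicate}, and mutually with that lemma (the two recur into each other through $\text{ITE}$, $\text{Case}$, and the arithmetic/aggregate cases that involve predicates, while predicates recur back into expressions through $A_1 \alllogic A_2$, $\text{IsNull}$, and $\vec{E} \in \vec{v}$). The key invariant throughout is that the symbolic evaluator $\denot{\cdot}_{\schema, \context, \tuples}$ is defined by the \emph{same recursion scheme} as the concrete evaluator $\denot{\cdot}_{\db, xs}$ from Figure~\ref{fig:proof-sematics2}, with each atomic symbolic operation (attribute lookup $t.a$, arithmetic $\allarith$, $\site$, $\Sigma$, $\text{min}$, $\text{max}$, the uninterpreted $\del$ predicate) chosen so that applying $\interpretation$ to it yields the corresponding concrete operation. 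So the whole proof is a bookkeeping exercise: push $\interpretation$ through one layer of the recursion, invoke the inductive hypothesis (or Lemma~\ref{lem:predicate}) on the immediate subterms, and reassemble.

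First I would handle the base cases. For $E = v$ (a literal) both sides are $v$ and $\interpretation(v) = v$. For $E = a$ (an attribute), the symbolic side is $t.a$ where $t = \text{First}(\tuples)$, which under $\interpretation$ becomes $\interpretation(t).a = \textsf{lookup}(\shead(\interpretation(\tuples)), a)$, matching the concrete definition $\denot{a}_{\db, xs} = \textsf{lookup}(\shead(xs), a)$ once we use the hypothesis $\interpretation(\tuples) = xs$. Then the inductive cases: for $E_1 \allarith E_2$, unfold both sides to a $\site(v_1 = \nullv \lor v_2 = \nullv, \nullv, v_1 \allarith v_2)$ shape, set $v_i = \denot{E_i}_{\schema,\context,\tuples}$ and $v_i' = \denot{E_i}_{\db,xs}$, apply the IH to get $\interpretation(v_i) = v_i'$, and push $\interpretation$ through $\site$ and $\allarith$ (both interpreted homomorphically). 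For $\text{ITE}(\phi, E_1, E_2)$, do the same but the guard $\denot{\phi}_{\schema,\context,\tuples}$ is dispatched to Lemma~\ref{lem:predicate}. For $\text{Case}(\vec\phi, \vec E, E')$, note the symbolic encoding desugars it to a nested $\text{ITE}$ (Figure~\ref{fig:rules-expressions}) and the concrete semantics does the analogous $\sfoldr$ over $\site$; induct on the length $n = |\vec\phi|$, peeling one $\text{ITE}$ layer and reusing the $\text{ITE}$ case plus Lemma~\ref{lem:predicate}. The aggregates $\text{Count}/\text{Sum}/\text{Avg}/\text{Min}/\text{Max}$ are the most verbose: each has the form $\site(\text{AllNull}(E, \ldots), \nullv, \mathit{agg}_{t \in \tuples} \site(\del(t) \lor \denot{E}_{\ldots,[t]} = \nullv, \mathit{id}, \denot{E}_{\ldots,[t]}))$, and one checks that $\interpretation(\text{AllNull}(E,\schema,\context,\tuples)) = \text{AllNull}(E, \interpretation(\context), \interpretation(\tuples))$ (an immediate consequence of the IH applied pointwise), that $\interpretation$ commutes with $\Sigma$, $\text{min}$, $\text{max}$, and that $\interpretation(\del)(t)$ is the deletion status in the concrete tuple list — then the case closes; $\text{Avg}$ follows from $\text{Sum}$ and $\text{Count}$.

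There is one genuine gap in the excerpt I would need to close: the case $\phi = \vec E \in Q$ in Lemma~\ref{lem:proof_predicate} is stated but its body is cut off, and this case is where the expression/predicate induction genuinely touches the \emph{query} encoding (it invokes $\schema, \context \vdash \query \hookrightarrow [t_1,\ldots,t_n]$ and $\schema, \context \vdash \query \leadsto \formula$). The hard part will be organizing the mutual recursion so that this subquery case is legitimate — concretely, I expect to need to run the induction simultaneously over all three of Theorem~\ref{lem:query2}, Lemma~\ref{lem:predicate}, and Lemma~\ref{lem:expression} (a single structural induction over the combined syntax of queries, predicates, expressions, and attribute lists), so that when Lemma~\ref{lem:predicate}'s $\vec E \in Q$ case needs ``$\denot{Q}_{\interpretation(\context)} = \interpretation(R)$'' it may appeal to the Theorem~\ref{lem:query2} statement on the structurally-smaller subquery $Q$, and the attribute values $v_{i,j} = \denot{a_j}_{\schema,\context,[t_i]}$ extracted from that subquery's result tuples are handled by Lemma~\ref{lem:expression} (the attribute-lookup base case) applied to $t_i \in R$. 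Apart from stitching that mutual dependency together cleanly, every individual case is a routine ``$\interpretation$ commutes with one constructor'' verification, so the bulk of the write-up mirrors the already-written proof of Lemma~\ref{lem:proof_predicate} almost line for line.
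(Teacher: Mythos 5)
Your proposal matches the paper's own proof: a structural induction on $E$, run mutually with Lemma~\ref{lem:predicate} (invoked for the guards of $\text{ITE}$ and $\text{Case}$), where each case simply pushes $\interpretation$ homomorphically through one constructor and closes with the inductive hypothesis, exactly as in your base cases $a$ and $v$ and the $E_1 \allarith E_2$ case. Two minor organizational remarks: the paper files the aggregate cases $\text{Count}/\text{Sum}/\text{Avg}/\text{Min}/\text{Max}$ under Lemma~\ref{lem:attribute} rather than here (since $\aggr(E)$ is an attribute $A$, not an expression $E$, in Figure~\ref{fig:syntax-sql}), and your plan to organize the $\vec{E} \in Q$ case as a simultaneous induction with Theorem~\ref{lem:query2} is sound and in fact more careful than the paper, whose own write-up of that case is left incomplete.
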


\begin{lemma}\label{lem:proof_expression}
Suppose $\denot{E}_{\db, xs} = v$, then $\interpretation(\context) = \db \land \interpretation(\tuples) = xs \Rightarrow \denot{E}_{\interpretation(\context), \interpretation(\tuples)} = \interpretation(\denot{E}_{\schema, \context, \tuples})$ is true iff $\denot{E}_{\interpretation(\context), \interpretation(\tuples)} = v$ and $\interpretation(\denot{E}_{\schema, \context, \tuples}) = v$.
\end{lemma}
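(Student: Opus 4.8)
The plan is to prove the biconditional by a direct logical manipulation rather than by induction on $E$: once the concrete value $v = \denot{E}_{\db, xs}$ is fixed, the statement is a pure reformulation of the conclusion equation, so no case analysis on the structure of $E$ is required. First I would fix the interpretation $\interpretation$ under consideration and read the lemma under its antecedent $\interpretation(\context) = \db \land \interpretation(\tuples) = xs$ (this is the reading consistent with how the lemma pairs with Lemma~\ref{lem:expression}, where $\interpretation$ is always chosen to agree with $\db$ and $xs$). The single load-bearing observation is the substitution step: since the concrete evaluator $\denot{\cdot}$ is a function of exactly the database and tuple-list it is given, and those arguments are $\interpretation(\context) = \db$ and $\interpretation(\tuples) = xs$, we get $\denot{E}_{\interpretation(\context), \interpretation(\tuples)} = \denot{E}_{\db, xs} = v$ directly from the hypothesis. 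Hence the left-hand side of the conclusion equation is pinned to $v$ for free, and in particular the conjunct $\denot{E}_{\interpretation(\context), \interpretation(\tuples)} = v$ is automatically true.

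With this in hand, I would discharge both directions. For the forward direction, assuming the implication holds together with the antecedent, modus ponens gives $\denot{E}_{\interpretation(\context), \interpretation(\tuples)} = \interpretation(\denot{E}_{\schema, \context, \tuples})$; chaining with $\denot{E}_{\interpretation(\context), \interpretation(\tuples)} = v$ yields $\interpretation(\denot{E}_{\schema, \context, \tuples}) = v$ by transitivity, and the first conjunct is already established, so both split conditions hold. For the backward direction, assuming $\denot{E}_{\interpretation(\context), \interpretation(\tuples)} = v$ and $\interpretation(\denot{E}_{\schema, \context, \tuples}) = v$, transitivity gives $\denot{E}_{\interpretation(\context), \interpretation(\tuples)} = \interpretation(\denot{E}_{\schema, \context, \tuples})$, which is the consequent, so the implication holds. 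Both directions reduce to a single application of transitivity of equality through the pivot value $v$.

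The hard part will not be any calculation --- there is essentially none --- but rather stating the logical reading precisely, since the biconditional is only sensible once we regard the antecedent as a standing hypothesis. If one literally allowed interpretations violating $\interpretation(\context) = \db$ or $\interpretation(\tuples) = xs$, the implication would be vacuously true while the right-hand conjunction could fail, breaking the iff; I would therefore open the proof by making explicit that $\interpretation$ ranges over interpretations satisfying the antecedent. I would also remark that, because the conjunct $\denot{E}_{\interpretation(\context), \interpretation(\tuples)} = v$ always holds, the biconditional collapses to ``the implication holds iff $\interpretation(\denot{E}_{\schema, \context, \tuples}) = v$''; this makes transparent that the lemma is precisely the device that reduces checking the commutation equation of Lemma~\ref{lem:expression} to separately computing the common concrete value $v$ and verifying that the interpreted symbolic denotation equals it --- the shape in which every case of the surrounding structural inductions is actually discharged.
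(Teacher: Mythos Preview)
Your proof is correct but takes a genuinely different route from the paper's. You treat the lemma as a pure logical reformulation and discharge the biconditional by transitivity of equality through the pivot value $v$, after observing that under the standing antecedent $\interpretation(\context)=\db$ and $\interpretation(\tuples)=xs$ the conjunct $\denot{E}_{\interpretation(\context),\interpretation(\tuples)}=v$ is immediate by substitution. The paper, by contrast, proceeds by structural induction on $E$, walking through the cases $a$, $v$, $E_1 \allarith E_2$, $\text{ITE}(\phi,E_1,E_2)$, $\text{Case}(\vec\phi,\vec E_1,E_2)$ and in each one directly computing that $\interpretation(\denot{E}_{\schema,\context,\tuples})=\denot{E}_{\interpretation(\context),\interpretation(\tuples)}$. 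In effect the paper's proof establishes both sides of the biconditional outright (and thus really proves Lemma~\ref{lem:expression} at the same time), whereas your argument proves only the equivalence between the two sides and leaves the substantive commutation work to the surrounding induction. Your approach is more economical for the lemma as literally stated and makes explicit its role as a reduction device; the paper's approach bundles in the actual case analysis that the rest of the development needs. Your caveat about restricting to interpretations satisfying the antecedent is exactly the reading the paper uses implicitly, so there is no gap there.
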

\begin{proof}[Proof.]
By structural induction on $E$. 
\begin{enumerate}

\item Base case: $E = a$. 

$\denot{a}_{\schema, \context, \tuples} = \shead(\tuples).a$ by Figure~\ref{fig:rules-expressions}.
$\denot{a}_{\interpretation(\context), \interpretation(\tuples)} = lookup(\shead(\interpretation(\tuples)), a)$ by Figure~\ref{fig:proof-sematics2}. Therefore, 
\[
\begin{array}{rcl}
\interpretation(\denot{a}_{\schema, \context, \tuples}) 
&=& \interpretation(\shead(\tuples).a) = \interpretation(lookup(\shead(\tuples), a)) \\
&=& lookup(\shead(\interpretation(\tuples)), a) = lookup(\shead(xs), a) \\
&=& \denot{a}_{\interpretation(\context), \interpretation(\tuples)}
\end{array}
\]

\item Base case: $E = v$. 

$\denot{v}_{\schema, \context, \tuples} = v$ by Figure~\ref{fig:rules-expressions}.
$\denot{v}_{\interpretation(\context), \interpretation(\tuples)} = v$ by Figure~\ref{fig:proof-sematics2}. Therefore, $\interpretation(\denot{v}_{\schema, \context, \tuples}) = \interpretation(v) = v = \denot{v}_{\db, xs}$.

\item Inductive case: $E = E_1 \allarith E_2$. 

$\denot{E_1 \allarith E_2}_{\schema, \context, \tuples} = \site(v_1 = \nullv \lor v_2 = \nullv, \nullv, v_1 \allarith v_2)$ where $v_1 = \denot{E_1}_{\schema, \context, \tuples}$ and $v_2 = \denot{E_2}_{\schema, \context, \tuples}$ by Figure~\ref{fig:rules-expressions}.
$\denot{E_1 \allarith E_2}_{\interpretation(\context), \interpretation(\tuples)} = \site(v_1' = \nullv \lor v_2' = \nullv, \nullv, v_1' \allarith v_2')$ where $v_1' = \denot{E_1}_{\interpretation(\context), \interpretation(\tuples)}$ and $v_2' = \denot{E_2}_{\interpretation(\context), \interpretation(\tuples)}$ by Figure~\ref{fig:proof-sematics2}.
By inductive hypothesis, we have $\denot{E_1}_{\interpretation(\context), \interpretation(\tuples)} = \interpretation(\denot{E_1}_{\schema, \context, \tuples})$ and $\denot{E_2}_{\interpretation(\context), \interpretation(\tuples)} = \interpretation(\denot{E_2}_{\schema, \context, \tuples})$.
Also since $\interpretation(v_1) = \interpretation(\denot{E_1}_{\schema, \context, \tuples}) = \denot{E_1}_{\interpretation(\context), \interpretation(\tuples)} = v_1'$ and $\interpretation(v_2) = v_2'$, 
\[
\begin{array}{rcl}
\interpretation(\denot{E_1 \allarith E_2}_{\schema, \context, \tuples})
&=& \interpretation(\site(v_1 = \nullv \lor v_2 = \nullv, \nullv, v_1 \allarith v_2)) \\
&=& \site(\interpretation(v_1) = \nullv \lor \interpretation(v_2) = \nullv, \nullv, \interpretation(v_1) \allarith \interpretation(v_2)) \\
&=& \site(v_1' = \nullv \lor v_2' = \nullv, \nullv, v_1' \allarith v_2') \\
&=& \denot{E_1 \allarith E_2}_{\interpretation(\context), \interpretation(\tuples)} \\
\end{array}
\]

\item Inductive case: $E = \text{ITE}(\phi, E_1, E_2)$.

$\denot{\text{ITE}(\phi, E_1, E_2)}_{\schema, \context, \tuples} = \site(\denot{\phi}_{\schema, \context, \tuples} = \top, \denot{E_1}_{\schema, \context, \tuples}, \denot{E_2}_{\schema, \context, \tuples})$ by Figure~\ref{fig:rules-expressions}. \\
$\denot{\text{ITE}(\phi, E_1,$ $ E_2)}_{\interpretation(\context), \interpretation(\tuples)} = \site(\denot{\phi}_{\interpretation(\context), \interpretation(\tuples)} = \top, \denot{E_1}_{\interpretation(\context), \interpretation(\tuples)}, \denot{E_2}_{\interpretation(\context), \interpretation(\tuples)})$ by Figure~\ref{fig:proof-sematics2}.
By inductive hypothesis, we have $\denot{\phi}_{\interpretation(\context), \interpretation(\tuples)} = \interpretation(\denot{\phi}_{\schema, \context, \tuples})$, $\denot{E_1}_{\interpretation(\context), \interpretation(\tuples)} = \interpretation(\denot{E_1}_{\schema, \context, \tuples})$ and $\denot{E_2}_{\interpretation(\context), \interpretation(\tuples)} = \interpretation(\denot{E_2}_{\schema, \context, \tuples})$.
Therefore,
\[
\begin{array}{rcl}
\interpretation(\denot{\text{ITE}(\phi, E_1, E_2)}_{\schema, \context, \tuples})
&=& \interpretation(\site(\denot{\phi}_{\schema, \context, \tuples} = \top, \denot{E_1}_{\schema, \context, \tuples}, \denot{E_2}_{\schema, \context, \tuples})) \\
&=& \site(\interpretation(\denot{\phi}_{\schema, \context, \tuples}) = \top, \interpretation(\denot{E_1}_{\schema, \context, \tuples}), \interpretation(\denot{E_2}_{\schema, \context, \tuples})) \\
&=& \site(\denot{\phi}_{\interpretation(\context), \interpretation(\tuples)} = \top, \denot{E_1}_{\interpretation(\context), \interpretation(\tuples)}, \denot{E_2}_{\interpretation(\context), \interpretation(\tuples)}) \\
&=& \denot{\text{ITE}(\phi, E_1, E_2)}_{\interpretation(\context), \interpretation(\tuples)} \\
\end{array}
\]

\item Inductive case: $E = \text{Case}(\vec{\phi}, \vec{E_1}, E_2)$.

$\denot{\text{Case}(\vec{\phi}, \vec{E_1}, E_2)}_{\schema, \context, \tuples} = \denot{\text{ITE}(\phi_1, E_{1}^{1}, \text{ITE}(\ldots, \text{ITE}(\phi_n, E_{1}^{n}, E_2)))}_{\schema, \context, \tuples}$ where $E_{1}^{i} \in \vec{E_1}$ and $n = |\vec{\phi}| = |\vec{E}|$ by Figure~\ref{fig:rules-expressions}.
$\denot{\text{Case}(\vec{\phi}, \vec{E_1}, E_2)}_{\interpretation(\context), \interpretation(\tuples)} = \sfoldr(\lambda y. \lambda (\phi_i, E_1^i). \denot{\text{ite}(\phi_i, E_1^i, y)}_{\interpretation(\context), \interpretation(\tuples)},$ $ E_2, reverse(zip(\vec{\phi}, \vec{E_1})))$ by Figure~\ref{fig:proof-sematics2}.
By inductive hypothesis, we have $\denot{\phi_i}_{\interpretation(\context), \interpretation(\tuples)} = \interpretation(\denot{\phi_i}_{\schema, \context, \tuples})$, $\denot{E_1^i}_{\interpretation(\context), \interpretation(\tuples)} = \interpretation(\denot{E_1^i}_{\schema, \context, \tuples})$ and $\denot{E_2}_{\interpretation(\context), \interpretation(\tuples)} = \interpretation(\denot{E_2}_{\schema, \context, \tuples})$.

\end{enumerate}
\end{proof}

\begin{lemma}\label{lem:attribute}
\revision{
Let $\db$ be a database over schema $\schema$ and $xs$ be a tuple list which formulates an attribute $A$.
Consider a symbolic database $\context$ over $\schema$ and a list of symbolic tuples $\tuples$ such that $\denot{A}_{\schema, \context, \tuples}$.
For any interpretation $\interpretation$ such that $\interpretation(\context) = \db \land \interpretation(\tuples) = xs$, evaluating $A$ over the concrete database $\interpretation(\context)$ and the concrete tuple list $\interpretation(\tuples)$ yields the value of $\interpretation(\denot{A}_{\schema, \context, \tuples})$, i.e., 
\[
\interpretation(\context) = \db \land \interpretation(\tuples) = xs \Rightarrow \denot{A}_{\interpretation(\context), \interpretation(\tuples)} = \interpretation(\denot{A}_{\schema, \context, \tuples})
\]
}
\end{lemma}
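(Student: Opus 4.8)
The plan is to prove Lemma~\ref{lem:attribute} by structural induction on the attribute $A$, treating it as one clause of a single simultaneous induction that also establishes Lemmas~\ref{lem:predicate} and~\ref{lem:expression}, since attributes, expressions, and predicates are mutually recursive in the grammar of Figure~\ref{fig:syntax-sql}. The engine of every case is the commutation of $\interpretation$ with the symbolic denotation on strictly smaller subterms, which Lemmas~\ref{lem:expression} and~\ref{lem:predicate} supply: for any subexpression $E'$ of $A$, $\denot{E'}_{\interpretation(\context), \interpretation(\tuples)} = \interpretation(\denot{E'}_{\schema, \context, \tuples})$, and analogously for any subpredicate $\phi'$.

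First I would dispatch the structurally simple cases. When $A = E$ is a plain expression, the claim is literally Lemma~\ref{lem:expression}. When $A = \text{Cast}(\phi)$, I would apply Lemma~\ref{lem:predicate} to get $\denot{\phi}_{\interpretation(\context), \interpretation(\tuples)} = \interpretation(\denot{\phi}_{\schema, \context, \tuples})$ and then observe that $\text{Cast}$ is a pointwise case-split (\nullv on \nullv, $1$ on $\top$, $0$ otherwise) that commutes with $\interpretation$, by the same \ite-pushing argument used throughout the proof of Lemma~\ref{lem:predicate}. The arithmetic case $A = A_1 \allarith A_2$ follows from the inductive hypothesis on $A_1$ and $A_2$ together with the fact that the \nullv-guarded operator $\allarith$ commutes with $\interpretation$, exactly mirroring the $E_1 \allarith E_2$ case of Lemma~\ref{lem:expression}.

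The substantive case is the aggregate $A = \mathcal{G}(E)$ with $\mathcal{G} \in \{\text{Count}, \text{Sum}, \text{Avg}, \text{Min}, \text{Max}\}$. Here I would expand the symbolic encoding from Figure~\ref{fig:rules-expressions} and the concrete semantics from Figure~\ref{fig:proof-sematics2} in parallel. Once the alignment described below is in place, the argument is routine: $\text{AllNull}(E, \schema, \context, \tuples)$ maps under $\interpretation$ to $\text{AllNull}(E, \interpretation(\context), \interpretation(\tuples))$ by Lemma~\ref{lem:expression}; each summand or $\min$/$\max$ argument $\denot{E}_{\schema, \context, [t]}$ commutes with $\interpretation$, again by Lemma~\ref{lem:expression}; the interpreted operations $\Sigma$, $\min$, $\max$, and division commute with $\interpretation$; so the enclosing \ite expression commutes too. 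Avg reduces to the Sum and Count cases via its definition as a quotient.

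I expect the main obstacle to be the reconciliation step inside the aggregate case: the symbolic encoding of Figure~\ref{fig:rules-expressions} carries a $\del(t) \lor \denot{E}_{\schema, \context, [t]} = \nullv$ guard, whereas the concrete semantics of Figure~\ref{fig:proof-sematics2} only tests $\denot{E}_{\db, [y]} = \nullv$, because concrete tuple lists have no deleted entries. Making this rigorous requires pinning down the precise reading of the abuse $\interpretation(\tuples) = xs$ when $\interpretation$ marks some $t \in \tuples$ deleted, and then showing that the $\del$-guards in the symbolic aggregate encoding exactly compensate, so that $\interpretation(\denot{\mathcal{G}(E)}_{\schema, \context, \tuples})$ agrees with $\denot{\mathcal{G}(E)}$ evaluated on the sublist of $\tuples$ that $\interpretation$ keeps non-deleted. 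A secondary subtlety is that a $\vec{E} \in Q$ predicate may appear inside $\text{Cast}$, which implicitly invokes the query-encoding machinery of Theorems~\ref{lem:query1} and~\ref{lem:query2}; but this is already absorbed into Lemma~\ref{lem:predicate}, so in the present proof it only needs to be cited, not re-derived.
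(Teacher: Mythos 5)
Your proposal follows essentially the same route as the paper's proof: structural induction on $A$, dispatching $A=E$ and $\text{Cast}(\phi)$ via Lemmas~\ref{lem:expression} and~\ref{lem:predicate}, $A_1 \allarith A_2$ via the inductive hypothesis, and the aggregates by expanding Figure~\ref{fig:rules-expressions} against Figure~\ref{fig:proof-sematics2} in parallel, commuting $\interpretation$ through $\text{AllNull}$, the per-tuple evaluations, and $\Sigma$/$\min$/$\max$/division, with Avg reduced to Sum and Count. The $\del$-guard reconciliation you flag as the main obstacle is exactly what the paper resolves by stipulating $\interpretation(\del)(t)=\top$ precisely for the symbolic tuples of $\tuples$ not present in $xs$, so your treatment matches the paper's (your explicit simultaneous-induction framing for the mutually recursive lemmas is only a more careful packaging of what the paper does implicitly).
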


\vspace{10pt}

\begin{proof}[Proof.]\label{lem:proof_attribute}
By structural induction on $A$. 
\begin{enumerate}

\item Inductive case: $A = \text{Cast}(\phi)$.

By the definition of the $\text{Cast}$ function, we know $\text{Cast}(\phi) = \nullv$ iff $\phi = \nullv$, $\text{Cast}(\phi) = 1$ iff $\phi$ is true; otherwise it returns 0. 
Furthermore, by Lemma~\ref{lem:expression}, we know $\interpretation(\denot{A}_{\schema, \context, \tuples}) = \interpretation(\denot{\text{Cast}(\phi)}_{\schema, \context, \tuples}) = \denot{\text{Cast}(\phi)}_{\interpretation(\context), \interpretation(\tuples)} = \denot{v}_{\interpretation(\context), \interpretation(\tuples)} = \denot{E}_{\interpretation(\context), \interpretation(\tuples)} = \denot{A}_{\interpretation(\context), \interpretation(\tuples)}$ where $v \in \{0, 1\}$ if $\phi \in \{\top, \bot\}$; and by Lemma~\ref{lem:proof_predicate} $\interpretation(\denot{A}_{\schema, \context, \tuples}) = \interpretation(\denot{\text{Cast}(\nullv)}_{\schema, \context, \tuples}) = \denot{\text{Cast}(\nullv)}_{\interpretation(\context), \interpretation(\tuples)} = \denot{\nullv}_{\interpretation(\context), \interpretation(\tuples)} = \denot{\phi}_{\interpretation(\context), \interpretation(\tuples)} = \denot{A}_{\interpretation(\context), \interpretation(\tuples)}$ if $\phi = \nullv$.



\item Inductive case: $A = E$.

By Lemma~\ref{lem:expression}, we know $\interpretation(\denot{A}_{\schema, \context, \tuples}) = \interpretation(\denot{E}_{\schema, \context, \tuples}) = \denot{E}_{\interpretation(\context), \interpretation(\tuples)} = \denot{A}_{\interpretation(\context), \interpretation(\tuples)}$.

\item Inductive case: $A = \text{Count}(E)$.

$\denot{\text{Count}(E)}_{\schema, \context, \tuples} = \site(\text{AllNull}(E, \schema, \context, \tuples), \nullv, \Sigma_{t \in \tuples} \text{ite}(\del(t) \lor \denot{E}_{\schema, \context, [t]} = \nullv, 0, 1))$ by Figure~\ref{fig:rules-expressions}. 
$\denot{\text{Count}(E)}_{\interpretation(\context), \interpretation(\tuples)} = \site(\text{AllNull}(E, \interpretation(\context), \interpretation(\tuples)), \nullv, \sfoldl(+, 0, \smap(xs, \lambda y. $ $\site(\denot{E}_{\interpretation(\context), [y]} = \nullv, 0, 1))))$ by Figure~\ref{fig:proof-sematics2}.
By Lemma~\ref{lem:expression}, we have $\denot{E}_{\interpretation(\context), \interpretation(\tuples)} = \\$$ \interpretation(\denot{E}_{\schema, \context, \tuples})$.
By the definition of $\interpretation$, $\interpretation(\del)(t) = \top$ for the symbolic tuple $t \in \tuples \land t \not \in xs$; similarly $\interpretation(\del)(t) = \bot$ for the symbolic tuple $t \in \tuples \land t \in xs$.
\[
\begin{array}{rcl}
\interpretation(\text{AllNull}(E, \schema, \context, \tuples)) 
&=& \interpretation(\land_{t \in \tuples}(\neg \del(t) \to \denot{E}_{\schema, \context, [t]} = \nullv)) \\
&=& \land_{t \in \interpretation(\tuples)}(\neg \interpretation(\del)(t) \to \interpretation(\denot{E}_{\schema, \context, [t]}) = \nullv) \\
&=& \land_{t \in \interpretation(\tuples)}(\neg \bot \to \denot{E}_{\interpretation(\context), \interpretation([t])} = \nullv) \\
&=& \land_{t \in \interpretation(\tuples)}(\denot{E}_{\interpretation(\context), \interpretation([t])} = \nullv) \\
&=& \text{AllNull}(E, \interpretation(\context), \interpretation(\tuples))
\end{array}
\]
By the definition of $\sfoldl$ and $\smap$, $\sfoldl(+, 0, \smap(xs, \lambda y. \site(\denot{E}_{\db, [y]} = \nullv, 0, 1))) = \\ \Sigma_{x \in \interpretation(\tuples)} \site(\denot{E}_{\db, [x]} = \nullv, 0, 1)$.
\[
\begin{array}{rcl}
\interpretation(\denot{\text{Count}(E)}_{\schema, \context, \tuples})
&=& \interpretation(\site(\text{AllNull}(E, \schema, \context, \tuples), \nullv, \\
&& \qquad \Sigma_{t \in \tuples} \text{ite}(\del(t) \lor \denot{E}_{\schema, \context, [t]} = \nullv, 0, 1))) \\
&=& \site(\interpretation(\text{AllNull}(E, \schema, \context, \tuples)), \nullv, \\
&& \qquad \interpretation(\Sigma_{t \in \tuples} \text{ite}(\del(t) \lor \denot{E}_{\schema, \context, [t]} = \nullv, 0, 1))) \\
&=& \site(\text{AllNull}(E, \interpretation(\context), \interpretation(\tuples)), \nullv, \\
&& \qquad \Sigma_{t \in \interpretation(\tuples)} \text{ite}(\denot{E}_{\interpretation(\context), [t]} = \nullv, 0, 1)) \\
&=& \site(\text{AllNull}(E, \interpretation(\context), \interpretation(\tuples)), \nullv, \\
&& \qquad \sfoldl(+, 0, \smap(\interpretation(\tuples), \lambda y. \site(\denot{E}_{\interpretation(\context), [y]} = \nullv, 0, 1))) \\
&=& \denot{\text{Count}(E)}_{\interpretation(\context), \interpretation(\tuples)}
\end{array}
\]

\item Inductive case: $A = \text{Sum}(E)$.

$\denot{\text{Sum}(E)}_{\schema, \context, \tuples} = \site(\text{AllNull}(E, \schema, \context, \tuples), \nullv, \Sigma_{t \in \tuples} \text{ite}(\del(t) \lor \denot{E}_{\schema, \context, [t]} = \nullv, 0, \denot{E}_{\schema, \context, [t]}))$ by Figure~\ref{fig:rules-expressions}.
$\denot{\text{Sum}(E)}_{\interpretation(\context), \interpretation(\tuples)} = \site(\text{AllNull}(E, \interpretation(\context), \interpretation(\tuples)), \nullv, \sfoldl(+, 0, \smap(xs, \lambda y. $ $\site(\denot{E}_{\interpretation(\context), [y]} = \nullv, 0, \denot{E}_{\interpretation(\context), [y]}))))$ by Figure~\ref{fig:proof-sematics2}.
By Lemma~\ref{lem:expression}, we have $\denot{E}_{\interpretation(\context), \interpretation(\tuples)} = \interpretation(\denot{E}_{\schema, \context, \tuples})$.
By the definition of $\interpretation$, we know $\interpretation(\text{AllNull}(E, \schema, \context, \tuples)) = \text{AllNull}(E, \interpretation(\context), \interpretation(\tuples))$.
By the definition of $\sfoldl$ and $\smap$, $\sfoldl(+, 0, \smap(xs, \lambda y. \site(\denot{E}_{\db, [y]} = \nullv, 0, \denot{E}_{\db, [y]}))) = \Sigma_{x \in \interpretation(\tuples)} \site(\denot{E}_{\db, [x]} = \nullv, 0, \denot{E}_{\db, [x]})$.
\[
\begin{array}{rcl}
\interpretation(\denot{\text{Sum}(E)}_{\schema, \context, \tuples})
&=& \interpretation(\site(\text{AllNull}(E, \schema, \context, \tuples), \nullv, \\
&& \qquad \Sigma_{t \in \tuples} \text{ite}(\del(t) \lor \denot{E}_{\schema, \context, [t]} = \nullv, 0, \denot{E}_{\schema, \context, [t]}))) \\
&=& \site(\interpretation(\text{AllNull}(E, \schema, \context, \tuples)), \nullv, \\
&& \qquad \interpretation(\Sigma_{t \in \tuples} \text{ite}(\del(t) \lor \denot{E}_{\schema, \context, [t]} = \nullv, 0, \denot{E}_{\schema, \context, [t]}))) \\
&=& \site(\text{AllNull}(E, \interpretation(\context), \interpretation(\tuples)), \nullv, \\
&& \qquad \Sigma_{t \in \interpretation(\tuples)} \text{ite}(\denot{E}_{\interpretation(\context), [t]} = \nullv, 0, \denot{E}_{\interpretation(\context), [t]})) \\
&=& \site(\text{AllNull}(E, \interpretation(\context), \interpretation(\tuples)), \nullv, \sfoldl(+, 0, \\
&& \qquad \smap(\interpretation(\tuples), \lambda y. \site(\denot{E}_{\interpretation(\context), [y]} = \nullv, 0, \denot{E}_{\interpretation(\context), [y]}))) \\
&=& \denot{\text{Sum}(E)}_{\interpretation(\context), \interpretation(\tuples)}
\end{array}
\]

\item Inductive case: $A = \text{Avg}(E)$.
By the semantics of $\text{Avg}(E)$, we have 
\[
\begin{array}{rcl}
\interpretation(\denot{\text{Avg}(E)}_{\schema, \context, \tuples}) 
&=& \interpretation(\denot{\text{Sum}(E)}_{\schema, \context, \tuples}) / \interpretation(\denot{\text{Count}(E)}_{\schema, \context, \tuples}) \\
&=& \denot{\text{Sum}(E)}_{\interpretation(\context), \interpretation(\tuples)} / \denot{\text{Count}(E)}_{\interpretation(\context), \interpretation(\tuples)} \\
&=& \denot{\text{Avg}(E)}_{\interpretation(\context), \interpretation(\tuples)}
\end{array}
\]

\item Inductive case: $A = \text{Min}(E)$.

$\denot{\text{Min}(E)}_{\schema, \context, \tuples} = \site(\text{AllNull}(E, \schema, \context, \tuples), \nullv, \text{min}_{t \in \tuples} \text{ite}(\del(t) \lor \denot{E}_{\schema, \context, [t]} = \nullv, + \infty, \\$ $\denot{E}_{\schema, \context, [t]}))$ by Figure~\ref{fig:rules-expressions}.
$\denot{\text{Min}(E)}_{\interpretation(\context), \interpretation(\tuples)} = \site(\text{AllNull}(E, \interpretation(\context), \interpretation(\tuples)), \nullv, \sfoldl(\text{min}, $ $+ \infty, \smap(xs, \lambda y. $ $\site(\denot{E}_{\interpretation(\context), [y]} = \nullv, + \infty, \denot{E}_{\interpretation(\context), [y]}))))$ by Figure~\ref{fig:proof-sematics2}.
By Lemma~\ref{lem:expression}, we have $\denot{E}_{\interpretation(\context), \interpretation(\tuples)} = \interpretation(\denot{E}_{\schema, \context, \tuples})$.
By the definition of $\interpretation$, we know $\interpretation(\text{AllNull}(E, \schema, \context, \tuples)) = \text{AllNull}(E, \interpretation(\context), \interpretation(\tuples))$.
By the definition of $\sfoldl$ and $\smap$, $\sfoldl(\text{min}, + \infty, \smap(xs, \lambda y. \site($ $\denot{E}_{\interpretation(\context), [y]} = \nullv, + \infty, \denot{E}_{\interpretation(\context), [y]}))) \\ = \text{min}_{x \in xs}(\site(\denot{E}_{\interpretation(\context), [x]} = \nullv, + \infty, \denot{E}_{\interpretation(\context), [x]}))$.
\[
\begin{array}{rcl}
\interpretation(\denot{\text{Min}(E)}_{\schema, \context, \tuples})
&=& \interpretation(\site(\text{AllNull}(E, \schema, \context, \tuples), \nullv, \\
&& \qquad \text{min}_{t \in \tuples} \text{ite}(\del(t) \lor \denot{E}_{\schema, \context, [t]} = \nullv, + \infty, \denot{E}_{\schema, \context, [t]}))) \\
&=& \site(\interpretation(\text{AllNull}(E, \schema, \context, \tuples)), \nullv, \\
&& \qquad \interpretation(\text{min}_{t \in \tuples} \text{ite}(\del(t) \lor \denot{E}_{\schema, \context, [t]} = \nullv, + \infty, \denot{E}_{\schema, \context, [t]}))) \\
&=& \site(\text{AllNull}(E, \interpretation(\context), \interpretation(\tuples)), \nullv, \\
&& \qquad \Sigma_{t \in \interpretation(\tuples)} \text{ite}(\denot{E}_{\interpretation(\context), [t]} = \nullv, + \infty, \denot{E}_{\schema, \context, [t]})) \\
&=& \site(\text{AllNull}(E, \interpretation(\context), \interpretation(\tuples)), \nullv, \sfoldl(\text{min}, + \infty, \\
&& \qquad  \smap(\interpretation(\tuples), \lambda y. \site(\denot{E}_{\interpretation(\context), [y]} = \nullv, + \infty, \denot{E}_{\schema, \context, [y]}))) \\
&=& \denot{\text{Min}(E)}_{\interpretation(\context), \interpretation(\tuples)}
\end{array}
\]

\item Inductive case: $A = \text{Max}(E)$.

$\denot{\text{Max}(E)}_{\schema, \context, \tuples} = \site(\text{AllNull}(E, \schema, \context, \tuples), \nullv, \text{max}_{t \in \tuples} \text{ite}(\del(t) \lor \denot{E}_{\schema, \context, [t]} = \nullv, - \infty, $ $\denot{E}_{\schema, \context, [t]}))$ by Figure~\ref{fig:rules-expressions}.
$\denot{\text{Max}(E)}_{\interpretation(\context), \interpretation(\tuples)} = \site(\text{AllNull}(E, \interpretation(\context), \interpretation(\tuples)), \nullv, \sfoldl(\text{max},$ $ - \infty, \smap(xs, \lambda y. $ $\site(\denot{E}_{\interpretation(\context), [y]} = \nullv, - \infty, \denot{E}_{\interpretation(\context), [y]}))))$ by Figure~\ref{fig:proof-sematics2}.
By Lemma~\ref{lem:expression}, we have $\denot{E}_{\interpretation(\context), \interpretation(\tuples)} = \interpretation(\denot{E}_{\schema, \context, \tuples})$.
By the definition of $\interpretation$, we know $\interpretation(\text{AllNull}(E, \schema, \context, \tuples)) = \text{AllNull}(E, \interpretation(\context), \interpretation(\tuples))$.
By the definition of $\sfoldl$ and $\smap$, $\sfoldl(\text{max}, - \infty, \smap(xs, \lambda y. \site($ $\denot{E}_{\interpretation(\context), [y]} = \nullv, - \infty, \denot{E}_{\interpretation(\context), [y]}))) = \text{max}_{x \in xs}(\site(\denot{E}_{\interpretation(\context), [x]} = \nullv, - \infty, \denot{E}_{\interpretation(\context), [x]}))$.
\[
\begin{array}{rcl}
\interpretation(\denot{\text{Max}(E)}_{\schema, \context, \tuples})
&=& \interpretation(\site(\text{AllNull}(E, \schema, \context, \tuples), \nullv, \\
&& \qquad \text{max}_{t \in \tuples} \text{ite}(\del(t) \lor \denot{E}_{\schema, \context, [t]} = \nullv, - \infty, \denot{E}_{\schema, \context, [t]}))) \\
&=& \site(\interpretation(\text{AllNull}(E, \schema, \context, \tuples)), \nullv, \\
&& \qquad \interpretation(\text{max}_{t \in \tuples} \text{ite}(\del(t) \lor \denot{E}_{\schema, \context, [t]} = \nullv, - \infty, \denot{E}_{\schema, \context, [t]}))) \\
&=& \site(\text{AllNull}(E, \interpretation(\context), \interpretation(\tuples)), \nullv, \\
&& \qquad \Sigma_{t \in \interpretation(\tuples)} \text{ite}(\denot{E}_{\interpretation(\context), [t]} = \nullv, - \infty, \denot{E}_{\schema, \context, [t]})) \\
&=& \site(\text{AllNull}(E, \interpretation(\context), \interpretation(\tuples)), \nullv, \sfoldl(\text{max}, - \infty, \\
&& \qquad \smap(\interpretation(\tuples), \lambda y. \site(\denot{E}_{\interpretation(\context), [y]} = \nullv, - \infty, \denot{E}_{\schema, \context, [y]}))) \\
&=& \denot{\text{Max}(E)}_{\interpretation(\context), \interpretation(\tuples)}
\end{array}
\]

\item Inductive case: $A = A_1 \allarith A_2$.

By the inductive hypothesis, we have $\interpretation(\denot{A_1}_{\schema, \context, \tuples}) = \denot{A_1}_{\interpretation(\context), \interpretation(\tuples)}$ and $\interpretation(\denot{A_2}_{\schema, \context, \tuples}) = \denot{A_2}_{\interpretation(\context), \interpretation(\tuples)}$.
Therefore, 
\[
\begin{array}{rcl}
\interpretation(\denot{A_1 \allarith A_2}_{\schema, \context, \tuples}) 
&=& \interpretation(\denot{A_1}_{\schema, \context, \tuples}) \allarith \interpretation(\denot{A_2}_{\schema, \context, \tuples}) \\
&=& \denot{A_1}_{\interpretation(\context), \interpretation(\tuples)} \allarith \denot{A_2}_{\interpretation(\context), \interpretation(\tuples)} \\
&=& \denot{A_1 \allarith A_2}_{\interpretation(\context), \interpretation(\tuples)}
\end{array}
\]

\end{enumerate}
\end{proof}

\begin{proof}[Proof of Theorem~\ref{lem:equal}:]
\revision{
Given two relations $R_1 = [t_1, \ldots, t_n]$ and $R_2 = [r_1, \ldots, r_m]$, if the formula $(\ref{eq:bag1}) \land (\ref{eq:bag2})$ is valid, then $R_1$ is equal to $R_2$ under bag semantics. If the formula $(\ref{eq:list1}) \land (\ref{eq:list2})$ is valid, then $R_1$ is equal to $R_2$ under list semantics.
}
\end{proof}

\begin{proof}[Proof.] 

Let us prove this Theorem under bag and list semantics.

\begin{enumerate}
\item Under bag semantics.

By the formula $(\ref{eq:bag1})$, we assume there exist $a$ non-deleted tuples in $R_1$ and $R_2$ where $a \leq \text{min}\{n, m\}$.
Let $[x_1, \ldots, x_b]$ be the distinct tuples of those non-deleted tuples in $R_1$ or $R_2$.
To ensure the formula $(\ref{eq:bag2})$ holds, for any tuple $x_i$, it has the same multiplicity in $R_1$ and $R_2$ (i.e.,  $M(x_i, R_1) = M(x_i, R_2)$) where the function $M$ takes as input a non-delete tuple and the symbolic table, and returns the multiplicity of the tuple.
Therefore, the sum of all distinct tuples' multiplicity is equal to the number of non-deleted tuples, i.e., $a = \Sigma_{i=1}^b M(x_i, R_1) = \Sigma_{i=1}^b M(x_i, R_2)$, and the formula $(\ref{eq:bag2})$ holds for those non-deleted tuples.

Moreover, we know, for any deleted tuples $t_i \in R_1$ and $r_i \in R_2$ where $\del(t_i) = \del(r_i) = \top$, their multiplicities in $R_1$ and $R_2$ are always $0$ and, therefore, the formula $(\ref{eq:bag2})$ holds for them.

Thus, Theorem~\ref{lem:equal} holds under the bag semantics.

\item Under list semantics.

Let us discuss the value of $\text{min}\{n, m\}$ in three cases.
\begin{enumerate}[label=(\alph*)]
\item 
If $n < m$, then $\text{min}\{n, m\} = n$ and $\del(r_i) = \top$ for any $n < i \leq m$  because of the formula $(\ref{eq:list1})$. 
Furthermore, as we know $t_i = r_i$ for any $1 \leq i \leq n$, then $R_1 = [t_1, \ldots, t_n] = [r_1, \ldots, r_n] = \sappend([r_1, \ldots, r_n], [r_{n+1}, \ldots, r_{m}]) = [r_1, \ldots, r_m] = R_2$.

\item 
If $n = m$, then, apparently, we know $\text{min}\{n, m\} = n = m$ and $R_1 = R_2$ by the formula $(\ref{eq:list2})$.

\item 
If $n > m$, then $\text{min}\{n, m\} = m$ and $\del(t_i) = \top$ for any $m < i \leq n$  because of the formula $(\ref{eq:list1})$. 
Furthermore, as we know $t_i = r_i$ for any $1 \leq i \leq m$, then $R_2 = [r_1, \ldots, r_m] = [t_1, \ldots, t_m] = \sappend([t_1, \ldots, t_m], [t_{n+1}, \ldots, t_{m}]) = [t_1, \ldots, t_m] = R_1$.

\end{enumerate}

By the above cases, we can prove that Theorem~\ref{lem:equal} holds under the list semantics.
\end{enumerate}

Hence, Theorem~\ref{lem:equal} is proved by the aforementioned cases.
\end{proof}

\begin{proof}[Proof of Theorem~\ref{thm:verify}:]
\revision{
Given two queries $Q_1, Q_2$ under schema $\schema$, an integrity constraint $\constraint$, a bound $\bound$, if $\textsc{Verify}(\query_1, \query_2, \schema, \constraint, \bound)$ returns $\top$, then \revision{$\query_1 \simeq_{\schema, \constraint, \bound} \query_2$}. Otherwise, if $\textsc{Verify}(\query_1, \query_2, \schema,$ $ \constraint, \bound)$ returns a database $\db$, then \revision{$\db :: \schema \land \denot{Q_1}_{\db} \neq \denot{Q_2}_{\db}$}. 
}
\end{proof}

\begin{proof}[Proof.]

Let us prove Theorem~\ref{thm:verify} in two cases.

\begin{enumerate}
\item 
If $\textsc{Verify}(\query_1, \query_2, \schema, \constraint, \bound)$ returns $\top$, then we cannot find two symbolic relations $R_1$ and $R_2$ s.t. $R_1 \neq R_2$ by Theorem~\ref{lem:equal}.
Let $\interpretation$ be an interpretation s.t. $\db = \interpretation(\context)$.
By the soundness of $\textsc{BuildSymbolicDB}(\schema, N)$ (line 2) of Algorithm~\ref{algo:verify}, we know $\db :: \schema$ and $\forall R \in \db. |R| \leq N$ holds for any database $\db$ instantiated from $\context$.
By Theorem~\ref{lem:ic}, we know $\constraint(\db)$ holds for any database $\db$ instantiated from $\context$.
Further, by Theorem~\ref{lem:query1} and~\ref{lem:query2}, $\denot{\query}_{\interpretation(\context)} = \denot{\query}_{\db} = \interpretation(R)$ where $R$ is a symbolic relation from $\formula_{R}, R \leftarrow \textsc{EncodeQuey}(\schema, \context, \query)$.
Therefore, $R_1 = R_2 \Leftrightarrow \interpretation(R_1) = \interpretation(R_2) \Leftrightarrow \denot{\query_1}_{\db} = \denot{\query_2}_{\db}$ where $\db$ is instantiated from $\context$, and $\query_1 \simeq_{\schema, \constraint, \bound} \query_2$ holds.

\item 
If $\textsc{Verify}(\query_1, \query_2, \schema, \constraint, \bound)$ returns a database $\db$, we know that there exists a database $\db$ s.t. $\formula = \formula_{\constraint}\land \formula_{R_1} \land \formula_{R_2} \land \neg \textsc{Equal}(R_1, R_2)$.  By the soundness of $\text{BuildExample}(\schema, \text{Model}(\formula))$, $\db :: \schema$ is also true.
Further, $\formula_{\constraint} \land \formula_{R_1} \land \formula_{R_2} \land \neg \textsc{Equal}(R_1, R_2) = \top \Rightarrow \neg \textsc{Equal}(R_1, R_2) = \top $ $ \Leftrightarrow \textsc{Equal}(R_1, R_2) = \bot$.
By Theorem~\ref{lem:query1} and~\ref{lem:query2}, there exists an $\interpretation$ s.t. $\interpretation(R) = \denot{\query}_{\interpretation(R)} = \denot{\query}_\db$; and, by Theorem~\ref{lem:equal}, $\textsc{Equal}(R_1, R_2) = \bot \Leftrightarrow \denot{\query_1}_{\interpretation(R_1)} \neq \denot{\query_2}_{\interpretation(R_2)} \Leftrightarrow \denot{\query_1}_{\db} \neq \denot{\query_2}_{\db}$.

\end{enumerate}

Hence, Theorem~\ref{thm:verify} is proved by the aforementioned cases.
\end{proof}

}{}

\end{document}